\documentclass[11pt,letterpaper]{article}
\usepackage{setspace} % Load before hyperref so footnote links are retained.

\usepackage{natbib}
\usepackage{bibunits}
\defaultbibliographystyle{ormsv080}
\defaultbibliography{main}
\usepackage[pagebackref=false, colorlinks=true, citecolor=blue, anchorcolor=blue, urlcolor=blue]{hyperref}
\usepackage[letterpaper,margin=1in]{geometry}
\usepackage{caption}
\usepackage{hyperref}
\usepackage{amsmath}
\usepackage{amsthm}
\usepackage{tabularx} % for automatic column resizing
\usepackage{longtable}
\usepackage{graphicx}
\usepackage{amsfonts}
\usepackage{amssymb}
\usepackage{comment}
\usepackage{environ}
\newif\ifincludeextra
\includeextratrue
\ifdefined\OmitExtraAppendix\includeextrafalse\fi
\newif\ifshowreviewcomments
\showreviewcommentstrue
\ifdefined\HideReviewComments\showreviewcommentsfalse\fi
\usepackage[dvipsnames]{xcolor}
\usepackage[most]{tcolorbox}
\definecolor{editbrown}{RGB}{150,75,0}

\NewEnviron{authornote}{\ifshowreviewcomments\par\begingroup\normalfont\bfseries\color{blue}\hypersetup{linkcolor=blue,citecolor=blue,urlcolor=blue}\noindent[Author comment: \BODY\unskip\nobreak]\par\endgroup\fi}
\newtcolorbox{redproofbox}{
    colframe=red,
    colback=white,
    boxrule=1pt,
    arc=0pt,
    left=2pt,
    right=2pt,
    top=6pt,
    bottom=6pt,
    breakable
}
\usepackage{makecell}
\usepackage{subfigure}
\usepackage{multirow, booktabs}
\usepackage{array}
\usepackage{float}
\usepackage[capitalise,noabbrev]{cleveref}
\usepackage{enumerate}
\usepackage{enumitem}
\usepackage{changepage}
\newcolumntype{C}{>{\centering\arraybackslash}X}
\newcolumntype{R}[1]{>{\centering\arraybackslash}p{#1}}
\usepackage{algorithm,algpseudocode}

\renewcommand{\eqref}[1]{(\ref{#1})}
\def\bb{\textbf{b}}

\def\be{\textbf{e}}
\def\bff{\mathbf{f}}

\def\bi{\textbf{i}}

\def\bs{\textbf{s}}

\def\bw{\mathrm{\textbf{\textit{w}}}}

\def\bx{\mathbf{x}}  %{\mbox{\boldmath $\lambda$}}

\def\bz{\textbf{z}}

\def\bB{\textbf{B}}
\def\bC{\textbf{C}}

\def\bF{\textbf{F}}

\def\bH{\textbf{H}}
\def\bI{\textbf{I}}

\def\bM{\textbf{M}}

\def\bW{\textbf{W}}

\def\bfDelta{{\boldsymbol{\Delta}}}

\def\bfxi{{\boldsymbol{\xi}}}

\def\dx{\bfDelta \bx}

\def\cF{\mathcal{F}}

\def\bsig{\boldsymbol{\Sigma}}

\def\mE{\mathbb{E}}

\def\mR{\mathbb{R}}

\def\smskip{\smallskip}

\def\texitem#1{\par\smskip\noindent\hangindent 25pt
               \hbox to 25pt {\hss #1 ~}\ignorespaces}

\newcommand{\BEAS}{\begin{eqnarray*}}
\newcommand{\EEAS}{\end{eqnarray*}}
\newcommand{\BEA}{\begin{eqnarray}}
\newcommand{\EEA}{\end{eqnarray}}
\newcommand{\BEQ}{\begin{eqnarray}}
\newcommand{\EEQ}{\end{eqnarray}}
\newcommand{\BIT}{\begin{itemize} \setlength{\itemsep}{0.5cm}}
\newcommand{\EIT}{\end{itemize}}
\newcommand{\BNUM}{\begin{enumerate}}
\newcommand{\ENUM}{\end{enumerate}}

\newcommand{\BA}{\begin{array}}
\newcommand{\EA}{\end{array}}

\newcommand{\argmin}{\mathop{\rm argmin}}
\newcommand{\argmax}{\mathop{\rm argmax}}

\usepackage{mathtools}
\usepackage{bbm}

\newcommand{\defeq}{\coloneq}
\newcommand{\Ht}[1]{\textbf{H}_{#1}}
\newcommand{\Ct}{\mathcal{C}}
\newcommand{\bbone}{\mathbbm{1}}

\newcommand\E{\mathbb{E}}

\makeatletter
\@ifundefined{eqslim}{%
}{}
\makeatother

\input epsf
\parskip\medskipamount

\newcommand{\rev}[1]{{\color{darkred} [Revision: #1]}}

\newcommand{\td}{\text{d}}

\newcommand{\tG}{{\mbox{\tiny G}}}

\definecolor{rose}{rgb}{1.0, 0.33, 0.64}
\newcommand{\zz}[1]{{\color{rose} [ZZ: {#1}]}}
\newcommand{\martin}[1]{\textcolor{purple}{[MH: #1]}}

\newtheorem{thm}{Theorem}[section]
\newtheorem{exm}{Example}%[section]
\newtheorem{example}[exm]{Example}
\crefname{example}{Example}{Examples}
\newtheorem{dfn}{Definition}[section]
\newtheorem{cor}[thm]{Corollary}
\newtheorem{lem}[thm]{Lemma}

\newtheorem{rem}[thm]{Remark}
\newtheorem{assum}{Assumption}%[section]

\def\bsym#1{{\boldsymbol{#1}}}

\newcounter{counterexample}
\newcommand{\bit}{\begin{itemize}}
\newcommand{\eit}{\end{itemize}}
\newcommand{\bnum}{\begin{enumerate}}
\newcommand{\enum}{\end{enumerate}}

\definecolor{OliveGreen}{cmyk}{0.64,0,0.95,0.40}
\definecolor{BrickRed}{cmyk}{0,0.89,0.94,0.28}
\definecolor{darkred}{HTML}{CC0000}

\newfloat{algorithm}{htp}{lop}
\floatname{algorithm}{Algorithm}
\crefname{algorithm}{Algorithm}{Algorithms}

\makeatletter
\newcommand{\setbibliographypart}[1]{%
  \def\@extra@b@citeb{.#1}\def\@extra@binfo{.#1}}
\makeatother
\newcommand{\startappendixpart}[1]{%
  \clearpage
  \setcounter{page}{1}\renewcommand{\thepage}{#1-\arabic{page}}%
  \setcounter{section}{0}\renewcommand{\thesection}{#1.\arabic{section}}%
  \renewcommand{\theHsection}{#1.\arabic{section}}%
  \setcounter{equation}{0}\renewcommand{\theequation}{#1.\arabic{equation}}%
  \renewcommand{\theHequation}{#1.\arabic{equation}}%
  \setcounter{figure}{0}\renewcommand{\thefigure}{#1.\arabic{figure}}%
  \renewcommand{\theHfigure}{#1.\arabic{figure}}%
  \setcounter{table}{0}\renewcommand{\thetable}{#1.\arabic{table}}%
  \renewcommand{\theHtable}{#1.\arabic{table}}%
  \setcounter{algorithm}{0}\renewcommand{\thealgorithm}{#1.\arabic{algorithm}}%
  \renewcommand{\theHalgorithm}{#1.\arabic{algorithm}}%
  \setcounter{exm}{0}\renewcommand{\theexm}{#1.\arabic{exm}}%
  \renewcommand{\theHexm}{#1.\arabic{exm}}%
  \setcounter{prop}{0}\renewcommand{\theprop}{#1.\arabic{prop}}%
  \renewcommand{\theHprop}{#1.\arabic{prop}}%
  \setcounter{assum}{0}\renewcommand{\theassum}{#1.\arabic{assum}}%
  \renewcommand{\theHassum}{#1.\arabic{assum}}%
}

\begin{document}
\setbibliographypart{main}
\begin{bibunit}

\parindent 0em

\title{\bf When Stress Tests Miss the Risk: Statistical Scenario Analysis for Financial Portfolios}

\author{{\normalsize\bfseries Zhongze Cai\textsuperscript{1} \qquad Martin B. Haugh\textsuperscript{1} \qquad Xiaocheng Li\textsuperscript{2}}\\[0.5em]
{\normalsize\textsuperscript{1}Department of Analytics, Marketing and Operations}\\
{\normalsize Imperial Business School, Imperial College}\\
{\normalsize London, SW7 2AZ, UK}\\[0.25em]
{\normalsize\textsuperscript{2} Independent}}

%{\sc Working Paper under Revision}}

%\date{This version: \today}
\date{}

%\doublespacing

\maketitle
\thispagestyle{empty}
\vspace{-0.7cm}
\begin{abstract}
\normalsize
Scenario analysis is widely used to stress test financial portfolios, yet conventional approaches often summarize scenario gains using point estimates that overlook dependence between stressed and unstressed risk factors. We develop a machine learning framework for quantifying uncertainty in realized next-day scenario gains through prediction intervals whose widths can be calibrated online. Adaptive conformal scenario analysis (ACSA) calibrates scenario-specific quantile predictions and provides a long-run empirical coverage guarantee over realized scenarios. Our main method, kernel scenario analysis (KSA), estimates scenario-conditional quantiles directly from the specified stress and current market information. KSA can also be combined with ACSA to obtain online-calibrated prediction intervals. In experiments designed to reflect real-world markets, conventional scenario analysis can substantially understate risk, including for portfolios that appear safe under standard stress tests. Our proposed methods achieve better calibration and sharper intervals than empirical-quantile baselines. Overall, the framework moves scenario analysis beyond point estimates by quantifying predictive uncertainty and providing tools for statistically validating scenario gains.

\end{abstract}

\newpage

\section{Introduction}
\label{sec:intro}

Scenario analysis is a cornerstone of financial risk management. Banks, asset managers, and
regulators routinely ask: \emph{What happens to the portfolio if a small set of market drivers moves
in an adverse but plausible way?} The output is typically a ``scenario table'' that reports an
estimated profit-and-loss (P\&L) for each prescribed stress or scenario. For an equity-derivatives portfolio,
for example, a scenario might specify that the S\&P 500 falls by 3\% while the implied volatility
surface shifts upward by several volatility points, and the scenario analysis report might state
that the portfolio would gain (or lose) a particular dollar amount in that scenario.

Despite its operational importance, standard scenario analysis is often treated as a purely
deterministic calculation. A small subset of risk-factor changes is ``shocked'', while all other
risk factors are implicitly held fixed. The resulting scenario P\&L is then interpreted---often
informally, but pervasively---as an \emph{expected} P\&L conditional on the scenario. This
interpretation is appealing because it turns a high-dimensional and uncertain forecasting problem
into a simple table of numbers. Unfortunately, it is also frequently wrong. There are at least
three structural reasons for this.

First, financial risk factors often exhibit strong statistical dependence. Conditioning on a stress in
one part of the market typically shifts the distribution of other risk factors, often materially.
Treating the unstressed factors as unchanged can therefore yield a scenario P\&L that is badly
miscalibrated even as a conditional mean, let alone as a full description of risk. Second, for
nonlinear portfolios, especially those containing options or other convex payoffs, even replacing
unstressed factors by their conditional means can miss important \emph{residual uncertainty} effects:
convexity can cause the conditional expectation of the portfolio value to differ substantially from
the portfolio value computed at the conditional expectation of the risk factors. Third, scenario
analysis outputs are rarely accompanied by any statistical validation. Although the prescribed
stress scenarios are typically low-probability events, \emph{some} scenario (possibly mild) is
realized every day. This makes it possible to backtest scenario analysis in an operationally
relevant way: roll the clock back one day, run the scenario analysis for the scenario that actually
occurred, and compare the predicted and realized P\&L. Yet such backtesting is not standard practice,
and, as a result, risk managers often lack a principled notion of how accurate their scenario
numbers really are.

This paper argues that scenario analysis should be treated as a statistical prediction problem, and
that risk managers should report \emph{uncertainty} alongside point estimates. Concretely, let
$G_{t+1}$ denote the one-step portfolio gain (daily P\&L in our experiments). Let $\bz$ denote a stress
scenario, defined in terms of shocks to selected common factors, and interpret the object of
interest as the \emph{conditional distribution} of $G_{t+1}$ given the scenario and today’s
information. Our goal is to produce, for each scenario $\bz$ under consideration, a prediction
interval $\Ct_t(\bz)$ for $G_{t+1}$ that is (i) informative in practice (not overly conservative) and
(ii) accompanied by provable coverage guarantees.

Our first method, \emph{Adaptive Conformal Scenario Analysis} (ACSA), wraps a scenario-wise
quantile predictor in an online calibration procedure. It updates the quantile level according
to whether the preceding interval covered the realized P\&L in the realized scenario, widening
intervals when empirical coverage is too low and shrinking them when it is too high. The
resulting intervals enjoy a long-run \emph{marginal} coverage guarantee across the realized
scenario sequence, under mild requirements on the predictor and without a correct parametric
specification. This provides a practical feedback mechanism for calibrating scenario uncertainty
when the underlying prediction model is imperfect.

Marginal calibration does not ensure adequate coverage in every scenario. A procedure can
achieve the correct average coverage while under-covering extreme regions, where losses may be large and nonlinear exposures are particularly important. Our second method, \emph{Kernel Scenario
Analysis} (KSA), estimates scenario-conditional quantiles by reweighting historical observations
according to their similarity to the current market state and hypothetical factor stress, using
a kernel-based notion of distance. Scenarios are defined in terms of a low-dimensional set of
common-factor moves, while today's market state supplies additional conditioning information
about tomorrow's distribution.

ACSA and KSA offer two complementary ways of improving scenario analysis. ACSA adjusts for systematic miscalibration as new outcomes are observed, while KSA uses the structure of the current market state and the proposed stress to distinguish between scenarios with different levels of uncertainty. They can be applied independently, or combined when both forms of adaptation are desirable.

The paper makes four contributions:
\begin{itemize}

\item \vspace{-1em} We formulate scenario analysis as a statistical prediction problem and augment scenario
tables with prediction intervals and backtestable coverage guarantees.

\item \vspace{-0.5em} We adapt conformal calibration to scenario tables through ACSA and obtain long-run
marginal coverage across realized scenarios without requiring a correctly specified
parametric model.

\item \vspace{-0.5em} We develop KSA, which estimates scenario-conditional quantiles by jointly conditioning
on the specified scenario and the current market state. We establish a theoretical bound on its conditional coverage error under some regularity assumptions.

\item \vspace{-0.5em} Through numerical studies based on S\&P 500 option portfolios, we demonstrate more
reliable and informative scenario assessments than standard scenario analysis, including settings where dependence, convexity, and adversarial exposures make its point estimates misleading.
\end{itemize}
\vspace{-1em}

The remainder of this paper is organized as follows. Section~\ref{sec:LitReview} reviews related work. Sections~\ref{sec:Prelims}--\ref{sec:MotivatingEG}
formulate the problem and develop the three-factor motivating example, which serves as a
controlled testbed for the methods.
Sections~\ref{sec:ACSA} and \ref{sec:KSA} present ACSA and KSA, respectively, while Section~\ref{sec:Numerics}
reports numerical experiments. The E-Companion contains proofs, implementation details, and a summary of our notation may be found in E-Companion~\ref{subsec:notation_summary}.

\section{Literature Review}
\label{sec:LitReview}

Much of financial risk management concerns scalar measures such as value-at-risk (VaR),
conditional value-at-risk (CVaR), and coherent risk measures; see \citet{MFE}. \citet{WangZiegel2021} study risk measures based on loss distributions
under multiple scenarios whereas our focus is on prediction intervals for
portfolio gains within each specified stress scenario. Following the \citet{ECB-2006}, scenarios may be \emph{historical} by replaying realized
market moves on today's portfolio, \emph{hypothetical} by applying a specified narrative,
\emph{probabilistic} in that they are generated from a joint model fitted to historical data, or \emph{reverse-engineered} where we work backwards from a portfolio-outcome threshold to find
scenarios likely to exceed it. These categories often overlap. The empirical Bayes approach
of \citet*{Glasserman15}, for example, combines reverse engineering with probabilistic
structure calibrated to historical data, while hypothetical narratives are commonly informed
by historical frequencies and co-movement patterns. Our work also uses historical data
to learn scenario-wise behavior, but we apply nonparametric methods rather than fitting a
parametric joint distribution, and we provide asymptotic coverage guarantees for scenario-gain uncertainty statements.

The \citet{Basel-2005} guidance emphasizes that scenarios should be \emph{plausible},
\emph{severe}, and \emph{suggestive} of risk-mitigating actions. One way to balance plausibility
and severity is to define a region of plausible risk-factor moves and search for the most
adverse portfolio outcome within it. \citet*{Breuer09} develop this idea and study
\emph{partial-scenario} stressing: they maximize plausibility by setting unstressed systematic
factors to their conditional means given the stressed factors. \citet{Haugh2020} propagate
the full conditional \emph{distribution} of unstressed factors, which matters for derivatives
portfolios because of convexity. A related asset-allocation perspective appears in
Black--Litterman \citep{Black7} and its extensions \citep{Meucci2008}, which compute risk-factor
distributions conditional on a specified view, typically a linear constraint in a Gaussian setting.

Other approaches to scenario selection and structuring include graphical
models for scenario selection and conditional propagation
\citep{rebonato2010-2,rebonato-2019}, systematic search over a plausible
risk-factor distribution \citep{Flood15}, hypothetical scenario design
\citep{Golubjpm.2018.1.079}, and scenario aggregation under model
uncertainty \citep{CambouFilipovic2017}. Additional contributions include \citet{Alfaro,Quag09,Bonti06},
while \citet{Borio-2012} review practice and policy considerations. Recent work also uses deep generative models for financial scenario
simulation. \citet*{ContEtAl2026TailGAN} train a generative adversarial
network to reproduce value-at-risk and expected-shortfall
characteristics of static portfolios and dynamic trading strategies,
while \citet*{ChenEtAl2025DiffusionFactors} incorporate latent factor
structure into diffusion models for high-dimensional asset-return
generation.

Factor structure underlies many probabilistic scenario models. Equity examples include the
capital asset pricing model (CAPM) and Fama--French three-factor model \citet{Fama}. For fixed income,
\citet{Diebold06} propose a three-factor yield-curve model, while \citet{Saroka30} use
principal component analysis (PCA) to identify likely yield-curve moves consistent with a
specified view. For equity derivatives, \citet{cont2002} develop a PCA model of implied
volatility surfaces. Since financial factors exhibit temporal dependence, time-series
components are also needed; see e.g. \citet{Haugh2020}. \citet{PelgerXiong2022} combine nonparametric estimation with PCA to
estimate factor models whose loadings vary with the economic state. We refer to \citet{MFE},\citet{ruppert} and \citet{tsay} for
textbook treatments and further references.

Our methodology draws on quantile regression \citep{koenker2005quantile}, uncertainty
quantification, and nonparametric statistics. Conditional quantiles provide prediction
intervals whose coverage can be studied conditionally or averaged over realizations of the
conditioning variables. Early analyses provide conditional consistency under restrictive
assumptions \citep{vovk2005algorithmic,takeuchi2006nonparametric,steinwart2011estimating}.
More recent work, motivated partly by flexible predictors such as deep networks, emphasizes
marginal guarantees and evaluates conditional performance empirically. Major streams include
conformal prediction \citep{angelopoulos2021gentle,papadopoulos2002inductive,lei2018distribution}
and uncertainty calibration \citep{roth2022uncertain}. For non-exchangeable panel data, \citet{TuGiesecke2026} combine
history-based similarity weights with adaptive miscoverage updates,
using contemporaneous outcomes from other units for calibration.

Kernel methods have also been used to learn dynamic portfolio values
from simulated cash flows \citep{BoudabsaFilipovic2022}.
Kernel conditional-quantile estimation for dependent data is well established.
\citet{cai2002regression} estimates time-series regression quantiles by inverting a
weighted Nadaraya--Watson conditional CDF estimator, while
\citet{hansen2008uniform} develops uniform convergence theory for kernel estimators
under strong mixing. 
More recently, \citet{lee2025kowcpi} use reweighted
Nadaraya--Watson estimation of nonconformity-score quantiles for
time-series prediction and establish a conditional-coverage-gap bound
under mixing assumptions. Our analysis provides an explicit
finite-sample bound on scenario-conditional coverage error, uniformly
over a compact operational feature region, under the stated dependence
and predictive-sufficiency assumptions.

Our contribution is the combination of these statistical ingredients with a financial scenario-analysis formulation. KSA conditions jointly on a hypothetical stressed-factor scenario and the current market state to produce an entire scenario table. Its residual-centering construction is compatible with security-level conditional-mean estimation and practical aggregation across changing portfolios. The latter is an implementation advantage, whereas Theorem~\ref{thm:individual_cali_thm} concerns a fixed portfolio and deterministic centering function. The theorem gives a uniform conditional-coverage bound over a compact operational feature region under its stated assumptions. ACSA adapts conformal calibration for time series \citep{gibbs2021adaptive,gibbs2024conformal} to scenario tables, and KSA can supply its quantile predictor. The financial-risk interpretation and adversarial-portfolio application further distinguish the resulting framework from ordinary one-step time-series prediction.

Evaluation and ongoing validation provide a further connection to financial risk forecasting.
Proper scoring rules reward calibration and sharpness \citep{Gneiting_Raftery}; quantile
forecasts can be assessed using pinball loss, while joint scoring of VaR and expected
shortfall (ES) enables comparison of tail-risk forecasts \citep{FisslerZiegel2016,HeKouPeng}.
\citet{NoldeZiegel2017} advocate comparative backtests alongside traditional pass/fail tests
to encourage accurate risk forecasts. Classic VaR backtests examine coverage and independence
of exceedances, often in fixed samples. Sequential monitoring instead seeks prompt detection
of forecast deterioration as observations arrive \citep{HogaDemetrescu2023} while e-backtesting
provides sequentially valid tests \citep{WangWangZiegel2025}. Related work develops
model-independent and conditional backtests for ES \citep{AcerbiSzekely2014,DuEscanciano2017}.
Although these approaches concern scalar risk forecasts, their emphasis on statistical
validation also applies to scenario tables. Our target is scenario-wise predictive uncertainty
for portfolio gains, evaluated through interval scores and backtesting, with an online
perspective suited to forecast accuracy that varies over time.

\section{Preliminaries and ``Standard'' Scenario Analysis}
\label{sec:Prelims}

In this section, we formulate the scenario analysis problem. We begin in Section \ref{sec:Examples} with a motivating portfolio setting where scenario analysis can be applied. In Section \ref{sec:Factor Structure} we introduce the factor model structure that typically underlies a scenario analysis. Then in Section \ref{sec:SSA} we describe how a so-called standard scenario analysis works. Finally, in Section \ref{sec:SSA_Weaknesses} we describe the main weaknesses of standard scenario analysis.

\subsection{A Motivating Portfolio} \label{sec:Examples}\label{eg:USEquityOption}

We assume we have a fixed portfolio of securities that will not change over the horizon under consideration, e.g., 1 day. In principle the portfolio could include any combination of securities from any asset class, although in practice we are typically limited to reasonably liquid securities for which historical price data are available. Specifically, consider a U.S. equity and options portfolio.
The time $t$ portfolio value $V_t$ satisfies
\begin{equation} \label{eq:EquityOptions1}
V_t = \sum_{i=1}^n u_i S_t^{(i)} + \sum_{j=1}^M u_j^d C_t^{(j)},
\end{equation}
where the first sum in (\ref{eq:EquityOptions1}) represents  holdings (the $u_i$'s) in domestic stocks (with time $t$ prices $S_t^{(i)}$), and the second sum represents holdings ($u_j^d$'s) in $M$ options (with prices $C_t^{(j)}$) of various strikes, maturities and underlying securities. We might assume a simple CAPM-style factor model for the domestic market so that
\begin{equation} \label{eq:CAPM_for_St}
R_{t+1}^{(i)} := \log(S^{(i)}_{t+1}/S_t^{(i)}) = r_f + \beta_i (R^m_{t+1} - r_f) + \epsilon_{t+1}^{(i)},
\end{equation}
where $R^m_{t+1}$ is the domestic market log return between dates $t$ and $t+1$, $r_f$ the domestic (and assumed constant) risk-free rate, and the $\epsilon_{t+1}^{(i)}$'s are independent random noise. This makes $R^m_{t+1}$ the only common equity factor return. We may also assume that there are three volatility factors for the market, and that the market volatility model also drives the volatility surface for the individual stocks. This means there are a total of four common factors. In our numerical experiments in Section \ref{sec:Numerics} we will use this setting with just a single underlying security.

%There are of course many other possible settings\ifincludeextra\ -- see Extra Appendix~\ref{apx:example_of_Section_3} for additional examples --\fi\ and depending on the nature of the portfolio, a more or less refined factor structure might be required.
%For example, the inclusion of Fama-French or momentum factors might be required in Example \ref{eg:USEquityOption} if the portfolio was constructed with a view to trading these factors.

\subsection{The Factor Model Structure} \label{sec:Factor Structure}

We now provide a more formal description of the market structure and present the notation that we will use throughout the remainder of the paper. This market structure serves only as a vehicle for describing how a (standard) scenario analysis proceeds in our numerical experiments, but not for the presentation of our proposed algorithms. Knowing the correct market structure, including the factor model and dynamics, is {\em not required} for the ACSA algorithm of Section \ref{sec:ACSA}. Similarly, while the KSA algorithm of Section \ref{sec:KSA} makes stronger assumptions than the ACSA algorithm and incorporates some knowledge of market factors, it does {\em not require} knowledge of the precise factor dynamics.

Let $V_t$ denote the time $t$ value of the portfolio.  The \textit{portfolio gain} (also referred to as the portfolio profit and loss, P\&L for short) at time $t+1$ is $G_{t+1}\defeq V_{t+1} - V_{t}$. In our financial context, time is measured in days so that the portfolio gain would then be a daily P\&L. Assume $V_t$ is
$\cF_t$-adapted where $\mathbb{F}:=\{\cF_t\}_{t\geq 0}$ denotes the filtration generated by all relevant information (e.g. security prices, risk factors, etc) in the market. We use bold font and lower case to denote vectors, bold font and upper case to denote matrices, and non-bold font lower case to denote scalars. All vectors are assumed to be column vectors unless otherwise stated. A superscript $^{(i)}$ denotes the $i^{\mathrm{th}}$ component of a vector, and a superscript $^{(i:j)}$ refers to the sub-vector comprising elements from index $i$ to index $j$. For a matrix, a superscript $^{(i:j,:)}$ denotes the sub-matrix containing rows $i$ through $j$, and $^{(:, i:j)}$ denotes the sub-matrix containing columns $i$ through $j$. More generally, a superscript $^{(i_1:i_2, j_1:j_2)}$ denotes the sub-matrix formed by rows $i_1$ through $i_2$ and columns $j_1$ through $j_2$.

As is standard in the risk management literature (see e.g. \cite{MFE}), we will assume the portfolio value $V_t$ is a function of $\cF_t$-adapted \textbf{risk factors}, whose time $t$ values we denote by the vector $\bx_t \in \mathbb{R}^n$ for some $n$. For the equity and options portfolio in Section~\ref{sec:Examples}, we take $\bx_t$ to consist of the stock prices $S_t^{(i)}$'s and the $M$ implied volatilities that are required to compute the option prices $C_t^{(j)}$'s.
In any case it follows that $V_t = v(\bx_t)$ for some known pricing function $v: \mathbb{R}^n \to \mathbb{R}$, with the portfolio gain given by
\begin{equation}\label{eq:portpnl}
G_{t+1} = v(\bx_{t+1}) - v(\bx_t) = \sum_{i=1}^{n_s} u_i \big( P_{t+1}^{(i)} - P_t^{(i)}\big)
\end{equation}
where $n_s$ is the total number of securities in the portfolio, $u_i$ is the number of units of the $i^{\mathrm{th}}$ security\footnote{We omit the dependence of the time $t$ portfolio holdings on $t$ since they are assumed to be constant in any scenario analysis backtest. For the same reason, we also omit the dependence of the portfolio gain in (\ref{eq:portpnl}) on the portfolio holdings.}, and $P_t^{(i)} = P_t^{(i)}(\bx_t)$ is\footnote{It is tempting to write $P_t^{(i)} = P^{(i)}(\bx_t)$ for some fixed pricing function $P^{(i)}$ but we maintain the dependence on $t$ as many securities have a fixed maturity and so given $\bx_t$, security prices will still in general depend on $t$.} the time $t$ price of the $i^{\text{th}}$ security. We generally suppress the dependence of $P_t^{(i)}$ on $\bx_t$ but note that it typically depends on just a few (and often just one) components of $\bx_t$.

In financial risk management, it is common to employ a \textbf{common factor model}, in which a set of possibly latent factors $\bff_t$ drives the dynamics of observed risk factors $\bx_t$. A typical specification is
\begin{equation}\label{eq:factormodel}
\Delta \bx_{t+1} = \bsym{\mu} + \bB \, \Delta \bff_{t+1} + \bfxi_{t+1}, \quad t = 1,2,\dots,
\end{equation}
where $\bx_t \in \mathbb{R}^n$ is the vector of \textbf{risk factors}, with $\Delta \bx_{t+1} = \bx_{t+1} - \bx_t$, and $\bsym{\mu} \in \mathbb{R}^n$ is its per-period drift. The vector $\bff_t \in \mathbb{R}^m$ contains the \textbf{common factors}, with $\Delta \bff_{t+1} = \bff_{t+1} - \bff_t$. The factor loading matrix is $\bB = [\bb_1 \, \cdots \, \bb_m] \in \mathbb{R}^{n \times m}$, with $\bb_i \in \mathbb{R}^n$ denoting its $i^{\mathrm{th}}$ column. Finally, $\{\bfxi_t\}_{t=1,2,\ldots}$ is an i.i.d.\ sequence of zero-mean random vectors in $\mathbb{R}^n$, representing idiosyncratic shocks that are assumed to be independent of all other variables. In what follows, for any time series-based variable, we always use $\Delta$ to denote its change from time $t$ to $t+1$. The equity and options setting above can be modeled according to (\ref{eq:factormodel}).

\subsection{Standard Scenario Analysis}
\label{sec:SSA}

A fundamental goal of risk management is to understand the distribution of the portfolio gain $G_{t+1}$ or portfolio return $R_{t+1}$ conditional on $\cF_t$. Since computing or estimating the distribution is often too challenging in practice, we often aim instead to compute some functional of the distribution, such as the portfolio's value-at-risk (VaR) or conditional value-at-risk (CVaR). Alternatively, we may wish to conduct a scenario analysis (the topic of this paper) where the goal is to understand the distribution of $G_{t+1}$ conditional on $\cF_t$ and specific time $t+1$ risk scenarios.
In a standard scenario analysis (hereafter SSA), the risk manager would identify various stresses to apply to the changes (i.e., the returns) in the common factors $\bff_t$.
We can define a scenario by jointly stressing any number $k \leq m$ of the common factor returns.
In the case of a portfolio of futures and options on the S\&P 500 in the setting of Section~\ref{sec:Examples} (with a single underlying security), these stresses might include shifts to the value of the underlying, i.e. the S\&P 500, as well as some combination of parallel\footnote{A parallel shift may not be the best choice of stress for an implied volatility surface as it is well known that shorter maturity implied volatilities tend to move a lot more than longer maturity ones. It would, of course, be easy to define a common factor return that handles this phenomenon. It is perhaps worth emphasizing at this point, however, that we are agnostic as to how common factors and associated stresses / scenarios are defined. Rather, our goal in standard scenario analysis is to estimate the (expected) gains {\em given} these scenarios.} shifts to the implied volatility surface and a steepening or flattening of the skew (or term) structure of the implied volatilities. For example, suppose the first component of $\bx_t$ (i.e., $x_t^{(1)}$) represents the log S\&P 500 spot price, and the other components of $\bx_t$ represent the implied volatilities for the options that appear in the portfolio with various strike-maturity combinations.
If $ f_{t}^{(1)}$ also represents the log S\&P 500 spot price, then $\mu^{(1)} = 0, \bb_1 = (1, 0, \ldots, 0)$ and $\xi_t^{(1)}=0$ w.p. 1. Similarly, if the changes in $f_t^{(2)}$ correspond to the parallel shifts to the implied volatility surface, then the second column of $\bB$ would be $\bb_2 = (0 , 1 , \ldots \, 1)$. We can now consider a scenario where the returns of $ f_{t}^{(1)}$ and $ f_{t}^{(2)}$ are simultaneously stressed. For example, a scenario of interest might be one where $\Delta f_{t+1}^{(1)} = -5\%$ and $\Delta f_{t+1}^{(2)} = +10$, corresponding to a 5\% fall in the S\&P 500 spot price and a 10 volatility point increase across its entire implied volatility surface. The SSA approach computes the portfolio gain via (\ref{eq:portpnl}) with $\bx_{t+1}$ determined by (\ref{eq:factormodel}) and the scenario setting: $ \Delta f_{t+1}^{(1)} = -5\%$, $  \Delta f_{t+1}^{(2)} = +10$,  $ \Delta f_{t+1}^{(3)} = \Delta f_{t+1}^{(4)} = 0$ and $\bfxi_{t+1} = \mathbf{0}$.

In practice, multiple two- or even three-dimensional tables of the estimated portfolio gain can be computed, corresponding to the simultaneous stressing of $k=2$ or $k=3$ different common factor returns. It is important to emphasize, especially in the context of portfolios containing derivatives, that the typical risk / portfolio manager employing SSA does not have an explicit model like (\ref{eq:factormodel}) at hand nor does she need one. Indeed, most of the modeling effort in the derivatives space goes into developing dynamic pricing models under an equivalent martingale measure ${\cal Q}$ and then developing numerical procedures for computing prices and the so-called Greeks within these models.
Furthermore, it may be the case that only a subset of factors, say the first $l \leq m$, are ever considered for stressing. In that case, SSA works with a ``model'' of the form
\begin{equation}\label{eq:StanModel}
\Delta {\bx}_{t+1}  = \bsym{\mu} + \bB^{(:, 1:l)}   \Delta \bff_{t+1}^{(1:l)}.
\end{equation}
The important feature of (\ref{eq:StanModel}) is that noise terms, i.e., the $\bfxi_{t+1}$'s, and probability distributions / factor dynamics are not specified and in fact play no role in it.  It therefore follows that (\ref{eq:StanModel}) is {\em not} a probabilistic model for the risk factor ${\bx}_{t+1}$, and this is typically the case in SSA applications.

\subsubsection{The Observability of Common Factor Returns}
\label{sec:Observ}

At this point, we emphasize that throughout the paper, both $\bx_t$ and $\bff_t$ are assumed to be $\cF_t$-adapted and therefore observable at time $t$. This is true for $\bx_{t}$ since we can view $\bx_{t}$ as simply an alternative way to quote security prices, e.g., in terms of bond yields, or log prices and implied volatilities. We also assume that $\bff_t$ is observable since the stress scenarios will be defined in terms of $ \bff_{t}$, and it is necessary to define scenarios and also recognize what actual scenario transpired on any given day. That said, in some applications, a possible objection to this assumption is that some common factors $\bff_t$ might be latent. For example, in so-called cross-sectional or statistical factor models, e.g. \cite{ruppert}, the factors are assumed to be latent and need to be estimated. Clearly, it would be very difficult to evaluate a scenario analysis if some or all of the common factors that are used to define scenarios are latent, since then one can never tell for sure which scenario actually transpired.

An easy solution to this problem is to work with observable estimates or proxies of the common factors and to define scenarios by stressing the returns of these estimates / proxies. For the implied volatility surface of the S\&P 500 index, we can easily define observable proxies for any latent common factors. For example, we can easily use the observed implied volatilities to define the average level of implied volatility, the skew, and the steepness of the term-structure of volatilities, respectively. Hence, the changes in these common factor proxies between $t$ and $t+1$ will be observable. In the remainder of this paper we will therefore assume all common risk factors are observable.

%Ultimately, in order to conduct and evaluate any kind of scenario analysis, we need to be able to observe the stressed common factors $\bff_{t+1}^{(1:l)}$ in order to see what scenario actually transpired.

% In the following context, we do not assume (\ref{eq:factormodel}) is the underlying true model, although knowledge of the matrix $\bB$ in (\ref{eq:factormodel}) is required in order to conduct an SSA (since SSA requires a mechanism that translates a stress on the common factor returns to a stress on the risk factors). At a high level, we can think of the {\em true} relationship being
% $$
% \begin{aligned}
%     \bx_{t+1} &= h\left(\bff_{t+1}\right) + \bfxi_{t+1} \\
%     &= h\left(\bff_{t} + (\bff_{t+1} - \bff_{t})\right) + \bfxi_{t+1}
% \end{aligned}
% $$
% for some possibly non-linear and unknown function $h(\cdot)$, where the function $h(\cdot)$ translates a stress on $\bff_{t+1} - \bff_{t}$ to a stress on $\bx_{t+1}$. In contrast to SSA, the algorithms we propose later in Sections \ref{sec:ACSA} and \ref{sec:KSA} can be viewed as implicitly attempting to {\em learn} the above relationship with a view to providing more accurate portfolio gain estimates as well as coverage guarantees.

\subsection{The Problems with Standard Scenario Analysis}
\label{sec:SSA_Weaknesses}

Before proceeding, we point out several weaknesses of SSA which of course relate to the fact that (\ref{eq:StanModel}) is not a probabilistic model for the risk factors ${\bx}_{t+1}$. Some of these weaknesses were also highlighted by \cite{Haugh2020}.
\begin{enumerate}
\item SSA only produces a point estimate - typically interpreted as an expectation - of the portfolio gain in a given scenario. But an expectation only gives a partial view on the gain risk in a given scenario and it would be far more informative to provide a prediction interval for the realized scenario gain.

\item Let $\bs = \{s_1, \ldots , s_k\}$ where $1\leq s_1 < s_2 \leq \cdots < s_k \leq l$, and let $ \bff^{(\bs)}_{t+1} \defeq (f^{(s_1)}_{t+1}, \ldots, f^{(s_k)}_{t+1})$ denote the subset of common factors whose returns are stressed in a given scenario.  SSA implicitly assumes that the unstressed common factors are unchanged between $t$ and $t+1$, i.e., $ \Delta \bff^{(\bs^c)}_{t+1} = \mathbf{0}$ (we use $\bs^c$ to denote the complement of $\bs$, i.e. the indices of the non-stressed common factors in the scenario). But this is not justified and can often lead to a very inaccurate estimation for the given scenario.

%\footnote{We use $\mE_t[\cdot]$ throughout to denote expectations that are conditional on $\cF_t$. Hence $\mE_t[\cdot \mid \bff^{(\bs)}_{t+1}]$ denotes expectations that are conditional on $\cF_t$ {\em and } $\bff^{(\bs)}_{t+1}$.}

\item An obvious solution to the problem raised in the previous point would be to set the unstressed factors $ \bff^{(\bs^c)}_{t+1}$ equal to their conditional expectation $\mE_t[ \bff^{(\bs^c)}_{t+1} \mid  \bff^{(\bs)}_{t+1}]$\footnote{We use $\mE_t[\cdot]$ throughout to denote expectations that are conditional on $\cF_t$. Hence $\mE_t[\cdot \mid \bff^{(\bs)}_{t+1}]$ denotes expectations that are conditional on $\cF_t$ {\em and } $\bff^{(\bs)}_{t+1}$.}  when estimating the portfolio gain. This should certainly be an improvement over SSA, e.g. see \cite{Haugh2020}, but it ignores the uncertainty in $\bfxi_{t+1}$ and $\bff^{(\bs^c)}_{t+1} \mid (\cF_t, \,\bff^{(\bs)}_{t+1}) $. This uncertainty may be significant, particularly for portfolios containing securities whose values depend non-linearly on ${\bx}_{t+1}$.
But even if we ignore this uncertainty, setting $\bff^{(\bs^c)}_{t+1} = \mE_t[ \bff^{(\bs^c)}_{t+1} \mid  \bff^{(\bs)}_{t+1}]$ is not a straightforward task as it requires a model for estimating these conditional expectations.

\item SSA is rarely if ever accompanied by any sort of backtesting or statistical validation. Perhaps the main reason for this is that each of the scenarios considered by an SSA is (typically) a zero probability event, and none of the considered scenarios will have actually occurred on day $t+1$. However, this problem is easy to resolve. Specifically, on day $t+1$ we could ``see'' exactly what scenario has just transpired. For example, we could see what the return in the S\&P 500 and what the parallel change in the implied volatility surface were over the period $[t,t+1]$. We could then roll time back one day, rerun the scenario analysis for exactly this realized scenario and then compare the estimated and realized portfolio gains.

\item Another weakness with SSA is that it relies on a reduced-form model such as (\ref{eq:StanModel}).  But if a portfolio is constructed with a view to (i) being neutral relative to all the common factor returns that are ever considered for stressing but (ii) highly exposed to non-stressed common factor returns, then an SSA will drastically underestimate the risk of the portfolio. This, of course, is related to other points raised above but the problem here is that the risk manager may not even be aware of the non-stressed common factors, whereas the trader or portfolio manager is aware of them. We refer to this as {\em adversarial portfolio selection}.  A more benign possibility is simply that one or more of the common factors are misspecified, which can also lead to reported expected scenario gains that are misleading.

%For example, instead of recognizing that short-term implied volatilities tend to move a lot more than longer-term implied volatilities, we might mistakenly assume that the first implied volatility common factor models {\em parallel} moves in the implied volatility surface. A portfolio might then appear to be neutral to parallel moves but in fact be quite exposed to such moves. To see this, consider a portfolio of options that is long short-term implied volatility but takes an off-setting position in long-term implied volatility options. If all implied volatilities move up or down by the same amount then the portfolio will be (approximately neutral). However, the observed parallel move will simply be the average move of  implied vols across all pre-specified strike-maturity combinations. If we condition on this (as we would do in a scenario), then the resulting conditional distribution would likely show more movement in the short-term vols due to a square-root of time ground truth.
\end{enumerate}

\cite{Haugh2020} addressed some of these weaknesses by embedding the scenario analysis within a dynamic factor model for the underlying risk factors. Their approach required multivariate state-space models for modeling the time-series behavior of (sometimes latent) risk factors, and that were sufficiently tractable as to enable the computation (possibly via simulation) of the conditional distribution of unstressed risk factors. They showed how SSA and their approach could lead to dramatically different results, particularly in the case of adversarial portfolio selection.
They adopted a parametric approach, however, which meant their reported expected scenario gains were only accurate to the extent that their modeling was correct. In contrast, in this paper we propose a more holistic approach to scenario analysis that does not rely on the correct specification of a parametric model defining the risk factor and common risk factor dynamics. In addition, a key feature of our framework is the provision of coverage guarantees for the prediction intervals we report for our portfolio gains.

\section{A Motivating Example: A 3-Factor Log-Linear Equity Model} \label{sec:MotivatingEG}

In this section, we present a simple motivating example that illustrates how inaccurate SSA can be. We then revisit this example in Sections \ref{sec:ACSA} and \ref{sec:KSA}, where we show how ACSA and KSA produce calibrated prediction intervals for the realized gain under each scenario.

%Throughout the paper, we evaluate the portfolio gain for each day $t+1$ using the previous day’s ($t$) portfolio value normalized to a constant $V_0$ (e.g., $5000$). This quantity is equivalent to the daily portfolio return, but for consistency with the commonly used terminology, we refer to it as the portfolio gain.

\paragraph{Market Dynamics}
We assume the market consists of a single non-dividend paying equity index, e.g. the S\&P 500, with time $t$ price $S_t$ and daily return dynamics that are IID and satisfy
\begin{equation} \label{eq:StockReturn1}
\log \left(\frac{S_{t+1}}{S_t}\right)
      = \delta\cdot  \left(\mu_\delta-\frac{\sigma_\epsilon^2}{2}\right) + \bsym{\beta}^{\top}\bF_{t+1} + \epsilon_{t+1},
\end{equation}
where $\delta = 1/252$ denotes the length in years of one trading day,
$\mathbf F_{t+1} \sim \mathcal{N}(\mathbf 0,\delta\, \bsym{\Sigma})$
is a 3-dimensional observable factor return, with $\bsym{\Sigma}$ denoting
its annualized covariance matrix, and
$\epsilon_{t+1} \sim \mathcal{N}\bigl(0,\delta\,\sigma_\epsilon^2\bigr)$
is an idiosyncratic random noise variable with annualized volatility
$\sigma_\epsilon$.
The $-\sigma_\epsilon^2/2$ term in \eqref{eq:StockReturn1} reflects the usual
It\^{o} correction associated with
a geometric Brownian motion type specification.\footnote{For a geometric Brownian motion satisfying
$\frac{\mathrm{d}S_t}{S_t}=\mu\,\mathrm{d}t+\sigma\,\mathrm{d}W_t$,
It\^{o}'s lemma gives
$\mathrm{d}\log S_t=(\mu-\sigma^2/2)\,\mathrm{d}t+\sigma\,\mathrm{d}W_t$.}
The common-factor contribution $\boldsymbol{\beta}^{\top}\mathbf F_{t+1}$ is specified in log-return form. Accordingly, we
interpret $\mu_\delta$ as the annualized drift parameter appearing in
\eqref{eq:StockReturn1}.
We interpret the three factor returns as \textit{oil}, \textit{interest rate},
and \textit{credit spread} shocks. We take $\mu_\delta = 0.03$, and the factor
loadings $\boldsymbol{\beta} =: (\beta_1,\beta_2,\beta_3)$, together with their
interpretations, are given in Table~\ref{tb:Loadings}.

\begin{table}[h]
\begin{center}
\begin{tabular}{@{}lccc@{}}
\toprule
Factor & Daily observable shock & $\beta_k$ \\ \midrule
Oil & \% daily change in Brent/WTI spot & $-0.20$ \\
Rate & Daily change in 10y U.S.\ yield (bp) & $-0.15$ \\
Credit & Daily change in ICE/BofA US HY OAS (bp) & $-0.30$ \\ \bottomrule
\end{tabular}
\end{center}
\vspace{-0.5cm}
\caption{The three factor loadings and their interpretation.}
\label{tb:Loadings}
\end{table}
We note that \eqref{eq:StockReturn1} agrees with the model structure specified in \eqref{eq:factormodel}, with $\bx_t = \log S_t$, $\bsym{\mu} =  \delta\cdot  \left(\mu_\delta-\frac{\sigma_\epsilon^2}{2}\right)$, $\bB = \bsym{\beta}^\top$, $\Delta \bff_{t+1} = \bF_{t+1}$ and $\bsym{\xi}_{t+1} = \epsilon_{t+1}$. It follows from (\ref{eq:StockReturn1}) that the $T$-day return satisfies
\begin{equation} \label{eq:StockReturn2}
\log \left(\frac{S_{T}}{S_0}\right)
      = T\cdot \delta\,  \left(\mu_\delta-\frac{\sigma_\epsilon^2}{2}\right) + \bsym{\beta}^\top\bF_{1:T} + \epsilon_{1:T}
\end{equation}
where   $\mathbf F_{1:T} = \sum_{t=1}^T F_t \sim \mathcal{N}(\mathbf 0,T \cdot \delta\, \bsym{\Sigma})$ is the $T$-day factor return and
$\epsilon_{1:T} =\sum_{t=1}^T \epsilon_{t} \sim \mathcal{N} \Bigl(0,\,
T \cdot \delta\, \sigma_\epsilon^2\Bigr)$. We assume $S_0=5000$ and the annual covariance matrix (corresponding to taking $T=252$ days so that $T\cdot \delta = 1$) is given by $\bsym{\Sigma}=
\begin{bmatrix}
0.0900 & 0.0009 & -0.0240\\
0.0009 & 0.0001 & -0.0016\\
-0.0240 & -0.0016 & 0.1600
\end{bmatrix}.$ This in turn implies correlations  $\rho_{\text{\tiny Oil,Rate}}=0.30$, $\rho_{\text{\tiny Oil,Credit}}=-0.20$ and $\rho_{\text{\tiny Rate,Credit}}=-0.40$. Finally, we take $\sigma_\epsilon = 18\%$ so that the total annual volatility is
\begin{equation}\label{eq:expression_of_sig_total}
\sigma_{\text{tot}} = \sqrt{\bsym{\beta}^{\mathsf T}\bsym{\Sigma}\boldsymbol{\beta}+ \sigma_{\epsilon}^2} \approx 21.8\%.
\end{equation}
Throughout this example, $\bsym{\Sigma}$ and $\sigma_\epsilon^2$ denote
annualized covariance and variance parameters, respectively. One-day variances are obtained by multiplying the corresponding annualized
quantities by $\delta=1/252$.

\paragraph{Portfolio}
Our portfolio consists of a single short position in a European call option on the index with a time to maturity of one month (i.e., $\tau = 21/252$) and a fixed moneyness (the strike price divided by the underlying asset's price) $m=1.05$, i.e., the call option is 5\% out-of-the-money. Given that $S_0=5000$, the strike price would be $K = 1.05 \times S_0 = 5,250$. We take the continuously compounded interest rate to be $r=4\%$ so that the initial value of the option is then given by the Black-Scholes formula and equals $\$44.2$. (The use of the Black-Scholes formula to price options is consistent with the log-normal dynamics of $S_T$ as evidenced by (\ref{eq:StockReturn2}).)

\paragraph{Scenario Analysis} We consider stress testing the portfolio over a single trading day by considering joint stresses to the oil and rate factors.
In particular, we consider percentage shocks of size $\{-20, -12, -6, 0, +6, +12, +20\}$ to the oil factor and of size $\{-30, -20, -10, 0, +10, +20, +30\}$ b.p.'s to the rate factor. We use $\bz = (z_{\text{\tiny Oil}}, z_{\text{\tiny Rate}})$ to denote a scenario so, for example, $\bz = (+12\%,-20\, \mathrm{b.p.})$ denotes a stress of $+12\%$ to the oil factor return and $-20\, \mathrm{b.p.}$ to the rate factor return. While some of these stresses are considerably more extreme than would be implied by their 1-day volatilities,  scenario analysis stresses in practice are often much larger than those implied by the size of typical daily moves. For example, oil price shocks of -20\% occurred in 2020 during the COVID crash, and +15\% days occurred on days of OPEC supply cuts. Similarly, $\pm 30$ b.p. moves in 10-year yields can occur when the U.S. Federal Reserve surprises the market, and $\pm 10$ b.p. moves are quite common on FOMC days.

\paragraph{Computing the Conditional 1-Day Log Return Distribution}

The credit factor is not stressed in any scenario, and its conditional mean and annualized conditional variance in a given scenario $\bz$ are
\begin{eqnarray}\label{eq:define_of_condmu}
\mu_{\text{\tiny Cred}|s} &=&\boldsymbol{\Sigma}_{\text{\tiny Cred},s}\boldsymbol{\Sigma}_{ss}^{-1}\bz \label{eq:CondCreditMean} \\
\sigma^{2}_{\text{\tiny Cred}|s} &=& \Sigma_{\text{\tiny Cred,Cred}}
-\bsig_{\text{\tiny Cred},s}\bsig_{ss}^{-1}\bsig_{s,\text{\tiny Cred}} \nonumber
\end{eqnarray}
where $\Sigma$ is partitioned according to
$
\bsym{\Sigma} = \left[
\begin{array}{ll}
 \bsig_{ss} & \bsig_{s,\text{\tiny Cred}} \\
\bsig_{\text{\tiny Cred},s} & \Sigma_{\text{\tiny Cred}, \text{\tiny Cred}}
\end{array}
\right].
$
We note that $\mu_{\text{\tiny Cred}\mid s}$ and $\sigma_{\text{\tiny Cred}\mid s}$ depend on the stress in distinct ways: $\sigma_{\text{\tiny Cred}\mid s}$ is determined by the {\em choice} of stressed factors (i.e., the subset $\bs$ of common factors), whereas $\mu_{\text{\tiny Cred}\mid s}$ also reflects the specific stress vector $\bz$ applied to those factors. Let $\mathbf{F}_s=(F_{\text{\tiny Oil}},F_{\text{\tiny Rate}})$ be the
stressed pair\footnote{We allow for a slight abuse of notation here. In particular, $\mathbf{F}_s \in \mathbb{R}^2$ denotes the two stressed factor returns over a single day while we have also used $\mathbf{F}_{t}  \in \mathbb{R}^3$ to denote the random factor return vector between times $t-1$ and $t$.} of factors and $F_{\text{\tiny Cred}}$ the unstressed credit–spread shock.
Then
\begin{equation}\label{eq:logret_cond_dist}
    \log\left(S_{t+1}/S_t\right)\Bigl|\,
\left(\mathbf F_s=\bz \right)
\;\sim\;
\mathcal{N}\bigl(m_{\bz},\,v_{\text{\tiny Cred}}\bigr),
\end{equation}
where
\begin{equation}\label{eq:define_of_ms}
    m_{\bz}=
\delta\,  \left(\mu_\delta-\frac{\sigma_\epsilon^2}{2}\right)
+\beta_{\text{\tiny Oil}}\,z_{\text{\tiny Oil}}
+\beta_{\text{\tiny Rate}}\,z_{\text{\tiny Rate}}
+\beta_{\text{\tiny Cred}}\,\mu_{\text{\tiny Cred}\mid s},
\quad
v_{\text{\tiny Cred}}=  \delta\cdot \left(\beta_{\text{\tiny Cred}}^{2} \sigma_{\text{\tiny Cred} | s}^2
+\sigma_\epsilon^{2}\right).
\end{equation}

% Due to $\Delta \ll 1$, we can approximate the conditional distribution with
% $$
% m_{\bz}\approx \beta_{\text{\tiny Oil}}\,f_{\text{\tiny Oil}}
% +\beta_{\text{\tiny Rate}}\,f_{\text{\tiny Rate}}
% +\beta_{\text{\tiny Cred}}\,\mu_{\text{\tiny Cred}\mid s},\qquad v_s\approx \delta\cdot\sigma_\epsilon^{2}.
% $$
\begin{comment}
and we have omitted\footnote{This term makes no noticeable difference to the option prices and portfolio gains under the various scenarios in the tables below. It is also why we don't need to specify $\alpha$.} a negligible term in $m_{\bz}$ that arises due to time moving forward $1$ day when we reprice the options in the various scenarios.
\end{comment}
\paragraph{Repricing the Short Option Position} We can price the portfolio option using the Black-Scholes formula in every scenario, i.e., assuming a continuously compounded risk-free interest rate of $r=4\%$ and volatility $\sigma_{\text{tot}}$. Because the scenario stresses roll time forward by 1 day, the time to maturity\footnote{The short option position in the portfolio means the portfolio gains approximately $2.55$ dollars at the zero-stress spot level from the one-day theta effect.} when repricing the options in each scenario is now 20/252.

\paragraph{Results}Table \ref{tab:SSA} reports the estimated scenario P\&L or portfolio gain (in dollars) for SSA and conditional-mean scenario analysis (CSA). The CSA entries are obtained by setting the unstressed credit factor to its mean conditional on the scenario, i.e. $\mu_{\text{\tiny Cred}|s}$ from (\ref{eq:CondCreditMean}). Because the idiosyncratic noise $\epsilon$ is uncorrelated with the factor noise, we set it to its mean of zero in each scenario. Finally, in Table \ref{tab:Convex} we present results for the oracle scenario analysis where the portfolio gains are obtained via (\ref{eq:logret_cond_dist}) by integrating the Black-Scholes price of the option over the residual uncertainty of the credit factor conditional on the scenario.

\begin{table}[ht]
\centering
\setlength{\tabcolsep}{1.5pt}
\renewcommand{\arraystretch}{1.08}
\begin{tabular*}{\textwidth}{@{\extracolsep{\fill}}r|*{7}{r}|*{7}{r}@{}}
\hline
 & \multicolumn{7}{c|}{\bfseries SSA} & \multicolumn{7}{c}{\bfseries CSA} \\
\hline
 & \multicolumn{7}{c|}{Oil (\%)} & \multicolumn{7}{c}{Oil (\%)} \\
Rate (bp) & $-20$ & $-12$ & $-6$ & $0$ & $+6$ & $+12$ & $+20$
& $-20$ & $-12$ & $-6$ & $0$ & $+6$ & $+12$ & $+20$ \\
\hline
$-30$
& -70.7 & -35.3 & -14.5 & 1.9 & 14.6 & 24.0 & 32.6
& -27.0 & -6.2 & 6.1 & 15.9 & 23.5 & 29.4 & 35.0 \\

$-20$
& -70.3 & -35.0 & -14.3 & 2.1 & 14.7 & 24.1 & 32.6
& -35.0 & -12.5 & 1.0 & 11.9 & 20.4 & 27.0 & 33.4 \\

$-10$
& -69.9 & -34.8 & -14.0 & 2.3 & 14.8 & 24.2 & 32.7
& -43.7 & -19.3 & -4.6 & 7.4 & 16.9 & 24.4 & 31.6 \\

$0$
& -69.6 & -34.5 & -13.8 & 2.5 & 15.0 & 24.3 & 32.8
& -53.0 & -26.7 & -10.6 & 2.5 & 13.0 & 21.3 & 29.5 \\

$+10$
& -69.2 & -34.2 & -13.6 & 2.7 & 15.1 & 24.4 & 32.8
& -63.0 & -34.7 & -17.3 & -2.9 & 8.7 & 17.9 & 27.1 \\

$+20$
& -68.8 & -33.9 & -13.4 & 2.8 & 15.2 & 24.5 & 32.9
& -73.7 & -43.3 & -24.5 & -8.9 & 3.9 & 14.2 & 24.5 \\

$+30$
& -68.5 & -33.6 & -13.1 & 3.0 & 15.4 & 24.6 & 32.9
& -85.0 & -52.6 & -32.4 & -15.4 & -1.4 & 10.0 & 21.5 \\
\hline
\end{tabular*}
\caption{Scenario gains in dollars, rounded to one decimal place. The left panel reports standard scenario analysis (SSA), which sets the credit factor to zero. The right panel reports conditional-mean scenario analysis (CSA), which sets it to its scenario-conditional mean. Both set independent idiosyncratic noise to zero. Rate stresses are in basis points (bp) and oil stresses are in percent.}
\label{tab:SSA}
\label{tab:CondMean}
\end{table}

\begin{table}[ht]
\centering
\begin{tabular}{r|rrrrrrr}
\hline
Rate$\backslash$Oil & $-20\%$ & $-12\%$ & $-6\%$ & $0\%$ & $+6\%$ & $+12\%$ & $+20\%$\\
\hline
$-30$ bp & -29.85 & -8.73 & 3.84 & 13.94 & 21.91 & 28.08 & 34.03\\
$-20$ bp & -37.98 & -15.10 & -1.35 & 9.80 & 18.66 & 25.58 & 32.32\\
$-10$ bp & -46.73 & -22.02 & -7.04 & 5.21 & 15.04 & 22.77 & 30.38\\
$0$ bp   & -56.11 & -29.52 & -13.25 & 0.16 & 11.01 & 19.62 & 28.18\\
$+10$ bp & -66.14 & -37.62 & -20.01 & -5.38 & 6.55 & 16.10 & 25.69\\
$+20$ bp & -76.84 & -46.34 & -27.34 & -11.44 & 1.64 & 12.19 & 22.90\\
$+30$ bp & -88.20 & -55.69 & -35.27 & -18.04 & -3.76 & 7.86 & 19.76\\
\hline
\end{tabular}
\caption{Oracle Scenario Analysis. Expected value of option is computed in each scenario by
integrating the Black-Scholes price of the option over the residual uncertainty of the credit factor
and the idiosyncratic noise $\epsilon$ in that scenario.}
\label{tab:Convex}
\end{table}

Several observations are in order. First, we note there is a considerable difference in the scenario P\&Ls across the three methods, illustrating just how inaccurate an SSA can be. We can interpret the improvement from the SSA entries to the CSA entries in Table \ref{tab:SSA} as being due to a correlation correction induced by setting the credit factor $F_{\text{\tiny Cred}}$ to its conditional mean $\mu_{\text{\tiny Cred}|s}$ in each scenario. This correction can swing the  P\&L by up to approximately $\$44$, and generally results in less extreme scenario P\&L's because the credit factor is negatively correlated with the oil and rate factors and therefore provides a partial hedge in extreme scenarios. The differences (approximately $\$1.0$--$\$3.2$)  between the CSA entries in Table \ref{tab:SSA} and Table \ref{tab:Convex} can be interpreted as a convexity correction. In particular, while the CSA entries contain the correct ``delta'' correction (in contrast to the SSA entries), Table \ref{tab:Convex} may be viewed as having an additional ``gamma'' correction by explicitly recognizing the uncertainty in the index price in each scenario. Because the portfolio is short a European option and therefore short gamma, we see that the P\&L's in Table \ref{tab:Convex} are always smaller than
the corresponding CSA P\&L's in Table \ref{tab:SSA}. We also mention that the $(0,0)$ entries are not zero for either SSA or CSA in Table \ref{tab:SSA}. This is due to the ``theta'' effect whereby the portfolio gains when the short option position has 1 day less to maturity.

It's worth remarking that the difference between the (inaccurate) SSA entries in Table~\ref{tab:SSA} and (the accurate) Table~\ref{tab:Convex} (or even between the CSA entries in Table~\ref{tab:SSA} and Table~\ref{tab:Convex}) can be much greater depending on the portfolio in question. For example, an adversarial portfolio (as discussed in weakness \# 5 of SSA in Section \ref{sec:SSA_Weaknesses}) could yield wildly inaccurate estimates of scenario gains when these estimates are conducted via SSA.
Table \ref{tab:Convex} requires knowing the conditional distribution in each scenario (and integrating over residual uncertainty), which is infeasible in realistic markets. Sections \ref{sec:ACSA}-\ref{sec:KSA} show how ACSA and KSA learn scenario-wise prediction intervals from data and then calibrate them via backtesting, avoiding reliance on a correct parametric specification.

\section{Adaptive Conformal Scenario Analysis}
\label{sec:ACSA}

In this section, we introduce our {\bf adaptive conformal scenario analysis (ACSA)} algorithm. This algorithm resembles the adaptive conformal prediction algorithm recently proposed by \citet{gibbs2021adaptive}, and we apply it with some minor modifications to our scenario analysis setting. The major advantages of the ACSA algorithm are its simplicity and its wide applicability. The main downside of the algorithm is that its theoretical guarantee is only a marginal guarantee, which is not entirely satisfactory in a risk management setting where conditional guarantees would be more desirable. This desire for an algorithm with conditional guarantees motivated our development of the {\bf kernel scenario analysis (KSA)} algorithm in Section \ref{sec:KSA}. Nevertheless, as we shall explain later, we view the two algorithms as complementary to each other.

We use $\bz\in \mR^{k}$ to represent a specific stress scenario so that, for example, if a stress $\bz$ is applied to $\bff_t^{(\bs)}$, then the resulting common factors after being stressed satisfy $\Delta \bff_{t+1}^{(\bs)} = \bz$ (i.e., $\bff_{t+1}^{(\bs)} = \bff_t^{(\bs)} + \bz$). Take the equity and options portfolio in Section~\ref{sec:Examples}, for instance. If we apply the stress as specified in Section \ref{sec:SSA}, then $\bz = (-5\%, +10)$. We also use $\bz_{t+1}$ for the {\em actual scenario} that transpired at time $t+1$: once the realization of $\bff_{t+1}$ is observed, the corresponding value of $\bz_{t+1}$ is also revealed as $\bz_{t+1} = \bff_{t+1}^{(\bs)} - \bff_{t}^{(\bs)} = \Delta \bff_{t+1}^{(\bs)}$.
The ACSA algorithm (and indeed the KSA algorithm of Section \ref{sec:KSA}) will take as input a miscoverage rate $\alpha \in (0,1)$ and then aim to provide a $100 \times (1-\alpha)\%$ prediction interval for $G_{t+1}\mid (\Delta\bff_{t+1}^{(\bs)} =\bz,{\cal F}_t)$ (which we generally abbreviate as $G_{t+1}\mid \bz$) for each scenario $\bz$ under consideration.

\begin{comment}
\begin{rem}
We could compute a prediction interval for a fine grid of miscoverage rates $0 < \alpha_1 < \alpha_2 < \cdots < \alpha_M < 1$ and then use these intervals to approximate the distribution of  $G_{t+1}\mid \bz$ and estimate $\mE_t[G_{t+1}\mid \bz\,]$. Here and in Section \ref{sec:KSA}, however, we will focus on a single miscoverage rate $\alpha$.
\end{rem}
\end{comment}

\subsection{The ACSA Algorithm}
\label{sec:ACSA_Alg}

We present ACSA in Algorithm~\ref{alg:ACSA}. At each time $t$, a quantile
predictor $g_t(\bz;q)$ is used to construct a prediction interval for the
next-period portfolio gain $G_{t+1}$ under each scenario $\bz$. The target
miscoverage rate is $\alpha$, but ACSA does not keep the quantile level fixed
at $\alpha$. Instead, it maintains an adjusted level $\tilde{\alpha}_t$ and
updates it according to whether the most recent interval covered the realized
portfolio gain: a miss decreases $\tilde{\alpha}_t$ and therefore tends to
widen subsequent intervals, whereas successful coverage increases
$\tilde{\alpha}_t$ and tends to shrink them.

For any adjusted level $\tilde{\alpha}\in\mathbb R$, define the prediction set by
\begin{equation}
\label{eq:acsa_prediction_set}
\Ct_t(\bz;\tilde{\alpha})=
\begin{cases}
\mathbb R, & \tilde{\alpha}\leq 0,\\
\big[
\min\{g_t(\bz;\tilde{\alpha}/2),\,g_t(\bz;1-\tilde{\alpha}/2)\},\,\,
\max\{g_t(\bz;\tilde{\alpha}/2),\,g_t(\bz;1-\tilde{\alpha}/2)\}
\big],
& 0<\tilde{\alpha}<1,\\
\emptyset, & \tilde{\alpha}\geq 1.
\end{cases}
\end{equation}
The boundary cases in (\ref{eq:acsa_prediction_set}) ensure certain
coverage when $\tilde{\alpha}\leq0$ and certain miscoverage when
$\tilde{\alpha}\geq1$, as required by the proof of
Theorem~\ref{thm:empirical_coverage}. If the response is known to lie
almost surely in a finite interval, that interval can replace
$\mathbb R$ without changing the coverage indicator. In our numerical experiments, when
$\widetilde{\alpha}_t\leq 0$, we replace $\mathbb{R}$ by the empirical range of
historical portfolio gains recomputed under the current portfolio configuration\footnote{Since this empirical range is
not an almost-sure bound on future gains, this finite-interval substitution is an
implementation approximation of \eqref{eq:acsa_prediction_set}.
Additional implementation details are provided in E-Companion \ref{app:algo_details}}.

Apart from the mild regularity conditions stated in Assumption \ref{assump:g_t_basicAssump} below, the quantile predictor
$g_t$ in \eqref{eq:acsa_prediction_set} is otherwise unrestricted. It may be
obtained from linear or nonlinear quantile regression, a neural network, or the
KSA procedure introduced later in Section~\ref{sec:KSA}. The full procedure is
summarized in Algorithm~\ref{alg:ACSA}.

\begin{algorithm}[ht!]
    \caption{Adaptive Conformal Scenario Analysis (ACSA)}
    \label{alg:ACSA}
    \begin{algorithmic}[1]
    \Require Target coverage rate $1-\alpha$, step size $\gamma>0$,
    initial quantile predictor of the portfolio gain $g_1(\cdot)$,
    current time index $T$

    \State Set $\tilde{\alpha}_1 = \alpha$
    \label{line:OSA_init}

    \For{$t=2,\ldots,T$}
        \State Observe the realized portfolio gain $G_t$ and realized stress scenario $\bz_t$. Compute
        \label{line:cal_error}
        \begin{equation}
        \label{eq:error_t}
            \mathrm{err}_{t}
            \gets
            \bbone\{G_t\notin
            \Ct_{t-1}(\bz_t;\tilde{\alpha}_{t-1})\}
        \end{equation}

        \State Update $\tilde{\alpha}_{t}$ according to
        \label{line:updateBeta}
        \begin{equation}
        \label{eq:OSA_update}
            \tilde{\alpha}_{t}
            \gets
            \tilde{\alpha}_{t-1}
            +\gamma\cdot(\alpha-\mathrm{err}_{t})
        \end{equation}

        \State Update / retrain the quantile predictor $g_t$, if desired
        \label{line:alg_1_updateg}

        \State \textbf{Output:} For any stress scenario $\bz$, produce the prediction set following \eqref{eq:acsa_prediction_set}
        \label{line:produce_OSAinterval}
        \begin{equation}
        \label{eq:OSAinterval}
            \Ct_t(\bz)
            \gets
            \Ct_t(\bz;\tilde{\alpha}_t)
        \end{equation}

    \EndFor
    \end{algorithmic}
\end{algorithm}

When $0<\tilde{\alpha}_t<1$, the two endpoints in
\eqref{eq:acsa_prediction_set} correspond to estimated conditional
quantiles of $G_{t+1}$ under scenario $\bz$. If $g_t$ were perfectly
calibrated, the resulting interval would therefore have approximately
$100(1-\tilde{\alpha}_t)\%$ conditional coverage. ACSA does not require
this perfect calibration to be true. Its purpose is precisely to compensate for systematic
miscalibration in the underlying predictor. To see the feedback mechanism, suppose the interval reported at time
$t-1$ fails to cover $G_t$. Then $\mathrm{err}_t=1$ and
$\tilde{\alpha}_t
=
\tilde{\alpha}_{t-1}-\gamma(1-\alpha),$ so the adjusted miscoverage level decreases and the next interval will
typically be wider. Conversely, if the previous interval covers $G_t$,
then $\mathrm{err}_t=0$ and $\tilde{\alpha}_t
=
\tilde{\alpha}_{t-1}+\gamma\alpha$, which tends to make the next interval narrower. Thus ACSA continually
uses realized coverage errors to correct the quantile level supplied to
$g_t$. For example, suppose intervals constructed using the nominal
miscoverage level $\alpha=10\%$ achieve only $80\%$ coverage rather than
the desired $90\%$. Repeated misses will push $\tilde{\alpha}_t$ below
$10\%$, causing the procedure to request more extreme lower and upper
quantiles from $g_t$. An adjusted level near $5\%$, for instance, may
produce intervals whose nominal coverage is $95\%$ but whose realized
coverage is closer to the desired $90\%$. The adjusted level should
therefore be interpreted as a calibration parameter rather than as the
true miscoverage probability of the underlying quantile model. A more accurate predictor $g_t$ will generally yield sharper intervals,
so the quality of $g_t$ remains important in practice. The coverage
guarantee of ACSA, however, does not rely on $g_t$ being an accurate
conditional-quantile estimator. We require only the following basic
regularity condition.

\begin{assum}
\label{assump:g_t_basicAssump}
For each $t\geq1$, the predictor
$g_t:\mathbb R^k\times(0,1)\to\mathbb R$ is finite-valued and measurable
with respect to the information available at time $t$ and its scenario
and quantile-level arguments. The realized portfolio gains $G_t$ are
finite almost surely.
\end{assum}
Before stating the coverage result, we give two simple examples that
illustrate the dynamics of the adjusted level. A third example comparing
conditional and stationary quantile predictors at a common adjusted
level is provided in E-Companion~\ref{subapx:complex_example_ACSA}. For
clarity, we do not condition on any scenario in these examples.
\vspace{-0.15cm}
\begin{example}[Exact Conditional Quantiles]
\label{eg:Ideal1}
\sffamily\upshape
Suppose the conditional distribution of $G_t$ given
$\mathcal F_{t-1}$ is continuous and $g_{t-1}$ returns its exact
conditional quantiles at every level in $(0,1)$. Define $\pi(a):=\min\{1,\max\{0,a\}\}.$ For the prediction set defined in (\ref{eq:acsa_prediction_set}), the
conditional miscoverage probability is $\pi(\tilde{\alpha}_{t-1})$:
it is zero when $\tilde{\alpha}_{t-1}\leq0$ and one when
$\tilde{\alpha}_{t-1}\geq1$.
Hence
\begin{equation}
\label{eq:ACSA_Iter1}
\mathbb E_{t-1}[\tilde{\alpha}_t]
=
\tilde{\alpha}_{t-1}
+\gamma\bigl(\alpha-\pi(\tilde{\alpha}_{t-1})\bigr).
\end{equation}
When $0\leq\tilde{\alpha}_{t-1}\leq1$,  (\ref{eq:ACSA_Iter1}) gives
$\mathbb E_{t-1}[\tilde{\alpha}_t]
=(1-\gamma)\tilde{\alpha}_{t-1}+\gamma\alpha$. Thus, under the exact conditional-quantile predictor assumed here,
the adjusted level exhibits mean reversion toward the nominal level
$\alpha$: its conditional expected change is positive when
$\tilde{\alpha}_{t-1}<\alpha$ and negative when
$\tilde{\alpha}_{t-1}>\alpha$. If
$\tilde{\alpha}_{t-1}\leq0$, the next update increases the level by
$\gamma\alpha$; if $\tilde{\alpha}_{t-1}\geq1$, it decreases the level
by $\gamma(1-\alpha)$. Thus the boundary behavior naturally pushes the
adjusted level back toward $(0,1)$.
\end{example}\medskip
%\vspace{-0.6cm}
\begin{example}[Constant Quantile Predictor]
\label{eg:Ideal2}
\sffamily\upshape
Suppose $g_t(\bz;u)=C$ for every $t$, $\bz$, and $u\in(0,1)$, where
$C\in\mathbb R$ is fixed, and suppose each $G_t$ has a continuous
distribution. Whenever $0<\tilde{\alpha}_t<1$, the prediction set is
the singleton $\{C\}$ and therefore misses almost surely. Repeated
misses decrease $\tilde{\alpha}_t$ until it becomes nonpositive, at
which point the prediction set becomes $\mathbb R$ and the next
observation is covered with probability one. The subsequent update then
moves $\tilde{\alpha}_t$ upward again. The adjusted levels $\tilde{\alpha}_t$ remain bounded, so the telescoping
identity (\ref{eq:acsa-telescoping-identity}) from the proof of Theorem~\ref{thm:empirical_coverage}  implies that the empirical
coverage rate converges almost surely to $1-\alpha$, as stated in
Theorem~\ref{thm:empirical_coverage}. This example also
shows that coverage alone does not imply informative prediction
intervals: a poor predictor can satisfy the marginal guarantee only
through highly conservative boundary behavior.
\end{example}\medskip
\vspace{-0.4cm}
We are now ready to state the main theoretical guarantee. The result is
an application of the adaptive conformal argument of
\citet{gibbs2021adaptive}; the contribution here is its scenario-analysis
formulation and interpretation rather than a new conformal theorem.
At each time $t$, ACSA reports a collection of intervals, one for every
scenario $\bz$ under consideration. The coverage result, however, is
evaluated only along the sequence of scenarios that are actually
realized.

\begin{thm}
\label{thm:empirical_coverage}
Let $\alpha\in(0,1)$ and $\gamma>0$. Under
Assumption~\ref{assump:g_t_basicAssump}, Algorithm~\ref{alg:ACSA},
initialized with $\tilde{\alpha}_1=\alpha$ and using the prediction sets
in \eqref{eq:acsa_prediction_set}, satisfies
\[
\left|
\frac{1}{T}\sum_{t=1}^T
\bbone\{G_{t+1}\in\Ct_t(\bz_{t+1})\}
-(1-\alpha)
\right|
\leq
\frac{\max\{\alpha,1-\alpha\}+\gamma}{T\gamma}
\]
almost surely. In particular, as $T\to\infty$, the empirical coverage
rate converges to $1-\alpha$ almost surely.
\end{thm}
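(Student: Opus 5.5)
The plan is the standard telescoping argument of \citet{gibbs2021adaptive}, adapted to the indexing of Algorithm~\ref{alg:ACSA}. Write $\mathrm{err}_{t+1}=\bbone\{G_{t+1}\notin\Ct_t(\bz_{t+1};\tilde{\alpha}_t)\}$. Since $\Ct_t(\bz_{t+1})=\Ct_t(\bz_{t+1};\tilde{\alpha}_t)$ by \eqref{eq:OSAinterval}, the coverage indicator in the theorem equals $1-\mathrm{err}_{t+1}$. If needed, I will run the update \eqref{eq:OSA_update} one step beyond the algorithm's horizon to define $\tilde{\alpha}_{T+1}$.

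First, I will sum the recursion $\tilde{\alpha}_{t+1}=\tilde{\alpha}_t+\gamma(\alpha-\mathrm{err}_{t+1})$ over $t=1,\ldots,T$. This gives the telescoping identity
\begin{equation}
\label{eq:acsa-telescoping-identity}
\tilde{\alpha}_{T+1}-\tilde{\alpha}_1=\gamma\Bigl(T\alpha-\sum_{t=1}^T\mathrm{err}_{t+1}\Bigr),
\end{equation}
and hence
\[
\frac{1}{T}\sum_{t=1}^T\bbone\{G_{t+1}\in\Ct_t(\bz_{t+1})\}-(1-\alpha)
=\alpha-\frac{1}{T}\sum_{t=1}^T\mathrm{err}_{t+1}
=\frac{\tilde{\alpha}_{T+1}-\alpha}{T\gamma},
\]
using $\tilde{\alpha}_1=\alpha$. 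The theorem therefore reduces to the deterministic bound $|\tilde{\alpha}_{T+1}-\alpha|\le\max\{\alpha,1-\alpha\}+\gamma$.

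The key step, and the only place the boundary cases of \eqref{eq:acsa_prediction_set} and Assumption~\ref{assump:g_t_basicAssump} enter, is to show by induction that $\tilde{\alpha}_t\in[-\gamma(1-\alpha),\,1+\gamma\alpha]$ for all $t$, almost surely. The base case holds because $\tilde{\alpha}_1=\alpha$. For the lower bound there are two cases:
\begin{itemize}
\item If $\tilde{\alpha}_t\le0$, then $\Ct_t=\mathbb R$. Since $G_{t+1}$ is finite almost surely, it is covered, so $\mathrm{err}_{t+1}=0$ and $\tilde{\alpha}_{t+1}=\tilde{\alpha}_t+\gamma\alpha\ge\tilde{\alpha}_t\ge-\gamma(1-\alpha)$ by the induction hypothesis.
\item If $\tilde{\alpha}_t>0$, then any single step moves the level down by at most $\gamma(1-\alpha)$, so $\tilde{\alpha}_{t+1}>-\gamma(1-\alpha)$.
\end{itemize}
The upper bound is symmetric:
\begin{itemize}
\item If $\tilde{\alpha}_t\ge1$, the set is empty, so $\mathrm{err}_{t+1}=1$ and the level strictly decreases.
\item If $\tilde{\alpha}_t<1$, a single step raises it by at most $\gamma\alpha$, so $\tilde{\alpha}_{t+1}<1+\gamma\alpha$.
\end{itemize}
Finiteness and measurability of $g_t$ only guarantee that the interval in the middle case is well defined; no accuracy of $g_t$ is used anywhere.

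Finally, the bounds give $\tilde{\alpha}_{T+1}-\alpha\in[-\alpha-\gamma(1-\alpha),\,1-\alpha+\gamma\alpha]$. Since $\gamma(1-\alpha)\le\gamma$ and $\gamma\alpha\le\gamma$, this yields $|\tilde{\alpha}_{T+1}-\alpha|\le\max\{\alpha,1-\alpha\}+\gamma$. Dividing by $T\gamma$ gives the stated bound almost surely, and letting $T\to\infty$ gives almost-sure convergence of the empirical coverage to $1-\alpha$.

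I expect the main care to lie in the boundedness induction, specifically in checking that the boundary conventions force the correct sign of the update. The rest is bookkeeping with the time index, making sure the interval issued at time $t$ is scored against $(G_{t+1},\bz_{t+1})$.
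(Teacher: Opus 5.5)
Your proposal is correct and follows essentially the same route as the paper. Both combine the telescoping identity for the update \eqref{eq:OSA_update}, extended one step past the horizon to define $\tilde{\alpha}_{T+1}$, with an inductive boundedness argument for $\tilde{\alpha}_t$ that uses the boundary conventions in \eqref{eq:acsa_prediction_set}; your invariant interval $[-\gamma(1-\alpha),\,1+\gamma\alpha]$ is slightly tighter than the paper's $[-\gamma,\,1+\gamma]$ in Lemma~\ref{lemma:Bounded_g_t}, but both yield the stated bound.
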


The proof follows from the update rule in
\eqref{eq:OSA_update}. Summing the updates over time expresses the
empirical miscoverage rate in terms of the difference between
$\tilde{\alpha}_{T+1}$ and its initial value. The boundary convention in
\eqref{eq:acsa_prediction_set} keeps $\tilde{\alpha}_t$ bounded, so this
remainder vanishes after division by $T$. We give the full argument in
E-Companion~\ref{subapx:proofThmEmpiricalCoverage}.

The step size $\gamma$ controls how quickly ACSA responds to recent
coverage errors. A larger value adapts more rapidly but may cause
$\tilde{\alpha}_t$ and the corresponding interval widths to fluctuate
more strongly. A smaller value produces smoother adaptation but reacts
more slowly to changes in calibration. The coverage conclusion above
holds for every fixed $\gamma>0$, so in practice $\gamma$ can be chosen
using secondary criteria such as interval width or validation
performance. \citet{zaffran2022adaptive}, for example, consider adaptive
selection based on historical performance, while
\citet{gibbs2024conformal} aggregate over several candidate step sizes.

The guarantee in Theorem~\ref{thm:empirical_coverage} is marginal over
the sequence of realized scenarios. It does not imply correct coverage
within every part of the scenario space. In particular, a procedure
could attain the target average coverage while performing poorly in rare
or extreme scenarios, which are often precisely the scenarios of
greatest interest in risk management. One possible refinement is to
partition the scenario space into a small number of groups and maintain
a separate ACSA update within each group. We discuss this
group-balanced variant in E-Companion~\ref{subapx:alg_for_gbacsa}. Section~\ref{sec:KSA} introduces KSA, which instead targets
scenario-conditional coverage under stronger assumptions. We therefore
view ACSA and KSA as complementary: ACSA provides a simple and robust
online calibration mechanism, while KSA uses additional structure to
obtain more scenario-specific uncertainty estimates.

\subsection{The 3-Factor Log-Linear Model Revisited}\label{sec:ACSAMotivating_rewrite}

% \zz{For each time step $t$, the underlying index level $S_t$ may change, and we reinitialize the option configuration at each $t$ with a matching strike price $K_t$ such that the moneyness $K_t/S_t$ remains fixed at $1.05$, and the time to maturity is kept constant at $\tau = 21/252$.}

In order to illustrate the advantages of ACSA, we now revisit the three-factor log-linear model introduced in Section \ref{sec:MotivatingEG} where our portfolio consists of a single short position in a European call option on the index with a time to maturity of one month and a fixed moneyness $m=1.05$ so that the call option is 5\% out-of-the-money.

\paragraph{Parameter Misspecification} We assume the true data-generating process is as specified in Section~\ref{sec:MotivatingEG}. We assume all model parameters are known by the risk manager except for $\sigma_\epsilon$, the standard deviation of the random noise component. Recall the true value of $\sigma_\epsilon$ is $0.18$ but here the risk manager assumes (incorrectly) its value is $\widehat{\sigma}_\epsilon \not = \sigma_\epsilon$. Consequently, several quantities that depend on $\sigma_\epsilon$ are also incorrectly evaluated for each scenario $\bz$, including $\sigma_{\text{tot}}$, $m_{\bz}$ and $v_{\text{\tiny Cred}}$. Their estimates are obtained by replacing $\sigma_\epsilon$ with $\widehat{\sigma}_\epsilon$ in expressions \eqref{eq:expression_of_sig_total} and \eqref{eq:define_of_ms}. The target coverage probability of the prediction intervals is set to $1-\alpha = 50\%$. For a given risk scenario $\bz$, the benchmark quantile predictor $g_t$ uses $\widehat{\sigma}_\epsilon$ to construct a prediction interval based on the estimated $\alpha/2$ and $1-\alpha/2$ quantiles of the portfolio gain. In contrast, the ACSA approach instead uses the $\tilde{\alpha}_t/2$ and $1-\tilde{\alpha}_t/2$ quantiles computed by Algorithm \ref{alg:ACSA} using as input the benchmark $g_t$. We denote the two corresponding intervals by $\mathcal{C}_{\text{Mis}}(\bz, \widehat{\sigma}_\epsilon)$ (constructed using the benchmark $g_t$) and $\mathcal{C}_{\text{ACSA}}(\bz, \widehat{\sigma}_\epsilon, \tilde{\alpha}_t)$ (constructed using the ACSA algorithm). For reference, we let $\mathcal{C}_{\text{True}}(\bz)$ denote the oracle prediction interval based on the true $\alpha/2$ and $1-\alpha/2$ quantiles. The explicit forms of $\mathcal{C}_{\text{Mis}}(\bz, \widehat{\sigma}_\epsilon)$, $\mathcal{C}_{\text{ACSA}}(\bz, \widehat{\sigma}_\epsilon,\tilde{\alpha}_t)$, and $\mathcal{C}_{\text{True}}(\bz)$ are provided in E-Companion~\ref{subapx:addi_details_for_ACSA_motivating}.

\paragraph{Scenario Analysis Table}

In Table \ref{tab:ConvexIntervals_wACSA_wKSA_Sec5_2}, we present\footnote{Table \ref{tab:ConvexIntervals_wACSA_wKSA_Sec5_2} also presents the prediction interval $\mathcal{C}_{\text{KSA}}(\bz)$ produced by the KSA algorithm. We defer a discussion of this until Section \ref{sec:KSA}.} the three aforementioned prediction intervals. Due to space considerations we omit the most extreme scenarios where the rate factor is shocked by $\pm 30$ bp's and the oil factor is shocked by $\pm 20 \%$. In each cell, the first four rows respectively display: (1) the oracle interval $\mathcal{C}_\text{True}(\bz)$; (2) the misspecified interval $\mathcal{C}_\text{Mis}(\bz, \widehat{\sigma}_\epsilon)$ based on the benchmark $g_t$ with $\widehat{\sigma}_\epsilon = 0.1$; (3) the ACSA interval $\mathcal{C}_\text{ACSA}(\bz, \widehat{\sigma}_\epsilon, \bar{\tilde{\alpha}})$ with\footnote{The value $\bar{\tilde{\alpha}} = .03$ was the mean value of $\tilde{\alpha}_t$ across $2000$ steps of ACSA on a simulated dataset and hence is a ``typical'' value of $\tilde{\alpha}_t$. See Figure \ref{fig:beta_t_series} below for an illustration of the dynamics of $\tilde{\alpha}_t$.}  $\bar{\tilde{\alpha}} = 0.03$ and using $g_t$ with $\widehat{\sigma}_\epsilon = 0.1$ as the quantile predictor input and (4) the interval $\mathcal{C}_\text{KSA}(\bz)$ generated by the KSA algorithm of Section \ref{sec:KSAMotivtaingEG}. The remaining two rows contain the corresponding scenario point estimates from Table \ref{tab:Convex} and the SSA/CSA entries in Table \ref{tab:SSA}. The true coverage rate of each interval is shown as a percentage beside it, except for the oracle intervals whose true coverage is always $50\%$. We use red, blue and green asterisks as superscripts on the Mis, ACSA and KSA intervals,  respectively, when those intervals fail to contain $\mathbb{E}[G_{t+1} \mid \bz ]$, the true conditional expected gain in that scenario. To be clear,
this is only a diagnostic for interval location relative to the conditional mean $\mathbb{E}[G_{t+1} \mid \bz ]$. In particular, exclusion of $\mathbb{E}[G_{t+1} \mid \bz ]$ is not a validity failure, since the intervals target realized $G_{t+1} \mid \bz$ and not $\mathbb{E}[G_{t+1} \mid \bz ]$.

The following observations are in order. The true coverage rate of $\mathcal{C}_{\text{Mis}}(\bz, \widehat{\sigma}_\epsilon)$ can fall significantly below the target $50\%$, and is even less than $10\%$ in some extreme scenarios. But ACSA mitigates this and its prediction intervals have coverage rates that are much closer to the target coverage rate of $50\%$. This is especially true for relatively ``common'' scenarios where $z_{\text{\tiny Oil}}$ and $z_{\text{\tiny Rate}}$ are in the ranges $\{-6\%, 0, 6\%\}$ and $\{$-10 bp, 0, 10 bp$\}$, respectively. 
This illustrates how ACSA can compensate for an overly narrow quantile predictor by reducing the adjusted miscoverage level and
widening the resulting intervals. Theorem~\ref{thm:empirical_coverage} concerns coverage averaged over the realized sequence under the online update; it does not imply
that the ACSA intervals in this table, constructed using the fixed value $\bar{\tilde{\alpha}}=0.03$, have marginal coverage exactly
equal to $50\%$. The table exhibits both undercoverage and overcoverage across scenarios.

Additionally, we see that the true coverage rates of both $\mathcal{C}_{\text{Mis}}(\bz,\allowbreak \widehat{\sigma}_\epsilon)$ and $\mathcal{C}_{\text{ACSA}}(\bz,\allowbreak \widehat{\sigma}_\epsilon,\allowbreak\bar{\tilde{\alpha}})$ decrease from left to right within each row, and increase from top to bottom within each column. This pattern mirrors the variation of the conditional expected log-return of the underlying asset (i.e., the $m_{\bz}$ term in (\ref{eq:define_of_ms})) across the different scenarios. In particular, (\ref{eq:logret_cond_dist}) and (\ref{eq:define_of_ms}) yield
$\log\left(S_{t+1}/S_t\right)\Bigl|\allowbreak\, \left(\mathbf F_s=\bz \right) \;\sim\allowbreak\;
\mathcal{N}\bigl(m_{\bz},\,v_{\text{\tiny Cred}}\bigr)$
where
\begin{equation}\label{eq:explicit_form_of_ms}
m_{\bz} \approx 1.19\times 10^{-4} -0.16 z_{\text{\tiny Oil}} + 4.33 z_{\text{\tiny Rate}} -\dfrac{\sigma_\epsilon^2}{504}
\end{equation}

and $v_{\text{\tiny Cred}}$ does not depend on the scenario $\bz$. The key observation is that $m_{\bz}$ decreases with $z_{\text{\tiny Oil}}$ and increases with $z_{\text{\tiny Rate}}$ and this observation ultimately explains the monotonic behavior of the coverage rates of $\mathcal{C}_{\text{Mis}}(\bz, \widehat{\sigma}_\epsilon)$ and $\mathcal{C}_{\text{ACSA}}(\bz, \widehat{\sigma}_\epsilon,\bar{\tilde{\alpha}})$ as we decrease from left to right within each row, and increase from top to bottom within each column of Table \ref{tab:ConvexIntervals_wACSA_wKSA_Sec5_2}. Further details are provided in E-Companion \ref{subapx:addi_details_for_ACSA_motivating}.

\begin{table}[!htbp]
\centering
\scriptsize
\renewcommand{\arraystretch}{1.0}
\setlength{\tabcolsep}{2pt}
\resizebox{\textwidth}{!}{%
\begin{tabularx}{1.1\textwidth}{@{}R{12mm} R{18mm} C C C C C @{}}
\toprule
Rate$\backslash$Oil &  & $-12\%$ & $-6\%$ & $0$ & $+6\%$ & $+12\%$ \\
\midrule
\multirow{6}{*}{\centering $-20$ bp}
& \makecell{$\mathcal{C}_{\text{True}}$ \\ $\mathcal{C}_{\text{Mis}}$ \\ $\mathcal{C}_{\text{ACSA}}$ \\ $\mathcal{C}_{\text{KSA}}$ \\ $\mathbb{E}[G_{t+1} \mid \bz ]$ \\ SSA/CSA}
& \makecell{
(-27.2, -0.2)\\
(7.3, 21.8)$^{\textcolor{red}{*}}$,12\%\\
(-16.1, 32.1),57\%\\
(-20.6, 1.8),44\%\\
-15.10\\
-35.0, -12.5
}
& \makecell{
(-11.1, 10.9)\\
(19.1, 29.6)$^{\textcolor{red}{*}}$,8\%\\
(1.3, 36.7)$^{\textcolor{blue}{*}}$,49\%\\
(-10.1, 10.7),48\%\\
-1.35\\
-14.3, 1.0
}
& \makecell{
(2.1, 19.7)\\
(27.7, 35.1)$^{\textcolor{red}{*}}$,5\%\\
(14.6, 39.8)$^{\textcolor{blue}{*}}$,41\%\\
(-3.2, 17.6),53\%\\
9.80\\
2.1, 11.9
}
& \makecell{
(12.8, 26.5)\\
(33.8, 38.7)$^{\textcolor{red}{*}}$,3\%\\
(24.5, 41.6)$^{\textcolor{blue}{*}}$,33\%\\
(1.3, 22.6),52\%\\
18.66\\
14.7, 20.4
}
& \makecell{
(21.1, 31.6)\\
(37.8, 41.0)$^{\textcolor{red}{*}}$,2\%\\
(31.6, 42.8)$^{\textcolor{blue}{*}}$,25\%\\
(3.2, 26.8),47\%\\
25.58\\
24.1, 27.0
} \\
\midrule

\multirow{6}{*}{\centering $-10$ bp}
& \makecell{$\mathcal{C}_{\text{True}}$ \\ $\mathcal{C}_{\text{Mis}}$ \\ $\mathcal{C}_{\text{ACSA}}$ \\ $\mathcal{C}_{\text{KSA}}$ \\ $\mathbb{E}[G_{t+1} \mid \bz ]$ \\ SSA/CSA}
& \makecell{
(-35.2, -5.8)\\
(0.9, 17.4)$^{\textcolor{red}{*}}$,13\%\\
(-25.1, 29.5),60\%\\
(-26.3, -3.2),42\%\\
-22.02\\
-34.8, -19.3
}
& \makecell{
(-17.8, 6.4)\\
(14.4, 26.5)$^{\textcolor{red}{*}}$,10\%\\
(-5.8, 34.9)$^{\textcolor{blue}{*}}$,53\%\\
(-14.4, 7.3),47\%\\
-7.04\\
-14.0, -4.6
}
& \makecell{
(-3.3, 16.1)\\
(24.3, 32.9)$^{\textcolor{red}{*}}$,7\%\\
(9.3, 38.6)$^{\textcolor{blue}{*}}$,45\%\\
(-6.1, 15.1),52\%\\
5.21\\
2.3, 7.4
}
& \makecell{
(8.4, 23.7)\\
(31.4, 37.3)$^{\textcolor{red}{*}}$,4\%\\
(20.6, 40.9)$^{\textcolor{blue}{*}}$,37\%\\
(-0.3, 20.5),52\%\\
15.04\\
14.8, 16.9
}
& \makecell{
(17.7, 29.6)\\
(36.3, 40.1)$^{\textcolor{red}{*}}$,3\%\\
(28.8, 42.4)$^{\textcolor{blue}{*}}$,29\%\\
(3.2, 24.5),47\%\\
22.77\\
24.2, 24.4
} \\
\midrule

\multirow{6}{*}{\centering $0$}
& \makecell{$\mathcal{C}_{\text{True}}$ \\ $\mathcal{C}_{\text{Mis}}$ \\ $\mathcal{C}_{\text{ACSA}}$ \\ $\mathcal{C}_{\text{KSA}}$ \\ $\mathbb{E}[G_{t+1} \mid \bz ]$ \\ SSA/CSA}
& \makecell{
(-43.9, -12.0)\\
(-6.2, 12.4)$^{\textcolor{red}{*}}$,15\%\\
(-35.1, 26.3),63\%\\
(-34.0, -7.5),44\%\\
-29.52\\
-34.5, -26.7
}
& \makecell{
(-25.0, 1.4)\\
(8.9, 22.9)$^{\textcolor{red}{*}}$,11\%\\
(-13.7, 32.8),56\%\\
(-19.5, 3.4),46\%\\
-13.25\\
-13.8, -10.6
}
& \makecell{
(-9.3, 12.1)\\
(20.3, 30.4)$^{\textcolor{red}{*}}$,8\%\\
(3.2, 37.2)$^{\textcolor{blue}{*}}$,48\%\\
(-9.1, 11.7),49\%\\
0.16\\
2.5, 2.5
}
& \makecell{
(3.6, 20.6)\\
(28.6, 35.6)$^{\textcolor{red}{*}}$,5\%\\
(16.0, 40.0)$^{\textcolor{blue}{*}}$,40\%\\
(-2.5, 18.3),52\%\\
11.01\\
15.0, 13.0
}
& \makecell{
(13.9, 27.2)\\
(34.4, 39.0)$^{\textcolor{red}{*}}$,3\%\\
(25.5, 41.8)$^{\textcolor{blue}{*}}$,32\%\\
(2.0, 22.7),50\%\\
19.62\\
24.3, 21.3
} \\
\midrule

\multirow{6}{*}{\centering $10$ bp}
& \makecell{$\mathcal{C}_{\text{True}}$ \\ $\mathcal{C}_{\text{Mis}}$ \\ $\mathcal{C}_{\text{ACSA}}$ \\ $\mathcal{C}_{\text{KSA}}$ \\ $\mathbb{E}[G_{t+1} \mid \bz ]$\\ SSA/CSA}
& \makecell{
(-53.3, -18.8)\\
(-14.2, 6.7)$^{\textcolor{red}{*}}$,17\%\\
(-45.9, 22.7),66\%\\
(-40.8, -14.3),41\%\\
-37.62\\
-34.2, -34.7
}
& \makecell{
(-32.9, -4.2)\\
(2.8, 18.7)$^{\textcolor{red}{*}}$,13\%\\
(-22.5, 30.2),59\%\\
(-25.3, -1.0),45\%\\
-20.01\\
-13.6, -17.3
}
& \makecell{
(-15.8, 7.7)\\
(15.8, 27.4)$^{\textcolor{red}{*}}$,9\%\\
(-3.7, 35.5)$^{\textcolor{blue}{*}}$,52\%\\
(-12.8, 7.9),46\%\\
-5.38\\
2.7, -2.9
}
& \makecell{
(-1.7, 17.2)\\
(25.3, 33.6)$^{\textcolor{red}{*}}$,6\%\\
(10.9, 38.9)$^{\textcolor{blue}{*}}$,44\%\\
(-5.1, 15.7),52\%\\
6.55\\
15.1, 8.7
}
& \makecell{
(9.7, 24.5)\\
(32.1, 37.7)$^{\textcolor{red}{*}}$,4\%\\
(21.8, 41.1)$^{\textcolor{blue}{*}}$,36\%\\
(0.3, 21.0),52\%\\
16.10\\
24.4, 17.9
} \\
\midrule

\multirow{6}{*}{\centering $20$ bp}
& \makecell{$\mathcal{C}_{\text{True}}$ \\ $\mathcal{C}_{\text{Mis}}$ \\ $\mathcal{C}_{\text{ACSA}}$ \\ $\mathcal{C}_{\text{KSA}}$ \\ $\mathbb{E}[G_{t+1} \mid \bz ]$\\ SSA/CSA}
& \makecell{
(-63.2, -26.2)\\
(-23.0, 0.3)$^{\textcolor{red}{*}}$,19\%\\
(-57.7, 18.4),69\%\\
(-49.1, -21.2),40\%\\
-46.34\\
-33.9, -43.3
}
& \makecell{
(-41.4, -10.2)\\
(-4.1, 13.9)$^{\textcolor{red}{*}}$,14\%\\
(-32.2, 27.3),62\%\\
(-31.6, -6.3),43\%\\
-27.34\\
-13.4, -24.5
}
& \makecell{
(-22.9, 2.8)\\
(10.5, 24.0)$^{\textcolor{red}{*}}$,11\%\\
(-11.4, 33.4)$^{\textcolor{blue}{*}}$,55\%\\
(-17.1, 3.7),42\%\\
-11.44\\
2.8, -8.9
}
& \makecell{
(-7.5, 13.3)\\
(21.5, 31.2)$^{\textcolor{red}{*}}$,7\%\\
(5.0, 37.6)$^{\textcolor{blue}{*}}$,47\%\\
(-7.9, 12.4),48\%\\
1.64\\
15.2, 3.9
}
& \makecell{
(5.0, 21.5)\\
(29.4, 36.1)$^{\textcolor{red}{*}}$,5\%\\
(17.4, 40.3)$^{\textcolor{blue}{*}}$,39\%\\
(-1.8, 18.9),52\%\\
12.19\\
24.5, 14.2
} \\
\bottomrule
\end{tabularx}}
\caption{Portfolio gains across the $(\text{rate}, \text{oil})$ grid.
The first row of each cell reports the true 50\% prediction interval
$\mathcal{C}_{\text{True}}$. Rows two and three report the estimated
50\% prediction intervals $\mathcal{C}_{\text{Mis}}$ and
$\mathcal{C}_{\text{ACSA}}$, respectively. The fourth row reports the KSA interval,
$\mathcal{C}_{\text{KSA}}$. The fifth row reports the true
conditional expected scenario gain $\mathbb{E}[G_{t+1}\mid\bz]$, and
the sixth row reports the SSA and CSA point estimates.}
\label{tab:ConvexIntervals_wACSA_wKSA_Sec5_2}
\end{table}

\paragraph{The Dynamics of $\tilde{\alpha}_t$}
The evolution of ACSA’s $\tilde{\alpha}_t$ parameter is illustrated in Figure~\ref{fig:beta_t_series}. We test several values of the misspecified noise level $\widehat{\sigma}_\epsilon$, ranging from $0.05$ (an underestimate) to $0.25$ (an overestimate). Across all cases, $\tilde{\alpha}_t$ starts at $\tilde{\alpha}_1=0.5$ and then fluctuates around a stable level $\bar{\tilde{\alpha}}$ that depends on $\widehat{\sigma}_\epsilon$. This reflects ACSA’s process of learning the appropriate adjusted miscoverage level such that the realized prediction intervals achieve an empirical coverage rate of approximately $50\%$.

An interesting observation is that $\bar{\tilde{\alpha}}$ does not vary monotonically with $\widehat{\sigma}_\epsilon$: when $\widehat{\sigma}_\epsilon = 0.05$, the stabilized $\bar{\tilde{\alpha}}$ is $0.006$, suggesting that using the nominal quantile level $\alpha$ would yield an empirical coverage below $50\%$; as $\widehat{\sigma}_\epsilon$ approaches the true value $\sigma_\epsilon = 0.18$ from below, $\bar{\tilde{\alpha}}$ moves toward $0.5$ as expected. When $\widehat{\sigma}_\epsilon$ increases beyond  $\sigma_\epsilon$ to $0.25$, $\bar{\tilde{\alpha}}$ drops again, to around $0.18$. Intuitively, this \textit{non-monotonic} pattern arises from an underlying \textit{monotonic} dynamic behavior of the benchmark interval $\mathcal{C}_\text{Mis}(\bz, \widehat{\sigma}_\epsilon)$. As $\widehat{\sigma}_\epsilon$ increases, the estimated price (and its quantiles) for the call option increase, so the estimated portfolio gain decreases because the portfolio is short a call. Since the two endpoints of $\mathcal{C}_\text{Mis}(\bz,\widehat{\sigma}_\epsilon)$ are estimated quantiles of the portfolio gain, both endpoints move monotonically in the opposite direction of $\widehat{\sigma}_\epsilon$. The resulting true coverage rate is nevertheless non-monotonic in
$\widehat{\sigma}_\epsilon$. As shown using the explicit coverage function
derived in E-Companion \ref{subapx:addi_details_for_ACSA_motivating} and illustrated in Figure \ref{fig:non_mono_in_sigma}, the coverage first
increases and then decreases as $\widehat{\sigma}_\epsilon$ varies over the
range considered in our experiment. This non-monotonicity in turn explains the
non-monotonic dependence of the stabilized value $\bar{\widetilde{\alpha}}$
on $\widehat{\sigma}_\epsilon$.

\begin{figure}
    \centering
    \includegraphics[width=\linewidth]{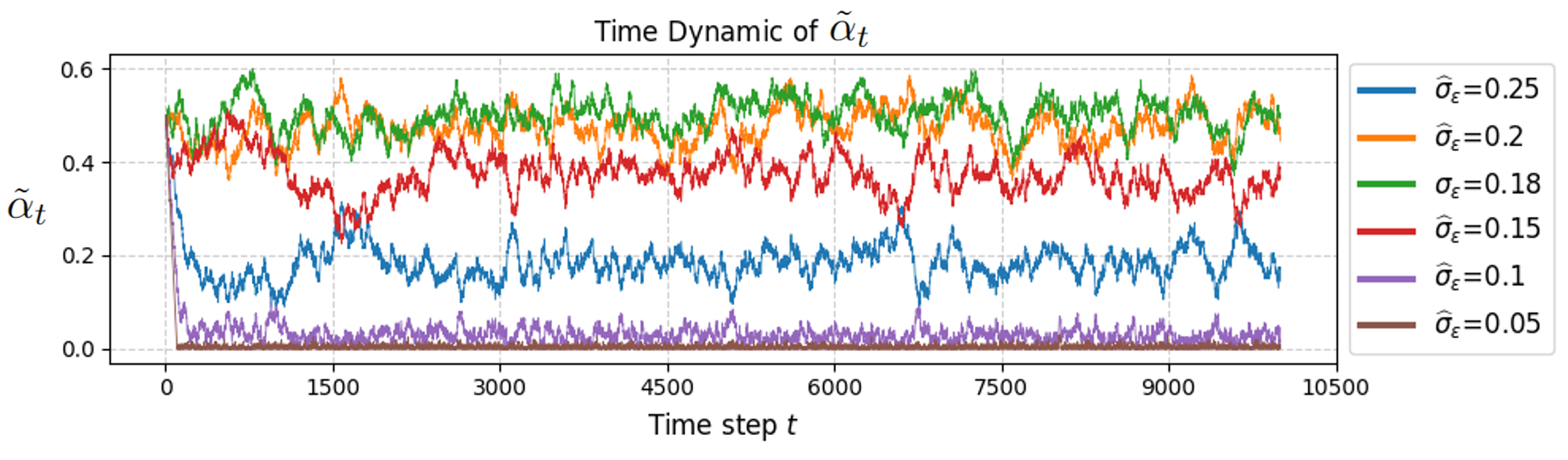}
    \caption{Dynamics of ACSA's adjusted miscoverage level $\tilde{\alpha}_t$ under different $\widehat{\sigma}_\epsilon$. For each estimation, the $\tilde{\alpha}_t$ value fluctuates around distinct values. Specifically, for $\widehat{\sigma}_\epsilon = \{0.25, 0.2, 0.18, 0.15, 0.1, 0.05\}$, the stabilized values of $\tilde{\alpha}_t$ are approximately $\{0.18, 0.47, 0.5, 0.37, 0.03, 0.006\}$, respectively. These stabilized values summarize the dynamic ACSA trajectories targeting $50\%$ empirical coverage.
   }
    \label{fig:beta_t_series}
\end{figure}

\section{Kernel Scenario Analysis}
\label{sec:KSA}

ACSA provides an online calibration mechanism without prescribing how scenario-wise quantiles should be estimated. We now introduce kernel scenario analysis (KSA), which constructs these quantiles by combining a predictor of the scenario gain with a kernel estimate of the remaining uncertainty. Historical observations receive greater weight when their stressed-factor moves and market conditions resemble those of the scenario being evaluated. In this way, KSA uses factor-based features to distinguish between scenarios with different gain distributions, without specifying a parametric model for the joint factor dynamics.

We first introduce the loss function, kernel, and feature vectors used by KSA. Section~\ref{sec:KSA_Alg} presents the core algorithm with a fixed centering function and then discusses adaptive centering and changing portfolios as practical extensions. Section~\ref{sec:ksa-theory} establishes the conditional-coverage guarantee for the core procedure under its stated assumptions, and Section~\ref{sec:KSAMotivtaingEG} revisits the three-factor running example.

\paragraph{Pinball Loss Function}
The pinball loss provides a convenient way to estimate quantiles through optimization. KSA uses a kernel-weighted version of this loss; it can also be used to train the quantile predictor $g_t$ in ACSA.

\begin{dfn}\label{def:Pinball}
The pinball loss function for $y,q\in\mathbb{R}$ and a quantile level $u\in[0,1]$ is
\begin{equation}\label{eq:pinball_loss_def}
\ell_u(y,q)\defeq u(y-q)^+ + (1-u)(y-q)^-,
\end{equation}
where $y^+\defeq\max\{y,0\}$ and $y^-\defeq-\min\{y,0\}$.
\end{dfn}

The pinball loss is piecewise linear and convex in $q$, and is a proper scoring rule for quantiles \citep{Gneiting_Raftery}. For a real-valued random variable $Y$ with CDF $F_Y$, define its $u$-quantile by $Q_Y(u)\defeq\inf\{y:F_Y(y)\geq u\}$ for $u\in(0,1)$. If $Y$ is integrable, then
\begin{equation}\label{eq:PBL}
Q_Y(u)\in\argmin_{q\in\mathbb{R}}\mathbb{E}[\ell_u(Y,q)].
\end{equation}
Replacing the expectation in \eqref{eq:PBL} by a sample average gives an empirical quantile estimator. KSA extends this construction by weighting observations according to their relevance to the target scenario.
We measure similarity between feature vectors using the following class of kernels.
\begin{dfn}\label{def:kernel}
A function $\kappa:\mathbb{R}^d\times\mathbb{R}^d\to\mathbb{R}$ is a \textbf{similarity radial kernel} if there exists a nonnegative, nonincreasing function $\widetilde{\kappa}:[0,\infty)\to[0,\infty)$ such that $\kappa(\bw,\bw')=\widetilde{\kappa}(\|\bw-\bw'\|)$ for all $\bw,\bw'\in\mathbb{R}^d$.
\end{dfn}

Thus, observations closer to the target feature vector receive at least as much weight as observations farther away. The kernel determines how broadly KSA pools historical observations around a given target; its bandwidth is made explicit in Section~\ref{sec:ksa-theory}.

\paragraph{Historical Context Vector}
The historical context vector $\bH_t$ collects the information available at time $t$ that we use to predict the next-period portfolio gain. For example, we may take $\bH_t=(\Delta\bff_t,\Delta\bff_{t-1},\ldots,\Delta\bff_{t-L_f+1})$ when $L_f$ lags of common-factor returns capture the relevant history. Risk-factor changes, current factor levels, or other market-state information can be included when they provide additional predictive information. Conversely, redundant components can be omitted: if the relevant risk-factor changes are known functions of the included common-factor returns, they need not be included separately.

A Markov structure can further reduce the required history. When a current state contains all historical information relevant to the next-period gain distribution, $\bH_t$ can be taken as that state rather than a longer vector of lagged observations. The important requirement is predictive sufficiency for the portfolio gain, not merely the Markov property of one component of the market model. The precise condition used by our analysis is stated in Assumption~\ref{assump:individual_cali_assump}.

\paragraph{Feature Vector}
We combine the historical context with the stress scenario to form the \textbf{feature vector} $\bW_t(\bz)\defeq(\bz,\bH_{t-1})$. At forecast date $t$, the target feature vector is therefore $\bW_{t+1}(\bz)=(\bz,\bH_t)$, where $\bz$ is the hypothesized stressed-factor move on the next day. For a historical date $s\leq t$, the scenario $\bz_s$ has already been observed, and we write $\bW_s\defeq\bW_s(\bz_s)=(\bz_s,\bH_{s-1})$. These definitions put historical observations and hypothetical scenarios in the same feature space. If $\bH_t$ retains all relevant historical information, conditioning on $\cF_t$ and $\bz_{t+1}=\bz$ gives the same gain distribution as conditioning on $\bH_t$ and $\bz_{t+1}=\bz$. KSA then estimates this distribution locally by comparing $\bW_{t+1}(\bz)$ with the observed feature vectors $\bW_s$.

\subsection{The Kernel Scenario Analysis Algorithm}
\label{sec:KSA_Alg}

The key idea of the Kernel Scenario Analysis (KSA) algorithm is to assign an \textit{importance weight} to each historical date $s\leq t$, based on the similarity between the historical feature vector $\bW_s$ and the scenario feature vector $\bW_{t+1}(\bz)$. Specifically, we use a similarity radial kernel $\kappa(\bw,\bw')$ to encode similarities between feature vectors $\bw$ and $\bw'$. These importance weights are then used to form a reweighted empirical pinball loss, whose minimizer yields an estimate of the conditional quantile of the portfolio gain under a given scenario $\bz$. The prediction interval is then constructed using the estimated quantiles as its endpoints. The core KSA procedure is summarized in Algorithm~\ref{alg:KSA}.

\begin{algorithm}[ht!]
    \caption{Kernel Scenario Analysis}
    \label{alg:KSA}
    \begin{algorithmic}[1]
    \Require Target coverage rate $1-\alpha$, kernel $\kappa(\cdot,\cdot)$, predictor of expected portfolio gain $\varphi(\cdot)$
    \State Set $\bW_1\gets(\bz_1,\Ht{0})$ and
$Y_1\gets G_1-\varphi(\bW_1)$ \label{line:NSA_initial}
    \For{$t=2,\ldots,T$}
        \State Observe portfolio gain $G_t$ and realized stress scenario $\bz_t$
        \State Set $\bW_t\gets(\bz_t,\Ht{t-1})$ and $Y_t\gets G_t-\varphi(\bW_t)$\label{step:construct_Y}
        \State For any $u\in[0,1]$, $q\in\mathbb{R}$ and $\bw$, define the reweighted residual pinball loss $\widehat{\ell}_t$
        \begin{equation}\label{eq:reweightedquantilegain}
            \widehat{\ell}_t(u;q,\bw)
            \defeq
            \dfrac{\sum_{s=1}^{t}\kappa(\bW_s,\bw)\cdot\ell_u(Y_s,q)}
            {\sum_{s=1}^{t}\kappa(\bW_s,\bw)}
        \end{equation}
        \label{line:reweight_gain}
        \State \textbf{Output:} For each scenario $\bz$, set $\bW_{t+1}(\bz)\gets(\bz,\Ht{t})$ and
        \begin{equation}\label{eq:individual_interval}
            \Ct_t(\bz)
            \gets
            \left[
            \varphi(\bW_{t+1}(\bz))
            +\widehat{Q}_t(\alpha/2;\bW_{t+1}(\bz)),
            \,
            \varphi(\bW_{t+1}(\bz))
            +\widehat{Q}_t(1-\alpha/2;\bW_{t+1}(\bz))
            \right]
        \end{equation}
        where
        \begin{equation}\label{eq:quantile_of_U}
            \widehat{Q}_t(u;\bW_{t+1}(\bz))
            \defeq
            \argmin_{q\in\mathbb{R}}
            \widehat{\ell}_t(u;q,\bW_{t+1}(\bz)),
            \qquad
            u\in\{\alpha/2,1-\alpha/2\}.
        \end{equation}
        \label{line:produce_interval}
    \EndFor
    \end{algorithmic}
\end{algorithm}

Algorithm~\ref{alg:KSA}  uses a predictor $\varphi(\bw)$ to represent the predictable component of the portfolio gain given the stress scenario and historical context vector. Its role is to separate the portfolio gain into a predictable part and a residual. In Step~\ref{step:construct_Y}, the residual term $Y_s$ is calculated for $s\leq t$ by subtracting $\varphi(\bW_s)$ from $G_s$. The $Y_s$'s are then used in place of the $G_s$'s as the target variable whose conditional quantiles are estimated.

The reason that we work with the $Y_s$'s instead of the $G_s$'s is that if $\varphi(\bW_s)$ predicts $\E[G_s\mid\bW_s]$ well, then $Y_s$ should contain less predictable variation and it should therefore be easier to estimate its conditional quantiles than those of the original $G_s$, which may be strongly covariate-dependent. Importantly, however, $\varphi$ need not equal the true conditional expectation for the construction to be valid. For any fixed deterministic $\varphi$, the conditional quantile of $G_{t+1}\mid\bW_{t+1}(\bz)$ can be recovered from the conditional quantile of $Y_{t+1}\mid\bW_{t+1}(\bz)$ via
\begin{equation}\label{eq:Y_and_U}
    Q_{G_{t+1}}(u\mid\bW_{t+1}(\bz))
    =
    \varphi(\bW_{t+1}(\bz))
    +
    Q_{Y_{t+1}}(u\mid\bW_{t+1}(\bz)).
\end{equation}
Thus, a more accurate choice of $\varphi$ is useful primarily because it can make the residual quantile surface easier to estimate, rather than because the validity of the quantile-shift identity requires $\varphi$ to be the true conditional mean. We demonstrate this ``smoother landscape'' phenomenon in the following example.

\begin{example}[Centering Removes Predictable Variation]\label{example:cite_an_example}\sffamily
{\sffamily
Suppose $G=W+\epsilon$, where $\epsilon\sim\mathcal{N}(0,1)$ denotes a noise term independent of the random variable $W\in\mathbb{R}$. The conditional $u$-quantile of $G\mid W$ is $Q_G(u\mid W)$ and takes the form
$Q_G(u\mid W=w)=w+\Phi^{-1}(u)$,
where $\Phi$ is the standard normal CDF. If we have a perfect expectation predictor $\varphi$ of the conditional expectation, i.e.,
$\varphi(w)=\E[G\mid W=w]=w$,
then the constructed residual $Y\defeq G-\varphi(w)$ satisfies $Y=\epsilon$. The conditional $u$-quantile of $Y\mid W$ is then
$Q_Y(u\mid W=w)=\Phi^{-1}(u)$,
which is independent of $w$.

The implication of a ``smoother landscape'' is that the distributions, and consequently the quantiles, of $Y\mid W=w$ may vary less with $w$. In this particular example, $\Phi^{-1}(u)$ and therefore the conditional quantile of $Y\mid W=w$ does not depend on $w$ at all. This property makes estimating $Q_Y(u\mid W)$ easier than estimating $Q_G(u\mid W)$.}
\end{example}\medskip

Returning to Algorithm~\ref{alg:KSA}, in Steps~\ref{line:reweight_gain} and~\ref{line:produce_interval}, the conditional quantile of $Y_{t+1}\mid\bW_{t+1}(\bz)$ is estimated by minimizing a reweighted version of the residual pinball loss function. To see what this minimizer gives us, observe that $\widehat{F}_t(y\mid\bW_{t+1}(\bz))
    \defeq
    \frac{
    \sum_{s=1}^{t}
    \kappa(\bW_s,\bW_{t+1}(\bz))
    \cdot
    \bbone\{y\geq Y_s\}
    }{
    \sum_{s=1}^{t}
    \kappa(\bW_s,\bW_{t+1}(\bz))
    }$ is an estimator of the CDF of $Y_{t+1}\mid\bW_{t+1}(\bz)$. Use $\widehat{\mathcal{P}}_t$ to denote the probability distribution induced by this CDF. The $u$-quantile of $\widehat{\mathcal{P}}_t$ is given by
\begin{equation}
    \begin{aligned}
    Q_{\widehat{\mathcal{P}}_t}(u)
    &=
    \argmin_{q\in\mathbb{R}}
    \E_{Y\sim\widehat{\mathcal{P}}_t}
    [\ell_u(Y,q)] \\
    &=
    \argmin_{q\in\mathbb{R}}
    \dfrac{
    \sum_{s=1}^{t}
    \kappa(\bW_s,\bW_{t+1}(\bz))
    \cdot
    \ell_u(Y_s,q)
    }{
    \sum_{s=1}^{t}
    \kappa(\bW_s,\bW_{t+1}(\bz))
    }.
    \end{aligned}
\end{equation}
The estimator $\widehat{Q}_t(u;\bW_{t+1}(\bz))$ in \eqref{eq:quantile_of_U} is therefore exactly the $u$-quantile of $\widehat{\mathcal{P}}_t$. The minimization of $\widehat{\ell}_t(u;q,\bW_{t+1}(\bz))$ in Step~\ref{line:produce_interval} can be solved in $O(t)$ iterations. First, permute the indices with a permutation operator $\Pi:\{1,\ldots,t\}\to\{1,\ldots,t\}$ so that
$Y_{\Pi(1)}\leq Y_{\Pi(2)}\leq\cdots\leq Y_{\Pi(t)}$.
Then,
\begin{equation*}
\widehat{Q}_t(u;\bW_{t+1}(\bz))
=
Y_{\Pi(j)},
\qquad
j
=
\min
\left\{
i\in\{1,\ldots,t\}
\;\middle|\;
\dfrac{
\sum_{s=1}^{i}
\kappa(\bW_{\Pi(s)},\bW_{t+1}(\bz))
}{
\sum_{s=1}^{t}
\kappa(\bW_{\Pi(s)},\bW_{t+1}(\bz))
}
\geq u
\right\}.
\end{equation*}

A brief comparison between the input requirements for ACSA and KSA is useful. For each time $t$, the ACSA algorithm requires a quantile predictor $g_t(\bz)$ of the conditional portfolio gain, whereas the core KSA algorithm requires a centering predictor $\varphi(\cdot)$. It is not immediately clear which requirement is more demanding. For example, the median of a distribution is less sensitive to outliers and may be easier to estimate than the mean, whereas extreme quantiles, such as 1\% quantiles, are generally harder to estimate because far fewer observations are available in the tails. If one has an excellent quantile predictor $g_t$ at hand, then a downside of Algorithm~\ref{alg:KSA} is that it does not directly use $g_t$, but instead constructs its own quantile predictor through \eqref{eq:reweightedquantilegain} and \eqref{eq:quantile_of_U}. Quantile predictors, however, are often not readily available, so this aspect of KSA can be viewed as either a bug or a feature depending on the application.

In practice, it can be useful to replace the fixed centering function $\varphi$ by a time-varying predictor $\varphi_t$ that is updated as new data arrive. This adaptive-centering version follows the same kernel reweighting principle, but introduces additional dependence because the residuals themselves now depend on an estimated predictor. To avoid using the same observations both to update $\varphi_t$ and to calibrate the residual distribution, one can maintain separate fitting and calibration samples; in particular, once an observation is used to update the predictor, it is excluded from subsequent residual calibration.

A practical advantage of KSA over ACSA is the convenience of estimating conditional expectations under frequently changing portfolios. Because conditional expectations aggregate linearly across positions, one can estimate security-level expectation predictors and then combine them using the current portfolio weights to obtain a portfolio-level expectation predictor. The scenario-wise quantiles are then obtained through the kernel reweighting step, without training a new portfolio-level quantile model. By contrast, ACSA trains a portfolio-level quantile predictor, and quantiles do not in general aggregate linearly across securities, so changes in portfolio composition typically require retraining that predictor.

\subsection{Analysis of Algorithm 2}
\label{sec:ksa-theory}

This section establishes a conditional-coverage guarantee for the core KSA procedure in Algorithm~\ref{alg:KSA}. The main distinction from Theorem~\ref{thm:empirical_coverage} is the conditioning information: rather than controlling coverage averaged over the realized sequence of scenarios, we study next-period coverage conditional on the information currently available and on a specified scenario. Throughout this section, the portfolio is fixed. We retain the definitions
$\bW_t=(\bz_t,\bH_{t-1})$ and $\bW_{t+1}(\bz)=(\bz,\bH_t)$, and write $d_{\bW}$ for the dimension of the feature vector. To allow for temporal dependence in the historical feature--residual pairs, we use the notion of $\beta$-mixing. More details are provided in \citet{bradley2005basic}, and a brief definition is given below.

\begin{dfn}[$\beta$-mixing]
\label{def:beta-mixing-revised}
For a stationary process $\{X_t\}_{t\geq 0}$, let $\sigma(\mathcal X)$ denote the $\sigma$-algebra generated by a collection of random variables $\mathcal X$. Define
\begin{equation*}
\beta_X(k)
=
\sup_{t\geq 0}
\mathbb E\left[
\sup_{B\in\sigma(\{X_s:s\ge t+k\})}
\left|
\mathbb P\bigl(B\mid\sigma(\{X_s:s\le t\})\bigr)-\mathbb P(B)
\right|
\right].
\end{equation*}
The process is $\beta$-mixing if $\beta_X(k)\to0$ as $k\to\infty$. It is geometrically $\beta$-mixing if
$\beta_X(k)\le\beta_0\rho^k$ for some $\beta_0<\infty$ and
$0<\rho<1$.
\end{dfn}

Intuitively, $\beta_X(k)$ measures the dependence between the past of the process and its future after a gap of length $k$. A small value of $\beta_X(k)$ means that sufficiently separated observations are nearly independent, while geometric $\beta$-mixing requires this dependence to decay exponentially fast. In particular, $\beta_X(k)=0$ for all $k\geq1$ corresponds to the independent case.

We now specify the regularity conditions under which KSA achieves a conditional-coverage guarantee. Let $\mathcal D\subseteq\mathbb R^{d_{\bW}}$ be a compact region containing the feature vectors $\bW_{t+1}(\bz)$ corresponding to the scenarios and historical contexts of interest. For example, a scenario table typically considers stress scenarios $\bz$ within a prescribed range together with historical contexts $\bH_t$ that are not excessively extreme. We restrict the theoretical guarantee to targets satisfying $\bW_{t+1}(\bz)\in\mathcal D$. For some fixed $h_0>0$, define the enlarged region
$\mathcal D^+
=
\{\bw:\operatorname{dist}(\bw,\mathcal D)\leq h_0\}$,
where
$\operatorname{dist}(\bw,\mathcal D)
:=
\inf_{\bw'\in\mathcal D}\|\bw-\bw'\|$
denotes the Euclidean distance\footnote{Throughout this work, an unsubscripted $\|\cdot\|$ denotes the Euclidean norm.}
from $\bw$ to $\mathcal D$. Since kernel estimation at a target in $\mathcal D$ uses observations in its neighborhood, the regularity conditions below are imposed on $\mathcal D^+$.

\begin{assum}
\label{ass:ksa-revised}
\label{assump:individual_cali_assump}
The following conditions hold.
\begin{enumerate}[label=(\alph*)]

\item\label{itm:separate_stationary_assump}
The process $\{(\bW_t,Y_t)\}_{t\geq0}$ is stationary and geometrically $\beta$-mixing, with
$\beta(k)\leq\beta_0\rho^k$
for some $\beta_0<\infty$ and $0<\rho<1$.

\item\label{itm:independence_n_homogeneity}
The feature vector $\bW_{t+1}(\bz)$ contains all information in
$\mathcal F_t$ that is relevant to the conditional distribution of
$Y_{t+1}$ under scenario $\bz$. Specifically,
\begin{equation}
\mathbb P\bigl(
Y_{t+1}\leq y
\mid
\mathcal F_t,\bz_{t+1}=\bz
\bigr)
=
F_{Y\mid\bW}\bigl(
y\mid\bW_{t+1}(\bz)
\bigr)
\label{eq:predictive-sufficiency}
\end{equation}
almost surely, simultaneously for every $y\in\mathbb R$ and every $\bz$
such that $\bW_{t+1}(\bz)\in\mathcal D$.

\item\label{itm:density_bound}
The marginal distribution of $\bW_t$ admits a density $\pi_{\bW}$ that is bounded above and bounded away from zero on $\mathcal D^+$.

\item\label{itm:cond_dist_cont}
The conditional distribution of $Y_t$ given $\bW_t=\bw$ is continuous for every $\bw\in\mathcal D^+$. Moreover, the conditional CDF varies uniformly continuously with $\bw$:
\begin{equation*}
\lim_{\epsilon\downarrow0}
\sup_{\substack{\bw,\bw'\in\mathcal D^+\\
                 \|\bw-\bw'\|\leq\epsilon}}
\sup_{y\in\mathbb R}
\left|
F_{Y\mid\bW}(y\mid\bw)
-
F_{Y\mid\bW}(y\mid\bw')
\right|
=0.
\end{equation*}

\item\label{itm:kernel_regularity}
The kernel radial function $\widetilde{\kappa}$ in
Definition~\ref{def:kernel} is bounded and has support either
$[0,R_\kappa]$ for some $0<R_\kappa<\infty$ or $[0,\infty)$.
It is Lipschitz continuous on the interior of its support. Moreover,
$0<\int_{\mathbb R^{d_{\bW}}}
\widetilde{\kappa}(\|\mathbf v\|)\td\mathbf v
<\infty$
and
$\int_{\mathbb R^{d_{\bW}}}
\|\mathbf v\|^2
\widetilde{\kappa}(\|\mathbf v\|)\td\mathbf v
<\infty$.

\end{enumerate}
\end{assum}

Assumption~\ref{assump:individual_cali_assump}\ref{itm:separate_stationary_assump} provides the stationarity and weak dependence needed to control the kernel-weighted empirical averages. Stationarity is a working assumption for the KSA analysis; unlike the pathwise guarantee for ACSA, the result below relies on distributional regularity over time. Assumption~\ref{assump:individual_cali_assump}\ref{itm:independence_n_homogeneity} is a predictive-sufficiency condition: once $\bW_{t+1}(\bz)$ is specified, earlier information in $\mathcal F_t$ provides no additional information about the conditional distribution of the next residual. Assumption~\ref{assump:individual_cali_assump}\ref{itm:density_bound} guarantees adequate local support for the feature distribution over the operational region. Assumption~\ref{assump:individual_cali_assump}\ref{itm:cond_dist_cont} ensures that nearby feature vectors have similar conditional residual distributions, thereby controlling the approximation error introduced by local smoothing. Finally, Assumption~\ref{assump:individual_cali_assump}\ref{itm:kernel_regularity} imposes standard regularity conditions on the kernel. The RBF ($\widetilde{\kappa}(r)=e^{-r^2/2}$), Laplace ($\widetilde{\kappa}(r)=e^{-r}$), and naive kernels ($\widetilde{\kappa}(r)=\bbone\{r\leq1\}$) all satisfy these conditions.

To make the dependence on the bandwidth explicit, for $h>0$, write
$\kappa_h(\bw,\bw')
=
\widetilde{\kappa}\left(
\frac{\|\bw-\bw'\|}{h}
\right)$.
Algorithm~\ref{alg:KSA} takes $\kappa_h$ as a fixed input. The following theorem shows that, for a sufficiently small fixed bandwidth and a sufficiently large historical sample, the prediction intervals produced by KSA achieve scenario-wise conditional coverage close to the target level $1-\alpha$.

\begin{thm}
\label{thm:individual_cali_thm}
Suppose Assumption~\ref{assump:individual_cali_assump} holds.
For every $\epsilon>0$, there exists $h_\epsilon>0$ such that, for every fixed $h\in(0,h_\epsilon]$,
\begin{equation}
\sup_{\bz:\,\bW_{t+1}(\bz)\in\mathcal D}
\left|
\mathbb P\left(
G_{t+1}\notin\Ct_t(\bz)
\mid \mathcal F_t,\bz_{t+1}=\bz
\right)
-\alpha
\right|
\leq
\epsilon
+
O_p\left(\sqrt{\frac{\log t}{t}}\right).
\label{eq:ksa-conditional-coverage}
\end{equation}
Here, the $O_p$ term is with respect to the randomness of the historical data in $\mathcal F_t$, and its implicit constant may depend on the fixed bandwidth $h$.
\end{thm}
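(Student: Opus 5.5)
The plan is to reduce conditional miscoverage of $G_{t+1}$ to a uniform approximation error for the kernel-weighted residual CDF. Since $\varphi$ is fixed and deterministic, $G_{t+1}\notin\Ct_t(\bz)$ holds iff $Y_{t+1}\notin[\widehat{Q}_t(\alpha/2;\bw),\widehat{Q}_t(1-\alpha/2;\bw)]$, where $\bw=\bW_{t+1}(\bz)$ and $\bw\in\mathcal D$. The quantile estimates are $\mathcal F_t$-measurable, so Assumption~\ref{assump:individual_cali_assump}\ref{itm:independence_n_homogeneity} gives
\[
\mathbb P\bigl(G_{t+1}\notin\Ct_t(\bz)\mid\mathcal F_t,\bz_{t+1}=\bz\bigr)=F_{Y\mid\bW}(\widehat{Q}^-_{\alpha/2}\mid\bw)+1-F_{Y\mid\bW}(\widehat{Q}_{1-\alpha/2}\mid\bw).
\]
By \ref{itm:cond_dist_cont} the left limit equals the value. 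Because $\widehat Q_t(u;\bw)$ is the $u$-quantile of $\widehat F_t(\cdot\mid\bw)$, we have $\widehat F_t(\widehat Q_t(u)\mid\bw)\ge u$. The jump of $\widehat F_t$ at a point is at most the largest normalized weight, which is $O_p(1/(th^{d_{\bW}}))$. Hence $|F_{Y\mid\bW}(\widehat Q_t(u)\mid\bw)-u|\le \sup_y|\widehat F_t(y\mid\bw)-F_{Y\mid\bW}(y\mid\bw)|+O_p(1/t)$. It therefore suffices to show
\[
\sup_{\bw\in\mathcal D}\sup_y\bigl|\widehat F_t(y\mid\bw)-F_{Y\mid\bW}(y\mid\bw)\bigr|\le \epsilon/2+O_p\bigl(\sqrt{\log t/t}\bigr),
\]
and the two endpoints then contribute at most $\epsilon$ in total.

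Next, I write $\widehat F_t=N_t(y,\bw)/D_t(\bw)$, with $N_t=t^{-1}\sum_s\kappa_h(\bW_s,\bw)\bbone\{Y_s\le y\}$ and $D_t=t^{-1}\sum_s\kappa_h(\bW_s,\bw)$. Their population counterparts are $\bar N(y,\bw)=\E[\kappa_h(\bW,\bw)F_{Y\mid\bW}(y\mid\bW)]$ and $\bar D(\bw)=\E\kappa_h(\bW,\bw)$. The error splits into a bias part and a stochastic part.

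The bias part is $|\bar N/\bar D-F_{Y\mid\bW}(y\mid\bw)|$. It is a $\kappa_h$-weighted average of $F_{Y\mid\bW}(y\mid\bW)-F_{Y\mid\bW}(y\mid\bw)$. I split this average into $\|\bW-\bw\|\le\delta$, where the integrand is small by \ref{itm:cond_dist_cont}, and $\|\bW-\bw\|>\delta$. The far region carries weight at most $\int_{\|\mathbf v\|>\delta/h}\widetilde\kappa/\int\widetilde\kappa\cdot(\sup\pi/\inf\pi)$, which tends to $0$ as $h\to0$ by \ref{itm:density_bound} and \ref{itm:kernel_regularity}. If $\widetilde\kappa$ has infinite support, mass falling outside $\mathcal D^+$ is handled by taking $h\ll h_0$ and using the tail bound again. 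Choosing $\delta$ and then $h_\epsilon$ makes the bias at most $\epsilon/4$ uniformly over $\bw\in\mathcal D$ and $y$. The same density bounds give $\bar D(\bw)\ge c\,h^{d_{\bW}}>0$ on $\mathcal D$ for fixed $h$.

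The main obstacle is the stochastic part for fixed $h$: a uniform bound $\sup_{\bw\in\mathcal D,y}|N_t-\bar N|+\sup_{\bw}|D_t-\bar D|=O_p(\sqrt{\log t/t})$ under geometric $\beta$-mixing. Dividing by $\bar D\ge ch^{d_{\bW}}$ then yields the claim, with constants depending on $h$. I would proceed in four steps.
\begin{enumerate}
\item Use the blocking technique with blocks of length $\asymp\log t$, and Berbee's coupling to replace the alternating blocks by independent copies at total cost $(t/\log t)\beta_0\rho^{c\log t}=o(t^{-1})$.
\item Apply Bernstein's inequality to the independent blocks; the summands are bounded by $\sup\widetilde\kappa$.
\item Take a union bound over a finite grid: a $t^{-1}$-net of the compact set $\mathcal D$, and a grid of $y$ values at quantile levels $k/t$ of the local conditional CDFs. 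The grid size is polynomial in $t$, which produces the $\log t$ factor.
\item Extend from the grid to all points. In $\bw$ use the Lipschitz property of $\widetilde\kappa$. At the boundary of a compact support, where Lipschitz continuity fails, sandwich the kernel between slightly dilated and contracted versions. In $y$ use monotonicity of $\bbone\{Y_s\le y\}$ together with continuity of $F_{Y\mid\bW}$.
\end{enumerate}
This gives deviations of order $\sqrt{\log t\cdot\log t/t}$ per block scale. The extra logarithm may need to be absorbed by a more careful variance bound, as in \citet{hansen2008uniform}, whose uniform kernel convergence rates under strong mixing I would invoke directly if possible. Combining the bias and stochastic bounds, taking the supremum over $\bz$ with $\bW_{t+1}(\bz)\in\mathcal D$ (which is covered by the uniform statement), and adding the two tails gives \eqref{eq:ksa-conditional-coverage}.
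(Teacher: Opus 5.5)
Your reduction to a uniform CDF error, the kernel-bias argument and the ratio step follow the same skeleton as the paper's proof. You split $|\widehat F_t-F_{Y\mid\bW}|$ into a population bias and an empirical deviation, lower-bound the kernel denominator by a multiple of $h^{d_{\bW}}$, and pay a factor $2$ for the two endpoints.

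Two small points on that part. First, the jump term in your quantile step is unnecessary, and as stated it presumes no ties among the $Y_s$, which the assumptions do not guarantee. Use instead that $\widehat F_t(x\mid\bw)<u$ for every $x<\widehat Q_t(u;\bw)$, let $x\uparrow\widehat Q_t(u;\bw)$, and apply continuity of $F_{Y\mid\bW}(\cdot\mid\bw)$. This gives $|F_{Y\mid\bW}(\widehat Q_t(u;\bw)\mid\bw)-u|\le\sup_y|\widehat F_t(y\mid\bw)-F_{Y\mid\bW}(y\mid\bw)|$ directly. Second, for infinite-support kernels the mass outside $\mathcal D^+$ has to be controlled by the \emph{pointwise} decay $\widetilde\kappa(r)=o(r^{-d_{\bW}})$, so that $\widetilde\kappa(h_0/h)/h^{d_{\bW}}\to0$. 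This decay follows from monotonicity and integrability; the paper uses the second moment to get $\widetilde\kappa(r)\lesssim r^{-(d_{\bW}+2)}$. An integrated tail bound alone does not suffice, because $\pi_{\bW}$ is not bounded outside $\mathcal D^+$.

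The one real gap is the rate of the stochastic term. As you acknowledge, steps 1--4 as written give $O_p(\log t/\sqrt t)$, not the $O_p(\sqrt{\log t/t})$ in \eqref{eq:ksa-conditional-coverage}. The loss arises because a block sum of length $\ell\asymp\log t$ is controlled only through its range $\ell\overline K$ (or variance $\ell^2\overline K^2$), where $\overline K=\sup\widetilde\kappa$. The variance proxy then becomes $t\ell$, and the union bound adds another $\log t$.

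The missing ingredient is a block variance that is linear in $\ell$. The summands $\kappa_h(\bW_s,\bw)\bbone\{Y_s\le y\}$ lie in $[0,\overline K]$. Writing $X_s$ for them, the covariance inequality for bounded mixing variables gives $|\mathrm{Cov}(X_s,X_{s+k})|\le 4\overline K^2\beta_0\rho^k$. Hence each block sum has variance at most $C\overline K^2\ell$ with $C=C(\beta_0,\rho)$. Bernstein's inequality over the $\asymp t/\ell$ coupled independent blocks then has variance proxy $\asymp t$ and range term $\ell x=o(t)$ at $x\asymp\sqrt{t\log t}$. The exponent is therefore of order $\log t$ and survives a union bound over $\mathrm{poly}(t)$ grid points.

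Invoking Hansen (2008) alone would not close this. His bounds are uniform in the kernel location but not in the threshold $y$. They also require a Lipschitz kernel, which excludes, e.g., the naive kernel permitted by Assumption~\ref{assump:individual_cali_assump}\ref{itm:kernel_regularity}. You would still need your $y$-grid and sandwich arguments on top.

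With that repair your route is valid, but it differs substantially from the paper's, which uses no nets, Lipschitz extensions, Bernstein inequality or variance computations.
\begin{itemize}
\item Superlevel sets of a radial nonincreasing kernel are balls. By the layer-cake identity, $A_{t,h}$ and $B_{t,h}$ are therefore integrals over levels $s\in(0,\overline K)$ of empirical measures of sets $B\times(-\infty,y]$.
\item This class has VC dimension at most $d_{\bW}+2$. Chaining plus symmetrization then give an i.i.d.\ expected-supremum bound of order $1/\sqrt m$, uniformly in $(\bw,y)$, with discontinuous kernels included.
\item Dependence is handled by splitting $\{1,\dots,t\}$ into $q$ residue classes spaced $q$ apart. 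Each class is compared with independent copies in expectation (Lemma~\ref{lem:ksa-separated-comparison}), giving $\mathbb E\Delta_t\le\overline K D_d\sqrt{q/t}+\overline K\beta(q)t/q$.
\item Choosing $q_t\asymp\log t$ and applying Markov's inequality yields Lemma~\ref{lem:ksa-fixed-h-concentration}.
\end{itemize}
So in the paper the $\sqrt{\log t}$ comes from the spacing $q_t$ rather than from a union bound, and the price is only a $1/\delta$ dependence on the confidence level. Your repaired route would give exponential tails, i.e.\ $\log(1/\delta)$ dependence, at the cost of considerably more bookkeeping.
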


We use the notation $O_p(a_t)$ to denote a stochastic term whose magnitude, after division by $a_t$, is bounded in probability. An explicit finite-sample version of Theorem~\ref{thm:individual_cali_thm} is given as Theorem~\ref{apxthm:explicit_statement_individual} in E-Companion~\ref{subapx:proof_of_individual_cali}, together with its proof.

Unlike Theorem~\ref{thm:empirical_coverage}, which controls coverage averaged over the realized sequence of scenarios, Theorem~\ref{thm:individual_cali_thm} controls next-period coverage conditional on the current information and the specified scenario. The two terms in \eqref{eq:ksa-conditional-coverage} have a natural interpretation. The tolerance $\epsilon$ represents the approximation error from pooling observations whose feature vectors are close, but not identical, to $\bW_{t+1}(\bz)$. This error can be made arbitrarily small by choosing a sufficiently small bandwidth. The stochastic term represents finite-sample estimation error and vanishes as $t$ increases. A smaller bandwidth therefore reduces local approximation error but also places substantial weight on fewer nearby observations, so more historical data may be needed before the asymptotic behavior becomes apparent.

The guarantee applies only when the target feature vector lies in $\mathcal D$. This restriction is natural for a local-smoothing procedure: the target must have adequate nearby historical support, and the conditional residual distribution must vary regularly over its neighborhood. These conditions may fail for scenarios far outside the historically supported region. In addition, $\mathcal D$ cannot be taken to be all of $\mathbb R^{d_{\bW}}$ under Assumption~\ref{assump:individual_cali_assump}\ref{itm:density_bound}, since a probability density cannot be bounded away from zero over an unbounded space. In applications, $\mathcal D$ should therefore be chosen to cover the scenarios and historical contexts within the intended domain of use.

Under Assumption~\ref{assump:individual_cali_assump} alone, continuity of the conditional CDF controls the kernel-smoothing bias only qualitatively. The following stronger smoothness conditions give explicit orders in the bandwidth.

\begin{cor}
\label{cor:ksa-smoothness}
Suppose Assumption~\ref{assump:individual_cali_assump} holds.

\begin{enumerate}[label=(\roman*)]

\item\label{itm:lipschitz}
If, for some $L<\infty$,
\begin{equation*}
\sup_{y\in\mathbb R}
\left|
F_{Y\mid\bW}(y\mid\bw)
-
F_{Y\mid\bW}(y\mid\bw')
\right|
\leq
L\|\bw-\bw'\|
\end{equation*}
for all $\bw,\bw'\in\mathcal D^+$, then for every sufficiently small $h$,
\begin{equation*}
\sup_{\bz:\,\bW_{t+1}(\bz)\in\mathcal D}
\left|
\mathbb P\left(
G_{t+1}\notin\Ct_t(\bz)
\mid \mathcal F_t,\bz_{t+1}=\bz
\right)
-\alpha
\right|
=
O(h)
+
O_p\left(\sqrt{\frac{\log t}{t}}\right).
\end{equation*}

\item\label{itm:twice_diff}
If $\pi_{\bW}$ is twice continuously differentiable with Hessian uniformly bounded over $\bw\in\mathcal D^+$, and for every $y\in\mathbb R$,
$\pi_{\bW}(\cdot)F_{Y\mid\bW}(y\mid\cdot)$
is twice continuously differentiable on an open neighborhood of $\mathcal D^+$, with Hessian uniformly bounded over $\bw\in\mathcal D^+$ and $y\in\mathbb R$, then for every sufficiently small $h$,
\begin{equation*}
\sup_{\bz:\,\bW_{t+1}(\bz)\in\mathcal D}
\left|
\mathbb P\left(
G_{t+1}\notin\Ct_t(\bz)
\mid \mathcal F_t,\bz_{t+1}=\bz
\right)
-\alpha
\right|
=
O(h^2)
+
O_p\left(\sqrt{\frac{\log t}{t}}\right).
\end{equation*}

\end{enumerate}

The deterministic orders are understood as $h\downarrow0$, with their constants uniform over all sufficiently small $h$. For each fixed $h$, the stochastic orders are understood as $t\to\infty$, and their implicit constants may depend on $h$.
\end{cor}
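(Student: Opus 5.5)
The plan is to reuse the decomposition behind Theorem~\ref{thm:individual_cali_thm} and sharpen only its deterministic (bias) term. Fix $\bz$ with $\bw\defeq\bW_{t+1}(\bz)\in\mathcal D$ and set $F(y\mid\bw)\defeq F_{Y\mid\bW}(y\mid\bw)$.

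\textbf{Step 1: reduce coverage error to a CDF error.} Because $\varphi$ is deterministic, \eqref{eq:Y_and_U} and predictive sufficiency (Assumption~\ref{assump:individual_cali_assump}\ref{itm:independence_n_homogeneity}) give
\[
\mathbb P(G_{t+1}\notin\Ct_t(\bz)\mid\mathcal F_t,\bz_{t+1}=\bz)=F(\widehat Q_t(\alpha/2)\mid\bw)+1-F(\widehat Q_t(1-\alpha/2)\mid\bw).
\]
This uses continuity of $F(\cdot\mid\bw)$ from Assumption~\ref{assump:individual_cali_assump}\ref{itm:cond_dist_cont}. Since $\widehat F_t(\widehat Q_t(u)\mid\bw)$ lies in $[u,u+m_t]$, where $m_t$ is the largest normalized kernel weight, the coverage error is at most $2\sup_{y,\bw}|\widehat F_t-F|+2m_t$.

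\textbf{Step 2: split the CDF error.} Introduce the population smoothed CDF
\[
F_h(y\mid\bw)=\frac{\int\kappa_h(\bw,\bw')\pi_{\bW}(\bw')F(y\mid\bw')\,\td\bw'}{\int\kappa_h(\bw,\bw')\pi_{\bW}(\bw')\,\td\bw'}.
\]
Then $\sup|\widehat F_t-F|\le\sup|\widehat F_t-F_h|+\sup|F_h-F|$. For fixed $h$, the first term and $m_t$ are exactly the stochastic quantities already controlled in the proof of Theorem~\ref{thm:individual_cali_thm}; they are $O_p(\sqrt{\log t/t})$ with constants depending on $h$, so that part can be taken over verbatim. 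It remains to show $\sup_{y\in\mathbb R,\bw\in\mathcal D}|F_h-F|$ is $O(h)$ under (i) and $O(h^2)$ under (ii), with constants uniform for small $h$.

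\textbf{Step 3: bound the bias.} Substitute $\bw'=\bw+h\mathbf v$ and split the integrals at $\|\mathbf v\|\le h_0/h$, which keeps $\bw'\in\mathcal D^+$.
\begin{itemize}
\item \emph{Denominator.} By Assumption~\ref{assump:individual_cali_assump}\ref{itm:density_bound}, the denominator is at least $c\,h^{d_{\bW}}\int\widetilde\kappa$ once $h$ is small.
\item \emph{Tail piece.} The integrand is bounded by $\widetilde\kappa(h_0/h)\int\pi_{\bW}\le\widetilde\kappa(h_0/h)$. Since $\widetilde\kappa$ is nonincreasing with $\int\|\mathbf v\|^2\widetilde\kappa(\|\mathbf v\|)\,\td\mathbf v<\infty$, we get $\widetilde\kappa(r)=o(r^{-d_{\bW}-2})$. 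Hence the tail contributes $O(h^2)$ after division by the denominator.
\item \emph{Case (i).} On the near piece, $|F(y\mid\bw')-F(y\mid\bw)|\le Lh\|\mathbf v\|$. With $\pi_{\bW}$ bounded above on $\mathcal D^+$ and the finite first moment of $\widetilde\kappa$ (implied by the zeroth and second moments), this gives $O(h)$ uniformly in $y$.
\item \emph{Case (ii).} Write the numerator as $N_y(\bw)=\int\widetilde\kappa(\|\mathbf v\|)\,(\pi_{\bW} F)(y\mid\bw+h\mathbf v)\,h^{d_{\bW}}\,\td\mathbf v$, and similarly the denominator $D(\bw)$. Taylor expand to second order. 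The first-order terms vanish by radial symmetry of $\widetilde\kappa$ on the symmetric truncated domain. The remainders are bounded by the uniform Hessian bounds times $h^2\int\|\mathbf v\|^2\widetilde\kappa$. So $N_y=h^{d_{\bW}}c_0\,\pi_{\bW}(\bw)F(y\mid\bw)+O(h^{d_{\bW}+2})$ and $D=h^{d_{\bW}}c_0\,\pi_{\bW}(\bw)+O(h^{d_{\bW}+2})$, both uniformly in $y$. Dividing and using $\pi_{\bW}\ge c$ yields $F_h-F=O(h^2)$.
\end{itemize}

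I expect the main obstacle to be the uniformity bookkeeping. The Taylor remainder must be uniform in $y$, which is why the Hessian bound on $\pi_{\bW}F$ is assumed over $y\in\mathbb R$. The kernel tails leave $\mathcal D^+$, where no density bounds are available; this is handled by the $\widetilde\kappa(h_0/h)$ estimate. Finally, compact-support kernels need only the near piece, since for $h<h_0/R_\kappa$ the tail piece is identically zero.
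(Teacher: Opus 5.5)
Your proposal is correct and follows essentially the same route as the paper's proof. It uses the split $|\widehat F_t-F_h|+|F_h-F_{Y\mid\bW}|$ with the stochastic term inherited from Theorem~\ref{thm:individual_cali_thm}, the order-$h^{d_{\bW}}$ denominator bound, truncation at $\|\mathbf v\|\le h_0/h$ with tail $\widetilde\kappa(h_0/h)=O(h^{d_{\bW}+2})$, a first-moment bound under Lipschitz continuity, and a second-order Taylor expansion whose linear term cancels by radial symmetry. The only deviation is your extra $2m_t$ term, which is harmless but unnecessary: as the paper does, $\widehat F_t(x\mid\bw)<u$ for $x<\widehat Q_t(u;\bw)$ plus continuity of $F_{Y\mid\bW}(\cdot\mid\bw)$ give $|F_{Y\mid\bW}(\widehat Q_t(u;\bw)\mid\bw)-u|\le\sup_{\bw',y}|\widehat F_t(y\mid\bw')-F_{Y\mid\bW}(y\mid\bw')|$ directly (with tied residuals, $m_t$ would also have to be the largest atom of $\widehat F_t$, not the largest single weight).
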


An explicit finite-sample version of Corollary~\ref{cor:ksa-smoothness} is given as Corollary~\ref{apxcor:explicit_ksa_smoothness} in E-Companion~\ref{subapx:proof_of_individual_cali}, together with its proof. Corollary~\ref{cor:ksa-smoothness} makes the kernel-approximation term in Theorem~\ref{thm:individual_cali_thm} explicit. Under the Lipschitz condition in part~\ref{itm:lipschitz}, the approximation error is of order $h$. Under the stronger second-order condition in part~\ref{itm:twice_diff}, radial symmetry of the kernel cancels the first-order contribution and improves this error to order $h^2$. In both cases, the remaining stochastic term reflects estimation from a finite historical sample.

Although Algorithm~\ref{alg:KSA} is formulated with a fixed bandwidth, the theoretical construction can also use a deterministic sequence of bandwidths $h_t$. A fixed bandwidth leaves a non-vanishing kernel-approximation error as $t\to\infty$, whereas a suitably shrinking bandwidth can drive both approximation and estimation errors to zero. The bandwidth-dependent bounds in Corollary~\ref{cor:ksa-smoothness} and its explicit restatement, Corollary~\ref{apxcor:explicit_ksa_smoothness}, provide guidance for this choice.

Under the Lipschitz condition in part~\ref{itm:lipschitz}, taking
$h_t\asymp(\log t/t)^{1/(2d_{\bW}+2)}$
gives a coverage-error rate of
$O_p((\log t/t)^{1/(2d_{\bW}+2)})$.
Under the stronger second-order condition in part~\ref{itm:twice_diff}, taking
$h_t\asymp(\log t/t)^{1/(2d_{\bW}+4)}$
gives the faster rate
$O_p((\log t/t)^{1/(d_{\bW}+2)})$.
The dependence of these rates on $d_{\bW}$ reflects the usual curse of dimensionality in nonparametric estimation: higher-dimensional feature vectors require more data to obtain the same degree of local accuracy. Corollary~\ref{apxcor:optimized_ksa_bandwidth} in E-Companion~\ref{subapx:proof_of_individual_cali} gives the explicit bandwidths that minimize the corresponding finite-sample bounds. These statements concern deterministic bandwidth choices; data-dependent bandwidth selection is a separate practical extension.

\subsection{Applying KSA to the 3-Factor Log-Linear Model}
\label{sec:KSAMotivtaingEG}

We now revisit our running example, in which the log security price follows a three-factor log-linear model. The portfolio is short a call option with fixed time to maturity $\tau$ and moneyness (defined as the strike price divided by the underlying price) $m$. In the simulated historical dataset, for each date $s<t$, we assume that the option has the same moneyness $m$ and time-to-maturity $\tau$ on day $s$, with the portfolio gain $G_{s+1}$ then computed on day $s+1$. As detailed in E-Companion~\ref{subapx:addi_details_KSA_motive}, KSA is applied to the normalized gain
$G_{s+1}'\defeq \frac{G_{s+1}}{S_s}.$ This normalization removes the changing index-price scale across historical dates. Since $S_s$ is positive and known at time $s$, a prediction interval for $G_{s+1}'$ can be converted into an equivalent prediction interval for $G_{s+1}$ by multiplying both endpoints by $S_s$. We evaluate the KSA algorithm using $2000$ time steps of historical data simulated under the model parameters specified in Section~\ref{sec:MotivatingEG}. We set the feature vector to $\bW_t(\bz)=\bz$, so no historical context $\bH_t$ is needed and the two stressed factor returns are sufficient as the feature vector for KSA. In E-Companion~\ref{subapx:addi_details_KSA_motive}, we show that this choice of $\bW_t(\bz)$ satisfies Assumption~\ref{assump:individual_cali_assump}. Additional implementation details for KSA, including the choice of the kernel $\kappa$ and the expectation predictor $\varphi$, are also deferred to E-Companion~\ref{subapx:addi_details_KSA_motive}.

The KSA prediction intervals for the various stress scenarios are reported in Table~\ref{tab:ConvexIntervals_wACSA_wKSA_Sec5_2}. Among all methods considered, the KSA intervals are closest to the oracle intervals. For relatively ``common'' scenarios with $z_{\text{\tiny Oil}}\in\{-6\%,0,6\%\}$ and $z_{\text{\tiny Rate}}\in\{-10\text{ bp},0,10\text{ bp}\}$, the conditional coverage rates all lie within $\pm 5\%$ of the target $50\%$ level. For the other, more extreme scenarios, KSA still generally attains better coverage than ACSA, with the conditional coverage remaining within $\pm 10\%$ of the target $50\%$ level. The numerical results show that KSA performs exceptionally well in this simple and idealized setting.

The strong performance in the running example reflects several idealizations. The
centering predictor is a fixed oracle conditional mean, evaluated numerically as
described in E-Companion~\ref{subapx:addi_details_KSA_motive}, so the experiment
abstracts from the estimation error that would arise when $\varphi$ must be learned
from data. We also assume that the relevant factor structure is known. Most
importantly, the feature vector is only two-dimensional; in a more realistic Markov
setting with a two-dimensional stress scenario and a $d$-dimensional state vector,
its dimension would typically increase to $2+d$, making nearby historical
observations increasingly sparse and local kernel estimation more difficult.

Thus, the running example is best viewed as an illustration of KSA's mechanics and
potential performance under favorable conditions. In practice, high-dimensional
conditioning is likely to be a key limitation, although good estimates of the mean
predictor and factor structure can mitigate this issue when sufficient historical
data are available. Section~\ref{sec:Numerics} considers a more realistic setting
in which $\varphi$ must be estimated and richer historical context vectors
$\bH_t$ are used.

\section{Further Numerical Experiments}
\label{sec:Numerics}
The 3-factor log-linear model introduced earlier served primarily as a simple running example for demonstrating the ACSA and KSA algorithms. In this section, we turn to a more complex and realistic setting, where we take the perspective of a portfolio manager and stress test the portfolio returns on a daily basis. We again adopt a factor model structure as described in Section \ref{sec:Factor Structure} in order to simulate the market environment. The model parameters are calibrated with real world data to ensure realism and we present a complete pipeline for implementing our proposed methods in practice. We also conduct a systematic analysis of performance across different algorithmic configurations, including the standard scenario analysis as a benchmark.

\subsection{Portfolio, Model and Experimental Setup}
\label{sec:Setup}

% In this subsection we introduce the factor model dynamics and the portfolio setup. In the following sections, we assume the model parameters are known, while details on parameter calibration are deferred to E-Companion \ref{app:mo_calibrate}.

\paragraph{Available Securities}
The portfolio consists of out-of-the-money European call and put options written on the S\&P 500 index, along with a possible position in the underlying index itself. This corresponds to the equity and options setting of Section~\ref{sec:Examples}, except that we now restrict attention to a single underlying security.
We assume there are $n$ tradable securities. The first is the S\&P 500 index, with time $t$ price $S_t$, and the remaining $n-1$ securities are European options of varying strikes and maturities. Each option is characterized by its fixed time-to-maturity $\{\tau_j\}_{j=1}^{n-1}$ and fixed moneyness (the strike price $K$ divided by $S_t$) $\{m_j\}_{j=1}^{n-1}$. The portfolio is rebalanced daily and we assume the option maturities and possible moneyness values remain constant over time. While these characteristics are held fixed, the implied volatility of each option  $\{\sigma_t^{(j)}\}_{j=1}^{n-1}$ evolves over time. For simplicity, we assume the index pays no dividends, and the risk-free continuously compounded interest rate is fixed at $r = 4\%$. The number of units of each of the $n$ securities in the portfolio is denoted by $\{u_j\}_{j=1}^{n}$. 
Although the portfolio is rebalanced over time, at each evaluation
date $t$ we fix the current holdings and recompute historical gains
as if those same holdings had been held at all earlier dates.
This produces a pseudo-history for the current portfolio, and the
procedure is repeated after each rebalance.
Theorem~\ref{thm:individual_cali_thm} concerns a fixed portfolio;
recomputing historical gains does not by itself extend its guarantee
to holdings selected using the observed history. We therefore
evaluate this changing-portfolio implementation empirically.

\begin{comment}
The Black-Scholes formula \eqref{eq:B-S-Option-pricing} gives
$$
C_t^{(j)} = \mbox{BS}(S_t, m_j S_t, \tau_j, \sigma_t^{(j)},\mbox{c}_j)
$$
for $j=1, \ldots , n-1$. The forms of the portfolio value at time $t$ and the portfolio gain at time $t+1$ are given below
\begin{equation}\label{eq:portfolio_value_on_u}
    V_t = u_1\cdot S_t + \sum_{j=1}^{n-1}u_{j+1}\cdot \mbox{BS}(S_t, m_j S_t, \tau_j, \sigma_t^{(j)},\mbox{c}_j)
\end{equation}
$$
G_{t+1} = u_1\cdot (S_{t+1} - S_t) + \sum_{j=1}^{n-1}u_{j+1}\cdot \left(\mbox{BS}(S_{t+1}, m_j S_t, \tau_j, \sigma_{t+1}^{(j)},\mbox{c}_j) - \mbox{BS}(S_{t}, m_j S_t, \tau_j, \sigma_{t}^{(j)},\mbox{c}_j)\right)
$$
where $u_j$ denotes the number of units of the $j^{\mathrm{th}}$ security for $j=1, \ldots , n$ that are held in the portfolio.
\end{comment}

\paragraph{Common Factor Model}
We apply the factor model framework introduced in Section~\ref{sec:Factor Structure} to the specific setting under consideration. We take the risk factors $\bx_t$ to be
\[
\bx_t \coloneqq \Big(\log S_t, \log\sigma_t^{(1)}, \ldots, \log\sigma_t^{(n-1)}\Big).
\]
Because the option parameters \(m_j\), \(\tau_j\), and portfolio positions \(u_j\) are fixed at a given time $t$, there exists a deterministic mapping \(v\) such that the portfolio value
$V_t = v(\mathbf{x}_t)$. We assume a 4-dimensional vector
$\bff_t = \Big(f_{t}^{(1)},   f_{t}^{(2)}, f_{t}^{(3)}, f_{t}^{(4)}\Big)$ of common risk factors.
The first factor $ f_{t}^{(1)}=\log(S_t)$ is the log S\&P 500 index price, and the remaining three factors drive the log-implied volatility surface $\mathbf{I}_t := (\log\sigma_t^{(1)}, \ldots, \log\sigma_t^{(n-1)})$. In particular, following  \citet{cont2002} we assume
\begin{equation}\label{eq:how_f_2_4_drives_I}
    \mathbf{I}_{t+1} = \bI_0 + f_{t+1}^{(2)} \cdot \mathbf{M}_2 + f_{t+1}^{(3)}\cdot \mathbf{M}_3 + f_{t+1}^{(4)} \cdot\mathbf{M}_4
\end{equation}
or equivalently,
\begin{equation}\label{eq:dynamic_I_is_iid}
     \log S_{t+1}/S_t = \Delta f_{t+1}^{(1)}, \quad \Delta\mathbf{I}_{t+1} =\Delta f_{t+1}^{(2)} \cdot \mathbf{M}_2 + \Delta f_{t+1}^{(3)}\cdot \mathbf{M}_3 + \Delta f_{t+1}^{(4)} \cdot\mathbf{M}_4,
\end{equation}
where $\Delta \bI_{t+1} = \bI_{t+1} - \bI_{t}$ and $\mathbf{M}_2$, $\mathbf{M}_3$, $\mathbf{M}_4$ and $\bI_0$ are constant vectors in $\mathbb{R}^{n-1}$. We note that \eqref{eq:dynamic_I_is_iid} agrees with the model structure specified in \eqref{eq:factormodel}, e.g., by setting $\bsym{\mu},\bfxi_t\equiv {\bf 0}$ and
\[
\bB = \begin{bmatrix}
    1 & 0 & 0 & 0 \\
    \mathbf{0} & \bM_2 & \bM_3 & \bM_4
\end{bmatrix}.
\]
Further details are provided in E-Companion \ref{app:mo_calibrate}.

\paragraph{Dynamics of Common Factors} The first factor follows a  Gaussian random walk with drift\footnote{This setting corresponds to the underlying index price $S_t$ following a geometric Brownian motion.}, and following \citet{cont2002}, we assume the remaining three factors each follow an Ornstein-Uhlenbeck process. In particular, we assume
\begin{equation}\label{eq:f_t_i}
\begin{aligned}
    \Delta f_{t+1}^{(1)} &= \lambda_1 + \epsilon_{t+1}^{(1)},\quad \epsilon_{t+1}^{(1)}\sim \mathcal{N}(0, \sigma_1^2) \\
    \Delta f_{t+1}^{(i)} &= \lambda_i + \gamma_i f_{t}^{(i)}+\epsilon_{t+1}^{(i)},\quad \epsilon_{t+1}^{(i)}\sim \mathcal{N}(0, \sigma_i^2)\quad \mbox{for}\,\, i=2,3,4
\end{aligned}
\end{equation}
where $\{\lambda_i\}_{i=1}^4, \{\gamma_i\}_{i=2}^4$ and $\{\sigma_i\}_{i=1}^4$ are model parameters. The random vector of the noise terms $(\epsilon_{t}^{(1)},\epsilon_{t}^{(2)},\epsilon_{t}^{(3)},\epsilon_{t}^{(4)})$ is independent of everything else and across time $t$. We further assume that for each $t$, $\epsilon_{t}^{(1)}$ is correlated with $\epsilon_{t}^{(2)}$, $\epsilon_t^{(3)}$ and $\epsilon_t^{(4)}$ through the following relationship
\begin{equation}\label{eq:epsilon_1}
    \epsilon_t^{(1)} = \eta_2 \epsilon_t^{(2)} + \eta_3 \epsilon_t^{(3)} + \eta_4 \epsilon_t^{(4)} + \epsilon_t^{G},
\end{equation}
where $\epsilon_t^{G}\sim \mathcal{N}(0, \sigma_G^2)$ and is independent of everything else. The variance $\sigma_G^2$ and the constants $\eta_2$, $\eta_3$, $\eta_4$ must satisfy
\begin{equation}\label{eq:sigma_G_constraint}
    \eta_2^2 \sigma_2^2 + \eta_3^2 \sigma_3^2 + \eta_4^2 \sigma_4^2 + \sigma_G^2= \sigma_1^2
\end{equation}
to ensure that the variance of $\epsilon_t^{(1)}$ is equal to $\sigma_1^2$.

\paragraph{Backtesting Process and Evaluation Metrics}

We backtest several algorithms on synthetic data generated from the common factor model. At each
time step $t$, each algorithm constructs a prediction interval for the scenario realized on the
following day. The algorithm's performance is then evaluated based on the portfolio gain realized
on that next day. We report four evaluation metrics: (1) the \textbf{realized coverage rate},
i.e.\ the fraction of time steps the realized portfolio gain falls within the prediction interval;
(2) the \textbf{prediction interval (PI) width}, i.e.\ the width of the prediction intervals
averaged across time steps; (3) the \textbf{interval (I) score}, which is a proper scoring rule
\citep{Gneiting_Raftery} that jointly assesses coverage rate and sharpness (further details are
provided in E-Companion~\ref{app:proper_scoring}); and (4) the \textbf{root mean squared error
(RMSE)}, i.e.\ the square root of the average (over all evaluation dates) of the squared
difference between the realized portfolio gain and their point predictions\footnote{For
ACSA, which does not produce an explicit point prediction for the expected portfolio gain, we use
the midpoint of the prediction interval as its point prediction when computing RMSE.}. Lower PI widths, I
scores, and RMSE values indicate better performance.

\subsection{The Skew Steepener Portfolio}\label{sec:stress_steepener}

We consider a {\bf skew steepener} portfolio that is long out-of-the-money (OTM)
put options and short OTM call options at 1-month, 3-month, and 6-month maturities.
The corresponding moneyness levels are 5\%, 10\%, and 20\% OTM, respectively.
For example, if the underlying index level is \$5000, the put strikes are
\$4750, \$4500, and \$4000, while the call strikes are \$5250, \$5500, and
\$6000. The common option position size $u^{\mathrm{skew}}$ is chosen so that
the combined market value of the long put positions equals \$100. An additional
underlying-index position $u^{\mathrm{hedge}}$ is used to maintain delta
neutrality. The resulting seven-position portfolio is rebalanced daily.

We stress the first two common factor returns, $\Delta\bff_{t+1}^{(1:2)}$.
The experiment generates 6000 observations from the dynamic factor model, with
the first 2000 used as burn-in and the remaining 4000 used for evaluation.
All methods construct 90\% prediction intervals.

Table \ref{tab:BackTest_Straddle} compares ACSA and KSA with two empirical-quantile
benchmarks and a Monte Carlo oracle. Implementation details are deferred to
E-Companion \ref{app:algo_details}, and computational times are reported in
E-Companion \ref{app:runtime}. The ``MC Oracle'' uses Monte Carlo approximations
to the oracle conditional quantiles under the known data-generating process.
The two ``Emp-quantile'' benchmarks instead use historical portfolio gains
recomputed under the current portfolio configuration. The ``Vanilla'' version
uses their empirical quantiles directly, while the ``SSA'' version adds the
current SSA point prediction to empirical quantiles of historical SSA residuals.
Neither benchmark conditions the interval width on the current scenario or
market state.

\begin{table}
\begin{center}
\begin{tabular}{ l c|c|c|c|c}
    \hline
    {\bf Algorithm} & {\bf Variants} & {\bf Coverage} & {\bf PI Width} $\downarrow$ & {\bf I Score} $\downarrow$ & {\bf RMSE} $\downarrow$\\
    \hline
    MC Oracle  & &  90.0\% &  27.2 & 1.72 & 8.59 \\
    \hline
    \multirow{2}{*}{Emp-quantile}
    &Vanilla & 99.8\% & 163.4 & 8.17 & 13.61\\
    &SSA &  92.1\% &  53.4 & 2.98 & 9.25 \\
    \hline
    \multirow{3}{*}{ACSA}
     & Linear-$g_t$ & 89.8\% & 30.7 & 2.27 & 11.05\\
     & NN-$g_t$ & 89.9\% & 28.5 & 1.89 & 9.27 \\
     & MC Oracle-$g_t$ & 89.9\% & 28.1 & 1.78 & 8.72\\
    \hline
    \multirow{3}{*}{KSA NN-$\varphi$}
    &  $\bH_t=\bff_t^{(2:4)}$ &  90.0\% & 28.0 & 1.75 & 8.76 \\
    &  $\bH_t=\bff_t^{(2:3)}$ & 91.7\% & 31.3 & 1.81 & 8.76  \\
    &  $\bH_t=\bff_t^{(2)}$ & 92.6\% & 37.4 &  1.92 & 8.76  \\
    \hline
    \multirow{4}{*}{KSA $\bH_t=\bff_t^{(2:4)}$}
    & no-$\varphi$ & 93.3\% & 50.6 & 3.87 & 16.6\\
    & SSA-$\varphi$ &  92.6\% & 34.6 & 1.94 & 9.25\\
    & NN-$\varphi$  & 90.0\% & 28.0 & 1.75 & 8.76 \\
    & MC Oracle-$\varphi$ & 90.1\% & 28.2 & 1.74 & 8.73 \\
    \hline
    \multirow{3}{*}{ACSA + KSA}
    &  $\bH_t = \bff_t^{(2:4)}$ & 90.5\% & 32.4 & 1.94 & 8.86\\
    &  $\bH_t = \bff_t^{(2:3)}$ & 90.7\% & 32.2 & 2.18 & 8.86\\
    &  $\bH_t = \bff_t^{(2)}$ & 90.3\% & 31.8 & 2.27 & 8.86\\
    \hline
\end{tabular}
\end{center}
\caption{MC Oracle denotes a Monte Carlo approximation to the oracle conditional quantities. Backtesting performance of the prediction intervals generated by the ACSA and KSA algorithms. The evaluation metrics include the realized coverage rate (Coverage), the prediction interval width (PI Width), the interval score (I Score), and the root mean squared error (RMSE). The target coverage for the prediction intervals is 90\%. Smaller values of PI Width, I Score and RMSE all indicate better performance. Each main row represents a major class of algorithms, and each subrow corresponds to a different variant. }\label{tab:BackTest_Straddle}
\end{table}

For ACSA, we consider three choices of quantile predictor $g_t$. ``Linear-$g_t$''
uses linear quantile regression, ``NN-$g_t$'' uses a two-layer neural network
with a 20-dimensional hidden layer, and ``MC Oracle-$g_t$'' uses Monte Carlo
approximations to the oracle conditional quantiles. The learned predictors use
$\Delta\bff_{t+1}^{(1:2)}$ as features and are trained using the pinball loss.
Because the portfolio evolves smoothly over time, the learned predictor is
fine-tuned from the previous day's model rather than retrained from scratch.
All three ACSA variants attain coverage close to the 90\% target. The nonlinear
and oracle predictors also produce interval widths and scores close to the MC
Oracle benchmark, whereas the linear predictor yields noticeably wider intervals.
This illustrates the importance of the quality of the underlying
quantile predictor for interval width and score.

For KSA, we study separately the effects of the historical context $\bH_t$ and
the expectation predictor $\varphi$. In the ``KSA NN-$\varphi$'' block,
$\varphi$ is a two-layer neural network and we compare
$\bH_t=\bff_t^{(2:4)}$, $\bff_t^{(2:3)}$, and $\bff_t^{(2)}$.
The index-price level $f_t^{(1)}$ is excluded because, under our model, the
next-day index return is independent of the current index level and the normalized
option gain depends on the return rather than the price level itself; see
E-Companion \ref{appdix:his_context_of_KSA}. In the
``KSA $\bH_t=\bff_t^{(2:4)}$'' block, we instead fix the full historical context
and compare no centering ($\varphi\equiv0$), the SSA predictor, a neural-network
predictor, and a Monte Carlo oracle predictor.

The results show that KSA improves as either the historical context or the
centering predictor becomes more informative. With
$\bH_t=\bff_t^{(2:4)}$ and either NN-$\varphi$ or MC Oracle-$\varphi$,
coverage is essentially at the 90\% target and the remaining performance
metrics are close to the MC Oracle benchmark. Performance deteriorates
moderately when either $\bH_t$ or $\varphi$ is misspecified, illustrating the
value of both informative conditioning variables and an effective centering
predictor.

Finally, the ``ACSA + KSA'' variants use the KSA conditional-quantile estimator
as the quantile predictor within ACSA, while retaining the online update of
$\tilde{\alpha}_t$. A neural network is used for $\varphi$ under the same three
choices of $\bH_t$. The hybrid achieves coverage close to the nominal level and
performance comparable to standard ACSA. It can also be computationally
attractive when portfolios change frequently, since the learned component is an
expectation predictor rather than a portfolio-level quantile model.

\subsection{An Adversarial Portfolio}\label{sec:adversarial_portfolio}
We now consider a portfolio manager operating under the oversight of a risk management team. The manager forms opinions regarding future market conditions and aims to maximize the expected portfolio returns while simultaneously complying with risk management requirements. We assume the risk management requirements are specified via SSA across a range of predefined stress scenarios.

An \textit{adversarial action} arises when the portfolio manager has access to information that is not available to the risk management team and exploits this informational asymmetry to construct portfolios that appear ``safe'' under the prescribed stress tests, yet contain directional exposure to  risk factors that are not monitored by the risk management team. We assume  the portfolio manager knows the true four-factor model, whereas the risk management team monitors only the first two factors. Under this setup, the portfolio manager can construct a portfolio that is neutral or safe with respect to the monitored factors according to SSA, but effectively represents a bet on upward movements of the third factor. We refer to such a portfolio as an \textbf{adversarial portfolio}. We will show that SSA fails to detect the hidden risks undertaken by the adversarial portfolio, whereas the ACSA and KSA methods are able to reveal them.

The adversarial portfolio consists of the S\&P 500 index together with 12 options. The options have maturities of 1-month, 3-month, and 6-month, with moneyness levels of 5\% and 10\% OTM for both call and put options. The portfolio manager is allowed to take short positions in any of these securities. The total notional value of long positions is constrained to be at most \$100, and the sum of absolute position sizes across all securities is required to remain below \$500. The adversarial portfolio is constructed by solving a linear programming (LP) problem. The LP maximizes a linear approximation of the expected portfolio gain under the betting movement (i.e., an upward movement of the third common factor). The constraints enforce that all SSA evaluations remain within the range $\pm \$3 $, along with additional constraints on the total notional value of long positions and the sum of absolute position sizes. Further details of the portfolio construction are contained in E-Companion \ref{appdix:adversarial_portfolio}. The backtesting results for the adversarial portfolio are reported in Table \ref{tab:BackTest_Adversarial}, which is organized analogously to Table \ref{tab:BackTest_Straddle}. Because the risk management team has access only to the first two common factors, information on $f_t^{(3)}$ and $f_t^{(4)}$ cannot be included in the feature inputs when training the neural networks. This restriction does not affect ACSA, since it relies only on the first two factors (recall that ``Linear-$g_t$'' and ``NN-$g_t$'' correspond to training a linear model and a neural network as the quantile predictor, using $\Delta \bff_{t+1}^{(1:2)}$ as the feature input). The KSA, however, is evaluated only under the configuration $\bH_t = \bff_t^{(2)}$, rather than the full historical context $\bH_t = \bff_t^{(2:4)}$. Accordingly, the expectation predictor $\varphi$ used by KSA is also trained under the restricted information set, with $(\Delta \bff_{t+1}^{(1:2)}, \bff_t^{(2)})$ as the feature input.

We observe that the performance patterns across different algorithms in Table \ref{tab:BackTest_Adversarial} closely mirror those observed in Table \ref{tab:BackTest_Straddle}. The two ``Emp-quantile''-based benchmarks produce exceptionally wide intervals. In contrast, our proposed algorithms demonstrate large and consistent advantages across different portfolio constructions. One caveat is that, due to the absence of a reliable $\varphi$, the overall performance of KSA still shows a noticeable gap relative to the Monte Carlo oracle benchmark.

\begin{table}
\begin{center}
\begin{tabular}{ l c|c|c|c|c}
    \hline
    {\bf Algorithm} & {\bf Variants} & {\bf  Coverage} & {\bf PI Width} $\downarrow$ & {\bf I Score} $\downarrow$ & {\bf RMSE} $\downarrow$\\
    \hline
    MC Oracle  & &  90.0\% & 15.8 & 1.00 & 5.42 \\
    \hline
    \multirow{2}{*}{Emp-quantile}
    &Vanilla & 99.6\% & 2645.4 & 132.3 & 4129.4 \\
    &SSA &  91.7\% & 1113.3 & 55.9 & 5.86\\
    \hline
    \multirow{3}{*}{ACSA}
     & Linear-$g_t$ & 89.7\% & 56.2 & 3.54 & 7.56\\
     & NN-$g_t$ & 89.9\% & 19.3 & 1.69 & 5.67\\
     & MC Oracle-$g_t$ & 90.0\% & 16.5 & 1.06 & 5.44\\
    \hline
    \multirow{4}{*}{KSA $\bH_t=\bff_t^{(2)}$}
    & no-$\varphi$ & 93.8\% & 32.6 & 2.35 & 8.83\\
    & SSA-$\varphi$ & 87.3\% & 19.1 & 1.59 & 5.86\\
    & NN-$\varphi$  & 92.6\% & 20.9 & 1.42 & 5.43 \\
    & MC Oracle-$\varphi$ & 90.2\% & 16.9 & 1.14 & 5.40 \\
    \hline
    \multirow{1}{*}{ACSA + KSA}
    &  $\bH_t = \bff_t^{(2)}$ & 90.2\% & 29.1 & 1.85 & 5.87\\
    \hline
\end{tabular}
\end{center}
\caption{Backtesting performance of the prediction intervals under the adversarial portfolio. The table structure mirrors that of Table \ref{tab:BackTest_Straddle}, but we only report results for $\bH_t = \bff_t^{(2)}$ in the KSA setting, since the risk management team has access only to the first two common factors. For the KSA SSA-$\varphi$ row, RMSE is computed from the SSA point predictor, matching the Emp-quantile SSA row. }\label{tab:BackTest_Adversarial}
\end{table}

The next experiment demonstrates the scenario analysis table at a specific date. We show that, although the adversarial portfolio appears neutral to risk across a range of scenarios under the SSA estimation, it nonetheless produces notably wide prediction intervals using the ACSA and KSA algorithms that we propose. The width of the prediction interval serves as a warning signal for the risk management team. For comparison, we analyze a \textbf{first-order four-factor-neutral benchmark portfolio}.\footnote{We use ``factor-neutral'' here in a relative sense. It does not imply sensitivity to each factor is exactly zero. Rather, relative to the adversarial portfolio, this benchmark is more neutral with respect to the factor directions.}
Its optimization imposes additional constraints requiring the linearized one-day gain to vanish under unit movements in each of the four common-factor directions. Thus, relative to the adversarial construction, the benchmark is more tightly controlled with respect to the modeled factor directions. The construction adds constraints for the third and fourth factor directions to the adversarial-portfolio formulation; see E-Companion~\ref{appdix:adversarial_portfolio}.

% We will see below that, despite the two portfolios exhibiting similar SSA tables, their KSA results differ markedly,  the adversarial portfolio yields substantially wider prediction intervals than the risk-neutral portfolio.

In Table \ref{tab:SSA-Cond-Exp-gain}, we report the SSA estimates and the Monte Carlo approximations to the oracle conditional expectations of portfolio gains under various stress scenarios. By construction, the SSA estimates for both portfolios are tightly controlled, remaining within $\pm \$3$ relative to the initial long position of $\$100$. However, the Monte Carlo approximations to the oracle conditional expectations $\mathbb{E}[G_{t+1} \mid \bW_{t+1}(\bz)]$ under each scenario reveal substantial differences between the two portfolios. In particular, the expected gains of the factor-neutral benchmark are roughly bounded within $\pm \$6.5$, whereas the adversarial portfolio can incur expected losses as large as $-\$20$. This discrepancy highlights the risk of relying solely on SSA-based assessments. Our proposed ACSA and KSA methods address this issue, as shown below.

\begin{table}[ht]
  \centering
\begin{minipage}[t]{0.48\textwidth}
  \centering
  \captionof*{table}{SSA: Adversarial Portfolio}
  \vspace{-0.1cm}
  \resizebox{\linewidth}{2.0cm}{%
    \begin{tabular}{r *{7}{r}}
      \toprule
      \mbox{$\Delta f_{t+1}^{(1)}$ \, }  & \multicolumn{7}{c}{\mbox{$ \Delta f_{t+1}^{(2)} $ ($\times 10^{-1}$)}} \\
      \cmidrule(l){2-8}
      ($\times 10^{-2}$) & $-3$ & $-2$ & $-1$ & $0$ & $1$ & $2$ & $3$ \\
      \midrule
      $-1.5$ & 0.86 & 0.10 & -0.62 & -1.29 & -1.91 & -2.48 & -3.00 \\
      $-1.0$ & 1.57 & 1.09 & 0.65 & 0.25 & -0.10 & -0.41 & -0.68 \\
      $-0.5$ & 1.81 & 1.60 & 1.43 & 1.29 & 1.20 & 1.15 & 1.13 \\
      $0.0$ & 1.52 & 1.58 & 1.67 & 1.80 & 1.96 & 2.15 & 2.37 \\
      $0.5$ & 0.65 & 0.97 & 1.32 & 1.70 & 2.11 & 2.54 & 3.00 \\
      $1.0$ & -0.84 & -0.26 & 0.34 & 0.96 & 1.60 & 2.27 & 2.95 \\
      $1.5$ & -3.00 & -2.18 & -1.34 & -0.49 & 0.39 & 1.28 & 2.18 \\
      \bottomrule
    \end{tabular}
  }
\end{minipage}
\hfill
\begin{minipage}[t]{0.48\textwidth}
  \centering
  \captionof*{table}{$\mathbb{E}[G_{t+1} \mid \bW_{t+1}(\bz)]$: Adversarial Portfolio}
  \vspace{-0.1cm}
  \resizebox{\linewidth}{2.0cm}{%
    \begin{tabular}{r *{7}{r}}
      \toprule
      \mbox{$\Delta f_{t+1}^{(1)}$ \, }& \multicolumn{7}{c}{\mbox{$ \Delta f_{t+1}^{(2)} $ ($\times 10^{-1}$)}} \\
      \cmidrule(l){2-8}
      ($\times 10^{-2}$) & $-3$ & $-2$ & $-1$ & $0$ & $1$ & $2$ & $3$ \\
      \midrule
      $-1.5$ & 2.66 & 3.00 & 3.39 & 3.80 & 4.11 & 4.54 & 4.94 \\
      $-1.0$ & 1.43 & 2.42 & 3.28 & 4.13 & 5.00 & 5.83 & 6.61 \\
      $-0.5$ & -0.77 & 0.78 & 2.17 & 3.56 & 4.92 & 6.16 & 7.39 \\
      $0.0$ & -3.98 & -1.91 & 0.06 & 1.98 & 3.73 & 5.53 & 7.20 \\
      $0.5$ & -8.29 & -5.75 & -3.11 & -0.78 & 1.57 & 3.83 & 5.91 \\
      $1.0$ & -13.87 & -10.61 & -7.54 & -4.58 & -1.68 & 1.00 & 3.65 \\
      $1.5$ & -20.47 & -16.72 & -13.14 & -9.57 & -6.16 & -2.98 & 0.10 \\
      \bottomrule
    \end{tabular}
  }
\end{minipage}

\vspace{0.2cm}
\begin{minipage}[t]{0.48\textwidth}
  \centering
  \captionof*{table}{SSA: Factor-neutral benchmark}
  \vspace{-0.1cm}
  \resizebox{\linewidth}{2.0cm}{%
    \begin{tabular}{r *{7}{r}}
      \toprule
      \mbox{$\Delta f_{t+1}^{(1)}$ \, }& \multicolumn{7}{c}{\mbox{$ \Delta f_{t+1}^{(2)} $ ($\times 10^{-1}$)}} \\
      \cmidrule(l){2-8}
      ($\times 10^{-2}$) & $-3$ & $-2$ & $-1$ & $0$ & $1$ & $2$ & $3$ \\
      \midrule
      $-1.5$ & -0.11 & -0.58 & -1.02 & -1.42 & -1.79 & -2.14 & -2.45 \\
      $-1.0$ & -0.48 & -0.77 & -1.04 & -1.27 & -1.48 & -1.66 & -1.82 \\
      $-0.5$ & -0.73 & -0.85 & -0.94 & -1.01 & -1.05 & -1.07 & -1.06 \\
      $0.0$ & -0.93 & -0.87 & -0.78 & -0.68 & -0.55 & -0.41 & -0.24 \\
      $0.5$ & -1.12 & -0.88 & -0.62 & -0.34 & -0.05 & 0.26 & 0.58 \\
      $1.0$ & -1.36 & -0.94 & -0.51 & -0.06 & 0.39 & 0.87 & 1.35 \\
      $1.5$ & -1.70 & -1.11 & -0.51 & 0.10 & 0.73 & 1.36 & 2.00 \\
      \bottomrule
    \end{tabular}
  }
\end{minipage}
\hfill
\begin{minipage}[t]{0.48\textwidth}
  \centering
  \captionof*{table}{$\mathbb{E}[G_{t+1} \mid \bW_{t+1}(\bz)]$: Factor-neutral benchmark}
  \vspace{-0.1cm}
  \resizebox{\linewidth}{2.0cm}{%
    \begin{tabular}{r *{7}{r}}
      \toprule
      \mbox{$\Delta f_{t+1}^{(1)}$ \, }& \multicolumn{7}{c}{\mbox{$ \Delta f_{t+1}^{(2)} $ ($\times 10^{-1}$)}} \\
      \cmidrule(l){2-8}
      ($\times 10^{-2}$) & $-3$ & $-2$ & $-1$ & $0$ & $1$ & $2$ & $3$ \\
      \midrule
      $-1.5$ & -0.23 & -1.34 & -2.40 & -3.43 & -4.40 & -5.31 & -6.17 \\
      $-1.0$ & 0.01 & -0.72 & -1.44 & -2.12 & -2.77 & -3.38 & -3.94 \\
      $-0.5$ & -0.16 & -0.51 & -0.85 & -1.17 & -1.47 & -1.74 & -2.01 \\
      $0.0$ & -0.84 & -0.78 & -0.72 & -0.66 & -0.59 & -0.51 & -0.42 \\
      $0.5$ & -2.06 & -1.59 & -1.13 & -0.67 & -0.21 & 0.24 & 0.68 \\
      $1.0$ & -3.92 & -3.02 & -2.14 & -1.27 & -0.42 & 0.42 & 1.24 \\
      $1.5$ & -6.43 & -5.11 & -3.82 & -2.53 & -1.27 & -0.04 & 1.16 \\
      \bottomrule
    \end{tabular}
  }
\end{minipage}

\caption{SSA point estimates and the Monte Carlo oracle conditional mean gains (measured in \$) across the scenario grid defined by the first two common factor returns $(\Delta f_{t+1}^{(1)}, \Delta f_{t+1}^{(2)})$. The two tables in the top row correspond to the adversarial portfolio and report the SSA estimates and the Monte Carlo oracle conditional expected scenario gain $\mathbb{E}[G_{t+1} \mid \bW_{t+1}(\bz)]$. The bottom row reports the corresponding quantities for the factor-neutral benchmark. Although the SSA estimates for the two portfolios appear similar (with values contained within $\pm \$3 $), the Monte Carlo approximations to the oracle conditional expectations differ substantially, in particular the adversarial portfolio can experience scenario losses as large as $-\$ 20$.
}
\label{tab:SSA-Cond-Exp-gain}
\end{table}

Table \ref{tab:ksa-pi-adversarial-n-neutral} reports the scenario tables of prediction intervals generated by our proposed algorithms, evaluated at a specific day (additional results for a different day are provided in the E-Companion). In the ACSA algorithm, a two-layer neural network is trained as the quantile predictor $g_t$. In the KSA algorithm, the SSA estimate is used as the expectation predictor $\varphi$, since the tables are produced from the perspective of a risk manager and we assume that this represents the best available estimator in practice. Comparing the prediction intervals for the two portfolios, we observe that, under both algorithms, the intervals for the adversarial portfolio are substantially wider than those for the factor-neutral benchmark. This finding suggests that the width of the prediction intervals can serve as an additional indicator of portfolio risk, beyond relying on a single SSA estimate alone. A clearer illustration of the time dynamics of the prediction interval width (under the realized scenarios) is provided in Figure \ref{fig:ACSA_dynamics_adv_n_neu}, which shows that the prediction intervals of the adversarial portfolio are consistently wider than those of the factor-neutral benchmark.

\begin{figure}
    \centering
    \includegraphics[width=0.9\linewidth]{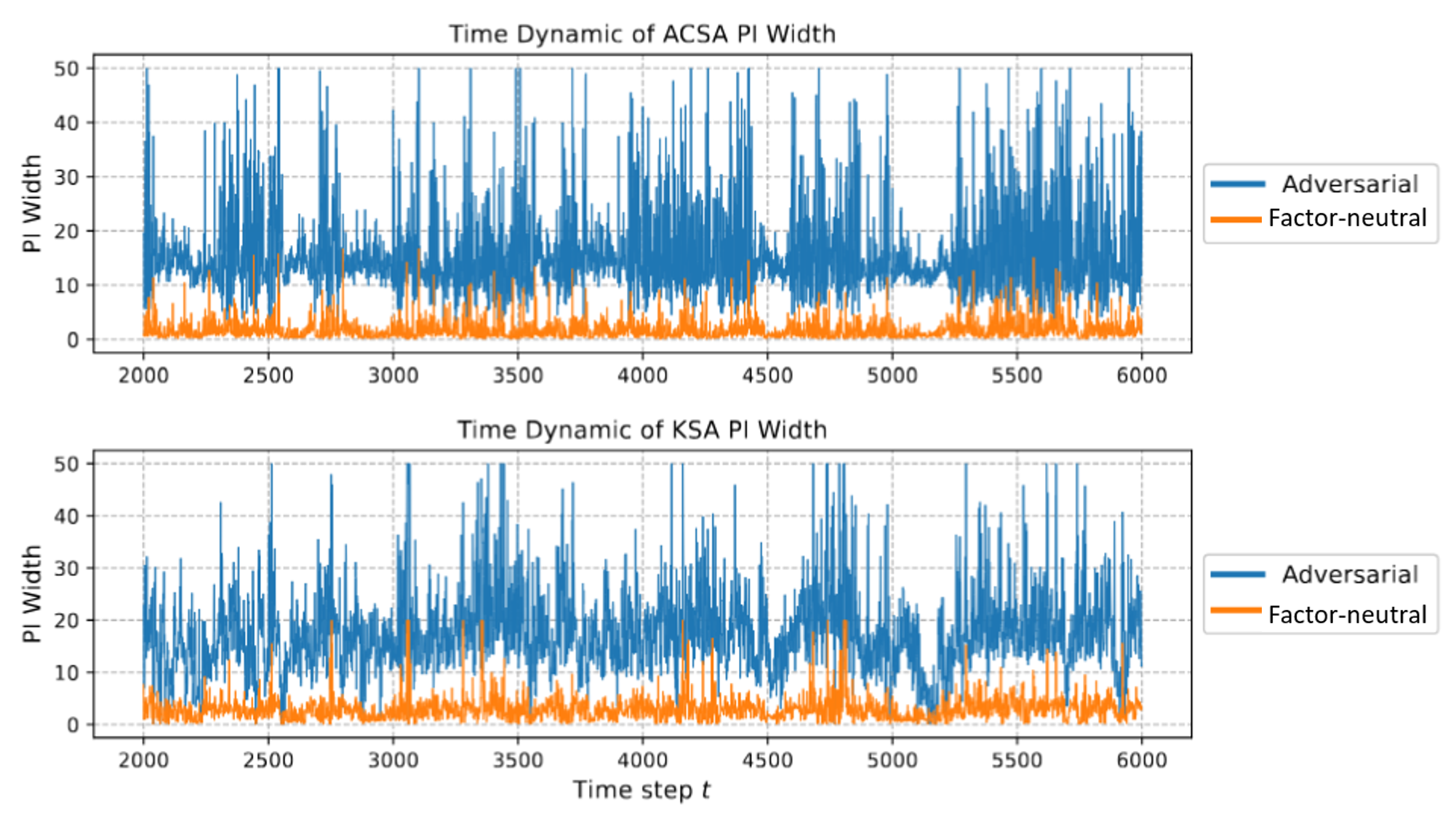}
    \caption{Dynamics of ACSA and KSA's prediction interval (PI) widths, for the adversarial portfolio and the factor-neutral benchmark. On average, the adversarial portfolio has notably wider prediction intervals than the factor-neutral benchmark.
   }\label{fig:ACSA_dynamics_adv_n_neu}
\end{figure}

\begin{table}[p]
\centering
\renewcommand{\arraystretch}{0.95}
\setlength{\parskip}{0pt}
\captionsetup{skip=0pt}
\setlength{\belowcaptionskip}{0pt}
\begin{minipage}[t]{0.90\textwidth}
  \centering
  \captionof*{table}{ACSA 90\% PI ($\tilde{\alpha}_t = 0.08$): Adversarial Portfolio}
  \resizebox{\linewidth}{!}{%
    \begin{tabular}{r *{7}{l}}
      \toprule
      \mbox{$\Delta f_{t+1}^{(1)}$ \, }  & \multicolumn{7}{c}{\mbox{$ \Delta f_{t+1}^{(2)} $ ($\times 10^{-1}$)}} \\
      \cmidrule(l){2-8}
      ($\times 10^{-2}$) & $-3$ & $-2$ & $-1$ & $0$ & $1$ & $2$ & $3$ \\
      \midrule
      $-1.5$ & (-7.1, 12.8) & (-6.0, 12.5) & (-4.9, 12.2) & (-3.8, 11.9) & (-2.8, 11.6) & (-1.8, 11.4) & (-0.9, 11.2) \\
      $-1.0$ & (-8.4, 11.8) & (-6.7, 12.0) & (-5.1, 12.2) & (-3.6, 12.3) & (-2.1, 12.5) & (-0.7, 12.6) & (0.7, 12.8) \\
      $-0.5$ & (-10.9, 9.8) & (-8.6, 10.5) & (-6.4, 11.2) & (-4.4, 11.9) & (-2.4, 12.5) & (-0.5, 13.1) & (1.3, 13.7) \\
      $0.0$ & (-14.4, 6.8) & (-11.6, 8.1) & (-8.9, 9.3) & (-6.2, 10.5) & (-3.7, 11.6) & (-1.4, 12.7) & (0.9, 13.7) \\
      $0.5$ & (-19.1, 2.9) & (-15.7, 4.7) & (-12.4, 6.4) & (-9.3, 8.1) & (-6.2, 9.7) & (-3.4, 11.2) & (-0.6, 12.7) \\
      $1.0$ & (-25.0, -2.3) & (-21.0, 0.1) & (-17.2, 2.4) & (-13.5, 4.6) & (-9.9, 6.7) & (-6.5, 8.8) & (-3.3, 10.7) \\
      $1.5$ & (-32.1, -8.6) & (-27.6, -5.6) & (-23.2, -2.8) & (-18.9, -0.0) & (-14.8, 2.6) & (-10.9, 5.2) & (-7.2, 7.6) \\
      \bottomrule
    \end{tabular}
  }
\end{minipage}
\par\nointerlineskip\vspace{1pt}
\begin{minipage}[t]{0.90\textwidth}
  \centering
  \captionof*{table}{ACSA 90\% PI ($\tilde{\alpha}_t = 0.09$): Factor-neutral benchmark}
  \resizebox{\linewidth}{!}{%
    \begin{tabular}{r *{7}{l}}
      \toprule
      \mbox{$\Delta f_{t+1}^{(1)}$ \, }  & \multicolumn{7}{c}{\mbox{$ \Delta f_{t+1}^{(2)} $ ($\times 10^{-1}$)}} \\
      \cmidrule(l){2-8}
      ($\times 10^{-2}$) & $-3$ & $-2$ & $-1$ & $0$ & $1$ & $2$ & $3$ \\
      \midrule
      $-1.5$ & (-1.3, 1.2) & (-2.3, -0.2) & (-3.5, -1.4) & (-4.7, -2.3) & (-5.9, -3.1) & (-7.1, -3.7) & (-8.2, -4.2) \\
      $-1.0$ & (-0.9, 1.3) & (-1.4, 0.2) & (-2.2, -0.7) & (-3.1, -1.3) & (-4.0, -1.7) & (-4.9, -1.9) & (-5.8, -2.1) \\
      $-0.5$ & (-1.2, 1.2) & (-1.0, 0.4) & (-1.3, -0.3) & (-1.8, -0.7) & (-2.4, -0.6) & (-3.0, -0.5) & (-3.6, -0.3) \\
      $0.0$ & (-2.1, 0.6) & (-1.6, 0.2) & (-1.1, -0.2) & (-0.9, -0.3) & (-1.2, 0.1) & (-1.5, 0.6) & (-1.8, 1.1) \\
      $0.5$ & (-3.7, -0.4) & (-2.8, -0.4) & (-1.9, -0.5) & (-1.1, -0.3) & (-0.6, 0.4) & (-0.5, 1.2) & (-0.5, 2.0) \\
      $1.0$ & (-6.0, -1.9) & (-4.6, -1.5) & (-3.4, -1.1) & (-2.1, -0.6) & (-1.1, 0.3) & (-0.2, 1.4) & (0.2, 2.5) \\
      $1.5$ & (-9.0, -4.0) & (-7.2, -3.2) & (-5.5, -2.3) & (-3.9, -1.4) & (-2.3, -0.3) & (-1.0, 1.0) & (0.1, 2.4) \\
      \bottomrule
    \end{tabular}
  }
\end{minipage}
\par\nointerlineskip\vspace{1pt}
\begin{minipage}[t]{0.90\textwidth}
  \centering
  \captionof*{table}{KSA 90\% PI: Adversarial Portfolio}
  \resizebox{\linewidth}{!}{%
    \begin{tabular}{r *{7}{l}}
      \toprule
      \mbox{$\Delta f_{t+1}^{(1)}$ \, }& \multicolumn{7}{c}{\mbox{$ \Delta f_{t+1}^{(2)} $ ($\times 10^{-1}$)}} \\
      \cmidrule(l){2-8}
      ($\times 10^{-2}$) & $-3$ & $-2$ & $-1$ & $0$ & $1$ & $2$ & $3$ \\
      \midrule
      $-1.5$ & (-7.3, 11.4) & (-5.9, 11.7) & (-5.0, 13.5) & (-4.3, 19.4) & (-2.7, 18.7) & (-2.8, 18.2) & (-2.6, 17.6) \\
      $-1.0$ & (-7.7, 11.8) & (-7.1, 11.3) & (-5.9, 11.2) & (-4.8, 14.4) & (-3.2, 20.5) & (-1.2, 20.2) & (-1.0, 20.0) \\
      $-0.5$ & (-10.8, 11.0) & (-7.7, 11.2) & (-6.8, 11.6) & (-5.5, 11.8) & (-4.3, 13.9) & (-2.0, 21.8) & (0.2, 21.8) \\
      $0.0$ & (-12.3, 9.3) & (-10.5, 9.8) & (-8.2, 11.1) & (-6.8, 11.4) & (-5.0, 12.1) & (-3.8, 14.3) & (-0.7, 21.9) \\
      $0.5$ & (-13.2, 8.5) & (-12.4, 8.8) & (-10.0, 9.2) & (-8.7, 9.5) & (-6.9, 10.4) & (-5.0, 12.1) & (-3.7, 13.5) \\
      $1.0$ & (-11.2, 7.0)$^{\textcolor{red}{*}}$ & (-13.0, 7.6) & (-12.4, 8.2) & (-10.4, 8.6) & (-9.0, 9.2) & (-7.1, 9.9) & (-5.4, 11.2) \\
      $1.5$ & (-13.0, 4.8)$^{\textcolor{red}{*}}$ & (-12.5, 5.6)$^{\textcolor{red}{*}}$ & (-11.7, 6.5)$^{\textcolor{red}{*}}$ & (-11.9, 7.3) & (-11.0, 7.3) & (-9.3, 8.7) & (-8.1, 9.8) \\
      \bottomrule
    \end{tabular}
  }
\end{minipage}
\par\nointerlineskip\vspace{1pt}
\begin{minipage}[t]{0.90\textwidth}
  \centering
  \captionof*{table}{KSA 90\% PI: Factor-neutral benchmark}
  \resizebox{\linewidth}{!}{%
    \begin{tabular}{r *{7}{l}}
      \toprule
      \mbox{$\Delta f_{t+1}^{(1)}$ \, }& \multicolumn{7}{c}{\mbox{$ \Delta f_{t+1}^{(2)} $ ($\times 10^{-1}$)}} \\
      \cmidrule(l){2-8}
      ($\times 10^{-2}$) & $-3$ & $-2$ & $-1$ & $0$ & $1$ & $2$ & $3$ \\
      \midrule
      $-1.5$ & (-3.1, 3.1) & (-3.9, 2.3) & (-4.5, 1.3) & (-5.2, 0.9) & (-5.9, 0.5) & (-6.2, 0.2) & (-6.5, -0.4) \\
      $-1.0$ & (-3.3, 3.0) & (-3.8, 2.4) & (-4.3, 1.8) & (-4.8, 1.3) & (-5.2, 0.8) & (-5.7, 0.7) & (-5.9, 0.5) \\
      $-0.5$ & (-3.7, 2.9) & (-3.7, 2.6) & (-3.9, 2.2) & (-4.1, 1.9) & (-4.3, 1.7) & (-4.7, 1.3) & (-5.1, 1.3) \\
      $0.0$ & (-4.1, 4.6) & (-4.0, 2.6) & (-3.7, 2.6) & (-3.7, 2.5) & (-3.6, 2.3) & (-3.6, 2.4) & (-3.8, 2.2) \\
      $0.5$ & (-4.7, 6.9) & (-4.0, 4.8) & (-3.8, 2.6) & (-3.3, 2.8) & (-3.0, 3.1) & (-2.8, 3.2) & (-2.5, 3.5) \\
      $1.0$ & (-4.5, 6.7) & (-4.1, 7.1) & (-3.7, 4.9) & (-3.2, 3.1) & (-2.8, 3.5) & (-2.2, 3.9) & (-1.7, 4.3) \\
      $1.5$ & (-4.3, 6.4)$^{\textcolor{red}{*}}$ & (-4.3, 6.9)$^{\textcolor{red}{*}}$ & (-3.7, 7.5)$^{\textcolor{red}{*}}$ & (-3.1, 4.0) & (-2.4, 3.7) & (-1.8, 4.3) & (-1.2, 5.0) \\
      \bottomrule
    \end{tabular}
  }
\end{minipage}
\caption{ACSA and KSA 90\% prediction intervals for the adversarial portfolio and the factor-neutral benchmark. In ACSA, the two portfolios follow different trajectories in the time dynamics of $\tilde{\alpha}_t$, and therefore the quantile arguments differ at the evaluated time step for the two tables ($\tilde{\alpha}_t = 0.08$ for the adversarial portfolio and $\tilde{\alpha}_t = 0.09$ for the factor-neutral benchmark). The KSA uses SSA estimation as the mean predictor. A red $^{\textcolor{red}{*}}$ superscript indicates scenarios where the prediction interval \textit{does not contain} the Monte Carlo oracle conditional expected gain $\mathbb{E}[G_{t+1} \mid \bW_{t+1}(\bz)]$ reported in Table~\ref{tab:SSA-Cond-Exp-gain}. As was the case with Table \ref{tab:ConvexIntervals_wACSA_wKSA_Sec5_2},
this is only a diagnostic for interval location relative to the conditional mean $\mathbb{E}[G_{t+1} \mid \bW_{t+1}(\bz)]$. In particular, exclusion of $\mathbb{E}[G_{t+1} \mid \bz ]$ is not a validity failure since the intervals target realized $G_{t+1} \mid \bW_{t+1}(\bz)$ and not $\mathbb{E}[G_{t+1} \mid \bW_{t+1}(\bz)]$.}
%\caption{KSA 90\% prediction intervals for the adversarial and risk-neutral portfolio, using SSA as the mean predictor. A red $^{\textcolor{red}{*}}$ superscript indicates scenarios where the prediction interval \textit{does not contain} the true conditional expected gain (Cond Exp) reported in Table~\ref{tab:SSA-Cond-Exp-gain}. Most intervals successfully cover the Cond Exp. Moreover, the prediction intervals for the adversarial portfolio are significantly wider than those for the risk-neutral portfolio. This indicates higher uncertainty and reveals the presence of hidden risk.}
\label{tab:ksa-pi-adversarial-n-neutral}
\end{table}
\clearpage

\section{Conclusions and Further Research}
\label{sec:conclusions}

Scenario analysis is widely used in financial risk management, but standard
implementations typically report only point estimates of scenario P\&L with
little statistical validation. We instead formulate scenario analysis as a
sequential prediction problem and augment scenario tables with prediction
intervals for realized scenario gains. ACSA provides online marginal calibration,
while KSA uses the stressed scenario and current market state to estimate
scenario-conditional uncertainty.

Our experiments show that these methods can provide more informative assessments
than standard scenario analysis, particularly when dependence, nonlinear
exposures, or unmonitored risk factors make conventional scenario P\&L estimates
misleading. The two approaches are complementary: ACSA offers a broadly
applicable calibration mechanism, whereas KSA provides more scenario-specific
uncertainty estimates when sufficient structure and historical support are
available. KSA can also be computationally attractive for changing portfolios
when conditional expectations are estimated at the security level and aggregated
across positions.

Several directions for future research remain. One is to improve the scalability
of KSA when the conditioning information $H_t$ is high-dimensional. Promising
approaches include structured or additive kernels, sparsity assumptions, and
learned low-dimensional representations $r(H_t)$ designed to preserve the
information relevant for scenario-loss quantiles. More generally, this amounts
to learning an efficient notion of similarity between market states while
retaining useful conditional-coverage behavior.

A second direction is to integrate scenario design more closely with statistical
calibration. Rather than specifying a scenario set first and validating it only
afterwards, one could incorporate plausibility and severity directly into the
learning and calibration process, for example by allocating more resolution or
weight to plausible but severe scenarios. This could lead to scenario tables
that are jointly designed to reflect risk-management objectives and to support
systematic statistical validation.

%{\bf Acknowledgment} The authors thank

\clearpage
\phantomsection\label{part:main-references}
\putbib
\end{bibunit}

\appendix
\startappendixpart{EC}
\setbibliographypart{ec}
\begin{bibunit}
\phantomsection\label{part:ec-start}
\begin{center}
{\Large\bfseries E-Companion}\\[0.5\baselineskip]
{\large Stress Testing Financial Portfolios with Coverage Guarantees}
\end{center}

\section{Additional Supporting Material}

\subsection{Notation Summary}
\label{subsec:notation_summary}
Table~\ref{tab:notation_summary} summarizes the main symbols used throughout the paper.
\renewcommand{\arraystretch}{1.12}
\begin{longtable}{p{0.21\textwidth}p{0.71\textwidth}}
\caption{Notation summary.}\label{tab:notation_summary}\\
\toprule
Symbol & Meaning\\
\midrule
\endfirsthead
\multicolumn{2}{l}{\textit{Table \thetable\ (continued)}}\\
\toprule
Symbol & Meaning\\
\midrule
\endhead
\midrule
\multicolumn{2}{r}{\textit{Continued on next page}}\\
\endfoot
\bottomrule
\endlastfoot
\multicolumn{2}{l}{\textbf{General market and scenario notation}}\\
$t$ & Time index.\\
$V_t$ & Portfolio value at time $t$.\\
$G_{t+1}$ & One-step portfolio gain / P\&L.\\
${\cal F}_t$ & Information available at time $t$.\\
$\bx_t$ & Vector of risk factors.\\
$\bff_t$ & Vector of common factors.\\
$\Delta \bx_{t+1}, \Delta \bff_{t+1}$ & One-step changes in the risk factors and common factors.\\
$\bB$ & Factor loading matrix in the factor model.\\
$\bz$ & Stress scenario applied to selected common factor returns.\\
$\bz_t$ & Realized stress scenario on date $t$.\\
$v(\cdot)$ & Pricing map from risk factors to portfolio value.\\
 \\

\multicolumn{2}{l}{\textbf{ACSA notation}}\\
$\alpha$ & Target miscoverage level.\\
$\tilde{\alpha}_t$ & Time-$t$ adjusted miscoverage level used by ACSA.\\
$\bar{\tilde{\alpha}}$&  Average adjusted miscoverage level used in the ACSA simulations.\\
$\gamma$ & ACSA step size.\\
$g_t(\bz;\tilde{\alpha})$ & Scenario-wise quantile predictor evaluated at quantile level $\tilde{\alpha}$.\\
$\Ct_t(\bz)$ & Prediction interval reported for scenario $\bz$.\\
$\mathrm{err}_t$ & Indicator that the previously reported interval misses the realized gain.\\
$B_1,\ldots,B_m$ & Scenario groups in group-balanced ACSA.\\
$\tilde{\alpha}_{t,k}$ & Group-specific adjusted miscoverage level.\\
 \\
\multicolumn{2}{l}{\textbf{KSA notation}}\\
$u$ & Quantile level arguments used in pinball-loss and quantile expressions.\\
$\ell_u(y,q)$ & Pinball loss at quantile level $u$.\\
$Q_Y(u)$ & $u$-quantile of a random variable $Y$.\\
$\Ht{t}$ & Historical context vector.\\
$\bW_t(\bz)$ & Feature vector $(\bz,\Ht{t-1})$.\\
$\varphi(\bw)$ & Expectation predictor of the scenario gain.\\
$Y_t$ & Residual of the portfolio gain after subtracting the predicted expectation.\\
$\kappa(\bw,\bw')$ & Kernel similarity function.\\
$\widehat{\ell}_t(u;q,\bw)$ & Kernel-reweighted empirical pinball loss.\\
$\widehat{Q}_t(u;\bW_{t+1}(\bz))$ & KSA estimator of the residual $u$-quantile.\\
$\mathcal{S}$ & Set of scenarios in the scenario table.\\
$h$ & Kernel bandwidth.\\
 \\
\multicolumn{2}{l}{\textbf{Running example and experiments}}\\
$\boldsymbol{\beta}$
& Factor loadings in the three-factor log-linear example of Section~4. \\

$\bsym{\Sigma}$
& Annualized covariance matrix of the common factor returns. \\

$\sigma_{\mathrm{Cred}|s}^2$
& Annualized conditional variance of the unstressed credit factor given the stressed factors. \\

$m_z$
& One-day conditional mean of the log return under scenario $z$. \\

$v_{\mathrm{Cred}}$
& One-day conditional variance of the log return under scenario $z$. \\

$\sigma_\epsilon,\widehat{\sigma}_\epsilon$
& True and misspecified annualized idiosyncratic volatilities in the running example. \\

$\sigma_{\rm tot}$
& Total annualized volatility in the running example. \\

$\delta$
& One-trading-day time increment in years, set to $1/252$. \\

$u_j$ & Number of units held in security $j$.\\
$u^{\mathrm{skew}}, u^{\mathrm{hedge}}$ & Common option position size and underlying-index hedge position in the skew-steepener portfolio of Section~\ref{sec:stress_steepener}.\\
$m_j,\tau_j$ & Option moneyness and time to maturity of security $j$.\\
 \\
\multicolumn{2}{l}{\textbf{Dependence and theory}}\\
$\beta(k)$
& $\beta$-mixing coefficient of the feature--residual process
$\{(\bW_t,Y_t)\}$.
\end{longtable}

\subsection{A More Complex Example of ACSA Dynamics}\label{subapx:complex_example_ACSA}
Following on from Examples \ref{eg:Ideal1} and \ref{eg:Ideal2}, we now consider a more complex ACSA example in which ${G_t}$ follows an autoregressive process. The example compares conditional and stationary quantile predictors at a common adjusted level. As in Examples~\ref{eg:Ideal1} and \ref{eg:Ideal2}, we do not condition on any scenario argument in this example.

\begin{example} \label{apx_eg:Ideal3}
\sffamily\upshape
We assume the dynamics of $G_t$ satisfy $G_{t+1} = \mu + \theta G_t + \epsilon_{t+1}$, where $\epsilon_{t+1} \sim \mathcal{N}(0,\sigma^2)$ is an exogenous noise random variable. Then $G_{t+1} \mid G_t \sim \mathcal{N}(\mu + \theta G_t,\sigma^2)$ and the true $\tilde{\alpha}$-quantile of $G_{t+1} \mid G_t$ (denoted by $Q_{G_{t+1}}(\tilde{\alpha}\mid G_t)$) is $Q_{G_{t+1}}(\tilde{\alpha}\mid G_t) = \mu + \theta G_t + \sigma \Phi^{-1}(\tilde{\alpha})$, where $ \Phi^{-1}(\tilde{\alpha})$ is the  $\tilde{\alpha}$-quantile of the standard normal distribution. If we take $g_t(\tilde{\alpha}) \equiv Q_{G_{t+1}}(\tilde{\alpha}\mid G_t)$, then the prediction interval produced in (\ref{eq:OSAinterval}) is $\Ct_t \equiv [\mu + \theta G_t + \sigma \Phi^{-1}(\tilde{\alpha}_t/2), \, \mu + \theta G_t + \sigma \Phi^{-1}(1-\tilde{\alpha}_t/2)].$ This prediction interval depends only on $G_t$ and has width $\mathrm{WI}(\Ct_t)$ given by
\begin{equation} \label{eq:WidthTrue}
\mathrm{WI}(\Ct_t) = \sigma \cdot(\Phi^{-1}(1-\tilde{\alpha}_t/2) - \Phi^{-1}(\tilde{\alpha}_t/2) ).
\end{equation}
As $g_t$ is correctly calibrated, we know from Example \ref{eg:Ideal1} how the $\tilde{\alpha}_t$'s will behave for $0 < \gamma < 2$ and so this tells us (loosely speaking) how $\Ct_t$ and $\mathrm{WI}(\Ct_t)$ behave.

Now consider an alternative quantile predictor. Assuming $-1 < \theta < 1$, then it is easy to see that the stochastic process $\{G_t\}_{t\geq 0}$ has an $\mathcal{N}(\mu/(1-\theta), \, \sigma^2/(1-\theta^2))$ stationary distribution and its $\tilde{\alpha}$-quantile is $Q_{G_\infty}(\tilde{\alpha}) = \mu/(1-\theta) + \sigma / \sqrt{1-\theta^2}\Phi^{-1}(\tilde{\alpha})$.
Suppose we take $g_t(\tilde{\alpha})\equiv Q_{G_\infty}(\tilde{\alpha})$.
In this case the prediction interval produced in (\ref{eq:OSAinterval}) is $\Ct_t \equiv \left[\frac{\mu}{1-\theta} + \frac{\sigma}{\sqrt{1-\theta^2}} \Phi^{-1}(\tilde{\alpha}_t/2), \, \frac{\mu}{1-\theta} + \frac{\sigma}{\sqrt{1-\theta^2}} \Phi^{-1}(1-\tilde{\alpha}_t/2)\right].$ This prediction interval does not depend on $G_t$ and has width
\begin{equation} \label{eq:WidthStat}
\mathrm{WI}(\Ct_t) = \frac{\sigma}{\sqrt{1-\theta^2}}\cdot (\Phi^{-1}(1-\tilde{\alpha}_t/2) - \Phi^{-1}(\tilde{\alpha}_t/2)).
\end{equation}
At a common adjusted level $\widetilde{\alpha}\in(0,1)$, the stationary-quantile
interval has width equal to the conditional-quantile interval width multiplied by
$1/\sqrt{1-\theta^2}$. Thus, whenever $\theta\neq 0$, conditioning on $G_t$
produces a strictly sharper interval at the same quantile level.

The two ACSA procedures generally generate different adaptive level sequences,
however, and the stationary quantiles need not be calibrated conditionally on
$G_t$. Therefore, the common-level comparison above does not imply an ordering
of the realized interval widths along the two adaptive trajectories. Rather, it
isolates the potential gain in sharpness from using a predictor that conditions
on the information relevant to the next-period distribution.
\end{example}\medskip

\subsection{Additional Details for Section \ref{sec:ACSAMotivating_rewrite}}\label{subapx:addi_details_for_ACSA_motivating}

This appendix (i) derives closed form expressions for the intervals $\mathcal{C}_{\text{True}}(\bz)$, $\mathcal{C}_{\text{Mis}}(\bz,\allowbreak \widehat{\sigma}_\epsilon)$ and $\mathcal{C}_{\text{ACSA}}(\bz,\allowbreak \widehat{\sigma}_\epsilon,\allowbreak \tilde{\alpha}_t)$ of Section \ref{sec:ACSAMotivating_rewrite} and (ii) provides further explanations for the behavior we observed in  Table \ref{tab:ConvexIntervals_wACSA_wKSA_Sec5_2} and Figure \ref{fig:beta_t_series}. We first recall that all model parameters are known except for $\sigma_\epsilon$, for which only an estimator $\widehat{\sigma}_\epsilon$ is available. As a result, $m_{\bz}$, $v_{\text{\tiny Cred}}$, and $\sigma_{\text{tot}}$ and their estimated counterparts are
\begin{equation}\label{eq:group_of_quantities_related_to_hatsigma}
\begin{aligned}
\widehat{m}_{\bz} &:=
\delta\,  \left(\mu_\delta-\frac{\widehat{\sigma}_\epsilon^2}{2}\right)
+\beta_{\text{\tiny Oil}}\,z_{\text{\tiny Oil}}
+\beta_{\text{\tiny Rate}}\,z_{\text{\tiny Rate}}
+\beta_{\text{\tiny Cred}}\,\mu_{\text{\tiny Cred}\mid s}\\
\widehat{v}_{\text{\tiny Cred}} &:= \delta\cdot\left(\beta_{\text{\tiny Cred}}^{2} \sigma_{\text{\tiny Cred}|s}^2
+\widehat{\sigma}_\epsilon^{2}\right),\quad \widehat{\sigma}_{\text{tot}}:= \sqrt{\bsym{\beta}^{\mathsf T}\bsym{\Sigma}\boldsymbol{\beta}+ \widehat{\sigma}_{\epsilon}^2}.
\end{aligned}
\end{equation}

\paragraph{Explicit Forms of Prediction Intervals}
In order to provide the explicit forms of the prediction intervals $\mathcal{C}_{\text{True}}(\bz)$, $\mathcal{C}_{\text{Mis}}(\bz, \widehat{\sigma}_\epsilon)$ and $\mathcal{C}_{\text{ACSA}}(\bz, \widehat{\sigma}_\epsilon, \tilde{\alpha}_t)$, recall that $V_0$ and $V_1$ are the $t=0$ and $t=1$ portfolio values, respectively, so that the portfolio gain at $t=1$ is $G_1\defeq V_1 - V_0$. Then recalling that portfolio consists of a single short call position, we have $V_0 = -C(S_0, K, \tau, \sigma_{\text{tot}})$ and $V_1 =  -  C(S_{1}, K, \tau-\delta, \sigma_{\text{tot}})$
where $S_0$ and $S_1$ denote the $t=0$ and $t=1$ prices, respectively, of the S\&P 500, $\delta$ represents 1 day (in years), and we use $C(S, K, \tau, \sigma)$ to denote\footnote{We omit the dependence of the Black-Scholes price on the interest rate and dividend yield.}
the Black-Scholes price of a call option with underlying price $S$, time to maturity $\tau$ and strike $K$.
We write $G_1$ as $G_1(S_1)$ since the randomness of $G_1$ arises entirely from the randomness of $S_1$. The explicit dependency of $G_1$ on $S_1$ is given by:
\begin{equation}\label{eq:explicit_expression_of_G1}
    G_1(S_1) = -V_0 - C(S_{1}, K, \tau-\delta, \sigma_{\text{tot}}).
\end{equation}
Based on \eqref{eq:logret_cond_dist}, for any given scenario $\bz$, the conditional distribution $G_1\mid \bz$ is determined by \eqref{eq:explicit_expression_of_G1} together with $S_1 = S_0\cdot \exp (m_{\bz} + \sqrt{v_{\text{\tiny Cred}}}\cdot \mathcal{Z})$, where $\mathcal{Z}$ is a standard normal random variable. To derive the expression for $\mathcal{C}_{\text{True}}(\bz)$, recall that its endpoints correspond to the $\alpha/2$ and $1-\alpha/2$ conditional quantiles of $G_1$. Since $G_1(S_1)$ is strictly decreasing in $S_1$, for any fixed $\bz$ the conditional quantiles of $G_1$ are obtained by evaluating $G_1(\cdot)$ at the corresponding conditional quantiles of $S_1$. That is,
$Q_{G_1}(\tilde{\alpha}\mid \bz) = G_1(Q_{S_1}(1-\tilde{\alpha}\mid \bz)) = G_1(S_0\cdot \exp (m_{\bz} + \sqrt{v_{\text{\tiny Cred}}}\cdot \Phi^{-1}(1-\tilde{\alpha})))$ where we use $Q_Y(\tilde{\alpha}\mid \bz)$ to denote the $\tilde{\alpha}$ quantile of the random variable $Y$ conditional on $\bz$. The explicit form of $\mathcal{C}_{\text{True}}(\bz)$ is therefore
{\small
\begin{equation}\label{eq:C_correct}
    \mathcal{C}_{\text{True}}(\bz) = \left[G_1(S_0\cdot \exp (m_{\bz} + \sqrt{v_{\text{\tiny Cred}}}\cdot \Phi^{-1}(1-\alpha/2))), G_1(S_0\cdot \exp(m_{\bz} + \sqrt{v_{\text{\tiny Cred}}}\cdot \Phi^{-1}(\alpha/2)))\right].
\end{equation}}

The expressions of $\mathcal{C}_{\text{Mis}}(\bz, \widehat{\sigma}_\epsilon)$ and $\mathcal{C}_{\text{ACSA}}(\bz, \widehat{\sigma}_\epsilon, \tilde{\alpha}_t)$ can be derived similarly. In particular, let $\widehat{G}_1(S_1) = -V_0 - C(S_{1}, K, \tau-\delta, \widehat{\sigma}_{\text{tot}})$ denote the
\textit{estimated} portfolio gain as a function of $S_1$ but with $\widehat{\sigma}_\epsilon$ used in place of $\sigma_\epsilon$, and  $\widehat{m}_{\bz}$, $\widehat{v}_{\text{\tiny Cred}}$ and $\widehat{\sigma}_{\text{tot}}$ determined by $\widehat{\sigma}_\epsilon$ (as well as $\bz$ in the case of $\widehat{m}_{\bz}$); see \eqref{eq:group_of_quantities_related_to_hatsigma}.
For $0<\tilde\alpha_t<1$, the explicit forms of $\mathcal{C}_{\text{Mis}}(\bz, \widehat{\sigma}_\epsilon)$ and $\mathcal{C}_{\text{ACSA}}(\bz, \widehat{\sigma}_\epsilon, \tilde{\alpha}_t)$ are then given by
{\small
\begin{eqnarray*}\nonumber
\mathcal{C}_{\text{Mis}}(\bz,\widehat{\sigma}_\epsilon) &=& \left[\widehat{G}_1(S_0\cdot \exp(\widehat{m}_{\bz} + \sqrt{\widehat{v}_{\text{\tiny Cred}}}\cdot \Phi^{-1}(1-\alpha/2))), \widehat{G}_1(S_0\cdot \exp(\widehat{m}_{\bz} + \sqrt{\widehat{v}_{\text{\tiny Cred}}}\cdot \Phi^{-1}(\alpha/2)))\right] \\
\mathcal{C}_{\text{ACSA}}(\bz, \widehat{\sigma}_\epsilon, \tilde{\alpha}_t) &=& \left[\widehat{G}_1(S_0\cdot \exp(\widehat{m}_{\bz} + \sqrt{\widehat{v}_{\text{\tiny Cred}}}\cdot \Phi^{-1}(1-\tilde{\alpha}_t/2))), \widehat{G}_1(S_0\cdot \exp(\widehat{m}_{\bz} + \sqrt{\widehat{v}_{\text{\tiny Cred}}}\cdot \Phi^{-1}(\tilde{\alpha}_t/2)))\right] \label{eq:C_acsa}
\end{eqnarray*}}where $\tilde{\alpha}_t$ is the time $t$ quantile argument of the ACSA algorithm. We remark that when $\tilde{\alpha}_t = \alpha$, we have exactly $\mathcal{C}_{\text{ACSA}}(\bz, \widehat{\sigma}_\epsilon, \tilde{\alpha}_t) = \mathcal{C}_{\text{Mis}}(\bz,\widehat{\sigma}_\epsilon)$.

\begin{comment}
\paragraph{From Time Step $0$ to Step $t$}
The expressions in \eqref{eq:C_correct}, \eqref{eq:C_incorrect}, and \eqref{eq:C_acsa} are all derived for the portfolio gain $G_1$ between time steps $0$ and $1$. One might think that separate expressions must be explicitly derived for $G_{t+1}$ at each subsequent time step $t$. However, these prediction intervals can in fact be obtained directly by scaling the corresponding intervals for $G_1$. The key observation is that the option’s moneyness is held constant across time steps, implying that its value is always proportional to the underlying price by the same coefficient, i.e., $V_t/S_t = V_0/S_0$. Writing $G_{t+1}$ as a function of $S_{t+1}$, we have
$$
\begin{aligned}
    G_{t+1}(S_{t+1}) &= -V_t - C(S_t, K_t, \tau-\delta, \sigma_\text{tot}) \\
    &= -\dfrac{V_t}{S_t}\cdot S_t - S_t\cdot \widetilde{C}(m, \tau-\delta, \sigma_\text{tot})\\
    &= -\dfrac{V_0}{S_0} \cdot S_0\cdot \dfrac{S_t}{S_0} - \dfrac{S_t}{S_0} \cdot S_0 \cdot \widetilde{C}(m, \tau-\delta, \sigma_\text{tot})\\
    &= \dfrac{S_t}{S_0}\cdot G_1(S_1),
\end{aligned}
$$
where the second equality follows from \eqref{eq:CallPrice}. Consequently, the prediction intervals $\mathcal{C}_{\text{True}}$, $\mathcal{C}_{\text{Mis}}$, and $\mathcal{C}_{\text{ACSA}}$ at time $t$ are obtained simply by scaling those at time $0$ by the factor $S_t/S_0$.
\end{comment}

\paragraph{True Coverage Rates of Various Prediction Intervals}
For any stress scenario $\bz$, we derive an explicit expression for the true coverage rate of the true scenario gain $G_1$ associated with a given prediction interval $[A, B]$. As is clear from \eqref{eq:explicit_expression_of_G1}, the distribution of $G_1$ is determined by the normal distribution of $\log S_1/S_0$. Consequently, by transforming the prediction interval for $G_1$ into an ``equivalent'' prediction interval for $\log S_1/S_0$, we can obtain a closed-form expression for its true coverage rate. In particular, for any interval $[A, B]$, we have
{\small
\begin{equation}
\begin{aligned}
\mathbb{P}(G_1 \in [A, B])
&= \mathbb{P}\Bigg(\log S_1/S_0 \in \left[\log G_1^{-1}(B) - \log S_0, \quad \log G_1^{-1}(A) - \log S_0\right]\Bigg) \\
&= \mathbb{P}\Bigg(\mathcal{Z} \in \left[\dfrac{1}{\sqrt{v_{\text{\tiny Cred}}}}\left(\log G_1^{-1}(B) - \log S_0 - m_{\bz}\right), \quad \dfrac{1}{\sqrt{v_{\text{\tiny Cred}}}}\left(\log G_1^{-1}(A) - \log S_0 - m_{\bz}\right)\right]\Bigg) \\
&= \Phi\Bigg(\dfrac{1}{\sqrt{v_{\text{\tiny Cred}}}}\left(\log G_1^{-1}(A) - \log S_0 - m_{\bz}\right)\Bigg) - \Phi\Bigg(\dfrac{1}{\sqrt{v_{\text{\tiny Cred}}}}\left(\log G_1^{-1}(B) - \log S_0 - m_{\bz}\right)\Bigg) \label{apxprop:equiv_transform}
\end{aligned}
\end{equation}}where $\Phi(\cdot)$ denotes the standard normal CDF. For $0<\tilde\alpha<1$, it follows from \eqref{apxprop:equiv_transform} and our explicit expression for $\mathcal{C}_{\text{ACSA}}(\bz, \widehat{\sigma}_\epsilon, \tilde{\alpha})$  above that
\begin{equation}\label{eq:def_of_psi_in_appdix}
\begin{aligned}
    \psi(m_{\bz}, \widehat{\sigma}_\epsilon, \tilde{\alpha}) &:= \mathbb{P}(G_1 \in \mathcal{C}_\text{ACSA}(\bz, \widehat{\sigma}_\epsilon, \tilde{\alpha})) \\
    &= \Phi\Bigg(\dfrac{1}{\sqrt{v_{\text{\tiny Cred}}}}\left(\log G_1^{-1}(\widehat{G}_1(S_0\cdot \exp(\widehat{m}_{\bz} + \sqrt{\widehat{v}_{\text{\tiny Cred}}}\cdot \Phi^{-1}(1-\tilde{\alpha}/2)))) - \log S_0 - m_{\bz}\right)\Bigg) \\
    &\phantom{=} - \Phi\Bigg(\dfrac{1}{\sqrt{v_{\text{\tiny Cred}}}}\left(\log G_1^{-1}(\widehat{G}_1(S_0\cdot \exp(\widehat{m}_{\bz} + \sqrt{\widehat{v}_{\text{\tiny Cred}}}\cdot \Phi^{-1}(\tilde{\alpha}/2)))) - \log S_0 - m_{\bz}\right)\Bigg)
\end{aligned}
\end{equation}
Outside the interior, define $\psi(m_{\bz},\widehat\sigma_\epsilon,\tilde\alpha)=1$ for $\tilde\alpha\leq0$ and $0$ for $\tilde\alpha\geq1$. The expression above is the probability of the scenario gain $G_1$ being covered by the ACSA interval $\mathcal{C}_\text{ACSA}(\bz, \widehat{\sigma}_\epsilon, \tilde{\alpha})$ as a function of $m_{\bz}$, $\widehat{\sigma}_\epsilon$ and $\tilde{\alpha}$.
Although the function $\psi$ provides the true coverage rate for any stress scenario $\bz$, it depends on $\bz$ only through $m_{\bz}$. In particular, $\widehat{m}_{\bz}$ is determined by $m_{\bz}$ through $\widehat{m}_{\bz} = m_{\bz} + \frac{\delta}{2} (\sigma_\epsilon^2 - \widehat{\sigma}_\epsilon^2)$, while $v_{\text{\tiny Cred}}$ and $\widehat{v}_{\text{\tiny Cred}}$ do not depend on the explicit value of $\bz$.
We also note that \eqref{eq:def_of_psi_in_appdix} may be used to compute
true coverage rate of $\mathcal{C}_\text{Mis}$ since $\mathbb{P}(G_1 \in \mathcal{C}_\text{Mis}(\bz, \widehat{\sigma}_\epsilon)) = \psi(m_{\bz}, \widehat{\sigma}_\epsilon, \alpha)$. We can now use $\psi(m_{\bz}, \widehat{\sigma}_\epsilon, \tilde{\alpha})$ to explain the monotonic pattern of coverage rates that we observed in Table \ref{tab:ConvexIntervals_wACSA_wKSA_Sec5_2} as well as the non-monotonic (in $\widehat{\sigma}_\epsilon$) behavior of $\tilde{\alpha}_t$ we observed in Figure \ref{fig:beta_t_series}.

\paragraph{Monotonic Pattern of Table \ref{tab:ConvexIntervals_wACSA_wKSA_Sec5_2}}
In Table \ref{tab:ConvexIntervals_wACSA_wKSA_Sec5_2}, we observed that for both $\mathcal{C}_{\text{Mis}}(\bz,\allowbreak \widehat{\sigma}_\epsilon)$ and $\mathcal{C}_{\text{ACSA}}(\bz,\allowbreak \widehat{\sigma}_\epsilon,\allowbreak\bar{\tilde{\alpha}})$, the true coverage rates decreased from left to right within each row and increased from top to bottom within each column. Since both $\widehat{\sigma}_\epsilon$ and $\tilde{\alpha}$ are constant across scenarios $\bz$, $m_{\bz}$ is the only argument of $\psi$ that drives differences in the real coverage rate across scenarios.
As demonstrated in Figure \ref{fig:mono_in_ms},  $\psi(m_{\bz}, \widehat{\sigma}_\epsilon, \tilde{\alpha})$ consistently increases with $m_{\bz}$ under all combinations of $\widehat{\sigma}_\epsilon$ and $\tilde{\alpha}$ considered in the experiments of Section \ref{sec:ACSAMotivating_rewrite}. As we recall from \eqref{eq:explicit_form_of_ms}, however, $m_{\bz}$ decreases with $z_{\text{\tiny Oil}}$ and increases with $z_{\text{\tiny Rate}}$ and together these observations therefore explain the monotonicity patterns of Table \ref{tab:ConvexIntervals_wACSA_wKSA_Sec5_2}.

\begin{figure}[htbp]
    \centering
    \begin{minipage}[b]{0.48\textwidth}
        \centering
        \includegraphics[width=\textwidth]{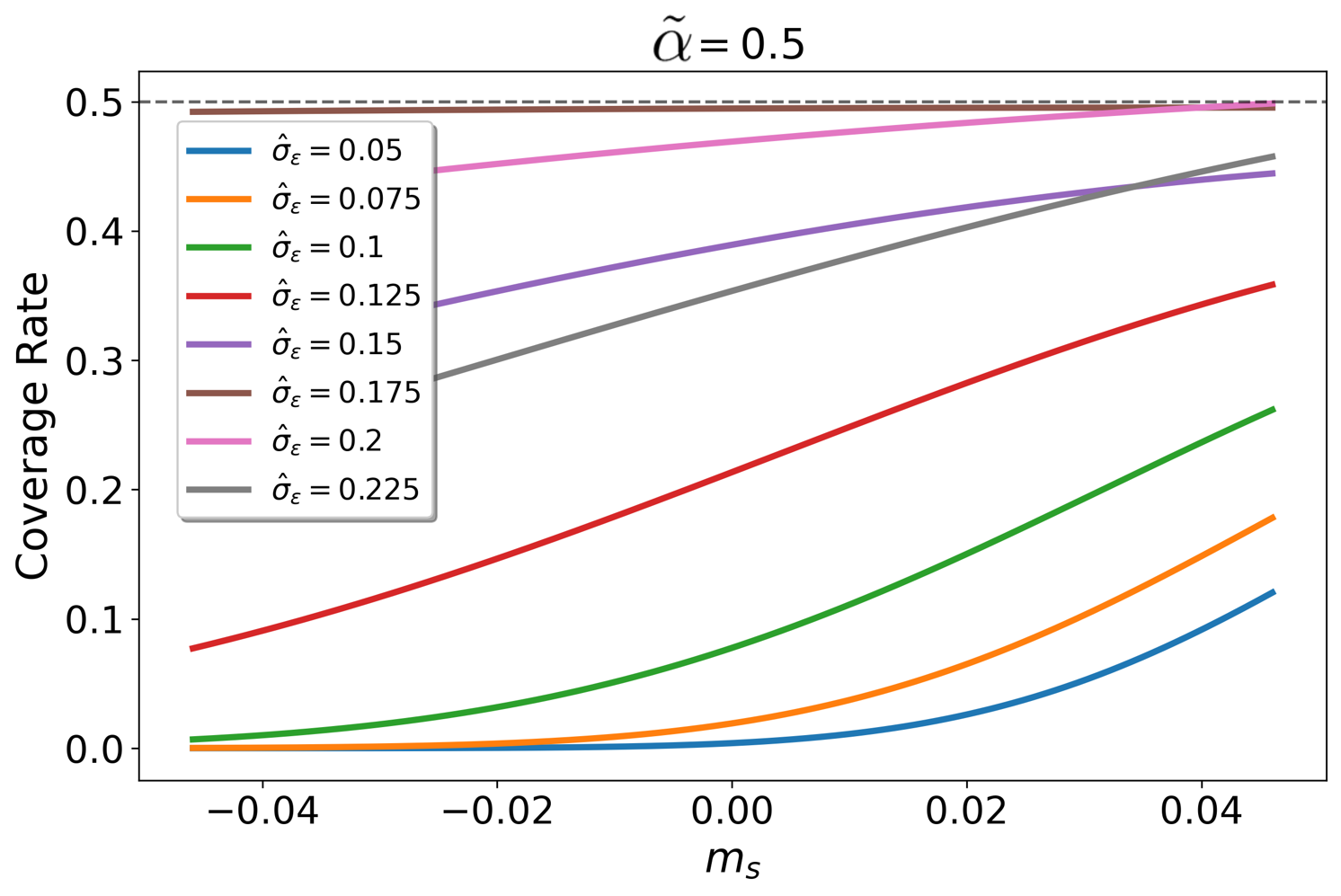}
    \end{minipage}
    \hfill
    \begin{minipage}[b]{0.48\textwidth}
        \centering
        \includegraphics[width=\textwidth]{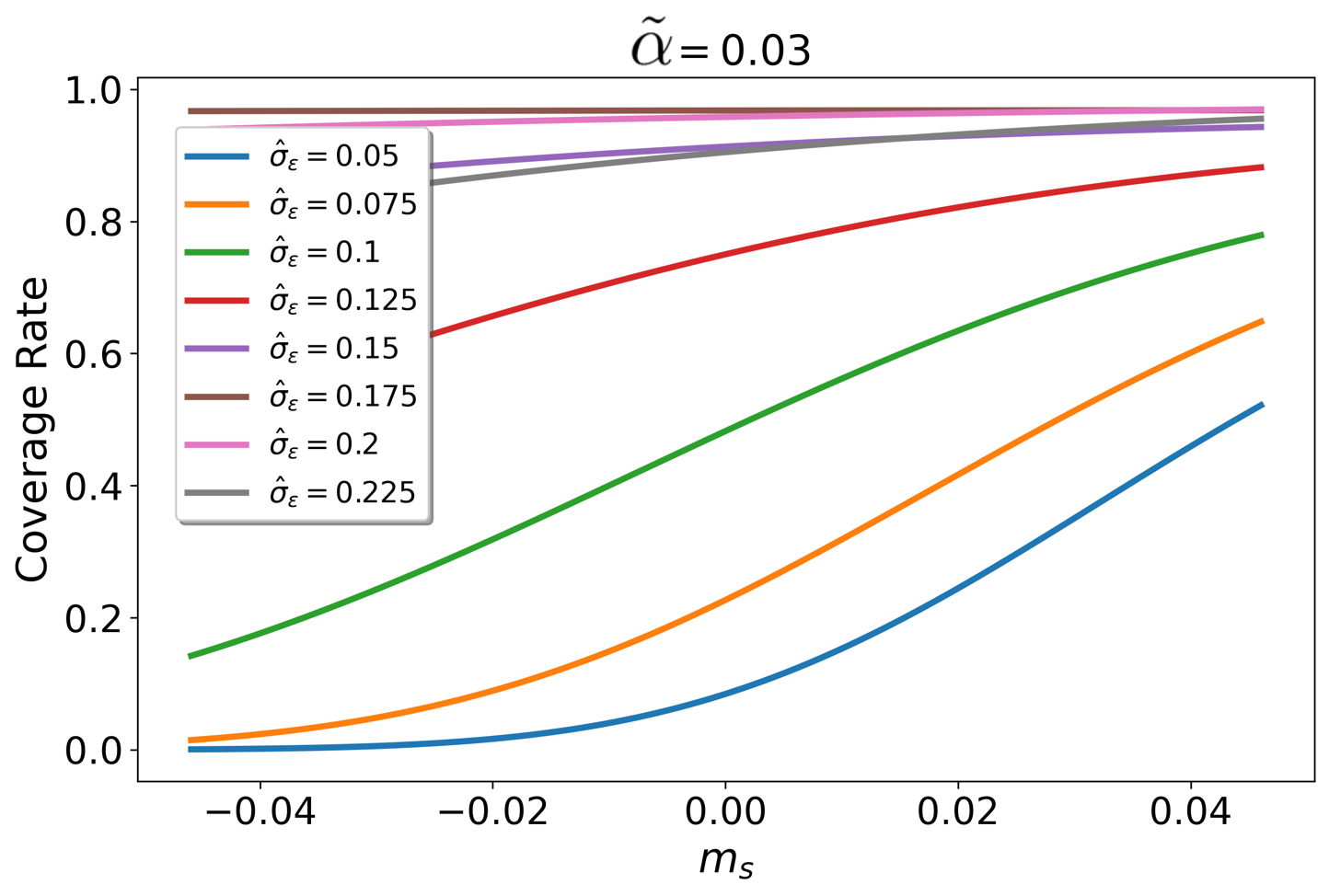}
    \end{minipage}
    \caption{The variation of $\psi(m_{\bz}, \widehat{\sigma}_\epsilon, \tilde{\alpha})$ with respect to $m_{\bz}$. The left figure sets $\tilde{\alpha}=0.5$ and the right figure sets $\tilde{\alpha} = 0.03$. In each figure, different values of $\widehat{\sigma}_\epsilon$ are considered, ranging from $0.05$ to $0.225$. In each of these curves, $\psi(m_{\bz}, \widehat{\sigma}_\epsilon, \tilde{\alpha})$ monotonically increases with $m_{\bz}$.}
    \label{fig:mono_in_ms}
\end{figure}

\paragraph{Non-monotonicity Pattern of Figure \ref{fig:beta_t_series}}
In order to understand the non-monotonicity pattern of Figure \ref{fig:beta_t_series}, we need to consider $\psi(m_{\bz}, \widehat{\sigma}_\epsilon, \tilde{\alpha})$ as a function of $\widehat{\sigma}_\epsilon$. As demonstrated in Figure~\ref{fig:non_mono_in_sigma},  the function $\psi(m_{\bz},\widehat{\sigma}_\epsilon,\tilde{\alpha})$ is non-monotonic in $\widehat{\sigma}_\epsilon$ for all combinations of $m_{\bz}$ and $\tilde{\alpha}$ that we consider. Indeed, it first increases and then decreases as $\widehat{\sigma}_\epsilon$ moves from $0$ to $0.25$\footnote{Specifically, $\psi$ increases for smaller values of $\widehat{\sigma}_\epsilon$ (e.g., $\widehat{\sigma}_\epsilon\in(0,0.15]$) and decreases for larger values (e.g., $\widehat{\sigma}_\epsilon\in[0.20,0.25]$). At the correctly specified value $\widehat{\sigma}_\epsilon=\sigma_\epsilon=0.18$, the coverage equals $1-\tilde{\alpha}$: this is $50\%$ in the left panel ($\tilde{\alpha}=0.5$) and $97\%$ in the right panel ($\tilde{\alpha}=0.03$).}.
But $\psi(m_{\bz}, \widehat{\sigma}_\epsilon, \tilde{\alpha})$ decreases in $\tilde{\alpha}$ since $\mathcal{C}_\text{ACSA}(\bz, \widehat{\sigma}_\epsilon, \tilde{\alpha})$ is the region between the estimated $\tilde{\alpha}/2$-  and $(1-\tilde{\alpha}/2)$- quantiles. Hence, increasing $\tilde{\alpha}$ narrows the prediction interval and thus decreases $\psi(m_{\bz}, \widehat{\sigma}_\epsilon, \tilde{\alpha})$.
Finally, recall that the ACSA algorithm selects $\tilde{\alpha}_t$ to target a $50\%$ empirical coverage rate. Consequently, when $\widehat{\sigma}_\epsilon$ becomes either too small or too large relative to the true value $\sigma_\epsilon = .18$, $\psi(m_{\bz},\widehat{\sigma}_\epsilon,\alpha)$ decreases and achieves a coverage rate that is less than the target rate of 50\% in our experiments. Thus, a smaller $\tilde{\alpha}_t$ is required to bring $\psi(m_{\bz},\widehat{\sigma}_\epsilon,\bar{\tilde{\alpha}})$ back towards $50\%$. This explains the non-monotonic behavior in Figure \ref{fig:beta_t_series} where $\bar{\tilde{\alpha}}$ (the long-term average of $\tilde{\alpha}_t$) decreases as $\widehat{\sigma}_\epsilon$ moves away from its true value $\sigma_\epsilon$.

\begin{figure}[htbp]
    \centering
    \begin{minipage}[b]{0.48\textwidth}
        \centering
    \includegraphics[width=\textwidth]{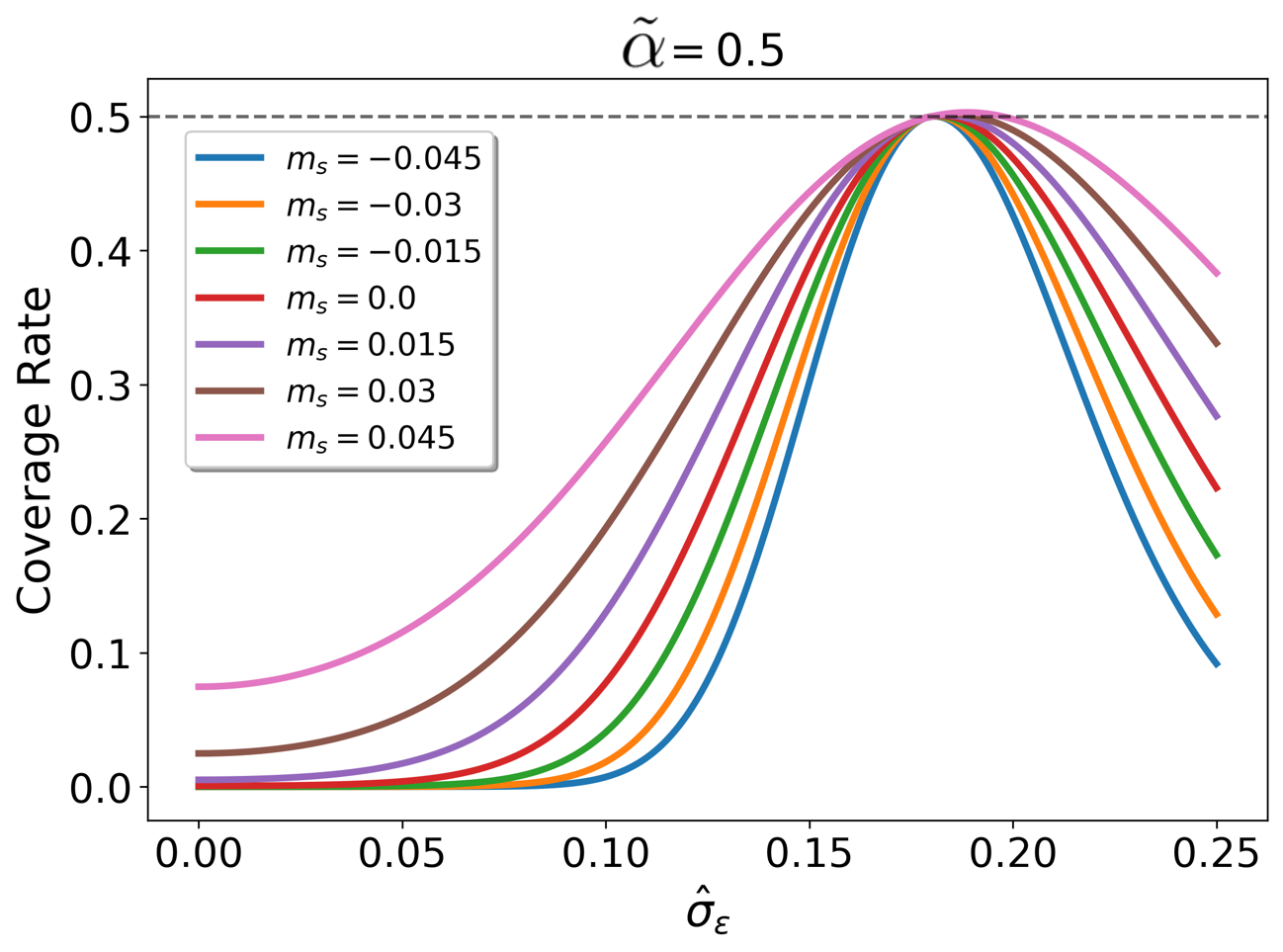}
    \end{minipage}
    \hfill
    \begin{minipage}[b]{0.48\textwidth}
        \centering
    \includegraphics[width=\textwidth]{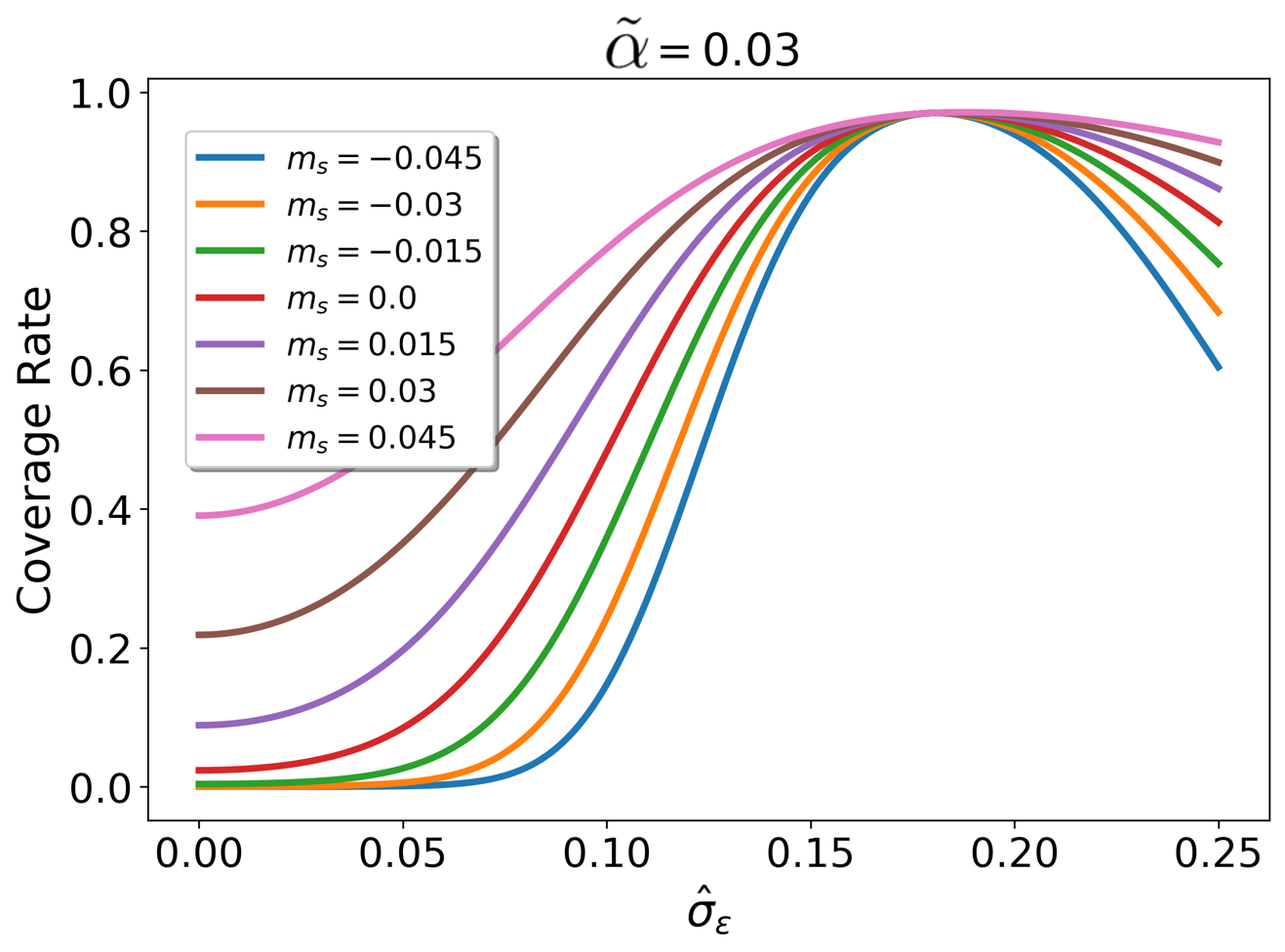}
    \end{minipage}
    \caption{The variation of $\psi(m_{\bz}, \widehat{\sigma}_\epsilon, \tilde{\alpha})$ with respect to $\widehat{\sigma}_\epsilon$. The left panel sets $\tilde{\alpha} = 0.5$ and the right panel sets $\tilde{\alpha} = 0.03$. In each sub-figure, different values of $m_{\bz}$ are considered, ranging from $-0.045$ to $0.045$. Across all curves, $\psi(m_{\bz}, \widehat{\sigma}_\epsilon, \tilde{\alpha})$ exhibits a consistent pattern with respect to $\widehat{\sigma}_\epsilon$: it first increases and then decreases as $\widehat{\sigma}_\epsilon$ passes through some value near the true value $\sigma_\epsilon = 0.18$.
}
    \label{fig:non_mono_in_sigma}
\end{figure}

\begin{comment}
\begin{figure}
    \centering    \includegraphics[width=\linewidth]{fig/sectionA3_psi_surface_enhanced.png}
    \caption{The Landscape of the coverage rate function $\psi(m_{\bz}, \widehat{\sigma}_\epsilon)$. }
    \label{fig:landscape_of_psi_cvg}
\end{figure}
\end{comment}

\begin{comment}

\begin{figure}[htbp]
    \centering
    \begin{minipage}[b]{0.48\textwidth}
        \centering
        \includegraphics[width=\textwidth]{fig/sectionA3_psi_vs_sigmahat_msSlices.png}
    \end{minipage}
    \hfill
    \begin{minipage}[b]{0.48\textwidth}
        \centering
        \includegraphics[width=\textwidth]{fig/sectionA3_psi_vs_ms_sigmaSlices.png}
    \end{minipage}
    \caption{Cross sections of $\psi(m_{\bz}, \widehat{\sigma}_\epsilon)$. The left panel shows $\psi(m_{\bz},\cdot)$ for varying $m_{\bz}$, where each curve first increases and then decreases with respect to $\widehat{\sigma}_\epsilon$. The right panel shows $\psi(\cdot,\widehat{\sigma}_\epsilon)$ for varying $\widehat{\sigma}_\epsilon$, where each curve increases monotonically with $m_{\bz}$.}
    \label{fig:cross_sections}
\end{figure}
\end{comment}

\subsection{The Group-Balanced ACSA Algorithm}\label{subapx:alg_for_gbacsa}

As discussed at the end of Section~\ref{sec:ACSA_Alg}, the ACSA algorithm provides only a marginal coverage guarantee. If we analyze the prediction intervals under each \textit{fixed scenario}, then the coverage rate is often far from the marginal coverage rate. This limitation can be observed in Table~\ref{tab:ConvexIntervals_wACSA_wKSA_Sec5_2}, where the true coverage rates can deviate substantially from the nominal level of $50\%$ in more extreme scenarios. For example, when $(z_{\text{\tiny Oil}}, z_{\text{\tiny Rate}}) = (12\%, -20 \ \text{bp})$ and $(-12\%, 20 \ \text{bp})$, the true coverage rates are $25\%$ and $69\%$, respectively. One of the primary causes of this discrepancy is that ACSA updates its quantile argument $\tilde{\alpha}_t$ iteratively based on \textit{realized} observations. As a result, rare or extreme scenarios are underrepresented\footnote{Loosely speaking, the marginal coverage rate (which we control via the ACSA algorithm) can be viewed as a weighted average of the coverage rates within each scenario with weights given by the probability of each scenario.} in the updates due to their disproportionately low frequency in the observed data. One approach for tackling this problem is to incorporate a group-balancing\footnote{See Section 4.1 of \citet{angelopoulos2021gentle} for detailed discussions on the group-balanced conformal prediction technique.} adjustment into the vanilla ACSA procedure.

The idea behind group-balancing is to partition the space of stress scenarios into several groups, which we denote by $B_1, \ldots, B_m$, and to run ACSA separately within each group. Each group maintains its own quantile parameter, denoted by $\tilde{\alpha}_{t,1}, \ldots, \tilde{\alpha}_{t,m}$. For a concrete example, suppose we want to balance between the extreme and non-extreme scenarios in Table \ref{tab:ConvexIntervals_wACSA_wKSA_Sec5_2}. Then we may partition the stress scenarios into two groups: the ``extreme scenario group'' $B_1$ and the ``non-extreme scenario group'' $B_2$, with
$$
B_1 := \{\bz\ : \ |z_{\text{\tiny Oil}}| > 6\% \text{ or } |z_{\text{\tiny Rate}}| > 10 \ \text{bp}\}, \quad B_2 := \{\bz\ : \ |z_{\text{\tiny Oil}}| \leq 6\% \text{ and } |z_{\text{\tiny Rate}}| \leq 10 \ \text{bp}\}.
$$
Separate quantiles arguments $\tilde{\alpha}_{t,1}$ and $\tilde{\alpha}_{t,2}$ are maintained for these two groups, since the quantile predictor $g_t$ may exhibit different estimation accuracies across them and thus require different adjustments to the quantile argument $\tilde{\alpha}_{t}$. At each time step, when the realized scenario $\bz_t$ is observed, the group-balanced ACSA first identifies its corresponding group ($B_1$ or $B_2$) and then updates only the quantile parameter of that group. For instance, if $\bz_t\in B_k$ for some $k\in \{1,\ldots, m\}$, then only $\tilde{\alpha}_{t,k}$ is updated according to \eqref{eq:OSA_update}, and the quantile arguments for the remaining groups are left unchanged, i.e.
$$
\tilde{\alpha}_{t, k} \gets \tilde{\alpha}_{t-1, k} + \gamma\cdot(\alpha-\mathrm{err}_{t}),\quad \text{with  }\mathrm{err}_{t} \gets \bbone\{G_{t}\notin  \Ct_{t-1}(\bz_t;\tilde\alpha_{t-1,k})\}
$$
$$
\tilde{\alpha}_{t, k'} \gets \tilde{\alpha}_{t-1, k'},\quad \text{for  } k' = 1,\ldots, m \text{ with } k'\neq k
$$
The group-balanced variant of our ACSA algorithm is summarized in Algorithm~\ref{alg:GB-ACSA}, where each group effectively runs Algorithm~\ref{alg:ACSA} independently. The corresponding coverage guarantee for Group-Balanced ACSA is stated in Theorem~\ref{thm:coverage_GroupACSA}, which follows almost directly from the original coverage result in Theorem~\ref{thm:empirical_coverage}.

\begin{thm}\label{thm:coverage_GroupACSA}
Under Assumption~\ref{assump:g_t_basicAssump}, with $\alpha\in(0,1)$, $\gamma>0$, and the boundary construction \eqref{eq:acsa_prediction_set} applied within each group, for each group $B_k$, the empirical coverage of the prediction interval  $\Ct_t(\cdot)$ generated by group-balanced ACSA (Algorithm \ref{alg:GB-ACSA}) satisfies the coverage guarantee
$$\left|\dfrac{1}{T_k}\sum_{t=1}^T \bbone\{G_{t+1}\in  \Ct_t(\bz_{t+1})\} \cdot \bbone\{\bz_{t+1}\in B_k\}  -(1-\alpha)\right|\leq \dfrac{\max\{\alpha, 1-\alpha\}+\gamma}{T_k\gamma}$$
almost surely, with $T_k \defeq \sum_{t=1}^T \bbone\{\bz_{t+1}\in B_k\}$ and is assumed $>0$.
\end{thm}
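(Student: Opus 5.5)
The plan is to reduce Theorem~\ref{thm:coverage_GroupACSA} to the argument behind Theorem~\ref{thm:empirical_coverage}, applied separately to the subsequence of times whose realized scenario falls in $B_k$. Fix a group $k$ and let $t_1<t_2<\cdots<t_{T_k}$ be the indices $t\in\{1,\ldots,T\}$ with $\bz_{t+1}\in B_k$.

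The first step is to check that, at each such time, the interval $\Ct_t(\bz_{t+1})$ is exactly $\Ct_t(\bz_{t+1};\tilde{\alpha}_{t,k})$. Since $\tilde{\alpha}_{\cdot,k}$ changes only at these times, the sequence $a_j:=\tilde{\alpha}_{t_j,k}$ satisfies the recursion
\[
a_{j+1}=a_j+\gamma(\alpha-\mathrm{err}_{t_j+1}),
\]
where $\mathrm{err}_{t_j+1}=\bbone\{G_{t_j+1}\notin\Ct_{t_j}(\bz_{t_j+1};a_j)\}$. The initialization is $a_1=\alpha$, because $\tilde{\alpha}_{\cdot,k}$ has been frozen at its initial value $\alpha$ before the first visit. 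This is precisely the recursion of Algorithm~\ref{alg:ACSA}, run on the subsequence.

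The second step is to telescope:
\[
a_{T_k+1}-a_1=\gamma\sum_{j=1}^{T_k}(\alpha-\mathrm{err}_{t_j+1}).
\]
Dividing by $\gamma T_k$ gives
\[
\frac{1}{T_k}\sum_{t=1}^T\bbone\{\bz_{t+1}\in B_k\}\bbone\{G_{t+1}\in\Ct_t(\bz_{t+1})\}-(1-\alpha)=\frac{a_{T_k+1}-\alpha}{\gamma T_k}.
\]

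The third step, which is the only real content, is the boundedness claim $a_j\in[-\gamma(1-\alpha),\,1+\gamma\alpha]$ for all $j$. This follows by induction using the boundary convention \eqref{eq:acsa_prediction_set}, which holds regardless of the scenario argument or the predictor (Assumption~\ref{assump:g_t_basicAssump} only guarantees the sets are well defined and $G_t$ is finite).
\begin{itemize}
\item If $a_j\le 0$, the set is $\mathbb R$, so the next step has no error and $a_{j+1}=a_j+\gamma\alpha$ increases.
\item If $a_j\ge 1$, the set is $\emptyset$, so the next step is an error and $a_{j+1}=a_j-\gamma(1-\alpha)$ decreases.
\item If $0<a_j<1$, a single step moves $a_j$ by at most $\gamma(1-\alpha)$ downward or $\gamma\alpha$ upward, so $a_{j+1}$ stays in the stated range.
\end{itemize}
Hence
\[
|a_{T_k+1}-\alpha|\le\max\{\alpha+\gamma(1-\alpha),\,1-\alpha+\gamma\alpha\}\le\max\{\alpha,1-\alpha\}+\gamma,
\]
which yields the stated bound almost surely, given $T_k>0$.

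I expect no serious obstacle. The only care needed is in the bookkeeping of the first step: confirming that the interval reported for a scenario in $B_k$ uses the current group parameter, and that the frozen updates in other groups leave $\tilde{\alpha}_{\cdot,k}$ untouched. Then the proof of Theorem~\ref{thm:empirical_coverage} transfers verbatim.
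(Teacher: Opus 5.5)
Your proposal is correct and follows the paper's route. The paper only asserts that Theorem~\ref{thm:coverage_GroupACSA} follows almost directly from Theorem~\ref{thm:empirical_coverage}, and your subsequence argument (group parameter frozen between visits to $B_k$, initial value $\alpha$, telescoping, and boundedness by induction from the boundary convention \eqref{eq:acsa_prediction_set}) makes that reduction explicit; your range $[-\gamma(1-\alpha),1+\gamma\alpha]$ is slightly tighter than the $[-\gamma,1+\gamma]$ of Lemma~\ref{lemma:Bounded_g_t} but gives the same final bound.
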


\begin{algorithm}[ht!]
    \caption{Group-Balanced ACSA}
    \label{alg:GB-ACSA}
    \begin{algorithmic}[1]
    \Require Partition of stress scenarios into groups $B_1, \ldots, B_m$, plus the inputs of Algorithm \ref{alg:ACSA}
    \State Set $\tilde\alpha_{1,k}=\alpha$ and report $\Ct_1(\bz)=\Ct_1(\bz;\alpha)$ for $\bz\in B_k$, $k=1,\ldots,m$
    \For{$t=2,...,T$}
        \State Observe the realized portfolio gain $G_t$ and stress scenario $\bz_{t}$
        \State Identify the group to which $\bz_t$ belongs, for instance $\bz_t \in B_k$, then compute
        $$
            \mathrm{err}_{t} \gets \bbone\{G_{t}\notin  \Ct_{t-1}(\bz_t;\tilde\alpha_{t-1,k})\}
        $$
        \State Keep $\tilde{\alpha}_{t, k'} \gets \tilde{\alpha}_{t-1, k'}$ for all $k'=1,\ldots, m$ with $k'\neq k$, and update
        $$
            \tilde{\alpha}_{t, k} \gets \tilde{\alpha}_{t-1, k} + \gamma\cdot(\alpha-\mathrm{err}_{t})
        $$

        \State Update / retrain quantile predictor $g_t$ if  so desired
        \State \textbf{Output:} For all $k' = 1,\ldots, m$, and any stress scenario $\bz \in B_{k'}$, produce prediction interval
        $$\Ct_t(\bz)\gets\Ct_t(\bz;\tilde\alpha_{t,k'})
        $$

    \EndFor
    \end{algorithmic}
\end{algorithm}

The group-balanced ACSA helps mitigate the lack of theoretical guarantees in extreme scenarios. However, the method remains limited by data scarcity in the more extreme scenario groups. For such scenario groups, Theorem \ref{thm:coverage_GroupACSA} will provide only limited comfort. Hence, there is a resolution vs sample size tradeoff in how finely you partition scenarios.
That said, the group-balanced ACSA can provide theoretical guarantees in moderate scenario groups where there are sufficiently many realizations for calibrating the corresponding $\tilde{\alpha}_t$ values.  Moreover, as may be seen from Table \ref{tab:ConvexIntervals_wACSA_wKSA_Sec5_2}, even the non-group-balanced ACSA algorithm can yield a substantial improvement over a scenario analysis that relies on an inaccurate model.

\subsection{Additional Details for Section \ref{sec:KSAMotivtaingEG}}
\label{subapx:addi_details_KSA_motive}

We provide additional implementation details for the KSA experiments in Section~\ref{sec:KSAMotivtaingEG}.

\paragraph{Feature Vector and Assumption~\ref{assump:individual_cali_assump}}
One of the key preliminary steps before implementing the KSA algorithm is to specify the historical context $\bH_t$ and the feature vector $\bW_t(\bz)$. Intuitively, $\bH_t$ should contain all historical information relevant to the portfolio gain on day $t+1$. In our setting, the portfolio is short a single call option with fixed time to maturity $\tau$ and moneyness (strike price divided by the underlying price) $m$. We show below that \textit{no historical context is required} once the portfolio gain is normalized by the current underlying price. Thus, in the original factor coordinates, the information content of the feature vector can be represented by $\bW_t(\bz)=\bz$. The portfolio gain at time $t+1$ is given by
\begin{equation}\label{eq:form_of_G_before_normalize}
G_{t+1}
=
-\text{BS}(S_{t+1},m\cdot S_t,\tau-\delta,\sigma_{\text{tot}},1)
+\text{BS}(S_t,m\cdot S_t,\tau,\sigma_{\text{tot}},1),
\end{equation}
where $\text{BS}(S_t,K,\tau,\sigma,\mbox{c})$ denotes the Black--Scholes price of an option, with $S_t$, $K$, $\tau$, and $\sigma$ denoting the underlying price, strike, time to maturity, and implied volatility, respectively. $\mbox{c}\in\{0,1\}$ indicating
the option type, with $\mbox{c}=1$ for a call and $\mbox{c}=0$ for a
put. The risk-free rate and dividend yield are treated as fixed and are therefore omitted from the arguments of $\text{BS}$. Since the portfolio maintains a fixed moneyness $m$, its strike price is $m\cdot S_t$, while the remaining time to maturity decreases from $\tau$ to $\tau-\delta$ between times $t$ and $t+1$. After normalizing by $S_t$, the portfolio gain depends on the past only through the return $S_{t+1}/S_t$. Specifically,
\begin{equation}\label{eq:explicit_expression_G_prime}
G_{t+1}'
\defeq
\frac{G_{t+1}}{S_t}
=
-\frac{S_{t+1}}{S_t}
\widetilde{\mbox{BS}}\left(
m\cdot\frac{S_t}{S_{t+1}},
\tau-\delta,
\sigma_{\text{tot}},
1
\right)
+\widetilde{\mbox{BS}}(m,\tau,\sigma_{\text{tot}},1),
\end{equation}

where $\widetilde{\mbox{BS}}(m,\tau,\sigma,\mbox{c})$ denotes the Black--Scholes price normalized by the underlying price:
\begin{equation}\label{eq:BS_tilde_def}
\widetilde{\mbox{BS}}(m,\tau,\sigma,\mbox{c})
:=
\frac{1}{S_t}\text{BS}(S_t,K,\tau,\sigma,\mbox{c}),
\qquad
m=\frac{K}{S_t}.
\end{equation}
The only source of randomness in \eqref{eq:explicit_expression_G_prime} is therefore the return $S_{t+1}/S_t$. Any prediction interval for $G_{t+1}'$ can be mapped to one for $G_{t+1}$ by multiplying both endpoints by the positive, $\mathcal F_t$-measurable quantity $S_t$, which preserves its conditional coverage probability. Hence, when applying the theory in Section~\ref{sec:ksa-theory}, we regard $G_t'$ as the response and set
$Y_t=G_t'-\varphi(\bW_t)$ where $\varphi$ is the fixed centering function specified below.

We now verify parts~\ref{itm:separate_stationary_assump}--\ref{itm:cond_dist_cont} of Assumption~\ref{assump:individual_cali_assump}; the kernel condition in part~\ref{itm:kernel_regularity} is verified in the next paragraph. Under \eqref{eq:StockReturn1}, the factor vectors $\bF_t$ and idiosyncratic noises are independent across time. Since $\bW_t=\bF_t^{(1:2)}$ and $G_t'$ is a fixed function of the contemporaneous factor vector and idiosyncratic noise, $\{(\bW_t,Y_t)\}_{t=1}^{\infty}$ is i.i.d.\ when $\varphi$ is fixed. It is therefore stationary and geometrically $\beta$-mixing, with $\beta(k)=0$ for every $k\geq1$, which verifies part~\ref{itm:separate_stationary_assump}. Conditional on the stressed factors $\bF_{t+1}^{(1:2)}=\bz$, the remaining credit factor has the Gaussian conditional distribution used in \eqref{eq:logret_cond_dist}, and the idiosyncratic noise remains independent of $\mathcal F_t$. Consequently, the conditional distribution of $G_{t+1}'$, and hence that of $Y_{t+1}$, depends on $\mathcal F_t$ only through the specified scenario $\bz=\bW_{t+1}(\bz)$. This gives \eqref{eq:predictive-sufficiency} and verifies part~\ref{itm:independence_n_homogeneity}. Moreover, $\bW_t=\bF_t^{(1:2)}\sim\mathcal N(\mathbf 0,\delta\Sigma_{ss}),$ where $\Sigma_{ss}$ is positive definite. Its density is continuous and strictly positive, and is therefore bounded above and away from zero on every compact enlarged operational region $\mathcal D^+$. Thus, part~\ref{itm:density_bound} holds. Finally, by \eqref{eq:logret_cond_dist}, the conditional log return is Gaussian with strictly positive variance independent of $\bw$, while its mean varies affinely with $\bw$. The normalized option gain in \eqref{eq:explicit_expression_G_prime} and the oracle centering function defined below are smooth functions of their arguments. It follows that the conditional distribution of $Y_t$ is continuous and that its conditional CDF varies uniformly continuously with $\bw$ over the compact region $\mathcal D^+$. This verifies part~\ref{itm:cond_dist_cont}.

\paragraph{Kernel and Bandwidth Tuning}
For the kernel $\kappa$, we use the weighted RBF kernel
\begin{equation}\label{eq:def_gaussian_kernel}
\kappa_h(\bw,\bw')
\defeq
\exp\left(
-\frac{(\bw-\bw')^\top W(\bw-\bw')}{2h^2}
\right),
\end{equation}
where $h$ is the bandwidth and $W=\delta^{-1}\Sigma_{ss}^{-1}$. Since $\bW_t\sim\mathcal N(\mathbf 0,\delta\Sigma_{ss})$, this choice of $W$ standardizes the two stressed factors. Although \eqref{eq:def_gaussian_kernel} is not radial with respect to the Euclidean distance in the original coordinates unless $W$ is proportional to the identity, it becomes an ordinary radial RBF kernel under the fixed invertible transformation $\overline{\bw}=W^{1/2}\bw$, i.e.,
$\kappa_h(\bw,\bw')
=
\exp\left(
-\frac{\|\overline{\bw}-\overline{\bw}'\|^2}{2h^2}
\right).$ Thus, applying KSA to the original features with the weighted kernel is equivalent to applying it to the standardized features with radial function $\widetilde\kappa(r)=e^{-r^2/2}$. This function is bounded and Lipschitz, has support $[0,\infty)$, has a positive finite integral, and has a finite second moment. It therefore satisfies Assumption~\ref{assump:individual_cali_assump}\ref{itm:kernel_regularity}. Hence all parts of Assumption~\ref{assump:individual_cali_assump} hold in the standardized representation, and Theorem~\ref{thm:individual_cali_thm} applies to the corresponding fixed-bandwidth implementation.

For the numerical experiment, we select the bandwidth adaptively during
backtesting. At each time $t$ and for each scenario $z$, we choose a
scenario-specific bandwidth $h_t(z)$ from the candidate set
$\{5,1,0.5,0.1,0.05\}$. For each candidate bandwidth, we evaluate its
historical empirical coverage using observations available prior to time $t$,
excluding the observation being evaluated from the construction of its own
prediction interval, and select the bandwidth whose coverage is closest to the
target level $1-\alpha=50\%$. The bandwidth is reselected at every time step
and may therefore vary across both time and scenarios.

\paragraph{Expectation Predictor $\varphi$}
We use the oracle expectation predictor $\varphi$ defined as the true conditional expectation of $G_{t+1}'$. For any stress scenario $\bz\defeq(z_{\text{\tiny Oil}},z_{\text{\tiny Rate}})$, it takes the form
\begin{equation}\label{eq:sec_6_3_phit}
\varphi(\bz)
=
\widetilde{\mbox{BS}}(m,\tau,\sigma_{\text{tot}},1)-\E_{\mathcal Z}\left[
e^{m_{\bz}+\sqrt{v_{\text{\tiny Cred}}}\cdot\mathcal Z}
\cdot
\widetilde{\mbox{BS}}\left(
m\cdot e^{-m_{\bz}-\sqrt{v_{\text{\tiny Cred}}}\cdot\mathcal Z},
\tau-\delta,
\sigma_{\text{tot}},
1
\right)
\right],
\end{equation}
where $\mathcal Z\sim\mathcal N(0,1)$, and $m_{\bz}$ and $v_{\text{\tiny Cred}}$ are determined by $\bz$ through \eqref{eq:define_of_ms}. We approximate the right-hand side of \eqref{eq:sec_6_3_phit} using Monte Carlo simulation by drawing i.i.d.\ samples of $\mathcal Z$ and replacing $\E_{\mathcal Z}[\cdot]$ with the corresponding empirical average. The theoretical verification above concerns the exact fixed oracle function $\varphi$; the Monte Carlo approximation is an implementation device used to evaluate this function numerically.

\section{Experimental Details for Section \ref{sec:Numerics}}
\label{apx:main_exp_details}

\subsection{Model Calibration}\label{app:mo_calibrate}
The parameters of the common factor model include $\bI_0$, $\lambda_i$ for $i=1,2,3,4$, $\mathbf{M}_i, \gamma_i, \sigma_i, \eta_i$ for $i=2,3,4$, and $\sigma_G$. We calibrate the model on a real financial dataset. We obtain the implied volatility data on the S\&P 500 index from the OptionMetrics IVY database, during a selected period from January 2006 to August 2013. The data is transferred to the moneyness-maturity coordinates. We select $90$ different options to construct the log-implied volatility surface, with their moneyness-maturity pairs from $\{(m_j, \tau_j)\}_{j=1}^{90} = \{(m, \tau):m\in \Omega_m, \tau\in \Omega_\tau\}$, where $\Omega_m =\{0.5, \ 0.6, \ 0.7, \ 0.75, \ 0.8, \ 0.85, \ 0.9, \ 0.95, \ 0.975, \ 1, \ 1.025, \ 1.05, \ 1.1, \ 1.15, \ 1.2, \ 1.25, \ 1.3, \ 1.4\}$ and $\Omega_\tau =\{1/12, \ 1/6, \ 1/4, \, 1/2, \ 1\}.$ The dataset consists of $T=1890$ time steps. Applying PCA to the log-implied volatility records $\{\mathbf{I}_t\}_{t=0}^{T-1}$, we obtain the first three principal component directions, denoted as $\mathbf{M}_2,\mathbf{M}_3,\mathbf{M}_4$, along with the corresponding principal component scores $\{\mathbf{f}_t\}_{t=1}^T$. For $i=2,3,4$, the model parameters $\lambda_i, \gamma_i$ and $\sigma_i$ are estimated by fitting the autoregressive process specified in \eqref{eq:f_t_i} to $\{f_t^{(i)}\}_{t=1}^T$. For the first common factor, $\lambda_1$ and $\sigma_1$ are estimated by fitting the autoregressive process in \eqref{eq:f_t_i} to the log-underlying prices $\{\log S_t\}_{t=0}^{T-1}$. To compute $\eta_2$, $\eta_3$ and $\eta_4$, we use the relation
$\mbox{Cov}(\epsilon_t^{(1)}, \epsilon_t^{(i)}) = \eta_i \sigma_i^2$ for $i=2,3,4$, where $\epsilon_t^{(1)}$ equals the log returns of $S_t$ and $\epsilon_t^{(i)}$ are the residuals from fitting $\{f_t^{(i)}\}_{t=1}^T$ to the autoregressive process. Finally, $\sigma_G$ is computed based on \eqref{eq:sigma_G_constraint} as follows
$\sigma_G = \sqrt{\sigma_1^2 - \eta_2^2 \sigma_2^2 - \eta_3^2 \sigma_3^2 - \eta_4^2 \sigma_4^2}.$ The calibration results are as follows:
$\lambda_1 = 1.9\times 10^{-4}$, $\lambda_2 = 5.7\times 10^{-4}$, $\lambda_3 = 1.9\times 10^{-5}$, $\lambda_4 = -3.3\times 10^{-5}$, $\gamma_2 = -9.6\times 10^{-3}$, $\gamma_3 = -4.2\times 10^{-2}$, $\gamma_4 = -6.1\times 10^{-2}
$, $\eta_2 = 3.8\times 10^{-2}$, $\eta_3 = -1.4\times 10^{-2}$, $\eta_4 = 4.0\times 10^{-2}$, $\sigma_2 = 0.30$, $\sigma_3 = 0.11$, $\sigma_4 = 0.075$, $\sigma_G = 0.0028$, $\bI_0 = [-1.47, \,\,-1.50, \,\,-1.57,\,\, -1.82,\,\, -1.94,\,\,-1.93,\,\,-1.95,\,\,-1.42,\,\,\ldots]^\top$, $\bM_2 = [-0.15,\,\, -0.15,\,\,-0.13,\,\,-0.18,\,\,-0.17,\,\,-0.17,\,\,-0.17,\,\,-0.14,\,\,\ldots]^\top$, $\bM_3 = [0.18,\,\, 0.16,\,\,0.08, \\ \,\,-0.07, \,\, -0.27,\,\,-0.30, \,\,-0.27,\,\,0.19, \,\,\ldots]^\top$, $\bM_4 = [ 0.25,\,\, 0.30,\,\,0.26,\,\,0.28,\,\,0.19,\,\,0.17,\,\,0.13,\,\,0.16,\,\,\ldots]^\top$. For readability, we report only the leading entries of $\mathbf I_0$ and
$\mathbf M_2,\mathbf M_3,\mathbf M_4$ here. The complete calibrated arrays,
together with their exact moneyness--maturity ordering, are provided in the accompanying code repository. %\footnote{Upon publication, we will make the code required to reproduce the numerical experiments publicly available in a GitHub repository.}.

\subsection{Proper Scoring Rules: the CRPS and Interval Score}\label{app:proper_scoring}

We now provide a brief introduction to proper scoring rules, focusing particularly on the CRPS and interval score. For a comprehensive review of proper scoring rules, we refer the interested reader to \citet{Gneiting_Raftery}. A proper scoring rule $S$ is a mechanism that is used to evaluate the quality of a probabilistic forecast or distribution $\widehat{P}$ by assigning a numerical score that is a function  of both $\widehat{P}$ and the realized outcome. The expected score is then given by ${\cal S}(\widehat{P},P) := \int S(\widehat{P},\omega) \, dP(\omega)$
where we use $\omega$ to denote a generic outcome and $P$ is the true data-generating distribution.
Under the loss convention used here, a scoring rule is \emph{proper} if $\widehat P=P$ minimizes the expected score ${\cal S}(\widehat P,P)$, and \emph{strictly proper} if it is the unique minimizer. Smaller CRPS and interval-score values are better. The primary goal of proper scoring rules is to encourage the honest reporting of probabilities. They aid in the assessment and comparison of predictive models within statistical inference and decision theory and have been widely applied in various fields including, for example, weather forecasting and risk management.

The most commonly used score function for the probabilistic prediction of continuous random variables (scenario gains in our context) is the \textit{continuous ranked probability score (CRPS)}. The score is defined as
\begin{equation} \label{eq:CRPS1}
S(\widehat{P},y)\defeq \int_{x\in \mathbb{R}}(\widehat{F}(x)-\bbone\{x\geq y\})^2\mathrm{d} x
\end{equation}
with $\widehat{F}$ denoting the cumulative distribution function of distribution $\widehat{P}$. The CRPS is therefore the integral of the so-called Brier scores for the associated binary probability forecasts at all real-valued thresholds.
The CRPS is proper relative to the class ${\cal P}$ of Borel probability measures on $\mathbb{R}$ and strictly proper relative to the subclass ${\cal P}_1$ of ${\cal P}$ whose distributions have finite first moment. Of particular interest is when $\widehat{P}$ is an empirical distribution. In particular, let $\delta_{y}$ denote a point mass distribution at $y$ and let $\widehat{P}$ take the form $\widehat{P}= \frac{1}{m}\sum_{j=1}^m \delta_{y_j'}$, then $\widehat{F}(y)=\frac{1}{m}\sup\{j:y\geq y_j'\}$, with $-\infty =: y_0' < y_1'\leq \cdots \leq y_m'$. In this case, the CRPS can be simplified to
$S(\widehat{P}, y)=\dfrac{1}{m}\sum_{j=1}^m |y - y_j'| - \dfrac{1}{2m^2}\sum_{i=1}^m\sum_{j=1}^m |y_i' - y_j'|$. In many practical settings (including the focus of this work), we do not estimate an entire predictive distribution $\widehat{P}$. Instead, we report a prediction interval $[L,U]$. We may still want to assess whether this interval is honest. For example, if $[L,U]$ is intended to approximate the $\alpha/2$ and $1-\alpha/2$ predictive quantiles, we would like to verify that these endpoint estimates are accurate. A convenient proper scoring rule in this setting is the \textit{interval score}, defined by
\[
S_{\alpha}([L,U],y)=
\begin{cases}
(L-y)+\dfrac{\alpha}{2}(U-L), & y<L,\\[6pt]
\dfrac{\alpha}{2}(U-L), & L\le y\le U,\\[6pt]
(y-U)+\dfrac{\alpha}{2}(U-L), & y>U.
\end{cases}
\]
This is a scaled version of the original interval score introduced in \citep{Gneiting_Raftery}. Intuitively, the score trades off sharpness and calibration: it penalizes wide intervals through the term proportional to $(U-L)$, and it additionally penalizes observations that fall outside the interval in proportion to their distance from the nearest endpoint.

The CRPS and interval score are connected in the following way. The CRPS score defined in \eqref{eq:CRPS1} can be represented as an average of the pinball losses over all levels:
$S(\hat{P}, y) = 2\int_{u=0}^1 \ell_u(y, \widehat{Q}(u)) \text{d} u$, where $\ell_u$ is the pinball loss defined in \eqref{eq:pinball_loss_def} and $\widehat{Q}(u)$ is the $u$ quantile of $\widehat{P}$. The interval score defined here is the sum of the pinball losses at its two endpoint quantile levels:
$S_\alpha([L,U], y) = \ell_{\alpha/2}(y, L) + \ell_{1-\alpha/2}(y, U)$. This makes the relationship clear that CRPS aggregates quantile performance across all quantile levels, whereas the interval score samples that same quantile-based scoring idea at only two levels $\alpha/2$ and $(1-\alpha/2)$.

\subsection{Experiment and Algorithm Details}\label{app:algo_details}

In this section, we provide detailed descriptions of the algorithms evaluated in our experiments. These implementation details apply to both experimental settings considered in Sections \ref{sec:stress_steepener} and \ref{sec:adversarial_portfolio}. Moreover, because the portfolio composition evolves over time, for each backtesting date $t$, we assume that, for all prior dates $s<t$, the portfolio holds the same asset units as at time $t$, and we recompute all historical statistics accordingly. This approach follows a standard backtesting convention commonly adopted in production settings. All methods use the same normalization convention in the numerical experiments.
Portfolio gains are first expressed relative to the contemporaneous underlying-price scale and are then rescaled to the common reference level $S_0=5000$. The same
normalization and rescaling are applied to the empirical-quantile benchmarks,
ACSA, KSA, and the Monte Carlo oracle. All prediction intervals and all evaluation
metrics reported in Tables \ref{tab:BackTest_Straddle} and \ref{tab:BackTest_Adversarial} are therefore expressed on the same monetary scale.
For notational simplicity, we continue to write $G_s$ for these consistently
rescaled gains in the algorithm descriptions below.

\begin{itemize}

    \item The ``MC Oracle'' algorithm constructs a Monte Carlo approximation to the oracle conditional prediction interval for $G_{t+1}$ under the stress scenario $\bz$ and current market state. It generates synthetic draws from the known conditional data-generating distribution and uses their empirical quantiles as endpoints; these are numerical approximations, rather than exact oracle quantiles.

    \item The ``Emp-quantile'' algorithm uses the empirical quantiles of different quantities to construct the prediction intervals.
    \begin{itemize}
        \item The ``Vanilla'' variant of the algorithm first computes the empirical $\alpha/2$ and $1-\alpha/2$ quantiles of $\{G_s\}_{s\leq t}$, and then uses the two empirical quantiles as the endpoints of the prediction interval for $G_{t+1}$.
        \item The ``SSA'' variant of the algorithm leverages the empirical quantiles of the residuals of $\{G_s\}_{s\leq t}$ subtracted by their \textit{SSA predictions}. Concretely, let $\widehat{G}_t$ denote the SSA prediction of $G_t$ under the realized stress scenario $\bz_t$, then the algorithm computes the empirical $\alpha/2$ and $1-\alpha/2$ quantiles of $\{\widehat{G}_{t+1} + G_s - \widehat{G}_s\}_{s\leq t}$, and then uses the two empirical quantiles as the endpoints of the prediction interval for $G_{t+1}$.
    \end{itemize}

The ``ACSA'' algorithm has $3$ variants that differ in their choices of the quantile predictor $g_t$. The step size $\gamma$ is fixed to $0.005$.
    \begin{itemize}
        \item The ``Linear-$g_t$'' variant implements $g_t(\bz ; \tilde{\alpha})$ as a linear function whose parameters are estimated via quantile regression.
        \item The ``NN-$g_t$'' variant implements $g_t(\bz;\tilde{\alpha})$ as a 2-layer feedforward neural network. The network is trained by minimizing the pinball loss.
        \item The ``MC Oracle-$g_t$'' variant supplies $g_t$ with Monte Carlo approximations to conditional quantiles of $G_{t+1}$ given the stress scenario $\bz$ and current state, evaluated at the interior quantile levels required by the corrected ACSA construction in \eqref{eq:acsa_prediction_set}. We emphasize that this quantile predictor is not attainable in practice, as the true conditional quantile depends not only on the stress scenario but also on historical information, such as the common factors observed on the previous day.
    \end{itemize}

    \item The ``KSA NN-$\varphi$'' row presents three variants of the KSA algorithm. The three variants differ in their choices of the historical context vector $\bH_t$. For all the algorithm variants we implement the expectation predictor $\varphi$ as a 2-layer feedforward neural network. Expectation predictors $\varphi$ are fitted
once using the burn-in observations and are then held fixed throughout
the evaluation period.

    \begin{itemize}
        \item The ``$\bH_t = \bff_t^{(2:4)}$'' variant incorporates the full set of relevant historical information into the historical context vector. As shown in E-Companion \ref{appdix:his_context_of_KSA}, the components $\bff_t^{(2:4)}$ contain all historical information that is relevant for the distribution of the portfolio gain $G_{t+1}$ at time $t+1$.
        \item The ``$\bH_t = \bff_t^{(2:3)}$'' and ``$\bH_t = \bff_t^{(2)}$'' variants both use incomplete historical information.
    \end{itemize}
    For each variant of the KSA algorithm, we use the weighted RBF kernel $\kappa_h$ defined in \eqref{eq:def_gaussian_kernel}. The explicit form of the weight matrix $W$ is given in E-Companion \ref{appdix:his_context_of_KSA}. The bandwidth parameter $h$ is selected as described in E-Companion \ref{subapx:addi_details_KSA_motive}: at each time $t$, we choose $h$ from the candidate set ${5, 1, 0.5, 0.1, 0.05}$ based on empirical coverage evaluated using historical data up to time $t-1$.

    \item The ``KSA $\bH_t = \bff_t^{(2:4)}$'' row presents four variants of the KSA algorithm, which differ in the choices of the expectation predictor $\varphi$. For all the algorithm variants we set $\bH_t = \bff_t^{(2:4)}$ as the historical context vector.
    \begin{itemize}
        \item The ``no-$\varphi$'' variant sets $\varphi \equiv 0$.
        \item The ``SSA-$\varphi$'' variant uses the SSA estimation of the portfolio gain.
        \item The ``NN-$\varphi$'' variant implements $\varphi$ as a 2-layer feedforward neural network.
        \item The ``MC Oracle-$\varphi$ variant replaces $\varphi$ with a Monte Carlo approximation to the oracle conditional expectation.
    \end{itemize}

    \item The ``ACSA + KSA'' row corresponds to a new implementation that integrates the KSA quantile predictor into the ACSA framework. Specifically, the KSA quantile estimator is embedded inside ACSA while the adjusted miscoverage level $\tilde{\alpha}_t$ is updated online. The three variants differ in the choices of the historical context vector $\bH_t$ that the KSA quantile predictor uses.

\end{itemize}
Some benchmark methods and algorithm variants we evaluate assume access to \textit{oracle} knowledge of the relevant conditional distributions. In particular, conditional on a given stress scenario $\Delta \bff_{t+1}^{(1:2)}$, the distribution of the remaining common factor returns $\Delta \bff_{t+1}^{(3:4)}$ admits the following conditional mean:
\[
\E\!\left[\Delta \bff_{t+1}^{(3:4)} \,\middle|\, \Delta \bff_{t+1}^{(1:2)}, \bff_t^{(2:4)}\right]
=
\begin{pmatrix}
\lambda_3 + \gamma_3 f_t^{(3)} \\
\lambda_4 + \gamma_4 f_t^{(4)}
\end{pmatrix}
+
\big(\epsilon_{t+1}^{(1)} - \eta_2 \epsilon_{t+1}^{(2)}\big)
\begin{pmatrix}
\dfrac{\eta_3 \sigma_3^2}{\eta_3^2 \sigma_3^2 + \eta_4^2 \sigma_4^2 + \sigma_G^2} \\
\dfrac{\eta_4 \sigma_4^2}{\eta_3^2 \sigma_3^2 + \eta_4^2 \sigma_4^2 + \sigma_G^2}
\end{pmatrix}.
\]
The corresponding conditional covariance matrix is given by
\[
\mathrm{Cov}\!\left[\Delta \bff_{t+1}^{(3:4)} \,\middle|\, \Delta \bff_{t+1}^{(1:2)}, \bff_t^{(2:4)}\right]
=
\begin{pmatrix}
\dfrac{\sigma_3^2(\eta_4^2 \sigma_4^2 + \sigma_G^2)}{\eta_3^2 \sigma_3^2 + \eta_4^2 \sigma_4^2 + \sigma_G^2}
&
-\dfrac{\eta_3 \eta_4 \sigma_3^2 \sigma_4^2}{\eta_3^2 \sigma_3^2 + \eta_4^2 \sigma_4^2 + \sigma_G^2}
\\[8pt]
-\dfrac{\eta_3 \eta_4 \sigma_3^2 \sigma_4^2}{\eta_3^2 \sigma_3^2 + \eta_4^2 \sigma_4^2 + \sigma_G^2}
&
\dfrac{\sigma_4^2(\eta_3^2 \sigma_3^2 + \sigma_G^2)}{\eta_3^2 \sigma_3^2 + \eta_4^2 \sigma_4^2 + \sigma_G^2}
\end{pmatrix}.
\]
Here, $\epsilon_{t+1}^{(1)} = \Delta f_{t+1}^{(1)} - \lambda_1$ and
$\epsilon_{t+1}^{(2)} = \Delta f_{t+1}^{(2)} - \lambda_2 - \gamma_2 f_t^{(2)}$.
Given this conditional Gaussian specification, we use Monte Carlo simulation to generate a large number of samples of $\Delta \bff_{t+1}^{(3:4)}$, which are then used to compute Monte Carlo approximations to oracle conditional means, quantiles, and other functionals.

%The complete implementation used for the numerical experiments, including the full calibrated parameter arrays, neural-network architectures, optimizers, training hyperparameters and bandwidth-selection configuration will be available in a public GitHub repository upon publication.

\subsection{Computational Efficiency}\label{app:runtime}

To provide a fine-grained assessment of computational efficiency, we decompose runtime into four components: (1) the \textit{initial training time}, i.e., the time required to train the models over the burn-in period (the first 2000 time steps); (2) the average daily \textit{method-update time}, i.e., the time required at each evaluation
date to update portfolio-dependent quantities used by the method; when a learned
component is adaptively updated, this quantity also includes the corresponding
model-update cost; (3) the average daily interval-construction time, i.e., the time required to
construct the prediction interval for the target scenario after the method's
daily portfolio-dependent quantities have been updated; and (4) the \textit{total backtest time} over a 3000-step
benchmark subset of the 4000-step evaluation period. All timings are recorded on an Ubuntu system with an A100 GPU. Computational time is measured for the skew-steepener portfolio experiment reported in Table~\ref{tab:BackTest_Straddle} of Section~\ref{sec:stress_steepener}, and the results are summarized in Table~\ref{tab:runtime_comparison}.

\begin{table}[htbp]
\centering
\small
\renewcommand{\arraystretch}{1.15}
\setlength{\tabcolsep}{5pt}
\begin{tabular}{@{}lcccc@{}}
\toprule
Algorithm
& \shortstack{Initial\\training}
& \shortstack{Avg.\ daily\\method-update}
& \shortstack{Avg.\ daily\\interval construction}
& \shortstack{Total\\backtest} \\
\midrule
Emp-quantile (SSA) & 0 & 0 & $<10^{-3}$ s & $<3$ s \\
\midrule
ACSA Linear-$g_t$ & $4.8$ s & $0.93$ s & $<10^{-3}$ s & $0.78$ h \\
ACSA NN-$g_t$ & $40.7$ s & $16.2$ s & $<10^{-3}$ s & $13.51$ h \\
\midrule
KSA NN-$\varphi$ & $25.3$ s & $3.76$ s & $3\times 10^{-3}$ s & $3.14$ h \\
\midrule
ACSA + KSA & $26.5$ s & $3.79$ s & $3\times 10^{-3}$ s & $3.17$ h \\
\bottomrule
\end{tabular}
\caption{Runtime decomposition for representative methods from Table~\ref{tab:BackTest_Straddle}. Initial training time is measured over the 2000-step burn-in period. The last column is the 3000-step total implied by the reported component times, computed as initial training plus $3000$ times the sum of the average daily method-update and interval-construction times; inequalities in the component times are propagated to the total. The statistical experiments themselves use the 4000-step evaluation period.}
\label{tab:runtime_comparison}
\end{table}

The empirical-quantile baselines are computationally cheapest because they avoid
model fitting altogether. Among the proposed methods, ACSA has the highest daily
update cost, since the quantile predictor $g_t$ must be retrained or fine-tuned as
the portfolio evolves over time. By contrast, KSA fits the expectation predictor
$\varphi$ on the burn-in sample and holds it fixed throughout the evaluation period.
Its daily method-update cost therefore reflects the recomputation of
portfolio-dependent quantities and historical residuals under the current
portfolio, rather than refitting $\varphi$. This is computationally attractive because
expectation estimates can adapt to portfolio changes through weighted linear
combinations across securities, whereas quantiles do not aggregate linearly across
securities. KSA then obtains scenario-wise quantiles through kernel reweighting of
the resulting historical residuals. The hybrid ACSA+KSA implementation inherits
much of this reuse advantage while retaining ACSA's online calibration update.

\subsection{Historical Context and Kernel for KSA}\label{appdix:his_context_of_KSA}
In this section, we discuss the choice of the historical context vector $\bH_t$ and the kernel $\kappa$ used in the KSA algorithm. The structure of this section, as well as several definitions, follow E-Companion~\ref{subapx:addi_details_KSA_motive}. Our discussion serves two main purposes. First, we show that setting $\bH_t = \bff_t^{(2:4)}$ is sufficient to capture all relevant historical information. Second, we provide the explicit form of the kernel that our experiment implements.

\paragraph{Historical Context}
The portfolio gain $G_{t+1}$ has the explicit representation
\[
    G_{t+1}
    =
    u_1 (S_{t+1} - S_t)
    +
    \sum_{j=1}^{n-1} u_{j+1}
    \Big(
        \mbox{BS}(S_{t+1}, m_j S_t, \tau_j-\delta, \sigma_{t+1}^{(j)}, \mbox{c}_j)
        -
        \mbox{BS}(S_t, m_j S_t, \tau_j, \sigma_t^{(j)}, \mbox{c}_j)
    \Big),
\]
where $\mbox{BS}(\cdot)$ denotes the Black--Scholes pricing function, and $\mbox{c}_j$ is an indicator of the option type ($\mbox{c}_j = 1$ for a call and $\mbox{c}_j = 0$ for a put). As discussed in E-Companion~\ref{subapx:addi_details_KSA_motive},
constructing a prediction interval for $G_{t+1}$ is equivalent to constructing one for the normalized portfolio gain $G_{t+1}' := G_{t+1}/S_t$, which admits the explicit form
\begin{equation}\label{eq:G_prime_main_exp}
    G_{t+1}'
    =
    u_1\Bigg(\frac{S_{t+1}}{S_t} - 1\Bigg)
    +
    \sum_{j=1}^{n-1} u_{j+1}
    \Bigg(
        \frac{S_{t+1}}{S_t}
        \widetilde{\mbox{BS}}\!\left(
            \frac{m_j S_t}{S_{t+1}},
            \tau_j-\delta,
            \sigma_{t+1}^{(j)},
            \mbox{c}_j
        \right)
        -
        \widetilde{\mbox{BS}}(m_j, \tau_j, \sigma_t^{(j)}, \mbox{c}_j)
    \Bigg).
\end{equation}
Here, $\widetilde{\mbox{BS}}(\cdot)$ is defined in \eqref{eq:BS_tilde_def}. From \eqref{eq:G_prime_main_exp}, the components that constitute the \textit{historical information}
at time $t$ are $\{\sigma_t^{(j)}\}_{j=1}^{n-1}$, while the quantities that remain random at time $t$ are the underlying return $S_{t+1}/S_t$ and the future volatilities $\{\sigma_{t+1}^{(j)}\}_{j=1}^{n-1}$.
Therefore, for a historical context vector $\bH_t$ to be sufficiently informative, it suffices that $\bH_t$ both determines $\{\sigma_t^{(j)}\}_{j=1}^{n-1}$ and contains all factors that drive the distributions of $S_{t+1}/S_t$ and $\{\sigma_{t+1}^{(j)}\}_{j=1}^{n-1}$. From the model construction, choosing $\bH_t = \bff_t^{(2:4)}$ satisfies all of the above requirements. In particular, from \eqref{eq:how_f_2_4_drives_I}, we have $(\sigma_t^{(1)}, \ldots, \sigma_t^{(n-1)})
    =
    \exp \Big(
        \bI_0
        +
        f_t^{(2)} \cdot \mathbf{M}_2
        +
        f_t^{(3)} \cdot \mathbf{M}_3
        +
        f_t^{(4)} \cdot \mathbf{M}_4
    \Big)$, which shows that $\bff_t^{(2:4)}$ fully determines the current volatilities
$\{\sigma_t^{(j)}\}_{j=1}^{n-1}$. Moreover, by \eqref{eq:dynamic_I_is_iid}, the distributions of
$S_{t+1}/S_t$ and $\{\sigma_{t+1}^{(j)}\}_{j=1}^{n-1}$ are driven by $\Delta \bff_{t+1}$. As further
specified in \eqref{eq:f_t_i}, the distribution of $\Delta \bff_{t+1}^{(1)}$ is history-independent,
while the distribution of $\bff_{t+1}^{(2:4)}$ is entirely determined by $\bff_t^{(2:4)}$.

We can further show that Assumption~\ref{assump:individual_cali_assump} is
satisfied with $\bH_t = \bff_t^{(2:4)}$. As in the analysis of Section~\ref{sec:ksa-theory}, fix a deterministic continuous centering function $\varphi$ and set $Y_t=G_t'-\varphi(\bW_t)$. The discussion above establishes
Part~\ref{itm:independence_n_homogeneity}, because subtracting a function of $\bW_t$ does not change predictive sufficiency. It remains to verify Parts~\ref{itm:separate_stationary_assump}, \ref{itm:density_bound}, and \ref{itm:cond_dist_cont}. To begin, rewrite
$\bW_t$ as follows:
$$
\bW_{t} = (\Delta f_{t}^{(1)}, \Delta f_{t}^{(2)}, f_{t-1}^{(2)}, f_{t-1}^{(3)}, f_{t-1}^{(4)}) = \boldsymbol{\lambda} + \mathbf{A}_1 \bff_t^{(2:4)} + \mathbf{A}_0 \bff_{t-1}^{(2:4)} + \boldsymbol{\epsilon}_t^{G}
$$
where
$$
\boldsymbol{\lambda} = \begin{pmatrix}
    \lambda_1-\eta_2 \lambda_2 -\eta_3\lambda_3 -\eta_4\lambda_4 \\
    0 \\
    0 \\
    0 \\
    0
\end{pmatrix},\,\,
\mathbf{A}_1 = \begin{pmatrix}
    \eta_2 & \eta_3 & \eta_4 \\
    1 & 0 & 0 \\
    0 & 0 & 0 \\
    0 & 0 & 0 \\
    0 & 0 & 0
\end{pmatrix},\,\,\mathbf{A}_0 = \begin{pmatrix}
    -\eta_2(1+\gamma_2) & -\eta_3(1+\gamma_3) & -\eta_4(1+\gamma_4) \\
    -1 & 0 & 0 \\
    1 & 0 & 0 \\
    0 & 1 & 0 \\
    0 & 0 & 1
\end{pmatrix},
$$
and $\boldsymbol{\epsilon}_t^{G} = [\epsilon_t^{G},\ 0, \ 0, \ 0, \ 0]^\top$. For the calibrated parameters, $|1+\gamma_i|<1$ for $i=2,3,4$. Taking the Gaussian OU factors in their stationary regime, $\{\bff_t^{(2:4)}\}_{t=1}^\infty$ is stationary and geometrically $\beta$-mixing. The corresponding properties of $\{\bW_t\}_{t=1}^\infty$ then follow from \citet{bradley2005basic} and the following two facts:
\begin{itemize}
    \item[(1)] If $V_t = f(X_t, X_{t-1})$ for any measurable function $f$, then the $\beta$-mixing coefficients of $\{X_t\}_{t=1}^\infty$ and $\{V_t\}_{t=1}^\infty$, denoted by $\beta_X(k)$ and $\beta_V(k)$, satisfy $\beta_V(k) \leq \beta_X(k-1)$.
    \item[(2)] If $\{\epsilon_t\}_{t=1}^\infty$ is i.i.d. and independent of $\{X_t\}_{t=1}^\infty$, then the augmented process $\{(X_t,\epsilon_t)\}_{t=1}^\infty$ has $\beta$-mixing coefficients no larger than those of $\{X_t\}_{t=1}^\infty$.
\end{itemize}
By \eqref{eq:G_prime_main_exp} and the factor dynamics, $(\bW_t,G_t')$, and hence $(\bW_t,Y_t)$, is a measurable function of $(\bff_t^{(2:4)},\epsilon_t^G)$ and $(\bff_{t-1}^{(2:4)},\epsilon_{t-1}^G)$. The two facts above therefore imply that $\{(\bW_t,Y_t)\}_{t=1}^\infty$ is stationary and geometrically $\beta$-mixing. This verifies Assumption~\ref{assump:individual_cali_assump}\ref{itm:separate_stationary_assump}. The vector $\bW_t$ is a nondegenerate stationary Gaussian vector, so its density is continuous and strictly positive. It is therefore bounded above and bounded away from zero on the compact region $\mathcal D^+$, as required by Assumption~\ref{assump:individual_cali_assump}\ref{itm:density_bound}. Finally, conditional on $\bW_t=\bw$, the remaining factor innovations have a nondegenerate Gaussian distribution whose parameters vary continuously with $\bw$. For the nondegenerate portfolios considered here, \eqref{eq:G_prime_main_exp} is a smooth, nonconstant function of these innovations. Since $\varphi$ is continuous, the conditional distribution of $Y_t$ is therefore continuous and its CDF varies continuously with $\bw$. Compactness of $\mathcal D^+$ makes this continuity uniform, verifying Assumption~\ref{assump:individual_cali_assump}\ref{itm:cond_dist_cont}.

\paragraph{The Kernel} We use the weighted RBF kernel $\kappa_h$ defined in (\ref{eq:def_gaussian_kernel}). The bandwidth $h$ is adaptively tuned,\allowbreak and the weight matrix $W$ standardizes the feature vector $\bW_t = (\Delta f_{t}^{(1)},\allowbreak \Delta f_{t}^{(2)},\allowbreak f_{t-1}^{(2)},\allowbreak f_{t-1}^{(3)},\allowbreak f_{t-1}^{(4)})$. Specifically,\allowbreak the covariance matrix of $\bW_t$ takes the form
$$
\text{Cov}(\bW_t) =
\begin{pmatrix}
\sigma_1^2
    & \eta_2\,\sigma_2^2
    & 0
    & 0
    & 0
\\
\eta_2\,\sigma_2^2
    & \dfrac{2\,\sigma_2^2}{2+\gamma_2}
    & \dfrac{\gamma_2\,\sigma_2^2}{1-(1+\gamma_2)^2}
    & 0
    & 0
\\
0
    & \dfrac{\gamma_2\,\sigma_2^2}{1-(1+\gamma_2)^2}
    & \dfrac{\sigma_2^2}{1-(1+\gamma_2)^2}
    & 0
    & 0
\\
0
    & 0
    & 0
    & \dfrac{\sigma_3^2}{1-(1+\gamma_3)^2}
    & 0
\\
0
    & 0
    & 0
    & 0
    & \dfrac{\sigma_4^2}{1-(1+\gamma_4)^2}
\end{pmatrix}
$$
and we set $W = \text{Cov}(\bW_t)^{-1}$. Equivalently, under the fixed invertible standardization $\bar{\bw}=W^{1/2}\bw$, the kernel becomes
$\kappa_h(\bw,\bw')
=
\exp\left(-\frac{\|\bar{\bw}-\bar{\bw}'\|^2}{2h^2}\right).$ Thus, in the standardized coordinates, its radial function is $\widetilde{\kappa}(r)=e^{-r^2/2}$, which satisfies Assumption~\ref{assump:individual_cali_assump}\ref{itm:kernel_regularity}. The fixed invertible transformation does not affect the preceding stationarity, mixing, density, predictive-sufficiency, or continuity conclusions.

In some experiments, we additionally consider incomplete feature vectors. For instance, we may set $\bH_t = \bff_t^{(2:3)}$ instead of $\bff_t^{(2:4)}$. In such cases, the weight matrix is taken to be the corresponding submatrix of $W$ associated with the truncated feature vector. The verification of Assumption~\ref{assump:individual_cali_assump} above applies to the full context $\bH_t=\bff_t^{(2:4)}$; the incomplete-context specifications are included as deliberately misspecified empirical comparisons and need not satisfy Part~\ref{itm:independence_n_homogeneity}.

\subsection{Additional Experiment Results}

Tables \ref{tab:SSA-Cond-Exp-gain_appdix} and \ref{tab:ksa-pi-adversarial-n-neutral_appdix} present results under the same settings as Tables \ref{tab:SSA-Cond-Exp-gain} and \ref{tab:ksa-pi-adversarial-n-neutral}, but evaluated at a different time step. % (with $t = 3283$).

\begin{table}[htbp]
  \centering
\begin{minipage}[t]{0.48\textwidth}
  \centering
  \captionof*{table}{SSA: Adversarial Portfolio}
  \resizebox{\linewidth}{2.0cm}{%
    \begin{tabular}{r *{7}{r}}
      \toprule
      \mbox{$\Delta f_{t+1}^{(1)}$ \, }  & \multicolumn{7}{c}{\mbox{$ \Delta f_{t+1}^{(2)} $ ($\times 10^{-1}$)}} \\
      \cmidrule(l){2-8}
      ($\times 10^{-2}$) & $-3$ & $-2$ & $-1$ & $0$ & $1$ & $2$ & $3$ \\
      \midrule
      $-1.5$ & 2.93 & 1.43 & 0.17 & -0.89 & -1.76 & -2.45 & -3.00 \\
      $-1.0$ & 3.00 & 2.30 & 1.78 & 1.42 & 1.20 & 1.09 & 1.09 \\
      $-0.5$ & 2.04 & 1.90 & 1.91 & 2.05 & 2.29 & 2.61 & 3.00 \\
      $0.0$ & 0.27 & 0.45 & 0.78 & 1.22 & 1.75 & 2.34 & 2.98 \\
      $0.5$ & -1.75 & -1.50 & -1.10 & -0.56 & 0.07 & 0.78 & 1.54 \\
      $1.0$ & -3.00 & -3.00 & -2.79 & -2.41 & -1.88 & -1.25 & -0.54 \\
      $1.5$ & -1.94 & -2.57 & -2.90 & -2.98 & -2.85 & -2.55 & -2.11 \\
      \bottomrule
    \end{tabular}
  }
\end{minipage}
\hfill
\begin{minipage}[t]{0.48\textwidth}
  \centering
  \captionof*{table}{Cond Exp: Adversarial Portfolio}
  \resizebox{\linewidth}{2.0cm}{%
    \begin{tabular}{r *{7}{r}}
      \toprule
      \mbox{$\Delta f_{t+1}^{(1)}$ \, }  & \multicolumn{7}{c}{\mbox{$ \Delta f_{t+1}^{(2)} $ ($\times 10^{-1}$)}} \\
      \cmidrule(l){2-8}
      ($\times 10^{-2}$) & $-3$ & $-2$ & $-1$ & $0$ & $1$ & $2$ & $3$ \\
      \midrule
      $-1.5$ & 6.64 & 7.07 & 8.03 & 9.37 & 10.86 & 12.64 & 14.50 \\
      $-1.0$ & 4.13 & 5.19 & 6.56 & 8.29 & 10.27 & 12.36 & 14.46 \\
      $-0.5$ & 1.43 & 2.44 & 3.87 & 5.73 & 7.84 & 10.02 & 12.28 \\
      $0.0$ & -0.17 & -0.00 & 0.94 & 2.44 & 4.23 & 6.32 & 8.47 \\
      $0.5$ & 1.28 & -0.30 & -0.68 & -0.31 & 0.75 & 2.24 & 3.91 \\
      $1.0$ & 9.00 & 4.26 & 1.28 & -0.32 & -0.88 & -0.70 & 0.08 \\
      $1.5$ & 27.46 & 17.76 & 10.57 & 5.41 & 1.99 & -0.10 & -1.16 \\
      \bottomrule
    \end{tabular}
  }
\end{minipage}

  \vspace{0.5cm}
\begin{minipage}[t]{0.48\textwidth}
  \centering
  \captionof*{table}{SSA: Factor-neutral benchmark}
  \resizebox{\linewidth}{2.0cm}{%
    \begin{tabular}{r *{7}{r}}
      \toprule
      \mbox{$\Delta f_{t+1}^{(1)}$ \, }  & \multicolumn{7}{c}{\mbox{$ \Delta f_{t+1}^{(2)} $ ($\times 10^{-1}$)}} \\
      \cmidrule(l){2-8}
      ($\times 10^{-2}$) & $-3$ & $-2$ & $-1$ & $0$ & $1$ & $2$ & $3$ \\
      \midrule
      $-1.5$ & 0.83 & 0.46 & 0.13 & -0.15 & -0.39 & -0.59 & -0.75 \\
      $-1.0$ & 0.31 & 0.06 & -0.15 & -0.34 & -0.49 & -0.62 & -0.71 \\
      $-0.5$ & 0.07 & -0.05 & -0.17 & -0.26 & -0.34 & -0.41 & -0.45 \\
      $0.0$ & -0.07 & -0.06 & -0.06 & -0.07 & -0.07 & -0.07 & -0.07 \\
      $0.5$ & -0.34 & -0.17 & -0.03 & 0.09 & 0.18 & 0.26 & 0.32 \\
      $1.0$ & -1.07 & -0.66 & -0.31 & -0.02 & 0.22 & 0.42 & 0.58 \\
      $1.5$ & -2.67 & -1.92 & -1.28 & -0.73 & -0.26 & 0.13 & 0.46 \\
      \bottomrule
    \end{tabular}
  }
\end{minipage}
\hfill
\begin{minipage}[t]{0.48\textwidth}
  \centering
  \captionof*{table}{Cond Exp: Factor-neutral benchmark}
  \resizebox{\linewidth}{2.0cm}{%
    \begin{tabular}{r *{7}{r}}
      \toprule
      \mbox{$\Delta f_{t+1}^{(1)}$ \, }  & \multicolumn{7}{c}{\mbox{$ \Delta f_{t+1}^{(2)} $ ($\times 10^{-1}$)}} \\
      \cmidrule(l){2-8}
      ($\times 10^{-2}$) & $-3$ & $-2$ & $-1$ & $0$ & $1$ & $2$ & $3$ \\
      \midrule
      $-1.5$ & 0.43 & -0.28 & -0.90 & -1.43 & -1.88 & -2.25 & -2.56 \\
      $-1.0$ & 0.37 & -0.14 & -0.60 & -1.00 & -1.35 & -1.65 & -1.88 \\
      $-0.5$ & 0.25 & 0.01 & -0.25 & -0.50 & -0.74 & -0.95 & -1.14 \\
      $0.0$ & -0.29 & -0.12 & -0.07 & -0.09 & -0.17 & -0.26 & -0.37 \\
      $0.5$ & -1.82 & -0.98 & -0.42 & -0.05 & 0.16 & 0.27 & 0.30 \\
      $1.0$ & -5.30 & -3.38 & -1.94 & -0.90 & -0.17 & 0.33 & 0.66 \\
      $1.5$ & -12.01 & -8.46 & -5.67 & -3.49 & -1.84 & -0.64 & 0.23 \\
      \bottomrule
    \end{tabular}
  }
\end{minipage}

\caption{Portfolio gains across the scenario grid corresponding to the first two common factor returns $(\Delta f_{t+1}^{(1)}, \Delta f_{t+1}^{(2)})$. The table entries are computed at a different time step from Table \ref{tab:SSA-Cond-Exp-gain}. Cond Exp denotes Monte Carlo approximations to the oracle conditional mean gains.}
\label{tab:SSA-Cond-Exp-gain_appdix}
\end{table}

\begin{table}[htbp]
  \centering

\begin{minipage}[t]{0.90\textwidth}
  \centering
  \captionof*{table}{ACSA 90\% PI ($\tilde{\alpha}_t = 0.09$): Adversarial Portfolio}
  \vspace{-0.3cm}
  \resizebox{\linewidth}{2.2cm}{%
    \begin{tabular}{r *{7}{l}}
      \toprule
      \mbox{$\Delta f_{t+1}^{(1)}$ \, }  & \multicolumn{7}{c}{\mbox{$ \Delta f_{t+1}^{(2)} $ ($\times 10^{-1}$)}} \\
      \cmidrule(l){2-8}
      ($\times 10^{-2}$) & $-3$ & $-2$ & $-1$ & $0$ & $1$ & $2$ & $3$ \\
      \midrule
      $-1.5$ & (-4.9, 19.4) & (-3.3, 18.8) & (-1.5, 18.6) & (0.7, 19.0) & (2.9, 19.7) & (5.2, 20.7) & (7.6, 21.9) \\
      $-1.0$ & (-5.9, 15.5) & (-4.2, 15.6) & (-2.1, 16.1) & (0.3, 17.1) & (2.8, 18.4) & (5.3, 19.9) & (7.9, 21.5) \\
      $-0.5$ & (-6.9, 11.2) & (-5.6, 11.6) & (-3.8, 12.5) & (-1.6, 13.8) & (0.9, 15.4) & (3.4, 17.1) & (6.0, 19.0) \\
      $0.0$ & (-6.5, 7.5) & (-6.4, 7.6) & (-5.5, 8.3) & (-4.0, 9.6) & (-2.0, 11.1) & (0.2, 13.0) & (2.5, 14.8) \\
      $0.5$ & (-3.0, 6.4) & (-4.7, 5.2) & (-5.4, 5.0) & (-5.3, 5.6) & (-4.5, 6.7) & (-3.1, 8.2) & (-1.5, 9.8) \\
      $1.0$ & (4.1, 14.5) & (0.3, 8.5) & (-2.3, 5.2) & (-3.9, 3.7) & (-4.8, 3.6) & (-4.9, 4.1) & (-4.5, 5.1) \\
      $1.5$ & (17.6, 39.8) & (10.4, 26.8) & (5.0, 17.1) & (1.1, 10.1) & (-1.6, 5.7) & (-3.6, 3.5) & (-4.8, 2.7) \\
      \bottomrule
    \end{tabular}
  }
\end{minipage}

\vspace{0.1cm}

\begin{minipage}[t]{0.90\textwidth}
  \centering
  \captionof*{table}{ACSA 90\% PI ($\tilde{\alpha}_t = 0.1$): Factor-neutral benchmark}
  \vspace{-0.3cm}
  \resizebox{\linewidth}{2.2cm}{%
    \begin{tabular}{r *{7}{l}}
      \toprule
      \mbox{$\Delta f_{t+1}^{(1)}$ \, }  & \multicolumn{7}{c}{\mbox{$ \Delta f_{t+1}^{(2)} $ ($\times 10^{-1}$)}} \\
      \cmidrule(l){2-8}
      ($\times 10^{-2}$) & $-3$ & $-2$ & $-1$ & $0$ & $1$ & $2$ & $3$ \\
      \midrule
      $-1.5$ & (-0.3, 1.4) & (-0.9, 0.5) & (-1.4, -0.3) & (-1.9, -0.9) & (-2.3, -1.4) & (-2.7, -1.8) & (-3.0, -2.1) \\
      $-1.0$ & (-0.1, 1.1) & (-0.5, 0.4) & (-1.0, -0.2) & (-1.4, -0.6) & (-1.8, -0.9) & (-2.1, -1.2) & (-2.4, -1.4) \\
      $-0.5$ & (-0.1, 0.8) & (-0.2, 0.3) & (-0.5, -0.0) & (-0.8, -0.2) & (-1.2, -0.3) & (-1.4, -0.5) & (-1.7, -0.6) \\
      $0.0$ & (-1.0, 0.3) & (-0.5, 0.2) & (-0.2, 0.0) & (-0.3, 0.1) & (-0.5, 0.2) & (-0.7, 0.2) & (-0.9, 0.2) \\
      $0.5$ & (-3.5, -0.6) & (-2.0, -0.2) & (-1.0, -0.0) & (-0.3, 0.2) & (-0.1, 0.4) & (-0.1, 0.7) & (-0.2, 0.8) \\
      $1.0$ & (-8.4, -2.8) & (-5.7, -1.6) & (-3.5, -0.8) & (-2.0, -0.2) & (-0.8, 0.3) & (-0.0, 0.7) & (0.3, 1.0) \\
      $1.5$ & (-17.6, -7.4) & (-12.8, -4.9) & (-8.9, -3.1) & (-5.9, -1.6) & (-3.6, -0.6) & (-1.8, 0.2) & (-0.5, 0.8) \\
      \bottomrule
    \end{tabular}
  }
\end{minipage}

\vspace{0.1cm}

\begin{minipage}[t]{0.90\textwidth}
  \centering
  \captionof*{table}{KSA 90\% PI: Adversarial Portfolio}
  \vspace{-0.3cm}
  \resizebox{\linewidth}{2.2cm}{%
    \begin{tabular}{r *{7}{l}}
      \toprule
      \mbox{$\Delta f_{t+1}^{(1)}$ \, }  & \multicolumn{7}{c}{\mbox{$ \Delta f_{t+1}^{(2)} $ ($\times 10^{-1}$)}} \\
      \cmidrule(l){2-8}
      ($\times 10^{-2}$) & $-3$ & $-2$ & $-1$ & $0$ & $1$ & $2$ & $3$ \\
      \midrule
      $-1.5$ & (-48.4, 37.7) & (-58.7, 32.2) & (-64.3, 27.1) & (-83.7, 26.0) & (-99.9, 25.2) & (-100.6, 24.5) & (-101.1, 23.9) \\
      $-1.0$ & (-35.3, 45.2) & (-42.2, 36.4) & (-51.9, 30.3) & (-59.6, 28.4) & (-64.2, 28.1) & (-97.0, 28.0) & (-97.0, 28.0) \\
      $-0.5$ & (-26.6, 55.6) & (-31.9, 45.4) & (-39.7, 36.8) & (-45.5, 30.5) & (-54.7, 26.1) & (-58.8, 27.3) & (-77.5, 29.9) \\
      $0.0$ & (-20.4, 74.8) & (-24.5, 58.0) & (-29.9, 45.9) & (-35.8, 36.1) & (-41.9, 30.0) & (-49.9, 24.1) & (-56.0, 21.7) \\
      $0.5$ & (-15.9, 94.7) & (-17.4, 77.8) & (-22.3, 61.5) & (-29.0, 48.0) & (-33.8, 36.3) & (-41.3, 29.0) & (-47.1, 21.5) \\
      $1.0$ & (-12.1, 146.6) & (-15.4, 108.2) & (-16.9, 77.8) & (-18.9, 65.9) & (-25.7, 51.0) & (-32.7, 37.0) & (-40.5, 27.8) \\
      $1.5$ & (-6.9, 234.2) & (-8.4, 218.1) & (-12.0, 115.5) & (-13.5, 79.5) & (-17.1, 71.7) & (-23.6, 52.8) & (-31.0, 38.9) \\
      \bottomrule
    \end{tabular}
  }
\end{minipage}

\vspace{0.1cm}

\begin{minipage}[t]{0.90\textwidth}
  \centering
  \captionof*{table}{KSA 90\% PI: Factor-neutral benchmark}
  \vspace{-0.3cm}
  \resizebox{\linewidth}{2.2cm}{%
    \begin{tabular}{r *{7}{l}}
      \toprule
      \mbox{$\Delta f_{t+1}^{(1)}$ \, }  & \multicolumn{7}{c}{\mbox{$ \Delta f_{t+1}^{(2)} $ ($\times 10^{-1}$)}} \\
      \cmidrule(l){2-8}
      ($\times 10^{-2}$) & $-3$ & $-2$ & $-1$ & $0$ & $1$ & $2$ & $3$ \\
      \midrule
      $-1.5$ & (-9.0, 12.2) & (-7.6, 12.4) & (-7.1, 13.1) & (-6.7, 15.0) & (-6.4, 15.8) & (-5.1, 14.6) & (-4.1, 10.6) \\
      $-1.0$ & (-11.2, 11.4) & (-9.8, 11.5) & (-8.0, 11.9) & (-7.2, 13.3) & (-6.5, 14.7) & (-6.3, 15.9) & (-5.2, 14.5) \\
      $-0.5$ & (-15.8, 9.9) & (-12.0, 10.9) & (-10.0, 11.4) & (-8.0, 12.0) & (-6.9, 13.7) & (-6.2, 14.8) & (-5.4, 16.1) \\
      $0.0$ & (-20.4, 7.2) & (-17.9, 8.7) & (-12.5, 10.5) & (-10.4, 11.5) & (-7.8, 12.7) & (-6.1, 14.2) & (-5.0, 15.1) \\
      $0.5$ & (-27.7, 3.3) & (-20.7, 6.1) & (-18.9, 7.7) & (-14.2, 10.1) & (-10.7, 11.7) & (-8.1, 13.6) & (-5.7, 14.8) \\
      $1.0$ & (-37.8, 1.8) & (-30.0, 2.7) & (-21.8, 4.9) & (-19.8, 7.3) & (-15.9, 9.5) & (-11.1, 11.9) & (-8.3, 14.3) \\
      $1.5$ & (-69.7, -0.8) & (-52.9, 0.7) & (-34.9, 2.0) & (-26.4, 3.4) & (-20.7, 6.6) & (-17.9, 8.6) & (-12.4, 11.7) \\
      \bottomrule
    \end{tabular}
  }
\end{minipage}
\caption{ACSA and KSA 90\% prediction intervals. The prediction intervals are computed at a different time step from Table~\ref{tab:ksa-pi-adversarial-n-neutral}.}
\label{tab:ksa-pi-adversarial-n-neutral_appdix}
\end{table}

\subsection{Construction of the Adversarial Portfolio}\label{appdix:adversarial_portfolio}
The portfolio consists of \(n = 13\) securities. The first asset is the S\&P 500 index. The remaining \(n-1\) assets are European options characterized by moneyness \(\{m_j\}_{j=1}^{n-1}\), time to maturity \(\{\tau_j\}_{j=1}^{n-1}\), implied volatility \(\{\sigma_t^{(j)}\}_{j=1}^{n-1}\), and a call-put indicator \(\{\mathrm{c}_j\}_{j=1}^{n-1}\). Let \(\{C_t^{(j)}\}_{j=1}^{n-1}\) denote the corresponding option prices, given by $C_t^{(j)} = \mathrm{BS}\left(S_t,\, m_j S_t,\, \tau_j,\, \sigma_t^{(j)},\, \mathrm{c}_j\right)$, where \(\mathrm{BS}(\cdot)\) denotes the Black-Scholes pricing formula. The option universe is constructed by forming all combinations of moneyness and maturity levels: $\{(m_j, \tau_j)\}_{j=1}^{12} = \{(m, \tau) : m \in \Omega_m,\ \tau \in \Omega_\tau\}$, where $\Omega_m = \{0.9,\, 0.95,\, 1.05,\, 1.1\}$ and $\Omega_\tau = \{1/12,\, 1/4,\, 1/2\}$. For each option, the call-put indicator is \(\mathrm{c}_j = 1\) when \(m_j \geq 1\) (call option) and \(\mathrm{c}_j = 0\) when \(m_j < 1\) (put option). The adversarial portfolio $\boldsymbol{u}$ (represented by the number of units held in each security) is obtained by solving the following optimization problem:
\begin{equation}\label{eq:adversarial_LP_program}
\begin{aligned}
\max_{\boldsymbol{u}^+,\, \boldsymbol{u}^- \ge 0} \quad &
    \widetilde{G}_{t+1}(\boldsymbol{u}^+ - \boldsymbol{u}^-,\, \be_3),\\[3pt]
\text{s.t.}\quad
& |\widehat{G}_{t+1, SSA}(\boldsymbol{u}^+ - \boldsymbol{u}^-,\, \bz^i)| \leq  3,\quad i=1,\ldots, m\\[2pt]
& \bC_t^{\top}\boldsymbol{u}^+ \le 100,\qquad \bC_t^{\top}\boldsymbol{u}^- + \bC_t^{\top}\boldsymbol{u}^+ \le 500,
\end{aligned}
\end{equation}
where
\begin{itemize}
    \item $\widetilde{G}_{t+1}(\boldsymbol{u}, \Delta \bff_{t+1})$: the approximated portfolio gain as a linear function of the portfolio position vector $\boldsymbol{u}$ and the common factor return $\Delta \bff_{t+1}$.
    \item $\mathbf u^+,\mathbf u^-$: nonnegative auxiliary position variables, with the net portfolio position given by
$\mathbf u=\mathbf u^+-\mathbf u^-$.
    \item $\be_i = (0,\ldots,1,\ldots,0)\in \mathbb{R}^4$: the $i$-th canonical basis vector.

    \item $\widehat{G}_{t+1, SSA}(\boldsymbol{u}, \bz)$: the SSA estimate of the portfolio gain as a function of the portfolio position vector $\boldsymbol{u}$ and the stress scenario $\bz$.
    \item $\{\bz^1,\cdots, \bz^m\}$: the collection of candidate stress scenarios used in the scenario analysis table. In this experiment, we set $m=49$ with $\{\bz^i\}_{i=1}^{49}
= \{(z_1,z_2) : z_1 \in \Omega_1,\ z_2 \in \Omega_2\}$, where $\Omega_1 = \{-0.015,\,-0.01,\,-0.005,\,0,\,0.005,\,0.01,\,0.015\}$ and $\Omega_2 = \{-0.3,\,-0.2,\,-0.1,\,0,\,0.1,\,0.2,\,0.3\}$.
    \item $\bC_t =
\big(S_t,\ C_t^{(1)},\ldots, C_t^{(n-1)}\big)$: the vector of current security values.

\end{itemize}
The objective of program \eqref{eq:adversarial_LP_program} maximizes the portfolio gain under an upward movement of the third factor, represented by $\be_3$. The first constraint requires the SSA estimate of the portfolio gain to lie within $\pm 3\%$ of the total capital. The second constraint ensures that long positions are fully financed by the \$100 capital. Finally, the last constraint places a cap on the total notional represented by
the positive and negative auxiliary position variables in the split-position
formulation. The explicit forms of $\widetilde{G}_{t+1}(\boldsymbol{u}, \Delta \bff_{t+1})$ and $\widehat{G}_{t+1,SSA}(\boldsymbol{u}, \bz)$ are presented next.

\paragraph{First Order Approximation of $G_{t+1}$}
A first-order approximation to the portfolio gain gives:
\begin{equation}\label{eq:linear_approx_of_portgain}
    \begin{aligned}
        G_{t+1} &= u_1 \cdot (S_{t+1} - S_t) + \sum_{j=1}^{n-1} u_{j+1}\cdot (C_{t+1}^{(j)} - C_t^{(j)}) \\
        &\approx u_1 \cdot \text{d} S_{t+1} + \sum_{j=1}^{n-1} u_{j+1}\cdot \Big(\delta_{t}^{(j)}\text{d} S_{t+1} + \nu_t^{(j)}\cdot \text{d}\sigma_{t+1}^{(j)} - \frac{1}{252}\Theta_t^{(j)}\Big)
    \end{aligned}
\end{equation}
where the Greek letters are: $
    \delta_t^{(j)} \defeq \dfrac{\partial}{\partial S} \text{BS}(S, m_j S_t, \tau_j, \sigma_t^{(j)}$, $\mbox{c}_j) \Big|_{S = S_t}$, $
    \nu_t^{(j)} \defeq \dfrac{\partial}{\partial \sigma} \text{BS}(S_t, m_j S_t, \tau_j, \sigma, \mbox{c}_j) \Big|_{\sigma = \sigma_t^{(j)}}$, and $
    \Theta_t^{(j)} \defeq \dfrac{\partial}{\partial \tau} \text{BS}(S_t, m_j S_t, \tau, \sigma_t^{(j)}, \mbox{c}_j) \Big|_{\tau = \tau_j}$. Equation~\eqref{eq:linear_approx_of_portgain} decomposes fluctuations in the portfolio value into three components: (i) movements in the underlying index,
(ii) the passage of time, which reduces the remaining maturity, and (iii) shifts in the implied volatility surface. Moreover, \eqref{eq:linear_approx_of_portgain} implies that the portfolio gain can be approximated by a linear function of the common factor returns $\Delta \bff_{t+1}$. To see this, recall from \eqref{eq:dynamic_I_is_iid} that
\begin{equation}\label{eq:pass_of_linear}
\begin{aligned}
    \text{d} S_{t+1} &\approx S_t \cdot \Delta f_{t+1}^{(1)}\\
    (\text{d} \sigma_{t+1}^{(1)},\ldots \text{d} \sigma_{t+1}^{(n-1)}) &\approx (\sigma_t^{(1)},\ldots, \sigma_t^{(n-1)})\odot \Delta \bI_{t+1}\\
    &= (\sigma_t^{(1)},\ldots, \sigma_t^{(n-1)}) \odot \left(\Delta f_{t+1}^{(2)} \cdot \mathbf{M}_2 + \Delta f_{t+1}^{(3)}\cdot \mathbf{M}_3 + \Delta f_{t+1}^{(4)} \cdot\mathbf{M}_4\right)
\end{aligned}
\end{equation}
where $\odot$ denotes the Hadamard (element-wise) product. Define the approximated portfolio gain $\widetilde{G}_{t+1}(\boldsymbol{u}, \Delta \bff_{t+1})$ as a linear function of the common factor returns $\Delta \bff_{t+1}$ and the portfolio position vector $\boldsymbol{u}$. This approximation is obtained by substituting \eqref{eq:pass_of_linear} into the linearized gain expression in \eqref{eq:linear_approx_of_portgain}. The explicit expression of $\widetilde{G}_{t+1}$ is
$$
\begin{aligned}
    \widetilde{G}_{t+1}(\boldsymbol{u}, \Delta \bff_{t+1}) =& u_1 S_t \Delta f_{t+1}^{(1)} + S_t \Delta f_{t+1}^{(1)} \cdot \boldsymbol{\delta}_t^\top \boldsymbol{u}^{(2:n)} \\
    &+ \Bigg(\boldsymbol{\nu}_t \odot \boldsymbol{\sigma}_t \odot \left(\Delta f_{t+1}^{(2)} \cdot \mathbf{M}_2 + \Delta f_{t+1}^{(3)}\cdot \mathbf{M}_3 + \Delta f_{t+1}^{(4)} \cdot\mathbf{M}_4\right)
    - \frac{1}{252} \boldsymbol{\Theta}_t\Bigg)^\top \boldsymbol{u}^{(2:n)}
\end{aligned}
$$

\paragraph{SSA Estimate of $G_{t+1}$}
Under a given stress scenario $\bz = (z_1, z_2)$, the SSA estimate $\widehat{G}_{t+1, SSA}(\boldsymbol{u},\allowbreak \bz)$ is given by
$$
\widehat{G}_{t+1, SSA}(\boldsymbol{u}, \bz)
    =
    u_1 S_t \Big(e^{z_1} - 1\Big) + \sum_{j=1}^{n-1} u_{j+1} S_t
    \Bigg(
        e^{z_1}\widetilde{\mbox{BS}}\!\left(
            m_j e^{-z_1},
            \tau_j-\delta,
            \sigma_{t+1}^{(j)},
            \mbox{c}_j
        \right)
        -
        \widetilde{\mbox{BS}}(m_j, \tau_j, \sigma_t^{(j)}, \mbox{c}_j)
    \Bigg)
$$
where the post-stress implied volatilities satisfy
$(\sigma_{t+1}^{(1)},\ldots, \sigma_{t+1}^{(n-1)}) = (\sigma_t^{(1)},\ldots, \sigma_t^{(n-1)})\odot \exp \left(z_2\cdot \bM_2\right).$
This expression is derived analogously to \eqref{eq:G_prime_main_exp}.

\paragraph{Factor-neutral benchmark.}
In a similar spirit to \eqref{eq:adversarial_LP_program}, we construct a factor-neutral benchmark in a
relative sense. The benchmark uses the same linearized one-day gain approximation
as above, but imposes additional constraints requiring the approximated one-day
gain to vanish under unit movements in each of the four modeled common-factor
directions. Because the approximation also contains the one-day time-decay
contribution, these constraints do not in general imply exactly zero marginal
sensitivity to each factor. Rather, their role is to make the benchmark more
tightly controlled than the adversarial portfolio along all four modeled factor
directions.
\begin{equation}\label{eq:neutral_LP_program}
\begin{aligned}
\max_{\boldsymbol{u}^+,\, \boldsymbol{u}^- \ge 0} \quad &
     \bC_t^{\top}\boldsymbol{u}^- + \bC_t^{\top}\boldsymbol{u}^+,\\[2pt]
\text{s.t.}\quad
& |\widehat{G}_{t+1, SSA}(\boldsymbol{u}^+ - \boldsymbol{u}^-,\, \bz^i)| \leq  3,\quad i=1,\ldots, m\\[2pt]
& \widetilde{G}_{t+1}(\boldsymbol{u}^+ - \boldsymbol{u}^-,\, \be_j) = 0, \quad j = 1,\ldots, 4\\[2pt]
& \bC_t^{\top}\boldsymbol{u}^+ \leq 100,\qquad \bC_t^{\top}\boldsymbol{u}^- + \bC_t^{\top}\boldsymbol{u}^+ \le 500.
\end{aligned}
\end{equation}
The objective in \eqref{eq:neutral_LP_program} serves as a selection criterion within the split-position
LP representation, favoring a large total notional allocation subject to the
scenario, directional-gain, and capital constraints. Since $\mathbf u^+$ and
$\mathbf u^-$ are auxiliary variables and need not coincide with the positive and
negative parts of the net position $\mathbf u=\mathbf u^+-\mathbf u^-$, this
objective should not be interpreted as the gross exposure computed from the
absolute net holdings.

\section{Proofs}
\label{sec:Proofs}

\subsection{Proof of Theorem \ref{thm:empirical_coverage}}
\label{subapx:proofThmEmpiricalCoverage}

We include the proof of Theorem \ref{thm:empirical_coverage} here for the sake of completion. As mentioned in Section \ref{sec:ACSA}, the theorem and proof are due to \citet{gibbs2021adaptive}. Throughout the proof, we use the theoretical prediction sets defined in \eqref{eq:acsa_prediction_set}. We need the following lemma.

\begin{lem}\label{lemma:Bounded_g_t}
Let $\alpha\in(0,1)$, $\gamma>0$, and $\tilde\alpha_1=\alpha$. With the prediction sets in \eqref{eq:acsa_prediction_set} and finite realized gains, the ACSA recursion satisfies $\tilde\alpha_t\in[-\gamma,1+\gamma]$ for every integer $t\geq1$.
\end{lem}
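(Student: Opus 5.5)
We argue by induction on $t$, using only the update rule \eqref{eq:OSA_update} and the boundary conventions in \eqref{eq:acsa_prediction_set}. The base case is immediate: $\tilde\alpha_1=\alpha\in(0,1)\subset[-\gamma,1+\gamma]$. For the inductive step we assume $\tilde\alpha_{t-1}\in[-\gamma,1+\gamma]$. We then show $\tilde\alpha_t=\tilde\alpha_{t-1}+\gamma(\alpha-\mathrm{err}_t)$ lies in the same interval. Since $\mathrm{err}_t\in\{0,1\}$, each step changes the level by either $+\gamma\alpha\in(0,\gamma)$ or $-\gamma(1-\alpha)\in(-\gamma,0)$. Thus every increment has absolute value strictly less than $\gamma$.

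We split into three cases according to the location of $\tilde\alpha_{t-1}$.

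\emph{Case 1: $\tilde\alpha_{t-1}\le 0$.} Then $\Ct_{t-1}(\bz_t;\tilde\alpha_{t-1})=\mathbb R$. Because $G_t$ is finite almost surely (Assumption~\ref{assump:g_t_basicAssump}), it is covered, so $\mathrm{err}_t=0$. Hence $\tilde\alpha_t=\tilde\alpha_{t-1}+\gamma\alpha$. The lower bound follows from $\tilde\alpha_t\ge-\gamma+\gamma\alpha\ge-\gamma$. The upper bound follows from $\tilde\alpha_t\le\gamma\alpha\le 1+\gamma$.

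\emph{Case 2: $\tilde\alpha_{t-1}\ge 1$.} Then the prediction set is $\emptyset$, so $\mathrm{err}_t=1$ and $\tilde\alpha_t=\tilde\alpha_{t-1}-\gamma(1-\alpha)$. The upper bound follows from $\tilde\alpha_t\le 1+\gamma-\gamma(1-\alpha)\le1+\gamma$. The lower bound follows from $\tilde\alpha_t\ge 1-\gamma(1-\alpha)\ge -\gamma$.

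\emph{Case 3: $0<\tilde\alpha_{t-1}<1$.} Here $\mathrm{err}_t$ may be either value. Since the increment is smaller than $\gamma$ in absolute value, $\tilde\alpha_t\in(-\gamma,1+\gamma)$ directly.

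This closes the induction. The result holds pathwise on the almost-sure event where all $G_t$ are finite, and no property of $g_t$ beyond finiteness is used.

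The only step needing care is Cases 1 and 2. There the argument relies on the boundary convention forcing deterministic coverage or miscoverage, which is what prevents the level from drifting past $-\gamma$ or $1+\gamma$. Everything else is elementary arithmetic.
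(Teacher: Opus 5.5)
Your proof is correct and follows the paper's argument: induction on $t$ with the three cases $\tilde\alpha\le 0$ (prediction set $\mathbb R$, forced coverage, increment $+\gamma\alpha$), $\tilde\alpha\ge 1$ (prediction set $\emptyset$, forced miss, increment $-\gamma(1-\alpha)$), and $0<\tilde\alpha<1$ (increment of magnitude less than $\gamma$). The only difference is that you index the step as $t-1\to t$ rather than $t\to t+1$, which is immaterial.
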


\begin{proof}
The claim holds at $t=1$. Suppose it holds at time $t$. The update is
\begin{equation}\label{eq:beta_t_updaterule}
\tilde\alpha_{t+1}=\tilde\alpha_t+\gamma(\alpha-\mathrm{err}_{t+1}).
\end{equation}
If $\tilde\alpha_t\leq0$, the prediction set is $\mathbb R$, so $\mathrm{err}_{t+1}=0$ and $\tilde\alpha_{t+1}=\tilde\alpha_t+\gamma\alpha\in[-\gamma,1+\gamma]$. If $\tilde\alpha_t\geq1$, the prediction set is $\emptyset$, so $\mathrm{err}_{t+1}=1$ and $\tilde\alpha_{t+1}=\tilde\alpha_t-\gamma(1-\alpha)\in[-\gamma,1+\gamma]$. Finally, if $0<\tilde\alpha_t<1$, either value of the indicator gives $-\gamma<\tilde\alpha_{t+1}<1+\gamma$. Induction proves the result.
\end{proof}

Theorem \ref{thm:empirical_coverage} now follows easily from Lemma \ref{lemma:Bounded_g_t}.

\begin{proof}[{\bf Proof of Theorem \ref{thm:empirical_coverage}}]
    After observing $G_{T+1}$, define $\mathrm{err}_{T+1}$ and $\tilde{\alpha}_{T+1}$ by the same recursion. The update rule \eqref{eq:beta_t_updaterule} of $\tilde{\alpha}_t$, together with Lemma \ref{lemma:Bounded_g_t}, implies
     \begin{equation}\label{eq:acsa-telescoping-identity}
    \begin{aligned}
     \tilde{\alpha}_{T+1} &= \tilde{\alpha}_T + \gamma\cdot (\alpha - \mathrm{err}_{T+1}) = \tilde{\alpha}_{T-1} + \gamma\cdot (\alpha - \mathrm{err}_{T+1}) + \gamma\cdot (\alpha - \mathrm{err}_{T})=\cdots \\
     &=\alpha + \gamma \cdot\sum_{t=2}^{T+1} (\alpha-\mathrm{err}_t) \in [-\gamma, 1+\gamma].
    \end{aligned}
    \end{equation}
    Rearranging yields
    $\left|\dfrac{1}{T}\sum_{t=2}^{T+1} \mathrm{err}_t-\alpha\right| \leq \dfrac{\gamma + \max\{\alpha, 1-\alpha\}}{T\gamma}.$ Since $\mathrm{err}_{t+1}=1-\bbone\{G_{t+1}\in\Ct_t(\bz_{t+1})\}$, this is equivalent to the stated coverage bound.
\end{proof}

\subsection{Proofs for Section \ref{sec:ksa-theory}}
\label{subapx:proof_of_individual_cali}

To simplify the notation, define $d:=d_{\bW}, K(\mathbf v):=\widetilde\kappa(\|\mathbf v\|)$, $
\overline K:=\sup_{\mathbf v\in\mathbb R^d}K(\mathbf v).$  For $\bw\in\mathcal D$ and $y\in\mathbb R$, define $A_{t,h}(y,\bw):=
\frac1t\sum_{s=1}^t
\kappa_h(\bW_s,\bw)\bbone\{Y_s\leq y\}$, $A_h(y,\bw):=
\mathbb E\!\left[
\kappa_h(\bW_1,\bw)\bbone\{Y_1\leq y\}
\right]$, $B_{t,h}(\bw):=
\frac1t\sum_{s=1}^t\kappa_h(\bW_s,\bw)$, $B_h(\bw):=
\mathbb E[\kappa_h(\bW_1,\bw)]$, and $F_h(y\mid\bw):=
\frac{A_h(y,\bw)}{B_h(\bw)}.$ Whenever $B_{t,h}(\bw)>0$, we have $\widehat F_t(y\mid\bw)
=
\frac{A_{t,h}(y,\bw)}{B_{t,h}(\bw)}.$ For $B_{t,h}(\bw)>0$ and $u\in(0,1)$, define the $u$-quantile of $\widehat F_t(y\mid\bw)$ as
\begin{equation}
\widehat Q_t(u;\bw)
:=
\inf\{y\in\mathbb R:\widehat F_t(y\mid\bw)\geq u\}.
\label{eq:ksa-proof-quantile-convention}
\end{equation}
And if $B_{t,h}(\bw)=0$, set
$\widehat F_t(y\mid\bw)=\bbone\{y\geq0\}$ and
$\widehat Q_t(u;\bw)=0$.

The primary goal of this section is to show that $\widehat F_t(y\mid\bw)$ accurately estimates $F_{Y\mid\bW}(y\mid\bw)$ for all  $\bw\in\mathcal D$ and $y\in\mathbb R$. The following decomposition provides the basic structure for the proofs.
\begin{equation}
\sup_{\bw\in\mathcal D}\sup_{y\in\mathbb R}
\left|
\widehat F_t(y\mid\bw)-F_{Y\mid\bW}(y\mid\bw)
\right|\leq
\sup_{\bw\in\mathcal D}\sup_{y\in\mathbb R}
\left|
\widehat F_t(y\mid\bw)-F_h(y\mid\bw)
\right|
+
\sup_{\bw\in\mathcal D}\sup_{y\in\mathbb R}
\left|
F_h(y\mid\bw)-F_{Y\mid\bW}(y\mid\bw)
\right|.
\label{eq:ksa-error-decomposition}
\end{equation}
The first term on the right-hand side is controlled by an empirical-process argument, and the second by a direct kernel-bias calculation. A key difficulty is the dependence among observations. The $\beta$-mixing condition allows sufficiently separated blocks of observations to be treated as ``almost'' independent for concentration purposes. The following lemma formalizes this comparison using a \textit{blocking argument} based on the idea of \citet{yu1994rates}.

\begin{lem}
\label{lem:ksa-separated-comparison}
Let $\{X_s\}_{s\geq1}$ be a $\beta$-mixing process taking values in a
standard Borel space.  For $1\leq j\leq m$, let
$V_j=(X_{a_j},\ldots,X_{b_j})$ for $a_j\leq b_j<a_{j+1},$ and let $V_1^*,\ldots,V_m^*$ be independent, with $V_j^*$ distributed as
$V_j$.  If $\psi$ is measurable and takes values in $[0,M]$, then
$$
\left|
\mathbb E\psi(V_1,\ldots,V_m)
-\mathbb E\psi(V_1^*,\ldots,V_m^*)
\right|
\leq
M\sum_{j=2}^m\beta(a_j-b_{j-1}).
$$
In particular, if $a_j-b_{j-1}\geq q$ for every $j\geq2$, the right-hand
side is at most $M(m-1)\beta(q)$.
\end{lem}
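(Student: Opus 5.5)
We will prove Lemma~\ref{lem:ksa-separated-comparison} by a telescoping (hybrid) argument, replacing the blocks by independent copies one at a time. Concretely, for $k=1,\ldots,m$ we introduce the hybrid vector
\[
Z^{(k)}:=(V_1,\ldots,V_k,V_{k+1}^*,\ldots,V_m^*),
\]
where the starred blocks are independent of each other and of the original process. Then $Z^{(m)}=(V_1,\ldots,V_m)$. Moreover, $Z^{(1)}$ has the same law as $(V_1^*,\ldots,V_m^*)$, since $V_1\stackrel{d}{=}V_1^*$ and $V_1$ is independent of the starred blocks. Telescoping gives
\[
\mathbb E\psi(V_1,\ldots,V_m)-\mathbb E\psi(V^*_1,\ldots,V^*_m)=\sum_{k=2}^m\bigl(\mathbb E\psi(Z^{(k)})-\mathbb E\psi(Z^{(k-1)})\bigr).
\]
It therefore suffices to show that each summand is at most $M\beta(a_k-b_{k-1})$ in absolute value.

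To bound the $k$th summand, we condition on the starred tail $(V_{k+1}^*,\ldots,V_m^*)$, which is independent of everything else, and define
\[
\phi(x,v):=\mathbb E\,\psi(x,v,V_{k+1}^*,\ldots,V_m^*),
\]
where $x$ ranges over the values of $(V_1,\ldots,V_{k-1})$ and $v$ over those of the $k$th block. By Fubini $\phi$ is measurable, and $0\le\phi\le M$. The $k$th summand equals
\[
\mathbb E\phi(U,V_k)-\mathbb E\phi(U,V_k^*),\qquad U:=(V_1,\ldots,V_{k-1}),
\]
with $V_k^*$ independent of $U$. So the summand compares the joint law of $(U,V_k)$ with the product of its marginals. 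Here $U$ is measurable with respect to $\sigma(X_s:s\le b_{k-1})$, and $V_k$ is measurable with respect to $\sigma(X_s:s\ge a_k)$.

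The key step is the standard characterization of $\beta$-mixing as a total variation distance. For $\mathcal A=\sigma(X_s:s\le b)$ and $\mathcal B=\sigma(X_s:s\ge b+q)$,
\[
\mathbb E\sup_{B\in\mathcal B}|\mathbb P(B\mid\mathcal A)-\mathbb P(B)|=\bigl\|\mathbb P_{\mathcal A\vee\mathcal B}-\mathbb P_{\mathcal A}\otimes\mathbb P_{\mathcal B}\bigr\|_{TV},
\]
up to the convention on the total variation normalization. We verify this using the standard Borel assumption, which guarantees regular conditional distributions: the joint law of $(U,V_k)$ disintegrates as $\mathbb P_U(du)\,\mathbb P(V_k\in dv\mid U=u)$. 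Hence
\[
|\mathbb E\phi(U,V_k)-\mathbb E\phi(U,V_k^*)|\le\int\Bigl|\int\phi(u,v)\bigl(\mathbb P(dv\mid u)-\mathbb P_{V_k}(dv)\bigr)\Bigr|\,\mathbb P_U(du).
\]
Since $\phi(u,\cdot)\in[0,M]$, the inner integral is at most $M\sup_{B}|\mathbb P(V_k\in B\mid U=u)-\mathbb P(V_k\in B)|$; because $\phi\ge0$, one gets $M$ times the one-sided total variation rather than $2M$. The conditional probability given $U$ is dominated by the one given the whole past $\sigma(X_s:s\le b_{k-1})$, via Jensen's inequality and the tower property applied to the supremum over $B\in\sigma(V_k)\subseteq\sigma(X_s:s\ge a_k)$. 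This yields the bound $M\beta(a_k-b_{k-1})$ directly from Definition~\ref{def:beta-mixing-revised}.

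We expect the main obstacle to be the measure-theoretic care in this last step. The supremum inside the expectation must be handled measurably; we take it over a countable generating algebra, or use the total variation representation. The passage from conditioning on $U$ to conditioning on the full past must go in the right direction, namely $\mathbb E\sup_B|\mathbb E[\mathbb P(B\mid\mathcal A)-\mathbb P(B)\mid U]|\le\mathbb E\sup_B|\mathbb P(B\mid\mathcal A)-\mathbb P(B)|$. Once this is done, summing over $k=2,\ldots,m$ gives the stated bound. The special case follows because $\beta(\cdot)$ is nonincreasing, so $a_j-b_{j-1}\ge q$ implies each term is at most $M\beta(q)$.
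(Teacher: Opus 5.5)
Your proposal is correct and is essentially the paper's argument: the same hybrid sequence $(V_1,\ldots,V_k,V_{k+1}^*,\ldots,V_m^*)$ telescoped over $k$, with each step bounded by using the tower property to pass from conditioning on $(V_1,\ldots,V_{k-1})$ to conditioning on the full past $\sigma(X_s:s\le b_{k-1})$, and then applying the definition of $\beta(a_k-b_{k-1})$. The differences are cosmetic: you integrate out the starred tail explicitly through $\phi$ and derive the one-sided bound from a disintegration, whereas the paper quotes the total-variation identity for standard Borel spaces and uses the fact that tensoring with a common law does not increase total variation.
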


\begin{proof}
For probability measures $\mu$ and $\nu$, use the convention
$\|\mu-\nu\|_{\mathrm{TV}}
=
\sup_{0\leq g\leq1}
\left|
\int g\td\mu-\int g\td\nu
\right|.$ For $j\geq2$, set
$\mathcal G_{j-1}=\sigma(\{V_1,\ldots,V_{j-1}\})$, $\mathcal P_{j-1}=\sigma(\{X_s:s\leq b_{j-1}\})$ and $\mathcal A_j=\sigma(V_j).
$ Then
$\mathcal G_{j-1}\subseteq\mathcal P_{j-1}$ and $\mathcal A_j\subseteq\sigma(\{X_s:s\geq a_j\}).$ For $B\in\mathcal A_j$, the tower property gives
$\mathbb P(B\mid\mathcal G_{j-1})-\mathbb P(B)
=
\mathbb E\left[
\mathbb P(B\mid\mathcal P_{j-1})-\mathbb P(B)
\mid\mathcal G_{j-1}
\right].$ Consequently,
$$
\mathbb E\left[
\sup_{B\in\mathcal A_j}
\left|
\mathbb P(B\mid\mathcal G_{j-1})-\mathbb P(B)
\right|
\right]\leq
\mathbb E\left[
\sup_{B\in\sigma(\{X_s:s\geq a_j\})}
\left|
\mathbb P(B\mid\mathcal P_{j-1})-\mathbb P(B)
\right|
\right]
\leq
\beta(a_j-b_{j-1}).
$$
Because the state space is standard Borel, the left-hand side equals the
total-variation distance between the joint law of
$(V_1,\ldots,V_j)$ and the \textit{product of the law} (denoted by $\otimes$) of
$(V_1,\ldots,V_{j-1})$ and the marginal law of $V_j$.  Thus
$\left\|
\mathcal L(V_1,\ldots,V_j)
-\mathcal L(V_1,\ldots,V_{j-1})\otimes\mathcal L(V_j)
\right\|_{\mathrm{TV}}
\leq
\beta(a_j-b_{j-1}).$ For $1\leq j\leq m$, define the hybrid law
$\nu_j
=
\mathcal L(V_1,\ldots,V_j)
\otimes
\bigotimes_{\ell=j+1}^m\mathcal L(V_\ell).$ Then $\nu_1$ is the product law of
$(V_1^*,\ldots,V_m^*)$, whereas $\nu_m$ is the joint law of
$(V_1,\ldots,V_m)$. Tensoring two probability measures with a common
probability measure does not increase their total variation distance.
Therefore,
$\|\nu_j-\nu_{j-1}\|_{\mathrm{TV}}
\leq
\beta(a_j-b_{j-1})$ for $2\leq j\leq m.$ Applying the triangle inequality to the sequence
$\nu_1,\ldots,\nu_m$ yields
$$
\|\nu_m-\nu_1\|_{\mathrm{TV}}
\leq
\sum_{j=2}^m
\|\nu_j-\nu_{j-1}\|_{\mathrm{TV}}
\leq
\sum_{j=2}^m\beta(a_j-b_{j-1}).
$$
Since $0\leq\psi/M\leq1$, the definition of total variation gives
$$
\begin{aligned}
\left|
\mathbb E\psi(V_1,\ldots,V_m)
-
\mathbb E\psi(V_1^*,\ldots,V_m^*)
\right|& =
\left|
\int\psi\,\td\nu_m-\int\psi\,\td\nu_1
\right|=
M\left|
\int\frac{\psi}{M}\,\td\nu_m
-
\int\frac{\psi}{M}\,\td\nu_1
\right|\\
&\leq
M\|\nu_m-\nu_1\|_{\mathrm{TV}}\leq
M\sum_{j=2}^m\beta(a_j-b_{j-1}).
\end{aligned}
$$
Finally, if $a_j-b_{j-1}\geq q$ for every $j\geq2$, the monotonicity of
the mixing coefficients implies
$\sum_{j=2}^m\beta(a_j-b_{j-1})
\leq
(m-1)\beta(q),$ which proves the final assertion.
\end{proof}

Lemma~\ref{lem:ksa-separated-comparison} provides the bridge
from an empirical-process bound for independent observations to the dependent sample. For any statistic bounded by $M$,
replacing $m$ dependent blocks by independent copies with the same marginal
distributions changes its expectation by at most
$M\sum_{j=2}^m\beta(a_j-b_{j-1}).$
Thus, the effect of temporal dependence is represented by an explicit
additive error that decreases as the separation between the blocks increases.

We now establish a uniform concentration bound between
$A_h(y,\bw), B_h(\bw)$ and their empirical estimates
$A_{t,h}(y,\bw), B_{t,h}(\bw)$. The proof first derives a maximal
inequality for independent samples and then transfers it to the original
dependent sample using Lemma \ref{lem:ksa-separated-comparison}. The independent-sample argument is an empirical-process bound over a class of kernel superlevel sets and threshold events. To state this argument
compactly, we introduce the VC and covering-number notation that will
be used below.

We use standard notation from empirical process theory. For a class
$\mathcal F$ of measurable functions and a probability measure $P$, let
$N(\epsilon,\mathcal F,L_2(P))$ denote the $\epsilon$-covering number of
$\mathcal F$ under the $L_2(P)$ semimetric. We write
$\operatorname{VCdim}(\mathcal A)$ for the VC dimension of a class
$\mathcal A$ of measurable sets. Standard background on covering numbers
and VC classes may be found, for example, in
\citet{VanDerVaartWellner1996KSAAppendix} and
\citet{mohri2018foundations}. Let $\mathcal B_d$ denote the class consisting of the empty set and all
open or closed Euclidean balls in $\mathbb R^d$, and define
$\mathcal E_d
=
\left\{
B\times(-\infty,y]:
B\in\mathcal B_d,\;
y\in\mathbb R\cup\{\infty\}
\right\}.$ The proof leverages the standard facts that
$\operatorname{VCdim}(\mathcal B_d)=d+1$ and that, for any class
$\mathcal A$ with $\operatorname{VCdim}(\mathcal A)\leq v$ and any
probability measure $P$,
\begin{equation}\label{eq:EC-lemma_eq}
N\!\left(
\epsilon,
\{\bbone_A:A\in\mathcal A\},
L_2(P)
\right)
\leq
e(v+1)
\left(\frac{2e}{\epsilon^2}\right)^v,
\qquad 0<\epsilon<1.
\end{equation}
The latter can be easily derived from \citet{Haussler1995KSAAppendix}. We are now ready to state the lemma.

\begin{lem}
\label{lem:ksa-fixed-h-concentration}
Suppose parts~\ref{itm:separate_stationary_assump} and
\ref{itm:kernel_regularity} of
Assumption~\ref{assump:individual_cali_assump} hold. For any
$\delta\in(0,1)$, define
$D_d
:=
6\sqrt{
2\left\{
\log\bigl(18e^2(d+3)^2\bigr)
+2(d+2)\log(32e)
\right\}
}$, $T_\rho
:=
\max\left\{
3,
\left\lceil
\left(1+\frac{2}{\log(1/\rho)}\right)^2
\right\rceil
\right\}$, and $R_\delta
:=
\frac{\overline K}{\delta}
\left[
D_d\sqrt{1+\frac{2}{\log(1/\rho)}}
+\beta_0
\right].$ Then, for every fixed $h>0$ and every integer $t\geq T_\rho$,
\begin{equation}
\mathbb P\left\{
\sup_{\bw\in\mathcal D}\sup_{y\in\mathbb R}
|A_{t,h}(y,\bw)-A_h(y,\bw)|
\vee
\sup_{\bw\in\mathcal D}
|B_{t,h}(\bw)-B_h(\bw)|
\leq
R_\delta\sqrt{\frac{\log t}{t}}
\right\}
\geq1-\delta.
\label{eq:ksa-concentration-probability}
\end{equation}
\end{lem}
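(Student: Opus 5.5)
The plan is to prove a maximal inequality in expectation for the centered sup-statistic and then apply Markov's inequality; this explains the factor $1/\delta$ in $R_\delta$. Define
\[
Z_t:=\sup_{\bw\in\mathcal D}\sup_{y\in\mathbb R}|A_{t,h}(y,\bw)-A_h(y,\bw)|\vee\sup_{\bw\in\mathcal D}|B_{t,h}(\bw)-B_h(\bw)|.
\]
Since $0\le\kappa_h\le\overline K$, we have $Z_t\in[0,\overline K]$, so the goal is a bound of the form
\[
\mathbb E Z_t\le \overline K\Bigl[D_d\sqrt{1+2/\log(1/\rho)}+\beta_0\Bigr]\sqrt{\log t/t}.
\]

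\textbf{Step 1: reduce the kernel class to indicators of the sets in $\mathcal E_d$.} Because $\widetilde\kappa$ is nonincreasing, the layer-cake formula gives
\[
\kappa_h(\bW_s,\bw)\bbone\{Y_s\le y\}=\int_0^{\overline K}\bbone\{\bW_s\in B_{\bw,r,h},\,Y_s\le y\}\,\td r,
\]
where $B_{\bw,r,h}=\{\bv:\widetilde\kappa(\|\bv-\bw\|/h)>r\}$ is an open or closed ball, or empty. Hence each centered average is an $\overline K$-weighted average over $r$ of centered empirical measures of sets in $\mathcal E_d$. The $B$ terms correspond to $y=\infty$. It follows that $Z_t\le \overline K\sup_{E\in\mathcal E_d}|P_t(E)-P(E)|$, and this bound holds uniformly in $h$ and $\mathcal D$; this is why the constant does not depend on $h$.

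$\mathcal E_d$ is the product of a class of VC dimension $d+1$ with half-lines of VC dimension $1$. Its indicator class therefore has covering numbers bounded as in \eqref{eq:EC-lemma_eq}, with $v=d+2$ up to the product constant. That product constant is the source of the terms $(d+3)^2$ and $2(d+2)\log(32e)$ in $D_d$.

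\textbf{Step 2: blocking.} Partition $\{1,\dots,t\}$ into alternating blocks of length $q\asymp \log t/\log(1/\rho)$, which gives $m\asymp t/q$ blocks of each parity, plus a remainder. The empirical sup splits into an odd-block part, an even-block part, and a small remainder of order $q/t$. For the odd blocks, apply Lemma~\ref{lem:ksa-separated-comparison} to the bounded function $\psi=$ (normalized sup over $\mathcal E_d$) with $M=1$; this gives an error of at most $m\beta_0\rho^{q}$. Choose $q=\lceil(1+2/\log(1/\rho))\log t\rceil$-ish, so that $m\beta_0\rho^q\le\beta_0\, t\cdot t^{-1-\cdots}\le \beta_0\sqrt{\log t/t}$. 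The condition $t\ge T_\rho$ guarantees that $q\le t/2$ and $\log t\ge1$.

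\textbf{Step 3: the independent-block maximal inequality.} For independent blocks, each block average of indicators is a $[0,1]$-valued function indexed by $\mathcal E_d$, and the block-function class inherits the $L_2$ covering bound. Apply symmetrization and Dudley's entropy integral (or a finite-cover argument with Hoeffding's inequality and a union bound at scale $1/\sqrt m$). This gives an expected sup of order $\sqrt{(d+2)\log(Cm)/m}$. Substituting $m\approx t/q$ yields $\sqrt{q\log t/t}\lesssim D_d\sqrt{1+2/\log(1/\rho)}\sqrt{\log t/t}/2$ for each parity. Summing the two parities, the blocking error, and the remainder, and then applying Markov's inequality, gives \eqref{eq:ksa-concentration-probability}.

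I expect the main obstacle to be Step 3 with explicit constants. Controlling the VC and covering numbers of the product class and block averages cleanly enough to land exactly on $D_d$ is the delicate part. A second delicate point is verifying that the sup over the uncountable index set is measurable, which I would handle by restricting to rational centers and radii using the right-continuity in $y$ and the open/closed ball structure.
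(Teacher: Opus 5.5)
Your architecture matches the paper's proof. Both kernel-weighted processes are reduced by layer-cake to the indicator class $\mathcal E_d$, with a bound that is uniform in $h$. An independent-sample maximal inequality is obtained via symmetrization, transferred to the dependent sample through Lemma~\ref{lem:ksa-separated-comparison} with separation $q\asymp\log t/\log(1/\rho)$, and finished with Markov's inequality.

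The gap is in the rate bookkeeping of your Step~3. The bound you record for independent blocks, of order $\sqrt{(d+2)\log(Cm)/m}$, is what a single-scale cover with a union bound gives. With $m\approx t/q$ and $q\asymp\log t$ (a choice forced by the mixing term $\tfrac{t}{q}\beta(q)$), it yields $\sqrt{q\log t/t}\asymp\log t/\sqrt t$. That exceeds the target $\sqrt{\log t/t}$ by a factor $\sqrt{\log t}$. Your inequality $\sqrt{q\log t/t}\lesssim\sqrt{\log t/t}$ is therefore false, and the single-scale option cannot prove the lemma. You need the log-free bound $\mathbb E\sup_{E\in\mathcal E_d}|P_m(E)-P(E)|\le D_d/\sqrt m$, which requires genuine multi-scale chaining. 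The paper builds $2^{-j}$-nets for every $j$, controls each link with the finite-class sub-Gaussian bound, and uses $\sum_j2^{-j}\sqrt{\log(18N_j^2)}<\infty$. That summation, not a product-class constant, is where $(d+3)^2$ and $2(d+2)\log(32e)$ in $D_d$ come from. Only then does the single factor $\log t$ enter, through $q$.

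Even with chaining, your choices do not reproduce the explicit $R_\delta$. Three things inflate the constant:
\begin{itemize}
\item Alternating blocks of length $q$ give two parities of about $t/(2q)$ blocks each. The leading term is then $\overline K D_d\sqrt{2q/t}$ rather than $\overline K D_d\sqrt{q/t}$, plus a remainder of order $q/t$ that has no place in $R_\delta$.
\item Covering $\mathcal E_d$ as an intersection of two classes, $N(\epsilon,\mathcal E_d)\le N(\epsilon/2,\mathcal B_d)\,N(\epsilon/2,\text{half-lines})$, gives worse prefactors than \eqref{eq:EC-lemma_eq} at $v=d+2$.
\item Your choice $q=\lceil(1+2/\log(1/\rho))\log t\rceil$ can exceed $(1+2/\log(1/\rho))\log t$ by one.
\end{itemize}

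The paper avoids these losses in two ways. First, it proves $\operatorname{VCdim}(\mathcal E_d)\le d+2$ directly: any set whose trace contains the point with the largest $y$-coordinate has its threshold above every $y_i$, so shattering reduces to balls shattering the remaining $n-1$ feature vectors. Second, it splits $\{1,\dots,t\}$ into the $q$ residue classes mod $q$ and applies the comparison lemma to singleton blocks. Cauchy--Schwarz, $\sum_\ell\sqrt{n_\ell}\le\sqrt{qt}$, then gives $\mathbb E Z_t\le\overline K D_d\sqrt{q/t}+\overline K\beta(q)\,t/q$ with no remainder term. The choice $q_t=\lceil 2\log t/\log(1/\rho)\rceil$ gives both $\beta(q_t)\le\beta_0t^{-2}$ and $q_t\le(1+2/\log(1/\rho))\log t$, which is exactly the constant in $R_\delta$.
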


\begin{proof}
The two quantities in
\eqref{eq:ksa-concentration-probability} are kernel-weighted empirical
processes indexed by the kernel center $\bw$ and, for the numerator, by
the threshold $y$. The proof proceeds in four steps. First, we identify
a VC class that contains the indicator functions associated with the
kernel superlevel sets. Second, we use its covering numbers to derive a
maximal inequality for the corresponding indicator process under
independent sampling. Third, the layer-cake representation transfers
this bound to the kernel-weighted processes. Finally, we partition the
dependent observations into well-separated subsequences and apply
Lemma~\ref{lem:ksa-separated-comparison} to transfer the
independent-sample bound to the original dependent sample.

\medskip
\noindent\textbf{Step 1:}
Let $\mathcal B_d$ denote the class consisting of the empty set and all open or closed Euclidean balls in $\mathbb R^d$, and define
$\mathcal E_d
=
\left\{
B\times(-\infty,y]:
B\in\mathcal B_d,\quad
y\in\mathbb R\cup\{\infty\}
\right\}.
$ The relevance of this class to the kernel representation follows from radial monotonicity. Specifically, for every $\bw\in\mathcal D$ and $s\in(0,\overline K)$, the superlevel set
$\{\bw':\kappa_h(\bw',\bw)>s\}$ is either empty or an open or closed Euclidean ball. Hence, for every $y\in\mathbb R\cup\{\infty\}$,
$\{(\bw',y'):\kappa_h(\bw',\bw)>s,\ y'\leq y\}
\in\mathcal E_d.$ Setting $y=\infty$ also yields the indicators required for the denominator. Therefore, a uniform empirical-process bound over $\mathcal E_d$ simultaneously controls the kernel-weighted averages appearing in both the numerator and the denominator.

We next show that $\operatorname{VCdim}(\mathcal E_d)\leq d+2.$ Since
$\operatorname{VCdim}(\mathcal B_d)=d+1$, it suffices to show
that imposing the additional lower-half-line restriction can increase the
VC dimension by at most one. Suppose that the points
$x_i=(\bw_i,y_i)$, $1\leq i\leq n$, are shattered by $\mathcal E_d$,
and choose $i^\star$ such that
$y_{i^\star}=\max_{1\leq i\leq n}y_i.$
First, the feature vectors $\bw_1,\ldots,\bw_n$ must be distinct. Indeed,
if $\bw_i=\bw_j$ and $y_i\leq y_j$, then any set of the form
$B\times(-\infty,c]$ that contains $x_j$ must also contain $x_i$,
contradicting the assumption that the points are shattered. Next, fix any
$I\subseteq\{1,\ldots,n\}\setminus\{i^\star\}$. Since the points are
shattered by $\mathcal E_d$, there exist $B_I\in\mathcal B_d$ and
$c_I\in\mathbb R\cup\{\infty\}$ such that the trace of
$B_I\times(-\infty,c_I]$ on $\{x_1,\ldots,x_n\}$ is
$\{x_{i^\star}\}\cup\{x_i:i\in I\}.$ Because this trace contains $x_{i^\star}$, we must have
$c_I\geq y_{i^\star}$. By the choice of $i^\star$, this implies
$c_I\geq y_i$ for every $i$. Hence the lower-half-line restriction no
longer excludes any of the points, and we must have
$B_I\cap\{\bw_i:i\neq i^\star\}
=
\{\bw_i:i\in I\}.$ Since this holds for every
$I\subseteq\{1,\ldots,n\}\setminus\{i^\star\}$, the class
$\mathcal B_d$ shatters the $n-1$ feature vectors
$\{\bw_i:i\neq i^\star\}$. Therefore
$n-1\leq\operatorname{VCdim}(\mathcal B_d)=d+1,$ and hence $n\leq d+2$, which proves the claim. Applying \eqref{eq:EC-lemma_eq} with $v=d+2$ gives
\begin{equation}
N\left(
\eta,
\{\bbone_E:E\in\mathcal E_d\},
L_2(P_0)
\right)
\leq
e(d+3)
\left(\frac{2e}{\eta^2}\right)^{d+2}
\label{eq:ksa-explicit-covering}
\end{equation}
for every
probability measure $P_0$ on $\mathbb R^d\times\mathbb R$ and every
$0<\eta<1$,

\medskip
\noindent\textbf{Step 2:}
Let $P$ denote the common marginal law of $(\bW_s,Y_s)$, which does not
depend on $s$ by stationarity. Fix an integer $m\geq1$, let
$X_i^*=(\bW_i^*,Y_i^*)$, $1\leq i\leq m$ be independent observations
with law $P$, and write
$P_m=\frac1m\sum_{i=1}^m\delta_{X_i^*}$
for their empirical measure. Let $\xi_1,\ldots,\xi_m$ be Rademacher
variables independent of the observations, and let $\mathbb E_\xi$ denote
conditional expectation with respect to these variables given
$X_1^*,\ldots,X_m^*$.

The goal of this step is to establish the independent-sample maximal
inequality that holds over $(\bw, y)\in \mathcal{D}\times \mathbb{R}$. In view of the indicator
representation identified in the previous step, we first prove
\begin{equation}\label{eq:desired_step_2}
\mathbb E\left[
\sup_{E\in\mathcal E_d}
\left|
\frac1m\sum_{i=1}^m
\left\{
\bbone\{X_i^*\in E\}
-\mathbb P(X_1^*\in E)
\right\}
\right|
\right]
\leq
\frac{D_d}{\sqrt m}.
\end{equation}
We obtain this bound by first controlling the associated Rademacher
process through chaining and then applying symmetrization. Start with a finite-class increment bound that will be applied at each
level of the chaining argument. Fix $\lambda>0$ and take
$f,g\in\{\pm\bbone_E:E\in\mathcal E_d\}\cup\{0\}.$ For every $u\in\mathbb R$,
$\mathbb E_\xi\bigl[e^{u\xi_i}\bigr]
=
\frac{e^u+e^{-u}}{2}
=:
\cosh(u).$ Since
$\cosh(u)\leq e^{u^2/2}$, we obtain
$$
\begin{aligned}
\mathbb E_\xi
\exp\left\{
\frac{\lambda}{m}\sum_{i=1}^m
\xi_i\bigl(f(X_i^*)-g(X_i^*)\bigr)
\right\} &=
\prod_{i=1}^m
\cosh\left(
\frac{\lambda\{f(X_i^*)-g(X_i^*)\}}{m}
\right)  \\                                &\leq
\exp\left\{
\frac{\lambda^2}{2m^2}
\sum_{i=1}^m
\{f(X_i^*)-g(X_i^*)\}^2
\right\} =
\exp\left\{
\frac{\lambda^2\|f-g\|_{P_m,2}^2}{2m}
\right\}.
\end{aligned}
$$

Now consider pairs $(f_\ell,g_\ell)$, $1\leq\ell\leq J$, from the
signed-indicator class, that satisfy
$\|f_\ell-g_\ell\|_{P_m,2}\leq r_0.$
Applying the preceding bound to each of the $J$ pairs, summing the resulting inequalities, and then using Jensen's inequality gives:
$$
\begin{aligned}
\mathbb E_\xi
\max_{1\leq\ell\leq J}
\left|
\frac1m\sum_{i=1}^m
\xi_i\{f_\ell(X_i^*)-g_\ell(X_i^*)\}
\right|  &\leq \frac{1}{\lambda}\log \mathbb E_\xi
\exp\left\{
\lambda\max_{1\leq\ell\leq J}
\left|
\frac1m\sum_{i=1}^m
\xi_i\{f_\ell(X_i^*)-g_\ell(X_i^*)\}
\right|
\right\}\\
&\leq
\frac{\log(2J)}{\lambda}
+
\frac{\lambda r_0^2}{2m}.
\end{aligned}
$$
The right-hand side is minimized by
$\lambda=\sqrt{2m\log(2J)}/r_0$, which gives
$$
\mathbb E_\xi
\max_{1\leq\ell\leq J}
\left|
\frac1m\sum_{i=1}^m
\xi_i\{f_\ell(X_i^*)-g_\ell(X_i^*)\}
\right|
\leq
\frac{r_0}{\sqrt m}\sqrt{2\log(2J)}.
$$
We now extend the finite-class increment bound to the full indicator
class by chaining through successively finer covers. Conditional on
$X_1^*,\ldots,X_m^*$, write
$Z_m(f)
=
\frac1m\sum_{i=1}^m\xi_i f(X_i^*)$ and let
$\mathcal F_d
=
\{\pm\bbone_E:E\in\mathcal E_d\}\cup\{0\}.$ The target supremum can be written as
$\sup_{E\in\mathcal E_d}
\left|
Z_m(\bbone_E)
\right|
=
\sup_{f\in\mathcal F_d}
|Z_m(f)|.$ For every integer $j\geq1$, apply
\eqref{eq:ksa-explicit-covering} with $P_0=P_m$ and
$\eta=2^{-j}$. The positive-indicator class therefore admits a
$2^{-j}$-net with at most
$N_j
=
e(d+3)\left(2e\,4^j\right)^{d+2}
$ elements. Negating the elements of this net gives a net of the same
size for the negative indicators. After also including the zero
function, we obtain a $2^{-j}$-net $\mathcal F_j$ for
$\mathcal F_d$ satisfying
$|\mathcal F_j|
\leq
2N_j+1
\leq
3N_j.$ Set $\mathcal F_0=\{0\}$. For every $f\in\mathcal F_d$, choose
$f_j\in\mathcal F_j$ such that
$\|f-f_j\|_{P_m,2}\leq2^{-j}$,
and set $f_0=0$. Since $\|f-f_0\|_{P_m,2}\leq1$, the triangle
inequality gives, for every $j\geq1$,
$$
\|f_j-f_{j-1}\|_{P_m,2}
\leq
\|f_j-f\|_{P_m,2}
+
\|f-f_{j-1}\|_{P_m,2}\leq
2^{-j}+2^{-(j-1)}
=
3\cdot2^{-j}.
$$
At level $j$, let
$\mathcal P_j
=
\{(f_j,f_{j-1}):f\in\mathcal F_d\}$ denote the collection of successive approximation pairs that can
arise. For $j\geq2$,
$|\mathcal P_j|
\leq
|\mathcal F_j|\,|\mathcal F_{j-1}|
\leq
9N_jN_{j-1}
\leq
9N_j^2.$ For $j=1$, since $\mathcal F_0=\{0\}$,
$|\mathcal P_1|
\leq
|\mathcal F_1|
\leq
3N_1
\leq
9N_1^2.$ Thus, $|\mathcal P_j|\leq9N_j^2$ for every $j\geq1$.
By the Cauchy--Schwarz inequality,
$$
\begin{aligned}
|Z_m(f)-Z_m(f_j)|
&=
\left|
\frac1m\sum_{i=1}^m
\xi_i\{f(X_i^*)-f_j(X_i^*)\}
\right| \\
&\leq
\left(\frac1m\sum_{i=1}^m\xi_i^2\right)^{1/2}
\|f-f_j\|_{P_m,2} \,\,\leq
2^{-j}
\longrightarrow0.
\end{aligned}
$$
It follows that
$Z_m(f)
=
\sum_{j=1}^\infty
\{Z_m(f_j)-Z_m(f_{j-1})\}.$ Taking absolute values and then the supremum over $f\in\mathcal F_d$
gives the key chaining inequality
$$
\sup_{f\in\mathcal F_d}|Z_m(f)|
\leq
\sum_{j=1}^\infty
\max_{(u,v)\in\mathcal P_j}
|Z_m(u)-Z_m(v)|.
$$
Every pair in $\mathcal P_j$ has $L_2(P_m)$ distance at most
$3\cdot2^{-j}$, and $|\mathcal P_j|\leq9N_j^2$. The preceding
finite-class inequality therefore gives
$$
\mathbb E_\xi
\max_{(u,v)\in\mathcal P_j}
|Z_m(u)-Z_m(v)|\leq
\frac{3\cdot2^{-j}}{\sqrt m}
\sqrt{2\log\bigl(2|\mathcal P_j|\bigr)} \leq
\frac{3\sqrt2}{\sqrt m}
2^{-j}\sqrt{\log(18N_j^2)}.
$$
Summing this bound over $j$ yields
$$
\mathbb E_\xi
\sup_{E\in\mathcal E_d}
\left|
\frac1m\sum_{i=1}^m
\xi_i\bbone\{X_i^*\in E\}
\right|
\leq
\frac{3\sqrt2}{\sqrt m}
\sum_{j=1}^\infty
2^{-j}\sqrt{\log(18N_j^2)}.
$$
From the definition of $N_j$,
$\log(18N_j^2)
=
\log\bigl(18e^2(d+3)^2\bigr)
+
2(d+2)\log(2e)
+
4j(d+2)\log2.$ Since
$\sum_{j=1}^\infty2^{-j}=1$ and
$\sum_{j=1}^\infty j2^{-j}=2$,
the concavity of the square-root function gives
$$
\sum_{j=1}^\infty
2^{-j}\sqrt{\log(18N_j^2)}
\leq
\sqrt{
\sum_{j=1}^\infty
2^{-j}\log(18N_j^2)
}=
\sqrt{
\log\bigl(18e^2(d+3)^2\bigr)
+
2(d+2)\log(32e)
}.
$$
Recall the definition of $D_d$, thus
\begin{equation}\label{eq:sup_ineq_with_rademacher}
\mathbb E_\xi
\sup_{E\in\mathcal E_d}
\left|
\frac1m\sum_{i=1}^m
\xi_i\bbone\{X_i^*\in E\}
\right|
\leq
\frac{D_d}{2\sqrt m}.
\end{equation}
We
now convert inequality \eqref{eq:sup_ineq_with_rademacher} into the required centered empirical-process bound by
symmetrization. Let $X_1',\ldots,X_m'$ be an independent copy of
$X_1^*,\ldots,X_m^*$. For $E\in\mathcal E_d$, write
$P(E)=\mathbb P(X_1^*\in E).$ Conditional Jensen's inequality replaces $P(E)$ by the empirical average
over the copied sample. The exchangeability of each pair
$(X_i^*,X_i')$ then permits the introduction of the Rademacher signs.
Finally, the triangle inequality and the identical distributions of the
two samples give
$$
\begin{aligned}
\mathbb E\sup_{E\in\mathcal E_d}
\left|
\frac1m\sum_{i=1}^m
\{\bbone_E(X_i^*)-P(E)\}
\right|&\leq
\mathbb E\sup_{E\in\mathcal E_d}
\left|
\frac1m\sum_{i=1}^m
\{\bbone_E(X_i^*)-\bbone_E(X_i')\}
\right|=
\mathbb E\sup_{E\in\mathcal E_d}
\left|
\frac1m\sum_{i=1}^m
\xi_i\{\bbone_E(X_i^*)-\bbone_E(X_i')\}
\right|\\
&\leq
2\mathbb E\sup_{E\in\mathcal E_d}
\left|
\frac1m\sum_{i=1}^m
\xi_i\bbone_E(X_i^*)
\right|\leq
\frac{D_d}{\sqrt m}.
\end{aligned}
$$
This proves the desired inequality \eqref{eq:desired_step_2}.

\medskip
\noindent\textbf{Step 3:} We transfer the bound \eqref{eq:desired_step_2} from indicators to kernel-weighted
averages. For every $(\bw',y')\in\mathbb R^d\times\mathbb R$, the
``layer-cake'' identities give
$\kappa_h(\bw',\bw)=
\int_0^{\overline K}
\bbone\{\kappa_h(\bw',\bw)>s\}\td s$ and $
\kappa_h(\bw',\bw)\bbone\{y'\leq y\}=
\int_0^{\overline K}
\bbone\{\kappa_h(\bw',\bw)>s,\ y'\leq y\}\td s.$ Since every indicator appearing under these integrals belongs to
$\mathcal E_d$, and
$
\left|
\int_0^{\overline K}g(s)\td s
\right|
\leq
\int_0^{\overline K}|g(s)|\td s,$
we may apply the indicator bound \eqref{eq:desired_step_2} at each level $s$ and then
integrate over $(0,\overline K)$. This gives
\begin{equation}
\begin{aligned}
\mathbb E\Bigg[&
\sup_{\bw\in\mathcal D}\sup_{y\in\mathbb R}
\left|
\frac1m\sum_{i=1}^m
\left\{
\kappa_h(\bW_i^*,\bw)\bbone\{Y_i^*\leq y\}
-\mathbb E\bigl[
\kappa_h(\bW_1,\bw)\bbone\{Y_1\leq y\}
\bigr]
\right\}
\right|                                                     \\
&\quad\vee
\sup_{\bw\in\mathcal D}
\left|
\frac1m\sum_{i=1}^m
\left\{
\kappa_h(\bW_i^*,\bw)
-\mathbb E\kappa_h(\bW_1,\bw)
\right\}
\right|
\Bigg]
\leq
\frac{\overline K D_d}{\sqrt m}.
\end{aligned}
\label{eq:ksa-iid-maximal}
\end{equation}

\medskip
\noindent\textbf{Step 4:}
The final step is to transfer the independent-sample bound
in \eqref{eq:ksa-iid-maximal} to the original $\beta$-mixing observations
and thereby prove \eqref{eq:ksa-concentration-probability}. Let
$\Delta_t$ denote the maximum of the two empirical deviations appearing
in the lemma:
$$
\begin{aligned}
\Delta_t
=\;&
\sup_{\bw\in\mathcal D}\sup_{y\in\mathbb R}
\left|
\frac1t\sum_{s=1}^t
\left\{
\kappa_h(\bW_s,\bw)\bbone\{Y_s\leq y\}
-\mathbb E\bigl[
\kappa_h(\bW_1,\bw)\bbone\{Y_1\leq y\}
\bigr]
\right\}
\right|                                                     \\
&\vee
\sup_{\bw\in\mathcal D}
\left|
\frac1t\sum_{s=1}^t
\left\{
\kappa_h(\bW_s,\bw)
-\mathbb E\kappa_h(\bW_1,\bw)
\right\}
\right| \\
&=\sup_{\bw\in\mathcal D}\sup_{y\in\mathbb R}
|A_{t,h}(y,\bw)-A_h(y,\bw)|
\vee
\sup_{\bw\in\mathcal D}
|B_{t,h}(\bw)-B_h(\bw)|.
\end{aligned}
$$
It therefore suffices to show that
$\mathbb E\Delta_t
\leq
\delta R_\delta\sqrt{\frac{\log t}{t}}$
because the required probability bound will then follow immediately
from Markov's inequality. The bound in \eqref{eq:ksa-iid-maximal} cannot be applied directly,
because the observations are dependent.

We first derive a comparison
bound for an arbitrary separation length. Fix any integer
$q\in\{1,\ldots,t\}$ and partition $\{1,\ldots,t\}$ into the residue
classes
$$
I_\ell
=
\{\ell,\ell+q,\ell+2q,\ldots\}\cap\{1,\ldots,t\},
\qquad
n_\ell=|I_\ell| = \left\lfloor\dfrac{t - \ell}{q}\right\rfloor + 1,
\qquad
1\leq\ell\leq q.
$$
These sets form a partition of $\{1,\ldots,t\}$, and consecutive
observations within each $I_\ell$ are separated by $q$ time points.
For each $\ell\in\{1,\ldots,q\}$, define the corresponding
subsequence deviation by
$$
\begin{aligned}
\Delta_\ell
:=\;&
\sup_{\bw\in\mathcal D}\sup_{y\in\mathbb R}
\left|
\frac1{n_\ell}\sum_{s\in I_\ell}
\left\{
\kappa_h(\bW_s,\bw)\bbone\{Y_s\leq y\}
-
\mathbb E\left[
\kappa_h(\bW_1,\bw)\bbone\{Y_1\leq y\}
\right]
\right\}
\right|                                                     \\
&\vee
\sup_{\bw\in\mathcal D}
\left|
\frac1{n_\ell}\sum_{s\in I_\ell}
\left\{
\kappa_h(\bW_s,\bw)
-
\mathbb E\kappa_h(\bW_1,\bw)
\right\}
\right|.
\end{aligned}
$$
Because the sets $I_1,\ldots,I_q$ partition $\{1,\ldots,t\}$ and
$\sum_{\ell=1}^q n_\ell=t$, each centered full-sample average is the
weighted sum of its corresponding subsequence averages, with weights
$n_\ell/t$. Applying the triangle inequality and then taking the
corresponding suprema shows that each of the two terms defining
$\Delta_t$ is bounded by
$\sum_{\ell=1}^q(n_\ell/t)\Delta_\ell$. Therefore,
$\Delta_t
\leq
\sum_{\ell=1}^q\frac{n_\ell}{t}\Delta_\ell.$

We next compare each separated subsequence with an independent sample.
For a fixed $\ell$, let
$(\bW_{\ell,1}^*,Y_{\ell,1}^*),\ldots,
(\bW_{\ell,n_\ell}^*,Y_{\ell,n_\ell}^*)$ be independent observations with the common marginal law $P$, and let
$\Delta_\ell^*$ be obtained from $\Delta_\ell$ by replacing the
observations indexed by $I_\ell$ with these independent copies. Applying
\eqref{eq:ksa-iid-maximal} with $m=n_\ell$ gives
$\mathbb E\Delta_\ell^*
\leq
\frac{\overline K D_d}{\sqrt{n_\ell}}.$ Both $\Delta_\ell$ and $\Delta_\ell^*$ take values in
$[0,\overline K]$, since they are maxima of absolute differences between
averages of variables in $[0,\overline K]$ and their expectations.
Moreover, consecutive observations in $I_\ell$ are separated by $q$.
Lemma~\ref{lem:ksa-separated-comparison}, applied to the corresponding
singleton blocks, therefore yields
$$
\mathbb E\Delta_\ell\leq
\mathbb E\Delta_\ell^*
+\overline K(n_\ell-1)\beta(q)\leq
\frac{\overline K D_d}{\sqrt{n_\ell}}
+\overline K(n_\ell-1)\beta(q).
$$
We now aggregate these bounds over the residue classes. Since
$\sum_{\ell=1}^q n_\ell=t$, Cauchy--Schwarz gives
$\sum_{\ell=1}^q\sqrt{n_\ell}
\leq
\sqrt{q\sum_{\ell=1}^q n_\ell}
=
\sqrt{qt}.$ Also, because the indices in each residue class are separated by $q$,
$n_\ell-1\leq\frac{t}{q}.$ Consequently,
$$
\mathbb E\Delta_t
\leq
\sum_{\ell=1}^q\frac{n_\ell}{t}\mathbb E\Delta_\ell \leq
\frac{\overline K D_d}{t}
\sum_{\ell=1}^q\sqrt{n_\ell}
+
\frac{\overline K\beta(q)}{t}
\sum_{\ell=1}^q n_\ell(n_\ell-1)\leq
\overline K D_d\sqrt{\frac{q}{t}}
+
\overline K\beta(q)\frac{t}{q}.
$$
This inequality displays the role of the separation length $q$. Increasing
$q$ reduces the mixing term through $\beta(q)$, but also decreases the
effective size of each subsequence and therefore enlarges the
independent-sample term. Under geometric mixing, a logarithmic separation
is sufficient to control these two effects.

We now make this choice precise. Set
$q_t
=
\left\lceil
\frac{2\log t}{\log(1/\rho)}
\right\rceil.$ Before substituting $q=q_t$, we verify that this is an admissible
separation length. Since $t
\geq
\left(1+\frac{2}{\log(1/\rho)}\right)^2,$ we have $1\leq\log t\leq\sqrt t$ and
$$
q_t\leq
\frac{2\log t}{\log(1/\rho)}+1           \leq
\left(1+\frac{2}{\log(1/\rho)}\right)\log t              \leq
\left(1+\frac{2}{\log(1/\rho)}\right)\sqrt t
\leq t.
$$
Thus $q_t\in\{1,\ldots,t\}$. Moreover, geometric mixing gives
$\beta(q_t)
\leq
\beta_0\rho^{q_t}
\leq
\beta_0t^{-2}.
$ Substituting $q=q_t$ into the preceding general bound therefore yields
$$
\begin{aligned}
\mathbb E\Delta_t
&\leq
\overline K D_d\sqrt{\frac{q_t}{t}}
+
\overline K\beta(q_t)\frac{t}{q_t}\leq
\overline K D_d
\sqrt{
\left(1+\frac{2}{\log(1/\rho)}\right)
\frac{\log t}{t}
}
+
\frac{\overline K\beta_0}{q_t t}                           \\
&\leq
\overline K
\left[
D_d\sqrt{1+\frac{2}{\log(1/\rho)}}
+\beta_0
\right]
\sqrt{\frac{\log t}{t}}=
\delta R_\delta\sqrt{\frac{\log t}{t}}.
\end{aligned}
$$
The final inequality uses $q_t\geq1$ and
$\frac1t\leq\sqrt{\frac{\log t}{t}}$ for $t\geq 3$. Finally, Markov's inequality gives
$\mathbb P\left\{
\Delta_t>
R_\delta\sqrt{\frac{\log t}{t}}
\right\}
\leq
\frac{\mathbb E\Delta_t}
{R_\delta\sqrt{\log t/t}}
\leq
\delta.
$ Since $\Delta_t$ is exactly the maximum of the two deviations in
\eqref{eq:ksa-concentration-probability}, this proves the result.

\end{proof}

We are now ready to give a more explicit restatement of Theorem \ref{thm:individual_cali_thm} and provide its proof. First we introduce the new notations needed. Let
$\mu_0
:=
\int_{\mathbb R^d}K(\mathbf v)\td\mathbf v$, $\mu_2
:=
\int_{\mathbb R^d}\|\mathbf v\|^2K(\mathbf v)\td\mathbf v$, and $v_d
:=
\frac{\pi^{d/2}}{\Gamma(1+d/2)}.$ Assumption~\ref{assump:individual_cali_assump}\ref{itm:kernel_regularity} gives $0<\mu_0<\infty$ and
$0<\mu_2<\infty$. Also write
$\underline\pi
:=
\inf_{\bw\in\mathcal D^+}\pi_{\bW}(\bw)$ and $\overline\pi
:=
\sup_{\bw\in\mathcal D^+}\pi_{\bW}(\bw).$

\begin{thm}[Restatement of Theorem \ref{thm:individual_cali_thm}]\label{apxthm:explicit_statement_individual}
Under Assumption~\ref{assump:individual_cali_assump}, for any $h>0$ and $\delta\in(0,1)$, define
$C_{h,\delta}
:=
\frac{16R_\delta}{\underline\pi\mu_0h^d}$ and $T_{h,\delta}:=
\left\lceil
\max\left\{
T_\rho,
\left(
\frac{4R_\delta}{\underline\pi\mu_0h^d}
\right)^4
\right\}
\right\rceil.$ Then, for every
$\epsilon>0$, there exists $h_\epsilon>0$ such that, for every fixed
$h\in(0,h_\epsilon]$, $\delta\in(0,1)$, and
$t\geq T_{h,\delta}$,
$$
\mathbb P\left(
\sup_{\bz:\,\bW_{t+1}(\bz)\in\mathcal D}
\left|
\mathbb P\left(
G_{t+1}\notin\Ct_t(\bz)
\mid\mathcal F_t,\bz_{t+1}=\bz
\right)-\alpha
\right|\leq
\epsilon
+C_{h,\delta}\sqrt{\frac{\log t}{t}}
\right)
\geq1-\delta.
$$
The outer probability is over the historical data that generate
$\mathcal F_t$.
\end{thm}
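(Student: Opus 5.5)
The proof converts uniform CDF accuracy into coverage accuracy, using the decomposition \eqref{eq:ksa-error-decomposition} together with Lemma~\ref{lem:ksa-fixed-h-concentration}.

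\textbf{Step 1: reduce coverage to CDF error.} By predictive sufficiency (Assumption~\ref{assump:individual_cali_assump}\ref{itm:independence_n_homogeneity}) and the quantile-shift identity \eqref{eq:Y_and_U} with fixed $\varphi$, for a target $\bw=\bW_{t+1}(\bz)\in\mathcal D$ the conditional miscoverage equals
\[
1-F_{Y\mid\bW}\bigl(\widehat Q_t(1-\alpha/2;\bw)\mid\bw\bigr)+F_{Y\mid\bW}\bigl(\widehat Q_t(\alpha/2;\bw)-\mid\bw\bigr).
\]
Here $\widehat Q_t$ is $\mathcal F_t$-measurable, hence fixed under the conditioning. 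By continuity (part~\ref{itm:cond_dist_cont}) the left limit equals the value. Write $\Delta:=\sup_{\bw\in\mathcal D,y}|\widehat F_t(y\mid\bw)-F_{Y\mid\bW}(y\mid\bw)|$. Because $\widehat F_t$ is a weighted empirical CDF, $\widehat F_t(\widehat Q_t(u))\ge u$ and $\widehat F_t(\widehat Q_t(u)-)\le u$. Hence $|F_{Y\mid\bW}(\widehat Q_t(u)\mid\bw)-u|\le\Delta$, and the miscoverage deviates from $\alpha$ by at most $2\Delta$. It therefore suffices to show $\Delta\le\epsilon/2+\tfrac12C_{h,\delta}\sqrt{\log t/t}$ with probability at least $1-\delta$.

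\textbf{Step 2: bias term.} Write $F_h(y\mid\bw)-F_{Y\mid\bW}(y\mid\bw)$ as
\[
\frac{\int\kappa_h(\bw',\bw)\pi_{\bW}(\bw')\{F_{Y\mid\bW}(y\mid\bw')-F_{Y\mid\bW}(y\mid\bw)\}d\bw'}{\int\kappa_h\pi_{\bW}}.
\]
Split the integral at $\|\bw'-\bw\|\le r$, with $r\le h_0$ so that this ball lies in $\mathcal D^+$.
\begin{itemize}
\item Near part: bounded by the modulus of continuity $\omega(r)$ from part~\ref{itm:cond_dist_cont}.
\item Far part: bounded by $2\int_{\|\bv\|>r/h}K(\bv)d\bv\,\cdot\,h^d\,\cdot\,(\text{sup density})$ divided by the denominator.
\end{itemize}
A caveat on the far part: the density is only bounded on $\mathcal D^+$. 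I would handle this by the crude bound $\kappa_h\le\widetilde\kappa(r/h)$ combined with $\int\pi_{\bW}\le1$ outside the ball. That gives a far-part numerator of at most $2\widetilde\kappa(r/h)$. A finite second moment plus monotonicity forces $\widetilde\kappa(s)=o(s^{-d-2})$, so this term is negligible relative to the denominator $\gtrsim\underline\pi\mu_0h^d/2$.

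For the denominator, once $h$ is small, $B_h(\bw)\ge\underline\pi h^d\int_{\|\bv\|\le h_0/h}K\ge\underline\pi\mu_0h^d/2$.

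Choose $r$ with $\omega(r)\le\epsilon/4$, then $h_\epsilon$ small so the far part is at most $\epsilon/4$. The bias is then at most $\epsilon/2$, uniformly over $y$ and $\bw\in\mathcal D$.

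\textbf{Step 3: variance term.} On the event of Lemma~\ref{lem:ksa-fixed-h-concentration}, with $\eta:=R_\delta\sqrt{\log t/t}$, use
\[
|A_{t,h}/B_{t,h}-A_h/B_h|\le\frac{|A_{t,h}-A_h|+F_h|B_{t,h}-B_h|}{B_{t,h}}\le\frac{2\eta}{B_h-\eta}.
\]
The condition $t\ge T_{h,\delta}$ gives $\eta\le\underline\pi\mu_0h^d/4$; this uses $\sqrt{\log t/t}\le t^{-1/4}$. Hence $B_h-\eta\ge\underline\pi\mu_0h^d/4$, and in particular $B_{t,h}>0$, so the convention case of \eqref{eq:ksa-proof-quantile-convention} never arises. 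The ratio is at most $8\eta/(\underline\pi\mu_0h^d)=\tfrac12C_{h,\delta}\sqrt{\log t/t}$. Combining with Steps 1 and 2 yields the restated theorem, which implies Theorem~\ref{thm:individual_cali_thm}.

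\textbf{Expected obstacle.} The main obstacle is Step 2's uniform lower bound on $B_h$ and the far-tail control, since the density assumptions live only on $\mathcal D^+$ while the kernel may have unbounded support. A further subtlety is justifying that conditioning on $\mathcal F_t$ freezes $\widehat Q_t$ while leaving the conditional law of $Y_{t+1}$ equal to $F_{Y\mid\bW}(\cdot\mid\bw)$; this is exactly \eqref{eq:predictive-sufficiency}.
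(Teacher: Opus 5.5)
Your proposal is correct and follows essentially the same three-step argument as the paper: the kernel-bias bound uses the denominator lower bound $B_h(\bw)\geq\underline\pi\mu_0h^d/2$, the ratio error is controlled on the concentration event of Lemma~\ref{lem:ksa-fixed-h-concentration} exactly as in the paper (giving $\tfrac12 C_{h,\delta}\sqrt{\log t/t}$), and the uniform CDF error $\Delta$ becomes a $2\Delta$ coverage error via the generalized inverse and predictive sufficiency. The one small difference is in the bias step: you bound the entire region $\|\bw'-\bw\|>r$ by the pointwise tail $\widetilde\kappa(r/h)=O\bigl((h/r)^{d+2}\bigr)$, whereas the paper splits further at $h_0$ and uses $\overline\pi$ on the middle annulus, so your version incidentally does not need the upper density bound for this theorem.
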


\begin{proof}

The proof has three steps. First, we show that the population kernel CDF
$F_h$ uniformly approximates $F_{Y\mid\bW}$ and construct the bandwidth
threshold $h_\epsilon$ needed for this approximation. Second,
Lemma~\ref{lem:ksa-fixed-h-concentration} is used to control the empirical
kernel ratio around $F_h$. Finally, we convert the resulting uniform CDF
error into a coverage error at the two estimated quantiles.

\medskip
\noindent\textbf{Step 1:}
In this step we construct $h_\epsilon>0$ such that
$\sup_{\bw\in\mathcal D}\sup_{y\in\mathbb R}
\left|
F_h(y\mid\bw)-F_{Y\mid\bW}(y\mid\bw)
\right|
\leq\frac{\epsilon}{2}
$ whenever $0<h\leq h_\epsilon$. By stationarity and conditioning on $\bW_1$,
$A_h(y,\bw)=
\int
\kappa_h(\mathbf u,\bw)
F_{Y\mid\bW}(y\mid\mathbf u)
\pi_{\bW}(\mathbf u)\td\mathbf u$ and $B_h(\bw)=
\int
\kappa_h(\mathbf u,\bw)
\pi_{\bW}(\mathbf u)\td\mathbf u.$ Consequently,
\begin{equation}
F_h(y\mid\bw)-F_{Y\mid\bW}(y\mid\bw)=
\frac{1}{B_h(\bw)}
\int
\kappa_h(\mathbf u,\bw)
\left[
F_{Y\mid\bW}(y\mid\mathbf u)
-F_{Y\mid\bW}(y\mid\bw)
\right]
\pi_{\bW}(\mathbf u)\td\mathbf u.
\label{eq:ksa-population-bias-representation}
\end{equation}
Thus, we first need a uniform lower bound for the denominator
$B_h(\bw)$ and then an upper bound for the numerator in
\eqref{eq:ksa-population-bias-representation}. For the denominator, the ball
$\{\mathbf u:\|\mathbf u-\bw\|\leq h_0\}$ is contained in
$\mathcal D^+$ when $\bw\in\mathcal D$. Therefore,
$$
B_h(\bw)\geq
\underline\pi
\int_{\|\mathbf u-\bw\|\leq h_0}
\kappa_h(\mathbf u,\bw)\td\mathbf u=
\underline\pi h^d
\int_{\|\mathbf v\|\leq h_0/h}K(\mathbf v)\td\mathbf v =
\underline\pi h^d
\left[
\mu_0-
\int_{\|\mathbf v\|>h_0/h}K(\mathbf v)\td\mathbf v
\right].
$$
By the definition of $\mu_2$,
$$
\int_{\|\mathbf v\|>h_0/h}K(\mathbf v)\td\mathbf v
\leq
\frac{h^2}{h_0^2}
\int_{\|\mathbf v\|>h_0/h}
\|\mathbf v\|^2K(\mathbf v)\td\mathbf v
\leq
\frac{\mu_2h^2}{h_0^2}.
$$
It follows that, whenever
$0<h\leq
\min\left\{
h_0,
h_0\sqrt{\frac{\mu_0}{2\mu_2}}
\right\},$ we have
\begin{equation}
B_h(\bw)
\geq
\underline\pi h^d
\left(
\mu_0-\frac{\mu_2h^2}{h_0^2}
\right)
\geq
\frac{\underline\pi\mu_0}{2}h^d
\label{eq:ksa-denominator-lower}
\end{equation}
for every $\bw\in\mathcal D$. We now control the numerator in
\eqref{eq:ksa-population-bias-representation}. Fix $\epsilon>0$.
By Assumption~\ref{assump:individual_cali_assump}\ref{itm:cond_dist_cont}, we may choose $a\in(0,h_0)$ such that
$\sup_{\substack{\bw,\bw'\in\mathcal D^+\\
\|\bw-\bw'\|\leq a}}
\sup_{y\in\mathbb R}
\left|
F_{Y\mid\bW}(y\mid\bw)
-F_{Y\mid\bW}(y\mid\bw')
\right|
\leq\frac{\epsilon}{4}.$ To use this local continuity while accounting for observations farther
from $\bw$, split the integral in
\eqref{eq:ksa-population-bias-representation} into the three regions: (1) $
\|\mathbf u-\bw\|\leq a$, (2) $a<\|\mathbf u-\bw\|\leq h_0$, and (3)
$\|\mathbf u-\bw\|>h_0.$

On the first region, $\mathbf u$ and $\bw$ both lie in $\mathcal D^+$,
and the conditional CDFs differ by at most $\epsilon/4$. Hence, after
division by $B_h(\bw)$, this region contributes at most $\epsilon/4$; On the second region, both points still lie in $\mathcal D^+$, so
$\pi_{\bW}(\mathbf u)\leq\overline\pi$. Since the difference between two
CDF values is at most one, the corresponding part of the numerator is
bounded by
$$
\overline\pi
\int_{a<\|\mathbf u-\bw\|\leq h_0}
\kappa_h(\mathbf u,\bw)\td\mathbf u
\leq\overline\pi h^d
\int_{\|\mathbf v\|>a/h}K(\mathbf v)\td\mathbf v \leq
\frac{\overline\pi\mu_2}{a^2}h^{d+2},
$$
where the final inequality again follows from the definition of
$\mu_2$; And on the third region $\|\mathbf u-\bw\|>h_0$, the density
bound on $\mathcal D^+$ is no longer available there, so instead we use
the pointwise decay implied by radial monotonicity. For every $r>0$, by
$$
\frac{v_d(1-2^{-d})}{4}
r^{d+2}\widetilde\kappa(r)
\leq
\int_{r/2<\|\mathbf v\|\leq r}
\|\mathbf v\|^2K(\mathbf v)\td\mathbf v
\leq\mu_2,
$$
we have
\begin{equation}
\widetilde\kappa(r)
\leq
\frac{4\mu_2}
{v_d(1-2^{-d})r^{d+2}}.
\label{eq:ksa-kernel-point-tail}
\end{equation}
Because $\widetilde\kappa$ is nonincreasing and $\pi_{\bW}$ integrates
to one, the contribution of the third region to the numerator is at
most
$$
\int_{\|\mathbf u-\bw\|>h_0}
\kappa_h(\mathbf u,\bw)\pi_{\bW}(\mathbf u)\td\mathbf u
\leq
\widetilde\kappa(h_0/h)\leq
\frac{4\mu_2}
{v_d(1-2^{-d})h_0^{d+2}}h^{d+2}.
$$
Combining the bounds for the three regions and using
\eqref{eq:ksa-denominator-lower}, we obtain
$$
\sup_{\bw\in\mathcal D}\sup_{y\in\mathbb R}
\left|
F_h(y\mid\bw)-F_{Y\mid\bW}(y\mid\bw)
\right| \leq
\frac{\epsilon}{4}
+\frac{2}{\underline\pi\mu_0}
\left(
\frac{\overline\pi\mu_2}{a^2}
+\frac{4\mu_2}
{v_d(1-2^{-d})h_0^{d+2}}
\right)h^2,
$$
where the first term is the local approximation error, while the second
collects the kernel contributions from observations more than $a$ away
from $\bw$. Define
\begin{equation}\label{eq:def_h_epsilon}
h_\epsilon
:=
\min\left\{
h_0,\quad
h_0\sqrt{\frac{\mu_0}{2\mu_2}},\quad
\left[\frac{8}{\epsilon\,\underline\pi\mu_0}\left(
\frac{\overline\pi\mu_2}{a^2}
+\frac{4\mu_2}
{v_d(1-2^{-d})h_0^{d+2}}
\right)
\right]^{-1/2}
\right\}.
\end{equation}
This choice is strictly positive and ensures that, for every
$0<h\leq h_\epsilon$,
\begin{equation}
\sup_{\bw\in\mathcal D}\sup_{y\in\mathbb R}
\left|
F_h(y\mid\bw)-F_{Y\mid\bW}(y\mid\bw)
\right|
\leq\frac{\epsilon}{2}.
\label{eq:ksa-bias-epsilon-half}
\end{equation}

\medskip
\noindent\textbf{Step 2:} We
now control the difference between the empirical CDF $\widehat F_t$ and
its population counterpart $F_h$. Since
$\widehat F_t(y\mid\bw)
=
\frac{A_{t,h}(y,\bw)}{B_{t,h}(\bw)}$ and $F_h(y\mid\bw)
=
\frac{A_h(y,\bw)}{B_h(\bw)},$ the first task is to show that the empirical denominator
$B_{t,h}(\bw)$ remains uniformly away from zero. After that,
the concentration bounds for the empirical numerator and denominator can
be converted into a bound for the ratio. Fix $\delta\in(0,1)$ and a bandwidth satisfying
$0<h\leq
\min\left\{
h_0,
h_0\sqrt{\frac{\mu_0}{2\mu_2}}
\right\}$ (this condition is automatically satisfied
whenever $h\leq h_\epsilon$ from \eqref{eq:def_h_epsilon}). Also fix $t\geq T_{h,\delta}$. Since
$T_{h,\delta}\geq T_\rho$, Lemma~\ref{lem:ksa-fixed-h-concentration}
implies that, with probability at least $1-\delta$,
\begin{equation}
\sup_{\bw\in\mathcal D}\sup_{y\in\mathbb R}
|A_{t,h}(y,\bw)-A_h(y,\bw)|\vee
\sup_{\bw\in\mathcal D}
|B_{t,h}(\bw)-B_h(\bw)|
\leq
R_\delta\sqrt{\frac{\log t}{t}}.
\label{eq:ksa-proof-concentration-event}
\end{equation}
We work on this event for the remainder of the step. The second component in the definition of $T_{h,\delta}$ ensures that
the deviation in \eqref{eq:ksa-proof-concentration-event} is small
relative to the population denominator. Indeed,
$T_{h,\delta}\geq T_\rho\geq3$, so
$\log t\leq\sqrt t$, and therefore
$R_\delta\sqrt{\frac{\log t}{t}}
\leq
R_\delta t^{-1/4}.$ Moreover, the inequality
$t
\geq
\left(
\frac{4R_\delta}{\underline\pi\mu_0h^d}
\right)^4$ implies $
R_\delta t^{-1/4}
\leq
\frac{\underline\pi\mu_0}{4}h^d.$ Combining this bound with
\eqref{eq:ksa-denominator-lower} and
\eqref{eq:ksa-proof-concentration-event} gives
\begin{equation}\label{eq:positive_denominator_step_2}
\begin{aligned}
\inf_{\bw\in\mathcal D}B_{t,h}(\bw)
&\geq
\inf_{\bw\in\mathcal D}B_h(\bw)
-
\sup_{\bw\in\mathcal D}
|B_{t,h}(\bw)-B_h(\bw)|                                      \\
&\geq
\frac{\underline\pi\mu_0}{2}h^d
-
\frac{\underline\pi\mu_0}{4}h^d
=
\frac{\underline\pi\mu_0}{4}h^d
>0.
\end{aligned}
\end{equation}
We can now control the ratio error. For every $\bw\in\mathcal D$ and
$y\in\mathbb R$,
$$
\widehat F_t(y\mid\bw)-F_h(y\mid\bw)=
\frac{A_{t,h}(y,\bw)-A_h(y,\bw)}{B_{t,h}(\bw)}+
F_h(y\mid\bw)
\frac{B_h(\bw)-B_{t,h}(\bw)}{B_{t,h}(\bw)}.
$$
Because $0\leq F_h(y\mid\bw)\leq1$, the two concentration bounds in
\eqref{eq:ksa-proof-concentration-event} and the lower bound for
$B_{t,h}$ give
\begin{equation}
\sup_{\bw\in\mathcal D}\sup_{y\in\mathbb R}
|\widehat F_t(y\mid\bw)-F_h(y\mid\bw)|\leq
\frac{
2R_\delta\sqrt{\log t/t}
}{(\underline\pi\mu_0/4)h^d}=
\frac{8R_\delta}{\underline\pi\mu_0h^d}
\sqrt{\frac{\log t}{t}}.
\label{eq:ksa-ratio-conclusion}
\end{equation}
Finally, for $h\leq h_\epsilon$,
\eqref{eq:ksa-error-decomposition},
\eqref{eq:ksa-bias-epsilon-half}, and
\eqref{eq:ksa-ratio-conclusion} imply
\begin{equation}
\sup_{\bw\in\mathcal D}\sup_{y\in\mathbb R}
|\widehat F_t(y\mid\bw)-F_{Y\mid\bW}(y\mid\bw)| \leq
\frac{\epsilon}{2}
+
\frac{8R_\delta}{\underline\pi\mu_0h^d}
\sqrt{\frac{\log t}{t}} =
\frac{1}{2}
\left(
\epsilon+C_{h,\delta}\sqrt{\frac{\log t}{t}}
\right).
\label{eq:ksa-proof-total-CDF-error}
\end{equation}
The final equality follows from the definition of $C_{h,\delta}$.

\medskip
\noindent\textbf{Step 3:}
The last step is to translate the uniform CDF bound \eqref{eq:ksa-proof-total-CDF-error} into a bound
on conditional coverage. This result follows from two facts: first, uniform
closeness of $\widehat F_t$ and $F_{Y\mid\bW}$ implies that an empirical
quantile has approximately the correct probability under
$F_{Y\mid\bW}$; second, predictive sufficiency of $\bW_{t+1}(\bz)$ on $Y_{t+1}$ identifies these
probabilities with the conditional tail probabilities of $Y_{t+1}$ under
each admissible scenario.

Fix $u\in(0,1)$ and
$\bw\in\mathcal D$, and write
$q=\widehat Q_t(u;\bw).$ By the generalized-inverse convention in
\eqref{eq:ksa-proof-quantile-convention},
$\widehat F_t(q\mid\bw)\geq u$ and $\widehat F_t(x\mid\bw)<u$ for every $x<q.$ The first inequality gives
$$
\begin{aligned}
F_{Y\mid\bW}(q\mid\bw)
&\geq
\widehat F_t(q\mid\bw)
-
\sup_{\substack{\bw'\in\mathcal D\\y\in\mathbb R}}
|\widehat F_t(y\mid\bw')-F_{Y\mid\bW}(y\mid\bw')|            \\
&\geq
u-
\sup_{\substack{\bw'\in\mathcal D\\y\in\mathbb R}}
|\widehat F_t(y\mid\bw')-F_{Y\mid\bW}(y\mid\bw')|.
\end{aligned}
$$
Similarly, for every $x<q$,
$F_{Y\mid\bW}(x\mid\bw)
\leq
u+
\sup_{\substack{\bw'\in\mathcal D\\y\in\mathbb R}}
|\widehat F_t(y\mid\bw')-F_{Y\mid\bW}(y\mid\bw')|.$ Letting $x\uparrow q$ and using the continuity in
Assumption~\ref{assump:individual_cali_assump}\ref{itm:cond_dist_cont} yields
\begin{equation}
\sup_{\bw\in\mathcal D}
\left|
F_{Y\mid\bW}(\widehat Q_t(u;\bw)\mid\bw)-u
\right|
\leq
\sup_{\bw\in\mathcal D}\sup_{y\in\mathbb R}
|\widehat F_t(y\mid\bw)-F_{Y\mid\bW}(y\mid\bw)|.
\label{eq:ksa-quantile-probability-error}
\end{equation}
We next apply this result to the two endpoints of the KSA interval. Fix
an admissible scenario $\bz$, write
$\bw=\bW_{t+1}(\bz)$, $q_-=\widehat Q_t(\alpha/2;\bw)$, and $q_+=\widehat Q_t(1-\alpha/2;\bw).$ The positivity of the empirical denominator established in \eqref{eq:positive_denominator_step_2}
ensures that these quantiles are well defined. They are
$\mathcal F_t$-measurable, and monotonicity of the generalized inverse
gives $q_-\leq q_+$. By Assumption~\ref{assump:individual_cali_assump}
\ref{itm:independence_n_homogeneity}, the predictive-sufficiency
identity holds almost surely, simultaneously for all thresholds
and admissible scenarios. Since $q_-$ and $q_+$ are
$\mathcal F_t$-measurable, we may therefore evaluate this identity
at these two thresholds.
$$
\mathbb P\left(
Y_{t+1}<q_-
\mid\mathcal F_t,\bz_{t+1}=\bz
\right)
=
F_{Y\mid\bW}(q_-\mid\bw), \quad \mathbb P\left(
Y_{t+1}\leq q_+
\mid\mathcal F_t,\bz_{t+1}=\bz
\right)
=
F_{Y\mid\bW}(q_+\mid\bw).
$$
Under the same conditional law,
$Y_{t+1}=G_{t+1}-\varphi(\bw)$. Since the endpoints of
$\Ct_t(\bz)$ are $\varphi(\bw)+q_-$ and
$\varphi(\bw)+q_+$, respectively,
$$
G_{t+1}\notin\Ct_t(\bz)
\quad\Longleftrightarrow\quad
Y_{t+1}<q_-
\quad\text{or}\quad
Y_{t+1}>q_+.
$$
Consequently,
$\mathbb P\left(
G_{t+1}\notin\Ct_t(\bz)
\mid\mathcal F_t,\bz_{t+1}=\bz
\right)=
F_{Y\mid\bW}(q_-\mid\bw)
+1-F_{Y\mid\bW}(q_+\mid\bw).$ Subtracting
$\alpha=\alpha/2+\{1-(1-\alpha/2)\}$ and applying the triangle
inequality gives
$$
\left|
\mathbb P\left(
G_{t+1}\notin\Ct_t(\bz)
\mid\mathcal F_t,\bz_{t+1}=\bz
\right)-\alpha
\right|\leq
\left|
F_{Y\mid\bW}(q_-\mid\bw)-\frac{\alpha}{2}
\right|
+
\left|
F_{Y\mid\bW}(q_+\mid\bw)-\left(1-\frac{\alpha}{2}\right)
\right|.
$$
Applying \eqref{eq:ksa-quantile-probability-error} at the two quantile
levels and taking the supremum over all admissible scenarios yields
\begin{equation}
\sup_{\bz:\,\bW_{t+1}(\bz)\in\mathcal D}
\left|
\mathbb P\left(
G_{t+1}\notin\Ct_t(\bz)
\mid\mathcal F_t,\bz_{t+1}=\bz
\right)-\alpha
\right|\leq
2\sup_{\bw\in\mathcal D}\sup_{y\in\mathbb R}
|\widehat F_t(y\mid\bw)-F_{Y\mid\bW}(y\mid\bw)|.
\label{eq:ksa-coverage-from-CDF}
\end{equation}
The factor $2$ appears because the interval has two endpoints, each of
which contributes at most the uniform CDF error. Finally, combining (\ref{eq:ksa-proof-total-CDF-error}) and
(\ref{eq:ksa-coverage-from-CDF}) gives, on the event from
Lemma~\ref{lem:ksa-fixed-h-concentration},
$$
\sup_{\bz:\,\bW_{t+1}(\bz)\in\mathcal D}
\left|
\mathbb P\left(
G_{t+1}\notin\Ct_t(\bz)
\mid\mathcal F_t,\bz_{t+1}=\bz
\right)-\alpha
\right|
\leq
\epsilon
+C_{h,\delta}\sqrt{\frac{\log t}{t}}.
$$
This event has probability at least $1-\delta$, which proves the
probability-quantified restatement. Since $\delta\in(0,1)$ is arbitrary
and $C_{h,\delta}<\infty$ for every fixed $h>0$, the same result yields
the $O_p(\sqrt{\log t/t})$ conclusion in
\eqref{eq:ksa-conditional-coverage}.
\end{proof}

The next result makes the two bandwidth orders in
Corollary~\ref{cor:ksa-smoothness} explicit.

\begin{cor}[Restatement of Corollary~\ref{cor:ksa-smoothness}]
\label{apxcor:explicit_ksa_smoothness}
Suppose Assumption~\ref{assump:individual_cali_assump} holds. For
$h>0$ and $\delta\in(0,1)$, define $C_{h,\delta}$ and $T_{h,\delta}$ as in Theorem \ref{apxthm:explicit_statement_individual}. Further define
$\mu_1
:=
\int_{\mathbb R^d}\|\mathbf v\|K(\mathbf v)\td\mathbf v$, $\bar h
:=
\min\left\{
h_0,
h_0\sqrt{\frac{\mu_0}{2\mu_2}}
\right\}$, and $R_0
:=
\frac{4\mu_2}
{v_d(1-2^{-d})h_0^{d+2}}.$ Then the following statements hold.
\begin{enumerate}[label=(\roman*)]
\item\label{apxitm:lipschitz} Suppose that, for some $L<\infty$,
$$
|F_{Y\mid\bW}(y\mid\bw)-F_{Y\mid\bW}(y\mid\bw')|
\leq L\|\bw-\bw'\|,
\qquad
\bw,\bw'\in\mathcal D^+,
\quad y\in\mathbb R.
$$
Let $
D_L
:=
\frac{4}{\underline\pi\mu_0}
\left(
\overline\pi L\mu_1+R_0\bar h
\right).$ For every $h\in(0,\bar h]$, $\delta\in(0,1)$, and $t\geq T_{h,\delta}$,
$$
\mathbb P\left(
\sup_{\bz:\,\bW_{t+1}(\bz)\in\mathcal D}
\left|
\mathbb P\left(
G_{t+1}\notin\Ct_t(\bz)
\mid\mathcal F_t,\bz_{t+1}=\bz
\right)-\alpha
\right|
\leq
D_Lh+C_{h,\delta}\sqrt{\frac{\log t}{t}}
\right)
\geq1-\delta.
$$

\item\label{apxitm:twice_diff} Suppose $\pi_{\bW}$ is twice continuously differentiable and, for
every $y\in\mathbb R$,
$\pi_{\bW}(\cdot)F_{Y\mid\bW}(y\mid\cdot)$ is twice continuously
differentiable on an open neighborhood of $\mathcal D^+$. Assume
$M_\pi
:=
\sup_{\bw\in\mathcal D^+}
\|\nabla_{\bw}^2\pi_{\bW}(\bw)\|_{\mathrm{op}}$ and $M_{\pi F}
:=
\sup_{\substack{\bw\in\mathcal D^+\\y\in\mathbb R}}
\left\|
\nabla_{\bw}^2
\left[
\pi_{\bW}(\bw)F_{Y\mid\bW}(y\mid\bw)
\right]
\right\|_{\mathrm{op}}$ are both finite, where $\|\cdot\|_{\mathrm{op}}$ is the operator norm induced by the
Euclidean norm. Define
$D_2
:=
\frac{2(M_{\pi F}+M_\pi)\mu_2+8R_0}
{\underline\pi\mu_0}.$ For every $h\in(0,\bar h]$, $\delta\in(0,1)$, and
$t\geq T_{h,\delta}$,
$$
\mathbb P\left(
\sup_{\bz:\,\bW_{t+1}(\bz)\in\mathcal D}
\left|
\mathbb P\left(
G_{t+1}\notin\Ct_t(\bz)
\mid\mathcal F_t,\bz_{t+1}=\bz
\right)-\alpha
\right|
\leq
D_2h^2+C_{h,\delta}\sqrt{\frac{\log t}{t}}
\right)
\geq1-\delta.
$$
\end{enumerate}
In both statements, the outer probability is over the historical data
that generate $\mathcal F_t$.
\end{cor}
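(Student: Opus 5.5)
We reuse the proof architecture of Theorem~\ref{apxthm:explicit_statement_individual} and change only Step~1, the deterministic bias bound. Steps~2 and~3 of that proof are unchanged. Fix $h\in(0,\bar h]$, $\delta\in(0,1)$ and $t\ge T_{h,\delta}$. Since $h\le\bar h$, the denominator bound \eqref{eq:ksa-denominator-lower} gives $B_h(\bw)\ge\underline\pi\mu_0h^d/2$. Lemma~\ref{lem:ksa-fixed-h-concentration} then yields, with probability at least $1-\delta$, the ratio bound \eqref{eq:ksa-ratio-conclusion}. By \eqref{eq:ksa-error-decomposition} and \eqref{eq:ksa-coverage-from-CDF}, the coverage error is at most $2b_h+C_{h,\delta}\sqrt{\log t/t}$, where $b_h:=\sup_{\bw\in\mathcal D,y}|F_h(y\mid\bw)-F_{Y\mid\bW}(y\mid\bw)|$. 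So it suffices to show $2b_h\le D_Lh$ in case (i) and $2b_h\le D_2h^2$ in case (ii).

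For (i), start from the representation \eqref{eq:ksa-population-bias-representation}. Split the region of integration into $\|\mathbf u-\bw\|\le h_0$, where $\mathbf u\in\mathcal D^+$, and its complement. On the near region, the Lipschitz bound and $\pi_{\bW}\le\overline\pi$ bound the numerator by
$$\overline\pi L\int\kappa_h(\mathbf u,\bw)\|\mathbf u-\bw\|\td\mathbf u=\overline\pi L\mu_1h^{d+1}.$$
On the far region, use the pointwise tail bound \eqref{eq:ksa-kernel-point-tail} and $\int\pi_{\bW}=1$ to get at most $\widetilde\kappa(h_0/h)\le R_0h^{d+2}\le R_0\bar h\,h^{d+1}$. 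Dividing by $B_h$ gives $b_h\le\frac{2}{\underline\pi\mu_0}(\overline\pi L\mu_1+R_0\bar h)h$, so $2b_h\le D_Lh$. We also need $\mu_1<\infty$. This follows because $\|\mathbf v\|\le1+\|\mathbf v\|^2$, so $\mu_1\le\mu_0+\mu_2$.

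For (ii), write $F_h-F=(A_h-F\,B_h)/B_h$ and expand the numerator and denominator separately. Set $g_y(\mathbf u):=\pi_{\bW}(\mathbf u)F_{Y\mid\bW}(y\mid\mathbf u)$. Then
$$A_h(y,\bw)=h^d\int K(\mathbf v)g_y(\bw+h\mathbf v)\td\mathbf v\quad\text{and}\quad B_h(\bw)=h^d\int K(\mathbf v)\pi_{\bW}(\bw+h\mathbf v)\td\mathbf v.$$
Restrict to $\|\mathbf v\|\le h_0/h$, so that the segment from $\bw$ to $\bw+h\mathbf v$ stays in $\mathcal D^+$, and Taylor expand to second order with remainder. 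The zeroth-order terms give $h^d\mu_0 g_y(\bw)$ and $h^d\mu_0\pi_{\bW}(\bw)$, up to truncation. The first-order terms vanish because $K$ is radial, so $\int_{\|\mathbf v\|\le r}\mathbf vK(\mathbf v)\td\mathbf v=0$. The remainders are bounded by $\tfrac12M_{\pi F}\mu_2h^{d+2}$ and $\tfrac12M_\pi\mu_2h^{d+2}$. The truncated tails must also be controlled. In the integrals over $\|\mathbf v\|>h_0/h$, the terms $g_y(\bw)$ and $\pi_{\bW}(\bw)$ have mass at most $\mu_2h^2/h_0^2$. The actual kernel mass there is at most $\widetilde\kappa(h_0/h)\le R_0h^{d+2}$ by \eqref{eq:ksa-kernel-point-tail}.

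The main care point is exactly these tail terms: the zeroth-order term must be compared consistently in the numerator and the denominator, so that the leading factor $\mu_0\pi(\bw)F(\bw)$ cancels in $A_h-F\,B_h$. To arrange this, compute $A_h-F(y\mid\bw)B_h$ directly. In this difference the zeroth-order terms cancel identically, and only the two remainders and two far-tail kernel masses survive. Since $F\le1$, this gives
$$|A_h-F\,B_h|\le\bigl(\tfrac12(M_{\pi F}+M_\pi)\mu_2+2R_0\bigr)h^{d+2}.$$
Dividing by $B_h\ge\underline\pi\mu_0h^d/2$ gives $b_h\le\frac{(M_{\pi F}+M_\pi)\mu_2+4R_0}{\underline\pi\mu_0}h^2$, so $2b_h\le D_2h^2$. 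Combining each bias bound with the probability-$(1-\delta)$ event from Steps~2 and~3 gives both displayed statements.
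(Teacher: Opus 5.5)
Your proposal is correct and is essentially the paper's proof. Part~(i) uses the same near/far split with the tail bound \eqref{eq:ksa-kernel-point-tail}, and in part~(ii) bounding $A_h-F\,B_h$ directly is the paper's decomposition $A_h-F\,B_h=\bigl(A_h-h^dJ_hg_y(\bw)\bigr)-F_{Y\mid\bW}(y\mid\bw)\bigl(B_h-h^dJ_h\pi_{\bW}(\bw)\bigr)$ with $J_h=\int_{\|\mathbf v\|\le h_0/h}K(\mathbf v)\td\mathbf v$, which yields the same constants $D_L$ and $D_2$ before Steps~2--3 of Theorem~\ref{apxthm:explicit_statement_individual} are reused (your check that $\mu_1\le\mu_0+\mu_2<\infty$ is a small detail the paper leaves implicit).
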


\begin{proof}
The proof refines the population-approximation argument in \textbf{Step~1} of the
proof of Theorem~\ref{apxthm:explicit_statement_individual}. The
empirical-concentration and coverage-conversion arguments remain unchanged. We first derive the two refined population bounds
and then combine them with the remaining arguments.

\medskip
\noindent\textbf{Step 1:}
For every $0<h\leq\bar h$, the denominator bound
\eqref{eq:ksa-denominator-lower} gives
$\inf_{\bw\in\mathcal D}B_h(\bw)
\geq
\frac{\underline\pi\mu_0}{2}h^d.$ We now sharpen the numerator bound under each smoothness condition.

Under part \ref{apxitm:lipschitz}, fix $\bw\in\mathcal D$ and $y\in\mathbb R$.
In the representation \eqref{eq:ksa-population-bias-representation},
split the integral according to whether
$\|\mathbf u-\bw\|\leq h_0$. On this region, both $\mathbf u$ and
$\bw$ lie in $\mathcal D^+$. The Lipschitz condition and the upper bound
on $\pi_{\bW}$ therefore give
$$
\begin{aligned}
&\int_{\|\mathbf u-\bw\|\leq h_0}
\kappa_h(\mathbf u,\bw)
\left|
F_{Y\mid\bW}(y\mid\mathbf u)
-F_{Y\mid\bW}(y\mid\bw)
\right|
\pi_{\bW}(\mathbf u)\td\mathbf u                              \\
&\quad\leq
L\overline\pi
\int_{\|\mathbf u-\bw\|\leq h_0}
\kappa_h(\mathbf u,\bw)\|\mathbf u-\bw\|\td\mathbf u \leq
L\overline\pi\mu_1h^{d+1}.
\end{aligned}
$$
On the complementary region, the difference between the two CDF values
is at most one. Since $\widetilde\kappa$ is nonincreasing and
$\pi_{\bW}$ integrates to one,
$$
\int_{\|\mathbf u-\bw\|>h_0}
\kappa_h(\mathbf u,\bw)
\left|
F_{Y\mid\bW}(y\mid\mathbf u)
-F_{Y\mid\bW}(y\mid\bw)
\right|
\pi_{\bW}(\mathbf u)\td\mathbf u         \leq
\widetilde\kappa(h_0/h)
\leq
R_0h^{d+2},
$$
where the last inequality follows from
\eqref{eq:ksa-kernel-point-tail} and the definition of $R_0$.
Substituting these bounds into
\eqref{eq:ksa-population-bias-representation} and using
\eqref{eq:ksa-denominator-lower}, we obtain
\begin{equation}
\sup_{\bw\in\mathcal D}\sup_{y\in\mathbb R}
\left|F_h(y\mid\bw)-F_{Y\mid\bW}(y\mid\bw)\right|\leq
\frac{2}{\underline\pi\mu_0}
\left(
\overline\pi L\mu_1h+R_0h^2
\right)\leq
\frac{2}{\underline\pi\mu_0}
\left(
\overline\pi L\mu_1+R_0\bar h
\right)h
=
\frac{D_L}{2}h.
\label{eq:ksa-lipschitz-population-bias}
\end{equation}

Under part \ref{apxitm:twice_diff}, define for this part of the proof
$q_y(\mathbf u)
:=
\pi_{\bW}(\mathbf u)F_{Y\mid\bW}(y\mid\mathbf u)$ and $J_h
:=
\int_{\|\mathbf v\|\leq h_0/h}K(\mathbf v)\td\mathbf v.$ Fix $\bw\in\mathcal D$ and $y\in\mathbb R$. If
$\|\mathbf u-\bw\|\leq h_0$, then the entire line segment joining
$\bw$ and $\mathbf u$ lies in $\mathcal D^+$. After the change of
variables $\mathbf u=\bw+h\mathbf v$, Taylor's theorem gives
$$
q_y(\bw+h\mathbf v)
=
q_y(\bw)
+h\nabla_{\bw}q_y(\bw)^\top\mathbf v
+r_y(\mathbf v),
\qquad
|r_y(\mathbf v)|
\leq
\frac{M_{\pi F}}{2}h^2\|\mathbf v\|^2,
\qquad
\|\mathbf v\|\leq\frac{h_0}{h}.
$$
Because $K$ is radial and the ball
$\{\mathbf v:\|\mathbf v\|\leq h_0/h\}$ is symmetric,
$\int_{\|\mathbf v\|\leq h_0/h}
\mathbf vK(\mathbf v)\td\mathbf v
=
\boldsymbol0.$ The linear Taylor term therefore integrates to zero, while the
remainder contributes at most $M_{\pi F}\mu_2h^2/2$. The portion of
$h^{-d}A_h(y,\bw)$ arising from $\|\mathbf u-\bw\|>h_0$ is bounded by
$h^{-d}\widetilde\kappa(h_0/h)
\int_{\|\mathbf u-\bw\|>h_0}q_y(\mathbf u)\td\mathbf u
\leq
R_0h^2,$ because $0\leq q_y\leq\pi_{\bW}$. Consequently,
\begin{equation}
\left|
h^{-d}A_h(y,\bw)-J_hq_y(\bw)
\right|
\leq
\left(
\frac{M_{\pi F}\mu_2}{2}+R_0
\right)h^2.
\label{eq:ksa-second-order-numerator}
\end{equation}
The same argument, with $q_y$ replaced by $\pi_{\bW}$, gives
\begin{equation}
\left|
h^{-d}B_h(\bw)-J_h\pi_{\bW}(\bw)
\right|
\leq
\left(
\frac{M_\pi\mu_2}{2}+R_0
\right)h^2.
\label{eq:ksa-second-order-denominator}
\end{equation}
Moreover,
$q_y(\bw)/\pi_{\bW}(\bw)=F_{Y\mid\bW}(y\mid\bw)\in[0,1]$.
Using $F_h=A_h/B_h$, we may therefore write
$$
\begin{aligned}
\left| F_h(y\mid\bw)-F_{Y\mid\bW}(y\mid\bw)\right|
&=
\Bigg|\frac{h^{-d}A_h(y,\bw)-J_hq_y(\bw)}{h^{-d}B_h(\bw)} +
\frac{q_y(\bw)}{\pi_{\bW}(\bw)}
\frac{J_h\pi_{\bW}(\bw)-h^{-d}B_h(\bw)}
{h^{-d}B_h(\bw)}\Bigg|\\
&\leq \frac{
|h^{-d}A_h(y,\bw)-J_hq_y(\bw)|
+\dfrac{q_y(\bw)}{\pi_{\bW}(\bw)}
 |J_h\pi_{\bW}(\bw)-h^{-d}B_h(\bw)|
}{h^{-d}B_h(\bw)}.
\end{aligned}
$$
Combining \eqref{eq:ksa-denominator-lower}, \eqref{eq:ksa-second-order-numerator} and
\eqref{eq:ksa-second-order-denominator} yields
\begin{equation}
\sup_{\bw\in\mathcal D}\sup_{y\in\mathbb R}
\left|F_h(y\mid\bw)-F_{Y\mid\bW}(y\mid\bw)\right|
\leq
\frac{(M_{\pi F}+M_\pi)\mu_2+4R_0}
{\underline\pi\mu_0}h^2
=
\frac{D_2}{2}h^2.
\label{eq:ksa-second-order-population-bias}
\end{equation}

\medskip
\noindent\textbf{Step 2:} The remaining arguments follow from \textbf{Steps 2--3} in the proof of Theorem \ref{apxthm:explicit_statement_individual}. Fix $\delta\in(0,1)$ and an integer $t\geq T_{h,\delta}$. On the event
in \eqref{eq:ksa-concentration-probability}, the argument leading to
\eqref{eq:positive_denominator_step_2} and
\eqref{eq:ksa-ratio-conclusion} applies because $h\leq\bar h$. Thus,
$$
\sup_{\bw\in\mathcal D}\sup_{y\in\mathbb R}
|\widehat F_t(y\mid\bw)-F_h(y\mid\bw)|
\leq
\frac{8R_\delta}{\underline\pi\mu_0h^d}
\sqrt{\frac{\log t}{t}}
=
\frac{C_{h,\delta}}{2}\sqrt{\frac{\log t}{t}}.
$$
Combining this inequality with
\eqref{eq:ksa-lipschitz-population-bias} gives, under part \ref{apxitm:lipschitz},
\begin{equation}\label{eq:corC4_lipschitz_mid}
\sup_{\bw\in\mathcal D}\sup_{y\in\mathbb R}
|\widehat F_t(y\mid\bw)-F_{Y\mid\bW}(y\mid\bw)|
\leq
\frac12\left(
D_Lh+C_{h,\delta}\sqrt{\frac{\log t}{t}}
\right).
\end{equation}
Similarly, \eqref{eq:ksa-second-order-population-bias} gives, under
part \ref{apxitm:twice_diff},
\begin{equation}\label{eq:corC4_twiceDiff_mid}
\sup_{\bw\in\mathcal D}\sup_{y\in\mathbb R}
|\widehat F_t(y\mid\bw)-F_{Y\mid\bW}(y\mid\bw)|
\leq
\frac12\left(
D_2h^2+C_{h,\delta}\sqrt{\frac{\log t}{t}}
\right).
\end{equation}
By the quantile and predictive-sufficiency argument in \textbf{Step 3} of the
proof of Theorem~\ref{apxthm:explicit_statement_individual},
\eqref{eq:ksa-coverage-from-CDF} gives
$\sup_{\bz:\,\bW_{t+1}(\bz)\in\mathcal D}
\left|
\mathbb P\left(
G_{t+1}\notin\Ct_t(\bz)
\mid\mathcal F_t,\bz_{t+1}=\bz
\right)-\alpha
\right| \leq
2\sup_{\bw\in\mathcal D}\sup_{y\in\mathbb R}
|\widehat F_t(y\mid\bw)-F_{Y\mid\bW}(y\mid\bw)|.$ Substituting \eqref{eq:corC4_lipschitz_mid} yields
$D_Lh+C_{h,\delta}\sqrt{\log t/t}$ under part \ref{apxitm:lipschitz}, and
substituting \eqref{eq:corC4_twiceDiff_mid} yields
$D_2h^2+C_{h,\delta}\sqrt{\log t/t}$ under part \ref{apxitm:twice_diff}.
Finally, Lemma~\ref{lem:ksa-fixed-h-concentration} shows that the event
on which these bounds hold has probability at least $1-\delta$.
\end{proof}

In the next corollary, we allow the bandwidth to vary with $t$ and
choose it to minimize the explicit upper bounds in Corollary \ref{apxcor:explicit_ksa_smoothness}.

\begin{cor}
\label{apxcor:optimized_ksa_bandwidth}
Suppose Assumption~\ref{assump:individual_cali_assump} holds, fix
$\delta\in(0,1)$, and retain the notation from
Theorem~\ref{apxthm:explicit_statement_individual} and
Corollary~\ref{apxcor:explicit_ksa_smoothness}.

\begin{enumerate}[label=(\roman*)]

\item\label{apxitm:final_cor_lip} Suppose the Lipschitz condition in
part~\ref{apxitm:lipschitz} of
Corollary~\ref{apxcor:explicit_ksa_smoothness} holds. At time $t$, use
\begin{equation}
h_t
:=
\left[
\frac{4dR_\delta}
{\displaystyle
\overline\pi L\mu_1
+
\frac{4\mu_2\bar h}
{v_d(1-2^{-d})h_0^{d+2}}}
\sqrt{\frac{\log t}{t}}
\right]^{1/(d+1)}.
\label{eq:ksa-optimal-lipschitz-bandwidth}
\end{equation}
For every integer $t\geq T_\rho$ such that
\begin{equation}
h_t
\leq
\min\left\{
\bar h,
\frac{d\underline\pi\mu_0}
{\displaystyle
\overline\pi L\mu_1
+
\frac{4\mu_2\bar h}
{v_d(1-2^{-d})h_0^{d+2}}}
\right\},
\label{eq:ksa-optimal-lipschitz-admissibility}
\end{equation}
the KSA interval constructed using $\kappa_{h_t}$ satisfies
\begin{equation}
\begin{aligned}
\mathbb P\Bigg(
&\sup_{\bz:\,\bW_{t+1}(\bz)\in\mathcal D}
\left|
\mathbb P\left(
G_{t+1}\notin\Ct_t(\bz)
\mid\mathcal F_t,\bz_{t+1}=\bz
\right)-\alpha
\right|                                                        \\
&\leq
\frac{4(d+1)}{d\underline\pi\mu_0}
\left(
\overline\pi L\mu_1
+
\frac{4\mu_2\bar h}
{v_d(1-2^{-d})h_0^{d+2}}
\right)                                                        \\
&\qquad\times
\left[
\frac{4dR_\delta}
{\displaystyle
\overline\pi L\mu_1
+
\frac{4\mu_2\bar h}
{v_d(1-2^{-d})h_0^{d+2}}}
\sqrt{\frac{\log t}{t}}
\right]^{1/(d+1)}
\Bigg)
\geq1-\delta.
\end{aligned}
\label{eq:ksa-optimal-lipschitz-coverage}
\end{equation}

\item\label{apxitm:final_cor_twicediff} Suppose the second-order condition in
part~\ref{apxitm:twice_diff} of
Corollary~\ref{apxcor:explicit_ksa_smoothness} holds. At time $t$, use
\begin{equation}
h_t
:=
\left[
\frac{4dR_\delta}
{\displaystyle
(M_{\pi F}+M_\pi)\mu_2
+
\frac{16\mu_2}
{v_d(1-2^{-d})h_0^{d+2}}}
\sqrt{\frac{\log t}{t}}
\right]^{1/(d+2)}.
\label{eq:ksa-optimal-second-order-bandwidth}
\end{equation}
For every integer $t\geq T_\rho$ such that
\begin{equation}
h_t
\leq
\min\left\{
\bar h,
\sqrt{
\frac{d\underline\pi\mu_0}
{\displaystyle
(M_{\pi F}+M_\pi)\mu_2
+
\frac{16\mu_2}
{v_d(1-2^{-d})h_0^{d+2}}}
}
\right\},
\label{eq:ksa-optimal-second-order-admissibility}
\end{equation}
the KSA interval constructed using $\kappa_{h_t}$ satisfies
\begin{equation}
\begin{aligned}
\mathbb P\Bigg(
&\sup_{\bz:\,\bW_{t+1}(\bz)\in\mathcal D}
\left|
\mathbb P\left(
G_{t+1}\notin\Ct_t(\bz)
\mid\mathcal F_t,\bz_{t+1}=\bz
\right)-\alpha
\right|                                                        \\
&\leq
\frac{2(d+2)}{d\underline\pi\mu_0}
\left(
(M_{\pi F}+M_\pi)\mu_2
+
\frac{16\mu_2}
{v_d(1-2^{-d})h_0^{d+2}}
\right)                                                        \\
&\qquad\times
\left[
\frac{4dR_\delta}
{\displaystyle
(M_{\pi F}+M_\pi)\mu_2
+
\frac{16\mu_2}
{v_d(1-2^{-d})h_0^{d+2}}}
\sqrt{\frac{\log t}{t}}
\right]^{2/(d+2)}
\Bigg)\!\!
\geq1-\delta.
\end{aligned}
\label{eq:ksa-optimal-second-order-coverage}
\end{equation}

\end{enumerate}

The conditions in \eqref{eq:ksa-optimal-lipschitz-admissibility} and
\eqref{eq:ksa-optimal-second-order-admissibility} are eventually satisfied
because the corresponding bandwidths converge to zero. The two
coverage-error bounds converge to zero at the rates
$(\log t/t)^{1/(2d+2)}$ and $(\log t/t)^{1/(d+2)}$ respectively.
\end{cor}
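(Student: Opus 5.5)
The plan is to substitute the deterministic bandwidth $h_t$ into the explicit bounds of Corollary~\ref{apxcor:explicit_ksa_smoothness} and optimize over $h$. The one point needing care is that $h_t$ depends on $t$, so the hypothesis $t\geq T_{h,\delta}$ of that corollary cannot simply be quoted. I will instead check directly the single place where that hypothesis was used, which is the positivity of the empirical denominator.

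Write $s_t:=\sqrt{\log t/t}$. For part~\ref{apxitm:final_cor_lip}, set
$$\Lambda:=\overline\pi L\mu_1+\frac{4\mu_2\bar h}{v_d(1-2^{-d})h_0^{d+2}}.$$
Then $D_L=4\Lambda/(\underline\pi\mu_0)$, and the error bound from Corollary~\ref{apxcor:explicit_ksa_smoothness}\ref{apxitm:lipschitz} becomes
$$\phi(h)=ah+bh^{-d},\qquad a=\frac{4\Lambda}{\underline\pi\mu_0},\qquad b=\frac{16R_\delta s_t}{\underline\pi\mu_0}.$$
Setting $\phi'(h)=0$ gives $h^{d+1}=db/a=4dR_\delta s_t/\Lambda$, which is exactly \eqref{eq:ksa-optimal-lipschitz-bandwidth}. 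At this point $bh^{-d}=ah/d$, so $\phi(h_t)=\frac{d+1}{d}\cdot\frac{4\Lambda}{\underline\pi\mu_0}h_t$, which is the right-hand side of \eqref{eq:ksa-optimal-lipschitz-coverage}.

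Part~\ref{apxitm:final_cor_twicediff} is identical in structure. Set
$$\Gamma:=(M_{\pi F}+M_\pi)\mu_2+\frac{16\mu_2}{v_d(1-2^{-d})h_0^{d+2}}.$$
Since $8R_0=32\mu_2/(v_d(1-2^{-d})h_0^{d+2})$, we get $D_2=2\Gamma/(\underline\pi\mu_0)$. Minimizing $ah^2+bh^{-d}$ with $a=2\Gamma/(\underline\pi\mu_0)$ gives $h^{d+2}=db/(2a)=4dR_\delta s_t/\Gamma$, matching \eqref{eq:ksa-optimal-second-order-bandwidth}. The minimum value is $\frac{d+2}{d}\cdot ah_t^2$, matching \eqref{eq:ksa-optimal-second-order-coverage}.

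The main obstacle is justifying that the bounds of Corollary~\ref{apxcor:explicit_ksa_smoothness} hold at $h=h_t$. The population-bias bounds \eqref{eq:ksa-lipschitz-population-bias} and \eqref{eq:ksa-second-order-population-bias} only need $h\leq\bar h$, and this is the first entry of each admissibility condition. For each fixed $t$, $h_t$ is a deterministic number. Hence Lemma~\ref{lem:ksa-fixed-h-concentration} applies for $t\geq T_\rho$, giving the concentration event \eqref{eq:ksa-proof-concentration-event} with probability at least $1-\delta$.

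In Step~2 of the proof of Theorem~\ref{apxthm:explicit_statement_individual}, the condition $t\geq T_{h,\delta}$ served only to guarantee
$$R_\delta s_t\leq\tfrac{\underline\pi\mu_0}{4}h^d,$$
which yields \eqref{eq:positive_denominator_step_2} and then \eqref{eq:ksa-ratio-conclusion}. I will verify this inequality directly from the definition of $h_t$:
\begin{itemize}
\item in case (i), $R_\delta s_t=\Lambda h_t^{d+1}/(4d)$, so the inequality is equivalent to $h_t\leq d\underline\pi\mu_0/\Lambda$;
\item in case (ii), $R_\delta s_t=\Gamma h_t^{d+2}/(4d)$, so it is equivalent to $h_t^2\leq d\underline\pi\mu_0/\Gamma$.
\end{itemize}
These are exactly the second entries of \eqref{eq:ksa-optimal-lipschitz-admissibility} and \eqref{eq:ksa-optimal-second-order-admissibility}.

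With positivity established, Step~3 (the quantile and predictive-sufficiency conversion \eqref{eq:ksa-coverage-from-CDF}) goes through unchanged and gives the stated probability bounds.

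Finally, $h_t\to0$, so both admissibility conditions hold for all large $t$. Since $h_t\asymp s_t^{1/(d+1)}$ in case (i) and $h_t\asymp s_t^{1/(d+2)}$ in case (ii), the bounds are of order $h_t$ and $h_t^2$ respectively. This gives the rates $(\log t/t)^{1/(2d+2)}$ and $(\log t/t)^{1/(d+2)}$.
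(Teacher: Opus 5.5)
Your proposal is correct and follows essentially the same route as the paper. You apply Lemma~\ref{lem:ksa-fixed-h-concentration} at the deterministic bandwidth $h_t$, reuse the ratio bound and the population-bias bounds from Corollary~\ref{apxcor:explicit_ksa_smoothness}, convert to coverage, and evaluate $ah+bh^{-d}$ (resp.\ $ah^2+bh^{-d}$) at its stationary point. Beyond the paper's write-up, you explicitly show that the second entry of each admissibility condition is equivalent to $R_\delta\sqrt{\log t/t}\leq \underline\pi\mu_0 h_t^d/4$; this is what replaces $t\geq T_{h,\delta}$ in the denominator-positivity step, which the paper's proof invokes only implicitly.
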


\begin{proof}
Fix $\delta\in(0,1)$ and $t\geq T_\rho$, and take $h_t$ to be the bandwidth
specified in the relevant part of the corollary. Since $h_t$ is deterministic,
Lemma~\ref{lem:ksa-fixed-h-concentration} may be applied at time $t$ with
$h=h_t$. On the event in \eqref{eq:ksa-concentration-probability}, which has
probability at least $1-\delta$,
$$
\sup_{\bw\in\mathcal D}\sup_{y\in\mathbb R}
|A_{t,h_t}(y,\bw)-A_{h_t}(y,\bw)|
\vee
\sup_{\bw\in\mathcal D}
|B_{t,h_t}(\bw)-B_{h_t}(\bw)|
\leq
R_\delta\sqrt{\frac{\log t}{t}}.
$$
We work on this event. The ratio argument in the proof of
Theorem~\ref{apxthm:explicit_statement_individual} gives
\begin{equation}
\sup_{\bw\in\mathcal D}\sup_{y\in\mathbb R}
|\widehat F_t(y\mid\bw)-F_{h_t}(y\mid\bw)|
\leq
\frac{8R_\delta}{\underline\pi\mu_0h_t^d}
\sqrt{\frac{\log t}{t}}.
\label{eq:ksa-time-varying-ratio-bound}
\end{equation}

For part \ref{apxitm:final_cor_lip}, expanding $R_0$ and $D_L$ in
\eqref{eq:ksa-lipschitz-population-bias} gives
$$
\sup_{\bw\in\mathcal D}\sup_{y\in\mathbb R}
|F_{h_t}(y\mid\bw)-F_{Y\mid\bW}(y\mid\bw)|
\leq
\frac{2}{\underline\pi\mu_0}
\left(
\overline\pi L\mu_1
+
\frac{4\mu_2\bar h}
{v_d(1-2^{-d})h_0^{d+2}}
\right)h_t.
$$
Combining this bound with \eqref{eq:ksa-time-varying-ratio-bound} and the
coverage-conversion inequality \eqref{eq:ksa-coverage-from-CDF}, we obtain
\begin{equation}
\begin{aligned}
&\sup_{\bz:\,\bW_{t+1}(\bz)\in\mathcal D}
\left|
\mathbb P\left(
G_{t+1}\notin\Ct_t(\bz)
\mid\mathcal F_t,\bz_{t+1}=\bz
\right)-\alpha
\right|                                                        \\
&\quad\leq
\frac{4}{\underline\pi\mu_0}
\left(
\overline\pi L\mu_1
+
\frac{4\mu_2\bar h}
{v_d(1-2^{-d})h_0^{d+2}}
\right)h_t
+
\frac{16R_\delta}{\underline\pi\mu_0h_t^d}
\sqrt{\frac{\log t}{t}}.
\end{aligned}
\label{eq:ksa-expanded-lipschitz-bound}
\end{equation}
The bandwidth in \eqref{eq:ksa-optimal-lipschitz-bandwidth} is the unique
minimizer of the right-hand side of
\eqref{eq:ksa-expanded-lipschitz-bound} over $h_t>0$. At this bandwidth, the
second term on the right-hand side equals $1/d$ times the first. Hence, the
minimum is
$\frac{4(d+1)}{d\underline\pi\mu_0}
\left(
\overline\pi L\mu_1
+
\frac{4\mu_2\bar h}
{v_d(1-2^{-d})h_0^{d+2}}
\right)h_t,$ which, after substituting \eqref{eq:ksa-optimal-lipschitz-bandwidth}, is the
right-hand side of \eqref{eq:ksa-optimal-lipschitz-coverage}.

For part \ref{apxitm:final_cor_twicediff}, expanding $R_0$ and $D_2$ in
\eqref{eq:ksa-second-order-population-bias} gives
$$
\sup_{\bw\in\mathcal D}\sup_{y\in\mathbb R}
|F_{h_t}(y\mid\bw)-F_{Y\mid\bW}(y\mid\bw)|\leq
\frac{1}{\underline\pi\mu_0}
\left(
(M_{\pi F}+M_\pi)\mu_2
+
\frac{16\mu_2}
{v_d(1-2^{-d})h_0^{d+2}}
\right)h_t^2.
$$
Combining this bound with \eqref{eq:ksa-time-varying-ratio-bound} and
\eqref{eq:ksa-coverage-from-CDF} yields
\begin{equation}
\begin{aligned}
&\sup_{\bz:\,\bW_{t+1}(\bz)\in\mathcal D}
\left|
\mathbb P\left(
G_{t+1}\notin\Ct_t(\bz)
\mid\mathcal F_t,\bz_{t+1}=\bz
\right)-\alpha
\right|                                                        \\
&\quad\leq
\frac{2}{\underline\pi\mu_0}
\left(
(M_{\pi F}+M_\pi)\mu_2
+
\frac{16\mu_2}
{v_d(1-2^{-d})h_0^{d+2}}
\right)h_t^2
+
\frac{16R_\delta}{\underline\pi\mu_0h_t^d}
\sqrt{\frac{\log t}{t}}.
\end{aligned}
\label{eq:ksa-expanded-second-order-bound}
\end{equation}
The bandwidth in \eqref{eq:ksa-optimal-second-order-bandwidth} is the unique
minimizer of the right-hand side of
\eqref{eq:ksa-expanded-second-order-bound} over $h_t>0$. At this bandwidth, the
second term on the right-hand side equals $2/d$ times the first. Therefore, the
minimum is
$$
\frac{2(d+2)}{d\underline\pi\mu_0}
\left(
(M_{\pi F}+M_\pi)\mu_2
+
\frac{16\mu_2}
{v_d(1-2^{-d})h_0^{d+2}}
\right)h_t^2,
$$
which is the right-hand side of
\eqref{eq:ksa-optimal-second-order-coverage} after substituting
\eqref{eq:ksa-optimal-second-order-bandwidth}.
\end{proof}

\clearpage
\phantomsection\label{part:ec-references}
\putbib
\end{bibunit}

\end{document}

% Extra Appendix: supplementary factor-model examples. This part is not submitted.
\ifincludeextra
\startappendixpart{EA}
\phantomsection\label{part:extra-start}
\begin{center}
{\Large\bfseries Extra Appendix}\\[0.5\baselineskip]
{\large Stress Testing Financial Portfolios with Coverage Guarantees}\\
Not for submission
\end{center}

\section{Factor Models Examples}\label{apx:example_of_Section_3}

The equity and options setting of Section~\ref{sec:Examples} illustrates how factor modeling enters scenario analysis. This appendix gives three further settings, beginning with a government-bond portfolio and its associated scenario-analysis constructions.

\begin{example}[A U.S. Government Bond Portfolio]\label{eg:Gov}\sffamily\upshape
For a bond portfolio comprising $n$ bonds, let $B_t^{(i)}$ denote the time $t$ price of the risk-free zero-coupon bond with time to maturity $T_i$, for $i=1, \ldots , n$. The face value of each bond is normalized to $1$. Then $B_t^{(i)} = e^{- r_t^{(i)}T_i}$, where $r_t^{(i)}$ is the risk-free interest rate for the $i^{\mathrm{th}}$ bond. The time $t$ value of the portfolio is given by
\begin{eqnarray*}
V_t &=& \sum_{i=1}^n u_i B_t^{(i)}
= \sum_{i=1}^n u_i e^{- r_t^{(i)}T_i},
\end{eqnarray*}
where $u_i$ is the number of units of the $i^{\mathrm{th}}$ bond in the portfolio. Of course, handling coupon-paying bonds is also straightforward. The \textit{risk factors} here are the risk-free interest rates (or yields) $r_t^{(i)}$, whose dynamics may for example follow a 3-factor model:
\begin{equation} \label{eq:EG-USGovt}
\log(r_{t+1}^{(i)} / r_{t}^{(i)}) = \sum_{j=1}^3 b_{ij} (f_{t+1}^{(j)} - f_{t}^{(j)}) + \epsilon_{t+1}^{(i)},\quad i=1,\ldots, n,
\end{equation}
where $f_{t}^{(j)}$ denotes the $j^{\mathrm{th}}$ \textit{common factor} and $f_{t+1}^{(j)} - f_{t}^{(j)}$ denotes the \textit{common factor return}. The $\epsilon_{t+1}^{(i)}$'s are idiosyncratic random noise that are independent of everything else. A typical choice of the common factors $(f_{t}^{(1)}, f_{t}^{(2)}, f_{t}^{(3)})$ is based on a principal components analysis of the yield curve changes, where the common factor returns can often be interpreted as a parallel shift, a steepening / flattening, and a curvature change in the yield curve, respectively.
And a \textit{risk scenario} can be defined by stressing one or more of the common factor returns, with the traditional goal of scenario analysis being to estimate the portfolio's expected gain in that scenario.

In the notation of Section~\ref{sec:Factor Structure}, we take $\bx_t$ to be the $r_t^{(i)}$'s (i.e. $\bx_t = (r_t^{(1)}, \ldots, r_t^{(n)}) $).
We would take  $ \bff_t = (f_{t}^{(1)}, f_{t}^{(2)}, f_{t}^{(3)})$ here, with the matrix $\bB$ in (\ref{eq:factormodel}) consisting of the $b_{ij}$'s from (\ref{eq:EG-USGovt}).

For a standard scenario analysis (SSA), if the change in the first common factor $f_t^{(1)}$ represents the parallel shift of the log yield curve, the $\bb_1$ (the first column of $\bB$) will be a vector of ones. If we then wish to consider a scenario where all log yields increase by 20 basis points, we would set $ \Delta f_{t+1}^{(1)}$ (the change in the first component of $ \bff_t$) equal to $+20$ bps, and set $\Delta f_{t+1}^{(2)}=\Delta f_{t+1}^{(3)}=0$ and $\bfxi_{t+1} = 0$.
Equation (\ref{eq:factormodel}) then implies
$\Delta {\bx}_{t+1}  = \bsym{\mu} + \bb_1  \Delta f_{t+1}^{(1)}$ and the portfolio gain could then be computed via (\ref{eq:portpnl}).

As discussed above, a 3-factor model for the risk-free yield curve is often motivated by a principal components analysis. The resulting 3-factor model is a statistical model where the factors are latent and therefore need to be estimated. But once estimated, it is easy to (i) define stresses in terms of these estimated common factor returns and (ii) identify which scenario transpired on a given day. For example, we might define the first three factors to be the average yield, the steepness of the yield curve (e.g. the 10-year yield minus the 2-year yield), and the curvature of the yield curve (e.g. the 2-year + 10-year yields minus twice the 5-year yield). These values are observable on day $t$ and hence we can observe the changes in these common factors (i.e. $\Delta \bff_{t+1}$) on day $t+1$.

To illustrate adversarial portfolio selection (Section~\ref{sec:SSA_Weaknesses}), it is possible the risk manager is only aware of the first two common factor returns corresponding to parallel shifts and a steepening / flattening of the yield curve, respectively. But a more experienced portfolio manager might be aware of the third common factor return corresponding to a change in the curvature of the yield curve. If the SSA only considers scenarios involving the first two common factor returns, then the portfolio manager could adopt a portfolio that is neutral to them but highly exposed to the third common factor return. In that case SSA will not reveal the riskiness of the portfolio as it only considers scenarios involving the first two common factor returns.
\end{example}
\medskip
\begin{example}[A U.S. Government and Corporate Bond Portfolio]\label{appdix_eg:GovtCorp}\sffamily

We extend Example \ref{eg:Gov} so that we also have zero-coupon\footnote{In practice, corporate bonds pay coupons, and it would be straightforward to account for this.} corporate bonds in the portfolio. Consider a portfolio comprising $n$ government bonds and $n_c$ corporate bonds. Let $C_t^{(k)}$ denote the time $t$ price of a corporate bond with maturity $T_k$, for $k=1, \ldots , n_c$ and face value 1. Then $C_t^{(k)} = e^{- (r_t^{(k)}+ c_t^{(k)})T_k}$, where $r_t^{(k)}$ is the risk-free interest rate for maturity $T_k$ and $c_t^{(k)}$ is the {\em credit spread} of the corporate bond. Note that each corporate bond in the portfolio can (and typically does) correspond to a different firm or credit, but we suppress this firm-dependence here. Now, the time $t$ value of our portfolio satisfies
\begin{eqnarray*}
V_t &=& \sum_{i=1}^n u_i B_t^{(i)} + \sum_{k=1}^{n_c} u_k^c C_t^{(k)} \\
    &=& \sum_{i=1}^n u_i e^{- r_t^{(i)}T_i} + \sum_{k=1}^{n_c} u_k^c e^{- (r_t^{(k)}+ c_t^{(k)})T_k},
\end{eqnarray*}
where $u_k^c$ is the number of units of the $k^{\mathrm{th}}$ corporate bond in the portfolio. The risk factors include the risk-free interest rates $r_t^{(i)}$ and the credit spreads $c_t^{(k)}$. Example \ref{eg:Gov} presents a 3-factor model for the risk-free interest rates, and here we can similarly assume a simple 2-factor model for the credit spreads:
\begin{equation} \label{eq:EG-Corp}
\log(c_{t+1}^{(k)} /c_{t}^{(k)} ) = \sum_{j=1}^2 a_{kj}  (g_{t+1}^{(j)} - g_t^{(j)}) + \eta_{t+1}^{(k)},\quad k=1,\ldots, n_c,
\end{equation}
where the $ g_{t}^{(j)}$'s are the common credit factors (say one each for investment grade credits and high-yield credits) and the $\eta_{t}^{(k)}$'s represent independent idiosyncratic random noise.  A scenario would correspond to stressing one or more of the common factor returns, i.e. the  $f_{t+1}^{(j)} - f_t^{(j)}$'s and $ g_{t+1}^{(j)} - g_{t}^{(j)}$'s.

\end{example}\medskip

\begin{example}[An International Equity Portfolio]\label{appdix_eg:IntEquity}\sffamily

In this setting, the time $t$ value of our portfolio is given by
\begin{equation} \label{eq:IntPort1}
V_t = \sum_{i=1}^n u_i S_t^{(i)} + \sum_{k=1}^M X_t^{(k)} \sum_{j=1}^{n_k} u_j^k S_t^{(k,j)},
\end{equation}
where the first sum in (\ref{eq:IntPort1}) represents domestic holdings (the $u_i$'s) in domestic stocks (with time $t$ prices $S_t^{(i)}$), $X_t^{(k)}$ is the time $t$ foreign exchange rate for converting 1 unit of currency $k$ into the domestic currency, and $\sum_{j=1}^{n_k} u_j^k S_t^{(k,j)}$ is the value in currency $k$ of the holdings in the $k^{\mathrm{th}}$ equity market. In particular, there are $M$ foreign equity markets and we hold $u_j^k$ units of the $j^{\mathrm{th}}$ stock (with time $t$ price $S_t^{(k,j)}$ in currency $k$) in the $k^{\mathrm{th}}$ market. As in the equity and options setting of Section~\ref{sec:Examples}, we assume
$$
R_{t+1}^{(i)} = \log(S^{(i)}_{t+1}/S_t^{(i)}) = r_f + \beta_i (R^m_{t+1} - r_f) + \epsilon_{t+1}^{(i)},
$$
with $R^m_{t+1}$ the domestic market log return, $r_f$ the domestic risk-free rate, and $\epsilon_{t+1}^{(i)}$ the independent random noise. We could also assume all foreign equity markets follow similar CAPM-style models.
%with corresponding {\em common factor returns} $R_{m,t+1}^{(k)}$ between dates $t$ and $t+1$.
The common factor returns are then the domestic market log return $R^m_{t+1}$ as well as the foreign equity market log returns and the log returns on the foreign currencies. Together, they would constitute the collection of common factor returns that we might consider stressing in various scenarios.

\end{example}\medskip

The portfolio values of these examples also follow $V_t = v(\bx_t)$ for some known pricing function $v(\cdot)$ and risk factors $\bx_t$. In Example \ref{appdix_eg:GovtCorp}, $\bx_t$ could consist of the $r_t^{(i)}$'s and $c_t^{(k)}$'s.  In Example \ref{appdix_eg:IntEquity}, $\bx_t$ might consist of the relevant log-FX rates, i.e. the $\log\left(X_t^{(k)}\right)$'s, together with the domestic and international log-stock prices, i.e. the $\log\left(S_t^{(i)}\right)$'s and the $\log\left(S_t^{(k,j)}\right)$'s.

\fi

\end{document}